\newtheorem{theorem}{Theorem}
\theoremstyle{plain}
\newtheorem{definition}{Definition}
\newtheorem{lemma}{Lemma}
\newtheorem{remark}{Remark}
\numberwithin{equation}{section}
\newcommand{\argmax}[1]{\underset{#1}{\operatorname{argmax}}}
\newcommand{\matR}{\ensuremath{\mathbb{R}}}
\newcommand{\tanps}{\ensuremath{T_{P}S}}
\newcommand{\tanpps}{\ensuremath{T_{P}S^{\perp}}}
\newcommand{\neighp}{\ensuremath{\mathcal{N}_{\varepsilon}(P)}}
\newcommand{\sampwidth}{\ensuremath{\nu}}
\newcommand{\kfmax}{\ensuremath{\mathcal{K}_{max}}}
\newcommand{\kflj}{\ensuremath{\mathcal{K}_{l,j}}}
\newcommand{\kflonej}{\ensuremath{\mathcal{K}_{l,j}}}
\newcommand{\kfltwoj}{\ensuremath{\mathcal{K}_{k,j}}}
\newcommand{\kflk}{\ensuremath{\mathcal{K}_{l,k}}}
\newcommand{\norm}[1]{\parallel{#1}\parallel}
\newcommand{\smoothconst}{\ensuremath{C_{s}}}
\newcommand{\abs}[1]{|{#1}|}
\newcommand{\set}[1]{\left\{{#1}\right\}}
\begin{document}

\title[Tangent space estimation for smooth manifolds]{Tangent space estimation for smooth embeddings of Riemannian manifolds}
\author{Hemant Tyagi, Elif Vural and Pascal Frossard}
\thanks{H. Tyagi is with the Institute of Theoretical Computer Science, ETH Zurich. E. Vural and P. Frossard are with the Ecole Polytechnique F\'{e}d\'{e}rale de Lausanne (EPFL), Signal Processing Laboratory (LTS4), CH-1015 Lausanne, Switzerland. Emails: htyagi@inf.ethz.ch, elif.vural@epfl.ch, pascal.frossard@epfl.ch.}
\thanks{This work has been mostly performed while the first author was with the Signal Processing Laboratory LTS4 at EPFL. It has been partly supported by the Swiss National Science Foundation under Grant 200020-132772.}
\thanks{ACCEPTED FOR PUBLICATION IN: Information and Inference: A Journal of the IMA}
\keywords{Riemannian manifolds, tangent space estimation, manifold sampling, manifold learning, Chernoff bounds for sums of random matrices, singular value perturbation}

%
\begin{abstract}
Numerous dimensionality reduction problems in data analysis involve the recovery of low-dimensional models or the learning of manifolds underlying sets of data. Many manifold learning methods require the estimation of the tangent space of the manifold at a point from locally available data samples. Local sampling conditions such as (i) the size of the neighborhood (\textit{sampling width}) and (ii) the number of samples in the neighborhood (\textit{sampling density}) affect the performance of learning algorithms. In this work, we propose a theoretical analysis of local sampling conditions for the estimation of the tangent space at a point $P$ lying on a $m$-dimensional Riemannian manifold $S$ in $\mathbb{R}^n$. Assuming a smooth embedding of $S$ in $\mathbb{R}^n$, we estimate the tangent space $\tanps$ by performing a Principal Component Analysis (PCA) on points sampled from the neighborhood of $P$ on $S$. Our analysis explicitly takes into account the second order properties of the manifold at $P$, namely the \textit{principal curvatures} as well as the higher order terms.
We consider a random sampling framework and leverage recent results from random matrix theory to derive conditions on the \textit{sampling width} and the local \textit{sampling density} for an accurate estimation of tangent subspaces. We measure the estimation accuracy by the angle between the estimated tangent space $\widehat{T}_P S$ and the true tangent space $T_P S$ and we give conditions for this angle to be bounded with high probability. In particular, we observe that the local sampling conditions are highly dependent on the correlation between the components in the second-order local approximation of the manifold. We finally provide numerical simulations to validate our theoretical findings.
\end{abstract}

\maketitle

%
\section{Introduction}
\label{sec:intro}
\noindent A data set that resides in a high-dimensional ambient space and that is locally homeomorphic to a lower-dimensional Euclidean space constitutes a manifold. For example, a set of signals that is representable by a parametric model, such as parametrizable visual signals or acoustic signals form a manifold. Data manifolds are however rarely given in an explicit form. The recovery of low-dimensional structures underlying a set of data, also known as manifold learning, has thus been a popular research problem in the recent years. This is typically achieved by constructing a mapping from the original data in the high-dimensional space to a space of much lower dimension. Importantly, most manifold learning methods rely on the assumption that the data has a locally linear structure. Of course, for such an assumption to be valid at some reference point on the manifold, one has to take into account (i) the size of the neighborhood from which the samples are chosen and also, (ii) the number of neighborhood points. For instance, if the manifold is a linear subspace, then the neighborhood can be chosen to be arbitrarily large and the number of samples needs to be simply greater than the dimension of the manifold. However, most manifolds are typically nonlinear, which prevents the selection of an arbitrarily large neighborhood size. Hence, one might expect the existence of an upper bound on the neighborhood size. Furthermore, the number of necessary samples is likely to vary according to the local characteristics of the manifold.

The purpose of this work is to analyze the relation between the sampling conditions of a manifold and the validity of the local linearity assumption of the data sampled from the manifold. We characterize the local linearity of the data with the accuracy of the tangent space estimation. We do a local analysis around a point $P$ on a manifold $S$. We examine the deviation between the tangent space $\widehat{T}_P S$ estimated using manifold samples in a neighborhood of $P$, and the true tangent space $T_P S$ at $P$. This deviation is related to the local geometric properties of the manifold around $P$ and the local sampling conditions. In this paper, $S$ is assumed to be an $m$-dimensional Riemannian manifold in $\mathbb{R}^n$ that can be locally represented with smooth ($\mathcal{C}^r$, $r > 2$) mappings, where $m < n$. We consider a random sampling where the orthogonal projections of the samples to $T_P S$ in a neighborhood of $P$ is uniform. We derive bounds on the size of the neighborhood and on the number of samples such that the deviation (i.e., the angle) between $\widehat{T}_P S$ and $T_P S$ is upper bounded with high probability. In particular, our analysis captures the dependency of the sampling conditions on the second-order properties of the manifold, namely the local curvature of $S$ at $P$, and on the higher-order terms. Thus, broadly speaking, this work consists of a theoretical analysis of the manifold sampling problem that relates the \textit{local} sampling conditions to the accuracy of the local linearity assumption. This paper builds on our preliminary work \cite{Tyagi2011}, where the sampling of manifolds represented with quadratic embeddings is examined, and extends the analysis to arbitrary smooth embeddings. We envisage two main applications where our study can prove to be useful. Firstly, our results can be used for deducing performance guarantees or for determining a good local subset of data samples that gives an accurate estimation of the tangent space in \textit{manifold learning} applications. Secondly, our analysis can also be used in \textit{manifold sampling} applications, i.e., for choosing samples from a manifold with a known parametric model. The discretization of a manifold can be achieved in various ways depending on the target application (see for example \cite{Vural2011}); however, in certain cases one may want to sample the manifold in such a way that the local linearity of the data is preserved and the tangent space can be correctly recovered from data samples.


The manifold learning problem has been largely studied and we provide now a brief overview of the literature, with a special focus on locally linear approximation methods. The manifold structure of data can be retrieved in various ways, from a global parameterization based on geodesic distances as in ISOMAP ~\cite{Tenenbaum2000}, or via locally linear representations as in LLE ~\cite{Roweis2000} and Hessian Eigenmaps ~\cite{Donoho2003}. The LLE algorithm considers the locally linear structure of the manifold, where each data sample is approximated by a weighted linear combination of its nearest neighbors. Then, the key idea in computing a mapping of the data is the preservation of these weights in the embedded low-dimensional space. Moreover, there are other algorithms such as \cite{Lin2006} which employ the locally linearity of the data by expressing the tangent plane as a linear combination of the manifold samples in a local neighborhood. The Hessian Eigenmaps algorithm is similar to LLE in the sense that it is based on locally linear approximations of the manifold. However, it has been seen to be more robust than LLE as it also takes more detailed geometric characteristics of the manifold into account. With similar ideas, an adaptive manifold learning algorithm is presented in \cite{Zhang2011}, where the authors propose an adaptive local neighborhood size selection strategy. Finally, the work in \cite{Zha2009} examines the conditions under which manifold learning algorithms are able to recover true global parameters from local structures computed with data samples. In particular, the authors show that the error in the global parameterization depends on the local approximation errors, as well as the null space and eigenvalue separation properties of the global parameterization.


Among the dimensionality reduction methods, one can find many examples of algorithms such as \cite{Donoho2003}, \cite{Zhang2002}, \cite{Yang2010}, \cite{Zhan2008}, which apply a local Principal Component Analysis (PCA) for the computation of the tangent space of the manifold like we do in this work. In other words, the tangent space is estimated by computing the eigenvectors of the covariance of the data matrix, where the data samples come from a set of neighbor points on the manifold. This step can be seen as an analysis of PCA under data perturbations, where the perturbation of the data is caused by the nonlinear geometry of the embedding, i.e., the deviation of the manifold samples from the tangent space as a result of nonzero curvature. The performance of Singular Value Decomposition (SVD) or PCA in case of stochastic perturbations is a well-studied topic. There are many results in the literature that examine the perturbation on the singular vectors of a data matrix in the presence of noise. The Davis-Kahan theorem \cite{Davis1970} is a classical result that examines how much the subspace spanned by the eigenvectors of a Hermitian matrix is rotated upon the perturbation of the matrix. The Wedin theorem \cite{Wedin1972} generalizes the analysis to non-Hermitian operators by bounding the angle between the estimated and true singular vectors in terms of the separation between the eigenvalues of the data matrix. A recent result in \cite{Vu2011} addresses the singular vector estimation problem under assumptions of random perturbation noise and low-rank matrix. Finally, the work in \cite{Faber1995} examines the bias of random measurement error on PCA and relates the bias to the SNR of the observed data. However, above studies do not involve the geometric structure of the data. There are also many studies that analyze the performance of PCA for a set of data generated by a specific model. For instance, the works such as \cite{Anderson1963}, \cite{Lawley1956}, \cite{Girshick1939} address the analysis of the eigenvalues and eigenvectors of the covariance matrix of some data conforming to a multivariate normal distribution. These works however do not specifically consider any manifold data model either. 

Only a few recent works have studied the relation between the PCA performance and the data geometry. The work in \cite{Singer2011} presents an interesting study that generalizes the idea of diffusion maps in dimensionality reduction \cite{Coifman2005} to vector diffusion maps, where the new vector diffusion distance involves the similarity between the tangent spaces on different manifold points. In their analysis, the authors also provide a soft bound for the deviation of the locally estimated tangent space at a reference point (using local PCA) from the true tangent space, for a probabilistic sampling of the manifold. In particular, it shows that, when the size $\varepsilon$ of the local area for tangent estimation is set to $\varepsilon = O(K^{-\frac{1}{m+2}})$ with $K$ being the number of samples on the \textit{whole} manifold, the deviation between the estimated and the true tangent space is typically of $O(\varepsilon^{3/2})$. This work however considers a global sampling from a compact manifold while we focus on the local manifold geometry. Finally, the accuracy of tangent space estimation from noisy manifold samples is analyzed in a work parallel to ours \cite{Kaslovsky2011}. The manifold is assumed to be embedded with exactly quadratic forms (similarly to \cite{Tyagi2011}) and the data consists of manifold samples corrupted with Gaussian noise. The work optimizes the number of samples (from a fixed sets of candidates) that is used for estimating the tangent space by considering the effect of noise and curvature on the accuracy of estimation. In particular, the optimal number of samples is selected as a trade-off between the error due to noise and the error caused by the curvature that respectively decreases and increases as the number of samples grows. This study however focuses on manifolds that are embedded with exactly quadratic forms and characterized with a subset of noisy samples given a priori. On the contrary, we are interested in more generic embeddings with arbitrary smooth functions and we aim at characterizing a sampling strategy in terms of the sampling width and density for noiseless manifold samples.


In our paper, we propose to characterize the local linearity of a manifold by studying the accuracy of the tangent space estimation from a local set of randomly selected manifold samples. We propose the following contributions. First, we determine a suitable upper bound on the neighborhood size within which random manifold sampling can be done. In the derivation of this bound, we consider the asymptotic case $K \rightarrow \infty$ so that the neighborhood size purely depends on the manifold geometry. In particular, our analysis depends on (i) the maximum principal curvature of the manifold and (ii) the deviation of the manifold from its second-order approximation. Our main results are stated precisely in Lemma \ref{lemma:md_width_cond} for the quadratic embedding case and in Lemma \ref{lemma:md_angle_asymp_smooth} for the more general smooth embedding case. They show the dependency of the neighborhood size on the correlation between the components in the second-order local approximation of the manifold. Second, we compute a bound on the \textit{minimum} number of  samples for accurate tangent space estimation, given that the sampling is performed randomly in a neighborhood whose size conforms with Lemmas \ref{lemma:md_width_cond} and \ref{lemma:md_angle_asymp_smooth}. We utilize recent results from random matrix theory \cite{Gittens2011}, \cite{Tropp2011} in our analysis. We state the precise expression for this bound on the number of samples in Lemma \ref{lemma:md_k_bound_eps}. Combining the two above results, we give a complete characterization of the local sampling conditions in the form of main theorems, namely Theorem \ref{thm:main_thm_md_quad} for the quadratic embedding case and Theorem \ref{thm:main_thm_md_smooth} for the more general smooth embedding case. We finally discuss potential applications of the new theoretical results proposed in this paper, in respectively manifold learning and manifold sampling problems. 

The rest of the paper is organized as follows. In Section \ref{sec:problem_setup}, we first define the notations used in the paper and then give a formal statement of the problem along with the assumptions made. For ease of readability, the main results of the paper are presented in Section \ref{sec:main_results}. We then present in Section \ref{sec:mDimSurfaces} a detailed analysis of the local sampling conditions for tangent space estimation at a reference point $P$ on $S$. In particular, Sections \ref{subsection:md_angle_bound_quad} and \ref{subsec:quad_sampl_compl} contain the sampling analysis for the case when the embedding is assumed to be exactly quadratic at $P$. In Sections \ref{subsection:md_angle_bound_smooth} and \ref{subsec:smooth_sampl_compl}, we analyze the more general scenario of $m$-dimensional smooth embeddings in $\mathbb{R}^n$. Section \ref{sec:simulation_results} presents simulation results on synthetically generated smooth manifolds. In Section \ref{sec:discuss_results}, we provide a discussion regarding the usage of our theoretical results in practical applications. Finally, in Section \ref{sec:manifold_conclusion}, we provide concluding remarks along with possible directions for future work.


\section{Problem Formulation} \label{sec:problem_setup}
\noindent In this section we first define the notations used in the paper. We then define the our manifold approximation framework. We finally state formally the problem of tangent space estimation that is studied in this paper.
%
\subsection{Notations} \label{subsec:notations}
Let $S \subset \mathbb{R}^n$ be a manifold and $P \in S$ be a reference point on the manifold where the local sampling analysis is performed. We denote the dimension of the manifold $S$ by $m$. The tangent space at $P \in S$ is represented by $\tanps$ and $\tanpps$ is used to denote the orthogonal complement of $\tanps$ in $\mathbb{R}^n$. The notation $\mathcal{C}^r$ is used for denoting $r$ times continuous differentiability. 

We denote the $\ell_p$-norm of a vector $\bar{x} \in \mathbb{R}^n$, $1 \leq p \leq \infty$, by $\norm{\bar{x}}_{p} := \left(\sum_{i=1}^{n} \abs{x_i}^p \right)^{1/p}$ and its $\ell_{\infty}$-norm by $\norm{\bar{x}}_{\infty} := \max_{i}\abs{x_i}$. The inner product between $\bar{x},\bar{y} \in \mathbb{R}^n$ is denoted by $\langle \bar{x},\bar{y} \rangle \ := \bar{x}^T \bar{y}$. Furthermore, we represent a canonical vector in $\mathbb{R}^n$ by $\bar{e}_j$ for $j=1,\dots,n$, where $\bar{e}_j$ has a $1$ at the $j^{\text{th}}$ position and $0$ at all other positions.

Given a matrix $X \in \mathbb{R}^{p \times q}$, we have by its (reduced) singular value decomposition (SVD) \cite{Golub1996} the factorization $X = U \Sigma V^T$ where $U \in \mathbb{R}^{p \times s}$ and $V \in \mathbb{R}^{q \times s}$ are the singular vector matrices with orthonormal columns. The dimension $s \leq \min(p,q)$ corresponds to the rank of $X$. The matrix $\Sigma = \text{diag}(\sigma_1(X),\dots,\sigma_s(X))$ is a diagonal matrix where $\sigma_1(X) \geq \dots \geq \sigma_s(X) > 0$ are the singular values of $X$. We denote the Frobenius norm of $X$ (the $\ell_2$-norm of its vector of singular values) by $\norm{X}_F := (\text{Tr}(X^T X))^{1/2}$ and its operator norm (the largest singular value) by $\norm{X}$. For any square matrix $X \in \mathbb{R}^{p \times p}$, we denote the trace by Tr$(X)$ and the determinant by $\text{det}(X)$. 

For a symmetric matrix $X \in \mathbb{R}^{p \times p}$,  $X = X^T$ we have the eigenvalue decomposition $X = U \Lambda U^T$. Here $\Lambda = \text{diag}(\lambda_1(X),\dots,\lambda_p(X))$ denotes the eigenvalue matrix with $\lambda_1(X) \geq \dots \geq \lambda_p(X)$ and $U \in \mathbb{R}^{p \times p}$ is a unitary matrix so that $UU^T = U^T U = I$. If $X$ is symmetric and positive semidefinite we then have $\lambda_i(X) \geq 0$ for $i=1,\dots, p$. We denote the spectral radius of a symmetric matrix $X$ by $\rho(X) = \max_{i}(\abs{\lambda_i(X)})$.

Throughout the paper, $\mathbb{E}[\cdot]$ is used for denoting the expectation and $\mathbb{P}(\cdot)$ for denoting the probability.

%

\subsection{Framework}\label{subsec:Framework} \noindent We consider an $m$-dimensional submanifold $S$ of $\mathbb{R}^n$ with a smooth embedding in $\matR^n$, $n \ \geq \ m+1$. Let $\neighp$ denote a $\varepsilon$-neighbourhood of $P$ for some $\varepsilon \ > \ 0$, where 
\begin{equation*}
\neighp \ = \ \set{M \in S: \ \norm{M - P}_{2} \ \leq \ \varepsilon}. 
\end{equation*}
The neighborhood of $P$ on $S$ is illustrated in Fig.~\ref{fig:illusNeighP}.
\begin{figure}[]
 \centering
  \includegraphics[scale=0.8]{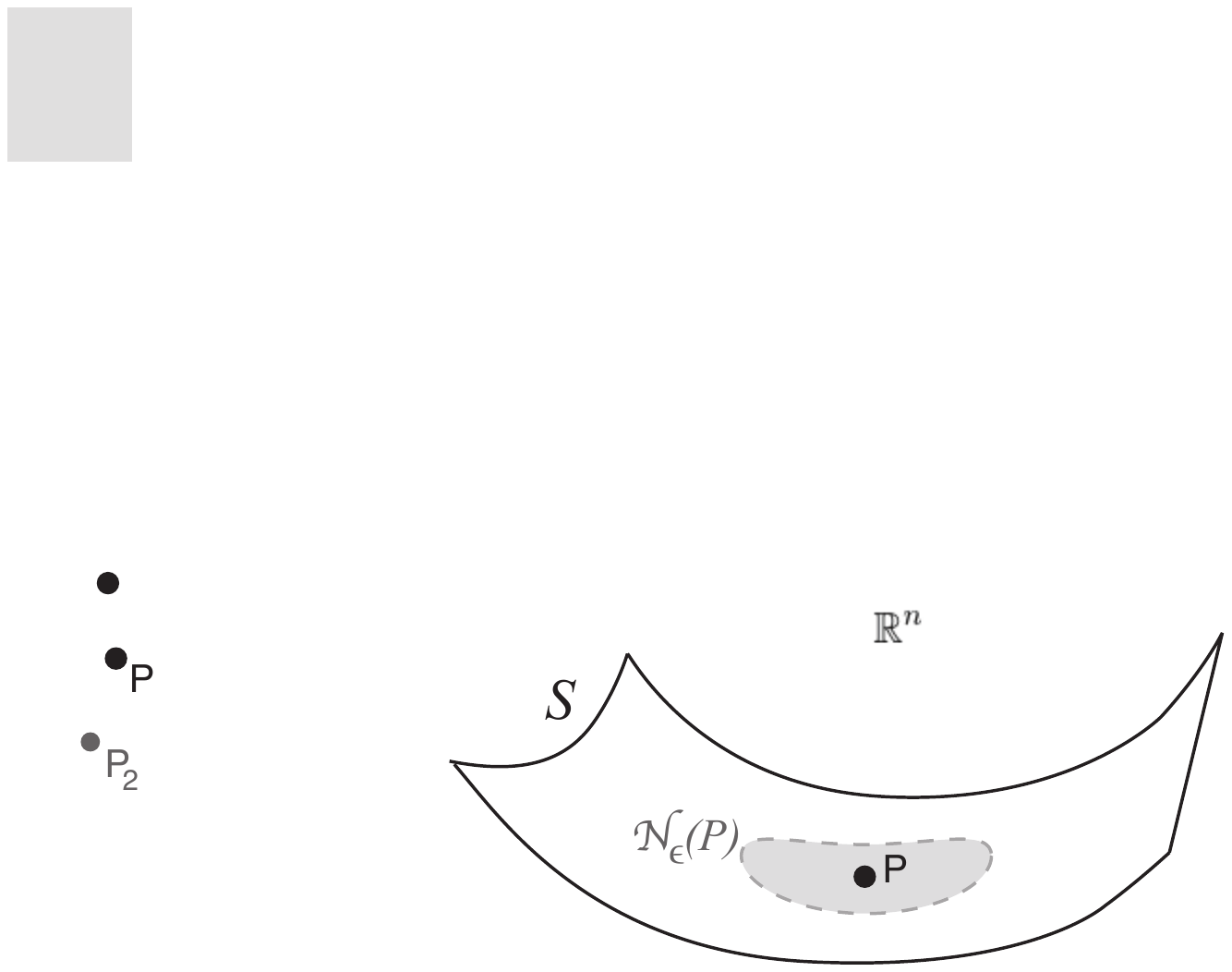}
  \caption{The manifold $S$ and the $\varepsilon$-neighborhood of a manifold point $P \in S$.}
  \label{fig:illusNeighP}
\end{figure}
In this work, as we represent points in $\neighp$ via tangent space parameterization using local functions $f_l : \tanps \rightarrow \matR$, we are interested in the mapping that orthogonally projects the manifold points in a neighborhood of $P$ to $\tanps$. In \cite{Niyogi2006}, Niyogi et al.~provide a characterization of the neighborhood of $P$ within which this mapping is one-to-one, through the condition number of the manifold. Therefore, there exists an $\varepsilon$ such that all points $M \in \neighp$ can be uniquely represented in the form
\begin{equation} \label{eq:tan_space_coor}
[\bar{x}^T  \  f_1(\bar{x}) \dots f_{n-m}(\bar{x})]^T. 
\end{equation} 
Here $\bar{x} \ = \ [x_1 \dots x_m]^T$ denotes the coordinates of the orthogonal projection of a point on \tanps. Note that, in \eqref{eq:tan_space_coor}, the coordinates are with respect to the point $P$ that is the reference point, i.e., the local origin. Furthermore, the tangent space $\tanps$ at $P$ can be represented as
\begin{equation*}
\tanps = \text{span}\set{\bar{e}_1,\dots,\bar{e}_m},
\end{equation*}
where $\bar{e}_j \in \mathbb{R}^{n}$ denote the canonical vectors. 

%
Now, we further assume the smoothness of the embedding to be $\mathcal{C}^r, \ r > 2$, implying that each
\begin{equation*}
f_l : \tanps \rightarrow \matR, \quad l=1,\dots,n-m,
\end{equation*}
is a $\mathcal{C}^r$-smooth function in the variables  $(x_1, \dots, x_m)$. Since $\nabla f_l(\bar{0}) = \bar{0}$ we have by the Taylor expansion of $f_l$ around the origin (i.e., $P$) the following identity:
\begin{equation} \label{eq:Cr_expan_f}
f_l(\bar{x}) = f_{q,l}(\bar{x}) + O(\norm{\bar{x}}_2^3); \quad l = 1,\dots,n-m
\end{equation}
where $f_{q,l}$ is a quadratic form. As a special case, we have a \textit{quadratic embedding} at $P$ when each $f_l$ is an exact quadratic form, i.e.,
\begin{equation*}
f_l(\cdot) = f_{q,l}(\cdot); \quad l=1,\dots,n-m. 
\end{equation*}

Consider the Hessian of $f_l$ at the local origin $P$, which is given as 
\begin{equation*}
\nabla^2 f_l(\bar{0}) = V_l \Lambda_l V_{l}^T,
\end{equation*}
where $\Lambda_l = \text{diag}(\mathcal{K}_{l,1},\mathcal{K}_{l,2},\dots, \mathcal{K}_{l,m})$. Here
$\mathcal{K}_{l,1},\mathcal{K}_{l,2},\dots, \mathcal{K}_{l,m}$ are the principal curvatures of the hypersurface
\[
\mathcal{S}_l=  \set{[x_1 \ \dots \ x_m \ f_l(x_1, \dots, x_m)]: \ [x_1 \ \dots \ x_m]^T \ \in \tanps} \subset \mathbb{R}^{m+1}
\]
defined by $f_l$. We then define the maximum principal curvature at $P$ as
\begin{equation*}
\kfmax := \mathcal{K}_{l^{\prime},j^{\prime}} \quad \text{where} \quad (l^{\prime},j^{\prime}) = \argmax{l,j} \abs{\mathcal{K}_{l,j}}.
\end{equation*}


%
%
We consider that the tangent space can be estimated from sample points in $\neighp$ through a PCA decomposition. More precisely, let us consider $K$ points $\set{P_i}_{i=1}^{K}$ sampled from $\neighp$. Let $M^{(K)}$ denote the local covariance matrix where 
\begin{equation*}
M^{(K)} = \sum_{i=1}^{K} \frac{1}{K} P_{i} P_{i}^T = U \Lambda U^T.
\end{equation*}
It is a common preprocessing step to subtract the empirical mean of the data from data samples in usual PCA. However, in our application, the linear subspace computed with PCA is an estimation of the tangent space, which is restricted to pass from the local origin $P$. Therefore, we omit the mean subtraction step in our analysis and assume that the principal components are computed with respect to the reference point $P$. The matrices $U$ and $\Lambda \in \mathbb{R}^{n}$ represent the eigenvector and eigenvalue matrices respectively of $M^{(K)}$ where
\begin{equation*}
U = [\bar{u}_1 \dots \bar{u}_m \dots \bar{u}_n] ; \quad \Lambda = \text{diag}(\lambda_1, \dots \lambda_m, \dots \lambda_n),
\end{equation*}
with the ordering $\lambda_1 \geq \cdots \geq \lambda_m \geq \cdots \geq \lambda_n$. The optimal $m$-dimensional linear subspace at $P$ in the least squares sense is then given by the span of the $m$ largest eigenvectors of $M^{(K)}$, i.e.,
\begin{equation*}
\widehat{T}_PS := \text{span}\set{\bar{u}_1,\dots,\bar{u}_m}.
\end{equation*}
The tangent space $T_P S$ and its estimation $\widehat{T}_PS$ are illustrated in Fig.~\ref{fig:illusTangSpace}.
\begin{figure}[]
 \centering
  \includegraphics[scale=0.7]{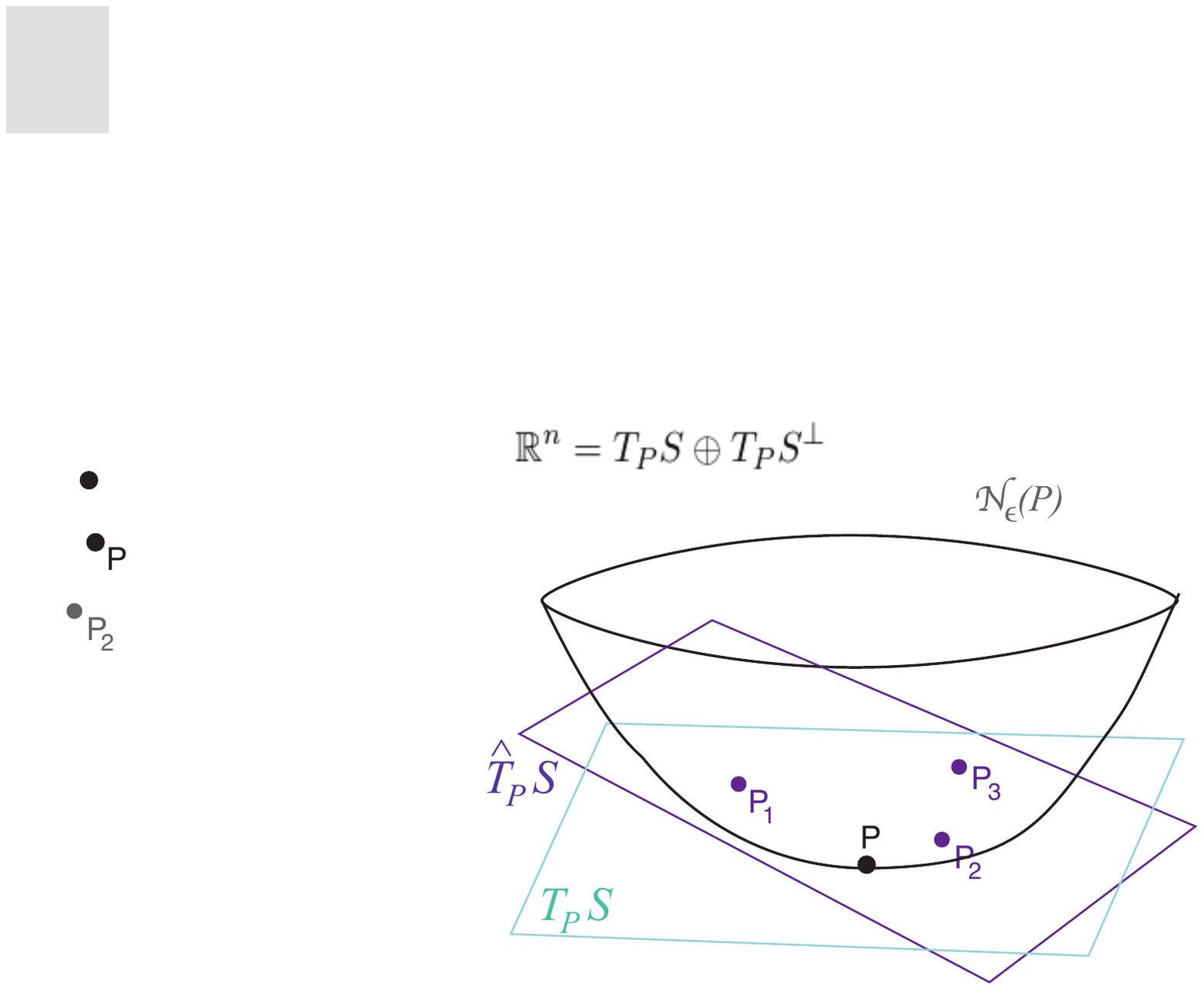}
  \caption{The true tangent space $T_PS$ and the estimated tangent space $\widehat{T}_PS$ at point $P$.}
  \label{fig:illusTangSpace}
\end{figure}
%
Finally, we characterize the accuracy of our estimation with the angle between the estimated and the true tangent spaces. The notion of `angle' between two linear subspaces as defined in ~\cite{Gunawan2005} is given in  Definition \ref{definition:subspace_angles}. 
\begin{definition} \label{definition:subspace_angles}
The angle $\angle A, B$ between two subspaces $A \ = \ \text{span}\{\bar{a}_1,\dots,\bar{a}_p\}$ and $B \ = \ \text{span}\{\bar{b}_1,\dots,\bar{b}_q\}$ of a Euclidean space $\matR^n$, where $\bar{a}_i$'s and $\bar{b}_i$'s are orthonormal vectors, is defined as
\begin{equation*}
\cos^2\angle A, B \ := \ \text{det}(W^T W),
\end{equation*}
where $[W^T]_{i,k}:= <\bar{a}_i,\bar{b}_k>$ is a $p \ \times \ q$ matrix, with $1 \leq p \leq q < \infty$.
\end{definition}
Observe that the definition can be applied to subspaces that are not necessarily of the same dimension. Geometrically speaking,
\begin{equation*}
\cos \angle A,B \ := \ \frac{V_1}{V_2}
\end{equation*}
where $V_1$ is the volume of the $p$-dimensional parallelepiped spanned by the projection of $\set{\bar{a}_1,\dots,\bar{a}_p}$ on $B$ and $V_2$ is the volume of the $p$-dimensional parallelepiped spanned by $\set{\bar{a}_1,\dots,\bar{a}_p}$. Therefore, in order to compare $\tanps$ and $\widehat{T}_PS$, one could also consider the distance between the respective projection matrices $EE^T$ and $U^{(m)} U^{(m)^T}$ through
\begin{equation} \label{eq:probsetup_proj_mat_comp}
\norm{EE^T -  U^{(m)} U^{(m)^T}}_F^2,
\end{equation}
where $E = [\bar{e}_1 \dots \bar{e}_m]$ and $U^{(m)} = [\bar{u}_1 \dots \bar{u}_m]$. Note that an upper bound on $\abs{\angle \widehat{T}_PS, \tanps}$ implies a corresponding upper bound on $\norm{EE^T -  U^{(m)} U^{(m)^T}}_F^2$. Finally, our choice of using Definition \ref{definition:subspace_angles} for estimating angles is motivated by the measure of the geometric deviation of $\widehat{T}_PS$ from $\tanps$. However, one could also work with the error criteria of Eq. \eqref{eq:probsetup_proj_mat_comp} with no change in the analysis and sampling conditions.\footnote{This is explained in more detail in Lemma \ref{lemma:md_angle_bound_cond} and Remark \ref{rem:equiv_proj_ang_defs}.}
%
\subsection{Problem statement}\label{subsec:ProblemState} Given the above settings, we want to describe the conditions on the manifold samples $\set{P_i}_{i=1}^{K}$ such that for a given error bound $\phi \in (0,\frac{\pi}{2})$ on the tangent space estimation,
\begin{equation*}
\abs{\angle \widehat{T}_PS, \tanps} \ < \ \phi \ < \frac{\displaystyle \pi}{\displaystyle 2}
\end{equation*}
is ensured. In particular, for a given error bound $\phi$, we would like to answer the following questions:
\begin{itemize}[leftmargin=*]
\item \textit{Question 1:} What would be a suitable upper bound on the \textit{sampling distance}; i.e., the distance of $P_i$ from $P$? In particular, for large embeddding dimensions $n$, what is the nature of the dependency of this bound on $n,m$ and $\kfmax$? \label{enum:prob_setup_cond_1}
\item \textit{Question 2:} Given that the points $\set{P_i}_{i=1}^{K}$ are sampled such that the sampling distance satisfies the sampling distance bound, what would be a suitable lower bound on the \textit{sampling density} $K$? In particular, for large embeddding dimensions $n$, what is the nature of the dependency of this bound on $n,m$ and $\kfmax$? \label{enum:prob_setup_cond_2}
\end{itemize}

In order to answer the above questions, we consider a random sampling framework where we assume that the coordinates of the orthogonal projections of manifold samples on $\tanps$ are distributed uniformly in the region $[-\sampwidth,\sampwidth]^m \in \tanps$. In other words, denoting the coordinates of the projection of $P_i$ on $\tanps$ by $\bar{x}_i = [x_1^{(i)} \dots x_m^{(i)}]$, we assume that 
\begin{equation*}
x_j^{(i)} \sim \mathcal{U}[-\sampwidth,\sampwidth] \quad \text{i.i.d.} \quad i=1,\dots,K; j=1,\dots,m 
\end{equation*}
where $\mathcal{U}$ denotes the uniform distribution. Therefore, we characterize the \textit{sampling distance} in Question 1 by the parameter $\sampwidth$, which we shall refer to as the sampling width in our analysis.\footnote{See Section \ref{sec:discuss_results} for a discussion on how the bound on $\nu$ relates to the distance in the ambient space.}


\section{Main results} \label{sec:main_results}
\noindent We summarize in this section the main results of the paper. We provide sampling conditions for tangent space estimation in two different cases; namely, quadratic embeddings and generic smooth embeddings. 
%
%
\subsection{Quadratic embedding at $P$} We first consider as a special case the scenario where the manifold $S$ has a quadratic embedding at $P$ in $\mathbb{R}^n$. We present the main sampling theorem in the form of Theorem \ref{thm:main_thm_md_quad} below. The main purpose of this result is to gain some intuition about the sampling conditions when the local functions $f_l$'s involved in the tangent space parametrization have a purely quadratic form and they are not `perturbed' by higher-order terms. We refer the reader to Section \ref{subsection:md_angle_bound_quad} for details regarding the proof and for a more rigorous analysis. 
\begin{theorem}[Quadratic manifold sampling] \label{thm:main_thm_md_quad}
Consider $\set{P_i}_{i=1}^{K}$ to be formed by sampling uniformly at random from the region $[-\sampwidth,\sampwidth]^m$ around $P$ in $\tanps$, i.e.,
\begin{equation*}
x_j^{(i)} \sim \mathcal{U}[-\sampwidth,\sampwidth] \quad \text{i.i.d., } \quad i=1,\dots,K,\, j=1,\dots,m.
\end{equation*}
Let $D \in \mathbb{R}^{n-m \times n-m}$ denote the local correlation matrix for the mappings $\set{f_{q,l}}_{l=1}^{n-m}$ such that
\begin{equation*}
[D]_{l,k} = \mathbb{E}[f_{q,l}(\bar{x}) f_{q,k}(\bar{x})]; \quad l, k = 1,\dots, n-m.
\end{equation*}
We then have the following sufficient sampling conditions that guarantee a bound on $\abs{\angle \widehat{T}_PS, \tanps}$.
For any $\tau \in (0,1)$, the choices
\begin{equation*}
\sampwidth = O(n^{-1/2} m^{-1} \abs{\kfmax}^{-1}) \text{  and} \quad K = O(\tau^{-2}m^2 \log n), \quad \text{ as} \ n \rightarrow \infty 
\end{equation*}
ensure that $\abs{\angle \widehat{T}_PS, \tanps} < \cos^{-1}\sqrt{(1 - \tau^2)^m}$ holds w.h.p.
\end{theorem}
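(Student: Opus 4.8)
The plan is to exploit the block structure that the tangent/normal splitting induces on $M^{(K)}$, reduce the claim to a matrix concentration estimate plus an eigenspace-perturbation argument, and finally convert the resulting principal-angle bound into the determinantal angle of Definition \ref{definition:subspace_angles}. Write each sample as $P_i = [\bar{x}_i^T \ \bar{y}_i^T]^T$ with $\bar{y}_i = [f_{q,1}(\bar{x}_i) \dots f_{q,n-m}(\bar{x}_i)]^T$, so that $M^{(K)}$ splits into a tangent block $A = \frac{1}{K}\sum_i \bar{x}_i\bar{x}_i^T$, a cross block $B = \frac{1}{K}\sum_i \bar{x}_i\bar{y}_i^T$, and a normal block $C = \frac{1}{K}\sum_i \bar{y}_i\bar{y}_i^T$. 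First I would compute $\mathbb{E}[M^{(K)}]$: by independence and the symmetry of $\mathcal{U}[-\sampwidth,\sampwidth]$ one gets $\mathbb{E}[\bar{x}\bar{x}^T] = \frac{\sampwidth^2}{3} I_m$ and $\mathbb{E}[\bar{y}\bar{y}^T] = D$, while $\mathbb{E}[\bar{x}\bar{y}^T] = 0$ because each entry is an odd moment of a symmetric law paired against an even quadratic form. Hence $\mathbb{E}[M^{(K)}] = \mathrm{blkdiag}\big(\tfrac{\sampwidth^2}{3} I_m,\, D\big)$ is block diagonal, and its top-$m$ eigenspace is \emph{exactly} $\tanps = \mathrm{span}\{\bar{e}_1,\dots,\bar{e}_m\}$ as soon as $\tfrac{\sampwidth^2}{3} > \norm{D}$. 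This identifiability fact isolates the two roles: $\sampwidth$ must open a spectral gap, and $K$ must force $M^{(K)}$ to concentrate around $\mathbb{E}[M^{(K)}]$.

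For the sampling width I would bound $\norm{D}$. Each entry $[D]_{l,k} = \mathbb{E}[f_{q,l}(\bar{x}) f_{q,k}(\bar{x})]$ is a fourth moment of quadratic forms; bounding the Hessian entries by $\abs{\kfmax}$ and counting the $O(m^2)$ surviving fourth-moment terms yields $\abs{[D]_{l,k}} = O(m^2 \kfmax^2 \sampwidth^4)$, so $\norm{D} \leq \norm{D}_F = O(n\, m^2 \kfmax^2 \sampwidth^4)$. Demanding $\norm{D} \leq \tfrac12 \cdot \tfrac{\sampwidth^2}{3}$ to leave a gap $\delta \gtrsim \sampwidth^2$ then forces $\sampwidth = O(n^{-1/2} m^{-1} \abs{\kfmax}^{-1})$, which is the stated width condition and also guarantees the ordering $\lambda_m > \lambda_{m+1}$ of $\mathbb{E}[M^{(K)}]$.

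For the sampling density I would write $M^{(K)} - \mathbb{E}[M^{(K)}] = \frac{1}{K}\sum_i \big(P_i P_i^T - \mathbb{E}[P_i P_i^T]\big)$, a sum of independent, bounded, mean-zero matrices. Under the width condition the summands are tangent-dominated, so the maximal summand norm is of order $\mathrm{poly}(m)\,\sampwidth^2$ and the variance proxy of order $\mathrm{poly}(m)\,\sampwidth^4$, each computed blockwise. Feeding these into the matrix Chernoff inequalities of \cite{Tropp2011, Gittens2011} with deviation $t = \tau\delta$ gives $\norm{M^{(K)} - \mathbb{E}[M^{(K)}]} < \tau\delta$ with probability $1-o(1)$ once $K = O(\tau^{-2} m^2 \log n)$, the $\log n$ coming from the ambient-dimension factor in the tail bound. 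A Davis--Kahan $\sin\Theta$ argument with perturbation $\tau\delta$ and gap $\delta$ then yields $\norm{\sin\Theta} < \tau$, i.e. every principal angle $\theta_i$ between $\widehat{T}_PS$ and $\tanps$ obeys $\sin^2\theta_i < \tau^2$. Since by Definition \ref{definition:subspace_angles} we have $\cos^2 \angle \widehat{T}_PS,\tanps = \det(W^T W) = \prod_{i=1}^m \cos^2\theta_i$, the per-angle bound gives $\cos^2\angle > \prod_i (1-\tau^2) = (1-\tau^2)^m$, that is, $\abs{\angle \widehat{T}_PS, \tanps} < \cos^{-1}\sqrt{(1-\tau^2)^m}$.

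The hard part will be the density step. The summands $P_i P_i^T$ are strongly anisotropic rank-one matrices whose tangent and normal components live at scales $\sampwidth$ and $\sampwidth^2$, so the variance proxy and maximal-norm estimates entering the matrix tail bound must be computed blockwise and then balanced against the gap $\delta$ that the width condition produces. Getting the joint dependence on $n$, $m$ and $\kfmax$ right---so that the \emph{same} choice of $\sampwidth$ that opens the gap also keeps the fluctuation controllable with only $\mathrm{poly}(m)\log n$ samples---is the delicate point, since the width and density conditions are coupled through $\delta$ and cannot be optimized independently.
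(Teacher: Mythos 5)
Your proposal is correct and shares the paper's overall architecture (block-diagonal limiting covariance, spectral gap opened by the width condition, matrix concentration for finite $K$, determinant-of-cosines conversion via Lemma \ref{lemma:md_angle_bound_cond}), but the core perturbation step is carried out by a genuinely different route. The paper never invokes a Davis--Kahan bound on the full deviation $\norm{M^{(K)} - \mathbb{E}[M^{(K)}]}$. Instead (Lemma \ref{lemma:md_k_bound_eps} and Theorem \ref{theorem:md_angle_bound_prob}) it controls three quantities separately --- $\lambda_m(M^{(K)})$ from below by the lower eigenvalue Chernoff bound, $\rho(D^{(K)})$ from above by the upper Chernoff bound, and $\norm{B^{(K)}}$ by the rectangular matrix Bernstein inequality --- and then reads the eigenvector bound directly off the second block row of $M^{(K)}\bar{u}_i = \lambda_i \bar{u}_i$, which gives $(\lambda_m(M^{(K)}) - \rho(D^{(K)}))\norm{\bar{u}_{i,2}}_2 < \norm{B^{(K)}}$ and hence $\norm{U_2}_F < \sqrt{m}\,s_3/(s_1\sampwidth^2/3 - s_2 RL\sampwidth^4) < \tau$. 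The benefit of that structured argument is that only the off-diagonal coupling $B^{(K)}$ appears in the numerator while the diagonal blocks need only one-sided control; this is exactly the feature reused in the smooth-embedding case, where the bias enters through the additional off-diagonal block $B_1$. Your route --- a single application of matrix Bernstein to the whole centered sum with deviation $\tau\delta$ against the gap $\delta \gtrsim \sampwidth^2$, followed by a $\sin\Theta$ argument --- is more modular and also delivers the claimed conclusion: with the width condition in force the summand norm is $O(m\sampwidth^2/K)$ and the variance proxy $O(m\sampwidth^4/K)$, so your computation in fact yields $K = O(\tau^{-2} m \log n)$, at least as good as the stated $O(\tau^{-2} m^2 \log n)$ (the extra factor of $m$ in the paper comes from the term $\sampwidth^6 R_\sigma/s_3^2$ in $K^{(3)}_{bound}$). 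The width derivation, the vanishing of the cross-moments $\mathbb{E}[x_j f_{q,l}(\bar{x})]$, and the passage from the largest principal angle to $\det(W^TW) = \prod_i \cos^2\theta_i \geq (1-\tau^2)^m$ all coincide with the paper's Lemma \ref{lemma:md_width_cond} and Lemma \ref{lemma:md_angle_bound_cond}.
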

\noindent \textbf{Interpretation of Theorem \ref{thm:main_thm_md_quad}.} We see that the bound on $\nu$ behaves as $O(n^{-1/2} m^{-1} \abs{\kfmax}^{-1})$, indicating that the sampling region needs to shrink with the increase in ambient space dimension. Furthermore, we observe that the bound on the sampling width depends linearly on the reciprocal $\abs{\kfmax}^{-1}$ of the maximum curvature. The decrease in $\nu$ with respect to the increase in $n$ and $\kfmax$ can be explained as follows. Assuming that $m$ is fixed, as $n$ increases, the number $(n-m)$ of normal components that increase the nonlinearity of the manifold increases, which causes the deviation of $S$ from the tangent space. Furthermore, the magnitude of each normal component increases with the increase in the curvatures associated with that normal direction. The sampling width $\nu$ must be selected sufficiently small such that the tangential components of the data have larger magnitudes than the normal components in PCA, in order to ensure the correct identification of the tangent space. Hence, the largest admissible value of the sampling width $\nu$ depends on the overall magnitude of the normal components. This is affected by both the codimension $(n-m)$ of $S$ and the curvature parameter $\kfmax$, which is used as a uniform bound on the individual curvatures in this work. In the derivation of these main results, the magnitude of the normal components is captured by the spectral norm of the $(n-m) \times (n-m)$ correlation matrix $D$, which increases with both the dimension of the ambient space, and the curvature.

Lastly, we remark that the approximation error term $\tau^2$ arises on account of finite sampling and can be interpreted as the variance error. In particular, provided that the sampling width $\nu$ is chosen to satisfy the appropriate bound, then we have that $\abs{\angle \widehat{T}_PS, \tanps} \rightarrow 0$ in the limit where $K \rightarrow \infty$. 

%
\subsection{Smooth embedding of $S$ in $\mathbb{R}^n$} We now present our main sampling theorem for the general case of smooth embeddings of $S$ in $\mathbb{R}^n$ in the form of Theorem \ref{thm:main_thm_md_smooth}. For details regarding the proof and for a rigorous analysis we refer the reader to Section \ref{subsection:md_angle_bound_smooth}. 

\begin{theorem}[Smooth manifold sampling] \label{thm:main_thm_md_smooth}
Consider $\set{P_i}_{i=1}^{K}$ to be formed by sampling uniformly at random from the region $[-\sampwidth,\sampwidth]^m$ in $\tanps$, i.e.,
\begin{equation*}
x_j^{(i)} \sim \mathcal{U}[-\sampwidth,\sampwidth] \quad \text{i.i.d., } \quad i=1,\dots,K, \, j=1,\dots,m.
\end{equation*}
Let $D \in \mathbb{R}^{n-m \times n-m}$ denote the local correlation matrix for the mappings $\set{f_{q,l}}_{l=1}^{n-m}$, such that
\begin{equation*}
[D]_{l,k} = \mathbb{E}[f_{q,l}(\bar{x}) f_{q,k}(\bar{x})],  \quad l, k = 1,\dots, n-m.
\end{equation*}
We then have the following sufficient sampling conditions that guarantee a bound on $\abs{\angle \widehat{T}_PS, \tanps}$.
For any $\tau \in (0,1)$, the choices
\begin{equation*}
\sampwidth = O(n^{-1/2} m^{-1} \abs{\kfmax}^{-1}) \text{  and} \quad K = O(\tau^{-2}m^2 \log n), \quad \text{ as} \ n \rightarrow \infty
\end{equation*}
ensure that $\abs{\angle \widehat{T}_PS, \tanps} < \cos^{-1}\sqrt{(1 - \tau^2 - O(n^{-1} m \abs{\kfmax}^{-4}))^m}$ holds w.h.p.
\end{theorem}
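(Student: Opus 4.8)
The plan is to view the smooth embedding as a higher-order perturbation of the quadratic embedding treated in Theorem~\ref{thm:main_thm_md_quad}, and to split the estimation error into a \emph{bias} governed by the sampling width $\sampwidth$ and a \emph{variance} governed by the sampling density $K$. First I would write each sample in the tangent-space coordinates of \eqref{eq:tan_space_coor}, $P_i = [\,\bar{x}_i^T \ f_1(\bar{x}_i)\,\dots\,f_{n-m}(\bar{x}_i)\,]^T$, and expand the covariance into the block form
\begin{equation*}
M^{(K)} = \begin{bmatrix} A & B \\ B^T & C \end{bmatrix}, \quad A = \frac{1}{K}\sum_{i=1}^{K} \bar{x}_i \bar{x}_i^T, \quad B = \frac{1}{K}\sum_{i=1}^{K} \bar{x}_i \bar{f}_i^T, \quad C = \frac{1}{K}\sum_{i=1}^{K} \bar{f}_i \bar{f}_i^T,
\end{equation*}
with $\bar{f}_i = [f_1(\bar{x}_i)\,\dots\,f_{n-m}(\bar{x}_i)]^T$, so that $A$ acts on \tanps{} and $C$ on \tanpps. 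Since $\widehat{T}_PS$ is the top-$m$ eigenspace of $M^{(K)}$, I want it to stay aligned with $\text{span}\set{\bar{e}_1,\dots,\bar{e}_m}$; by the subspace-angle reduction of Lemma~\ref{lemma:md_angle_bound_cond} this amounts to producing a spectral gap between the tangential eigenvalues (of order $\sampwidth^2$) and the normal eigenvalues (of order $\sampwidth^4$), together with a bound on the coupling block $B$.

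For the width (asymptotic, $K\to\infty$) part I would pass to expectations. Under the uniform model the odd moments vanish, so for a \emph{purely} quadratic embedding $\mathbb{E}[B]=0$ and $\mathbb{E}[M^{(K)}]=\text{diag}\bigl(\tfrac{\sampwidth^2}{3}I_m,\,D\bigr)$ is block diagonal; its leading eigenspace equals \tanps{} exactly once $\tfrac{\sampwidth^2}{3}>\norm{D}$, which is precisely the width condition of Lemma~\ref{lemma:md_width_cond} and explains why the quadratic bias vanishes. Writing $f_l=f_{q,l}+R_l$ with $R_l=O(\norm{\bar{x}}_2^3)$ as in \eqref{eq:Cr_expan_f}, the remainders introduce a nonzero mean coupling $\mathbb{E}[B]$ whose dominant entries $\mathbb{E}[x_j R_l]$ scale like $\sampwidth^4$ (weighted by the $\mathcal{C}^r$ smoothness constant and summed over the $n-m$ normal directions). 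The induced tilt of the asymptotic eigenspace is of order $\norm{\mathbb{E}[B]}/\sampwidth^2$, and its square contributes the additive deficit $O(n^{-1}m\abs{\kfmax}^{-4})$ inside the cosine-squared; this is the genuine bias of a smooth (non-quadratic) embedding, captured by Lemma~\ref{lemma:md_angle_asymp_smooth}. The width bound $\sampwidth=O(n^{-1/2}m^{-1}\abs{\kfmax}^{-1})$ is chosen to keep $\norm{C}$ safely below $\lambda_m(A)$, and the higher power $\abs{\kfmax}^{-4}$ (versus $\abs{\kfmax}^{-1}$ in the width) reflects that the third-order effects enter at a strictly higher order in $\sampwidth$.

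For the density (finite-$K$) part I would restore the sampling fluctuations. As $M^{(K)}$ is an average of $K$ i.i.d.\ bounded rank-one matrices $P_iP_i^T$, I would invoke the matrix Chernoff/Bernstein inequalities of \cite{Gittens2011,Tropp2011}, packaged in Lemma~\ref{lemma:md_k_bound_eps}, to show that $M^{(K)}$ concentrates about $\mathbb{E}[M^{(K)}]$ in operator norm with high probability once $K=O(\tau^{-2}m^2\log n)$; here $m^2$ reflects the intrinsic dimension of the tangential block, $\log n$ a union bound over the $n$ coordinates, and $\tau^{-2}$ the requirement that each tangential eigenvalue fluctuate by at most a factor $\tau$. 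Feeding this concentration back through Lemma~\ref{lemma:md_angle_bound_cond} contributes a per-direction variance deficit $\tau^2$ that adds to the bias deficit; multiplying the $m$ per-direction factors through $\det(W^TW)$ then gives $\cos^2\angle\widehat{T}_PS,\tanps \ge (1-\tau^2-O(n^{-1}m\abs{\kfmax}^{-4}))^m$ w.h.p., and applying $\cos^{-1}\sqrt{\,\cdot\,}$ yields the stated bound.

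The step I expect to be hardest is the simultaneous control of the spectral gap: the separation between $\lambda_m$ and $\lambda_{m+1}$ must survive both the deterministic remainder perturbation and the random fluctuations at once, and these two error sources sit together in the denominator of the Davis-Kahan-type estimate, where they can interact. Keeping the bias and variance deficits cleanly \emph{additive} inside the $(1-\cdots)^m$ envelope, rather than letting cross terms between $B$, the fluctuations of $A$, and $C$ inflate the bound, is what forces the width bound on $\sampwidth$ to be tight enough that $\norm{C}$ never approaches $\lambda_m(A)$; this is the crux of turning the heuristic scalings above into a rigorous statement.
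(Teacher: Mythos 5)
Your proposal follows essentially the same route as the paper: decompose $M^{(K)}$ into the quadratic-embedding covariance plus a higher-order perturbation block, obtain the bias term from the eigenvalue separation between the $\sampwidth^2/3$ tangential spectrum and the normal spectrum (with the coupling block $\mathbb{E}[x_jR_l]=O(\sampwidth^4)$ driving a tilt of order $\norm{B_1}_F/\sampwidth^2$, exactly the paper's $\sigma_f$ whose square gives the $O(n^{-1}m\abs{\kfmax}^{-4})$ deficit), control the finite-$K$ fluctuations of $\lambda_m$, $\rho(D^{(K)})$ and $\norm{B^{(K)}}$ via the matrix Chernoff/Bernstein bounds of Lemma~\ref{lemma:md_k_bound_eps}, and convert the resulting bound on $\norm{U_2}_F$ into the angle bound through Lemma~\ref{lemma:md_angle_bound_cond}. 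The scalings you derive for the dense-$D$ case match the paper's analysis in Section~\ref{subsec:smooth_sampl_compl}, so the sketch is correct and not materially different from the paper's own proof.
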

\noindent \textbf{Interpretation of Theorem \ref{thm:main_thm_md_smooth}.} As the manifold $S$ is now smoothly embedded in $\mathbb{R}^n$, the local functions $f_l$'s involved in the tangent space parametrization are arbitrary smooth functions of the form \eqref{eq:Cr_expan_f}. Hence, in this case, the deviation of the manifold from the tangent space is due to both the second-order terms $\set{f_{q,l}}_{l=1}^{n-m}$ and the higher-order terms in the Taylor series of $f_l$ (which are $O(\norm{\bar{x}}_2^3)$). Observe that the bound on the sampling density $K$ has a similar order of dependence on $n$, $m$, and $\abs{\kfmax}$ as in Theorem \ref{thm:main_thm_md_quad}. In the bounds on $\abs{\angle \widehat{T}_PS, \tanps}$, the error term represented by $\tau^2$ corresponds to the variance due to finite sampling as in the quadratic embedding case. On the other hand, we see that there is an additional error term of $O(n^{-1} m \abs{\kfmax}^{-4})$ for smooth embeddings, which does not exist in the quadratic embeddings. This term arises on account of the higher-order terms in the Taylor expansion of $f_l$, and can be interpreted as a bias term due to a nonzero sampling width $\sampwidth$. This bias goes to zero as $\sampwidth \rightarrow 0$. For smooth embeddings, in particular, for a fixed $\nu$ that is chosen to satisfy the appropriate bound, $| \angle \widehat{T}_PS, T_P S |$ approaches a constant bias error term as the variance error vanishes in the limit where $K \rightarrow \infty$. The reason why the tangent space estimation is non-biased for quadratic embeddings and biased for arbitrary smooth embeddings can be explained as follows. The normal components $f_{q,l}$ in quadratic embeddings have a symmetry around the origin; i.e., $f_{q,l}(\bar{x}) =f_{q,l}(-\bar{x}) $. However, for smooth embeddings we have $f_l(\bar{x}) \neq f_l(-\bar{x})$ in general because of the higher-order terms, which create an asymmetry in the orientation of the manifold points around the origin. This leads to a perturbation in the estimation of the tangent space with PCA and thus constitutes a bias.

\begin{remark}
In the above results, we have considered the general case where the functions $\set{f_{q,l}(\bar{x})}_{l=1}^{n-m}$ are all correlated; i.e., $D$ is a dense matrix with all nonzero entries. However, in a practical application, one can possibly encounter a setting where some of the $\big( f_{q,l}(\bar{x}), f_{q,k}(\bar{x}) \big)$ pairs are uncorrelated, or  weakly correlated. Therefore, $D$ may typically be a sparse matrix or most of its entries may be close to zero in certain applications. In this case, some of the restrictions on the sampling conditions can be loosened. In order to demonstrate to what extent the sampling conditions may change with respect to the correlation of the normal components of the data, we focus throughout the analysis on the two extreme cases where the random variables $\set{f_{q,l}(\bar{x})}_{l=1}^{n-m}$ are all correlated ($D$ is a dense matrix) and where $\set{f_{q,l}(\bar{x})}_{l=1}^{n-m}$ are mutually uncorrelated ($D$ is a diagonal matrix)\footnote{More details about the feasibility of $D$ being diagonal are given in Section \ref{subsection:md_angle_bound_quad}, Remark \ref{rem:corr_mat_D}.}. We now give an overview of how the above sampling conditions change when $D$ is diagonal. In this case, for quadratic embeddings, the sufficient sampling conditions that guarantee the angle bound given in Theorem \ref{thm:main_thm_md_quad} can be replaced by
\begin{equation*}
\sampwidth = O(m^{-1} \abs{\kfmax}^{-1}) \text{  and} \quad K = O(m n \tau^{-2} \log n).
\end{equation*}
Similarly, for a smooth embedding, the sampling conditions become
\begin{equation*}
\sampwidth = O(n^{-1/3} m^{-5/6} \abs{\kfmax}^{-1/3}) \text{  and} \quad K = O(n \log n),
\end{equation*}
which ensure that $\abs{\angle \widehat{T}_PS, \tanps} < \cos^{-1}\sqrt{(1 - O(n^{-1/3} m^{5/3} \abs{\kfmax}^{-4/3}))^m}$ w.h.p. as $\ n \rightarrow \infty$.

%
\begin{table}[!htp] 
\begin{tabular}[!htp]{| c | c | c |} 
\hline 
\small Correlation of $\set{f_{q,l}}_{l=1}^{n-m}$ & \small Smooth embedding & \small Quadratic embedding \\[1.5pt]
\hline
\small Correlated & \small $\sampwidth = O(n^{-1/2} m^{-1} \abs{\kfmax}^{-1})$ & \small $\sampwidth = O(n^{-1/2} m^{-1} \abs{\kfmax}^{-1})$ \\[1.5pt]
             & \small $K = O(\tau^{-2}m^2 \log n)$ & \small $K = O(\tau^{-2}m^2 \log n)$ \\[1.5pt]
\hline 
\small Uncorrelated & \small $\sampwidth = O(n^{-1/3} m^{-5/6} \abs{\kfmax}^{-1/3})$ & \small $\sampwidth = O(m^{-1} \abs{\kfmax}^{-1})$ \\[1.5pt]
 						 & \small $K = O(n \log n)$ & \small $K = O(m n \tau^{-2} \log n)$ \\[1.5pt]
\hline
\end{tabular}
\newline
\caption{\small Summary of the sampling conditions at a point $P$ on an $m$-dimensional manifold $S$ in $\mathbb{R}^n$ ($m < n$) such that $\abs{\angle \widehat{T}_PS, \tanps} < \cos^{-1}\sqrt{(1 - \tau^2)^m}$ for some $\tau \in (0,1)$. The two following cases are compared: (i) $S$ has a smooth embedding in $\mathbb{R}^n$, (ii) $S$ has a quadratic form in $\mathbb{R}^n$ w.r.t. the point $P$.}
\label{tab:comp_main_results}
\end{table}
These dependences are summarized in Table \ref{tab:comp_main_results}  in comparison with the results obtained for the general case where the normal components are correlated. One can observe the following. For both quadratic and smooth embeddings, the admissible sampling width can be chosen larger if the functions $\set{f_{q,l}(\bar{x})}_{l=1}^{n-m}$ are uncorrelated; i.e., the dependence of $\nu$ on $n$, $m$ and $\kfmax$ is loosened if $D$ is diagonal. This can be intuitively explained as follows. If $D$ is dense and there is a high correlation between two functions $f_{q,l}(\bar{x})$ and $f_{q,k}(\bar{x})$, the projection $[0  \, \dots \, 0  \, f_{q,l}(\bar{x}) \, 0 \dots  \, f_{q,k}(\bar{x}) \,  \dots 0] \subset T_PS^{\perp}$ of $S$ onto the normal plane generated by the normal directions $l$ and $k$ has a strong orientation along a certain direction on this normal plane. This creates a dominant normal direction along which the data is concentrated. Therefore, when the mutual correlations of the functions $\set{f_{q,l}(\bar{x})}_{l=1}^{n-m}$ are high, more dominant normal directions are generated. This creates a bigger challenge for the recovery of the tangent directions with PCA and necessitates the selection of a smaller sampling width. However, if $\set{f_{q,l}(\bar{x})}_{l=1}^{n-m}$ are uncorrelated or weakly correlated, there are less dominant normal directions. This results in looser constraints on the sampling width. This phenomenon reveals itself in the derivations through the spectral norm of the $D$ matrix. When the nonzero entries of the $D$ matrix are restricted to the diagonals, the spectral norm of the matrix grows at a slower rate with respect to the increase in $n$ and $\kfmax$, in comparison with the case where $D$ is dense.

Meanwhile, sampling from a wider region requires the selection of more samples; i.e., since $\nu$ is greater for the uncorrelated case, the required sampling density $K$ is higher when $D$ is diagonal. The bound on $K$ is logarithmic in $n$ for $D$ dense, and loglinear in $n$ for $D$ diagonal. This makes sense, since reducing the sampling width $\nu$ causes $\neighp$ to be more linear and hence loosens the restrictions on the number of samples required to achieve a given approximation bound on $\abs{\angle \widehat{T}_PS, \tanps}$. Of course, the case where $D$ is dense is general and hence the corresponding bounds for $\nu$ and $K$ can be used even if $D$ is in fact diagonal. If $D$ is diagonal, one can however afford to sample the manifold from a larger neighborhood. 

\end{remark}

\section{Analysis} \label{sec:mDimSurfaces}
\noindent We now present a detailed analysis of our local sampling results for smooth $m$-dimensional Riemannian manifolds in $\mathbb{R}^n$. To begin with, we first define the framework for our analysis by introducing the tangent space parameterization for points in $\neighp$.
%
\subsection{Framework for tangent space estimation} We discuss first the parametrization that we use in our analysis. Let $ [x_1 \ \dots \ x_m \ f_1(\bar{x}) \ \dots \ f_{n-m}(\bar{x})]^T$ be a point in $\neighp$ and let $\bar{x} = [x_1 \ \dots \ x_m]^T $ denote its orthogonal projection on $\tanps$. The region $\neighp$ can be represented in terms of $(n-m)$ hypersurfaces of dimension $m$ in $\mathbb{R}^{m+1}$, where the $l^{th}$ hypersurface is given by
\begin{equation*}
\mathcal{S}_l=  \set{[x_1 \ \dots \ x_m \ f_l(x_1, \dots, x_m)]: \ [x_1 \ \dots \ x_m]^T \ \in \tanps} \subset \mathbb{R}^{m+1}.
\end{equation*}
Due to the assumption that the embedding is $\mathcal{C}^r$, where $r>2$, the functions $f_l$ have the following form $\forall l \ = \ 1, \ \dots, \ n-m$
\begin{align*}
f_l(\bar{x}) &= f_l(\bar{0}) + \nabla f_l(\bar{0})^T\bar{x} + \frac{1}{2}\bar{x}^T \nabla^2 f_l(\bar{0})\bar{x} + R_l(\bar{\xi}_l), \\
&= \bar{0} + \bar{0}^T \bar{x} + \frac{1}{2}\bar{x}^T V_l \Lambda_l V_{l}^T \bar{x} + R_l(\bar{\xi}_l), \\
&= \frac{1}{2}\sum_{j=1}^{m}\left(<\bar{x},\bar{v}_{l,j}>^2 \mathcal{K}_{l,j}\right) + R_l(\bar{\xi}_l), \\
&= \frac{1}{2}\norm{\bar{x}}_2^2 \mathcal{K}_{l}(\bar{x}) + R_l(\bar{\xi}_l) = f_{q,l}(\bar{x}) +  R_l(\bar{\xi}_l), 
\end{align*}
where $\bar{\xi}_l \in (\bar{0},\bar{x})$ depends on $\bar{x}$. Here, $f_{q,l}$ denotes the quadratic approximation of $f_l$, and $R_l(\bar{\xi}_l) = O(\norm{\bar{x}}_2^3)$ represents the higher-order remainder terms in its Taylor series. The Hessian of $f_l$ at the origin is represented by $\nabla^2 f_l(\bar{0})$, and
\begin{equation*}
V_l = \begin{bmatrix}
\bar{v}_{l,1} & \bar{v}_{l,2} & \cdots & \bar{v}_{l,m}
\end{bmatrix}_{m \times m} ,  \qquad
\Lambda_l = \text{diag}(\mathcal{K}_{l,1},\mathcal{K}_{l,2},\dots, \mathcal{K}_{l,m})
\end{equation*}
denote respectively the eigenvector and eigenvalue matrices of $\nabla^2 f_l(\bar{0})$. Geometrically, $\mathcal{K}_{l}(\bar{x})$ represents the curvature at point $P$ of the geodesic curve on $\mathcal{S}_l$ from $P$ to $[\bar{x}^T \ f_1(\bar{x}) \ \dots \ f_{n-m}(\bar{x})]^T$, where
\begin{equation*}
\mathcal{K}_{l}(\bar{x}) = \sum_{j=1}^{m}\left(\frac{<\bar{x},\bar{v}_{l,j}>^2}{\norm{\bar{x}}_2^2} \mathcal{K}_{l,j}\right) .
\end{equation*}
Given the above setting, recall from Section \ref{subsec:ProblemState} that
\begin{equation*}
\kfmax = \mathcal{K}_{l^{\prime},j^{\prime}} \quad \text{where} \quad (l^{\prime},j^{\prime}) := \argmax{l,j} \abs{\mathcal{K}_{l,j}}.
\end{equation*}
Here, $\abs{ \kfmax}$ is the largest absolute value of the principal curvatures among the hypersurfaces $\mathcal{S}_l$, for $l=1, \dots , n-m $. We remark that the sampling conditions derived throughout our analysis capture the second-order properties of the manifold at $P$ in terms of the maximum curvature $\kfmax$.
%
Equipped with the above parametrization, we can now describe the estimation of the tangent space. Let us consider $K$ points,
\begin{equation*}
\set{P_i}_{i=1}^{K} \ = \ \set{[x^{(i)}_1 \ x^{(i)}_2 \ \dots \ x^{(i)}_m \ f_1(\bar{x}_{i}) \ \dots \ f_{n-m}(\bar{x}_{i})]^T}_{i=1}^{K}
\end{equation*}
in $\neighp$. Denoting  the coordinates of the orthogonal projection of $P_i$ on $\tanps$ as $\bar{x}_i = [x^{(i)}_1 \ x^{(i)}_2 \ \dots \ x^{(i)}_m]^T$, for $i=1,\dots,K$, we represent the points by the matrix $X^{(K)}$ as follows.
\begin{equation*}
X^{(K)} = \begin{bmatrix}
x^{(1)}_1 & \cdots & x^{(K)}_1 \\
\vdots &  & \vdots \\
x^{(1)}_m & \cdots & x^{(K)}_m \\
f_1(\bar{x}_1) &\cdots & f_1(\bar{x}_K) \\
\vdots & & \vdots \\
f_{n-m}(\bar{x}_1) &\cdots & f_{n-m}(\bar{x}_K) 
\end{bmatrix}
\end{equation*}
where each $f_l$ has the following form:
\begin{equation} \label{eq:exp_for_f}
f_l(\bar{x}) \ = \ \frac{1}{2}\sum_{j=1}^{m}<\bar{x},\bar{v}_{l,j}>^2 \kflj + O(\norm{\bar{x}}_2^3)\ ; \quad l = 1,\dots,n-m.
\end{equation}
The optimal $m$-dimensional linear subspace, in the least squares sense, passing through $P$ will be the one spanned by the eigenvectors corresponding to the $m$ largest eigenvalues of
\begin{equation*}
M^{(K)} = \frac{1}{K}XX^{T^{(K)}} = \begin{bmatrix}
A^{(K)} & B^{(K)} \\
B^{(K)^T} & D^{(K)} \\
\end{bmatrix} = U \Lambda U^T,
\end{equation*}
where the individual submatrices have the following form.
\begin{align*}
A^{(K)} = \begin{bmatrix}
\frac{1}{K}\sum_{i} (x_1^{(i)})^2 & \cdots & \frac{1}{K}\sum_{i} x_1^{(i)} x_m^{(i)} \\
\vdots & & \vdots \\
\frac{1}{K}\sum_{i} x_m^{(i)}x_1^{(i)} & \cdots & \frac{1}{K}\sum_{i} (x_m^{(i)})^2 \\
\end{bmatrix}, \quad B^{(K)} = \begin{bmatrix}
\frac{1}{K}\sum_{i} x_1^{(i)} f_1(\bar{x}_i) & \cdots & \frac{1}{K}\sum_{i} x_1^{(i)} f_{n-m}(\bar{x}_i) \\
\vdots & & \vdots \\
\frac{1}{K}\sum_{i} x_m^{(i)} f_1(\bar{x}_i) & \cdots & \frac{1}{K}\sum_{i} x_m^{(i)} f_{n-m}(\bar{x}_i) \\
\end{bmatrix}
\end{align*}
and
\begin{align*}
D^{(K)} = \begin{bmatrix}
\frac{1}{K}\sum_{i} f^2_1(\bar{x}_i) & \cdots & \frac{1}{K}\sum_{i} f_1(\bar{x}_i) f_{n-m}(\bar{x}_i) \\
\vdots & & \vdots \\
\frac{1}{K}\sum_{i} f_{n-m}(\bar{x}_i)f_1(\bar{x}_i) & \cdots & \frac{1}{K}\sum_{i} f^2_{n-m}(\bar{x}_i) \\
\end{bmatrix}.
\end{align*}
Furthermore,
\begin{equation*}
U = \begin{bmatrix}
\bar{u}_{1} & \cdots & \bar{u}_{m} & \bar{u}_{m+1} & \cdots & \bar{u}_{n}
\end{bmatrix} \,\, \text{ and   } \, \,
\Lambda = \text{diag}(\lambda_1, \lambda_2,\dots, \lambda_n)
\end{equation*}
are respectively the eigenvector and eigenvalue matrices of $\frac{1}{K}XX^{T^{(K)}}$ with $U^T U = UU^T = I_{n}$. Assume the ordering $\lambda_1 \geq \cdots \geq \lambda_m \geq \cdots \geq \lambda_n$. We then have
\begin{align*}
\widehat{T}_{P}S = \text{span}\{\bar{u}_1,\dots,\bar{u}_m\} \ \text{and} \quad \tanps \ = \text{span}\{\bar{e}_1,\dots,\bar{e}_m\},
\end{align*}
where $\set{\bar{e}_j}_{j=1}^{m}$ denote the first $m$ of the $n$ canonical basis vectors in $\mathbb{R}^n$. Now the angle between $\widehat{T}_{P}S$ and $\tanps$ as per Definition \ref{definition:subspace_angles} is given by
\begin{equation*}
\cos^2(\angle \widehat{T}_PS,\tanps) \ := \ \text{det}(W^T W),
\end{equation*}
where $[W^T]_{i,j} = <\bar{u}_i, \bar{e}_j>$ for $1 \leq i,j \leq m$. Let us denote the first $m$ columns of $U$ by $U^{(m)}$ where
\begin{equation*}
U^{(m)} = \begin{bmatrix}
U_1 \\ U_2
\end{bmatrix} ; \quad U_1 \in \mathbb{R}^{m \times m}, U_2 \in \mathbb{R}^{(n-m) \times m}.
\end{equation*}
Lemma ~\ref{lemma:md_angle_bound_cond} states the condition on $\norm{U_2}_F$ that guarantees a bound on $\abs{\angle \widehat{T}_PS,\tanps}$.
%
\begin{lemma} \label{lemma:md_angle_bound_cond}
Consider $K \geq m$ points in \neighp \ sampled such that $\norm{U_2}_F < \tau < 1$ for some $0 < \tau < 1$. Then,
\begin{equation*}
\abs{\angle \widehat{T}_PS,\tanps} < \cos^{-1}\sqrt{(1-\tau^2)^m}.
\end{equation*}
\end{lemma}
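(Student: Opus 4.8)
The plan is to reduce the determinant defining $\cos^2(\angle \widehat{T}_PS, \tanps)$ to a simple spectral quantity built from $U_2$, and then to bound that quantity using the orthonormality of the columns of $U^{(m)}$ together with an elementary product inequality.

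First I would identify the matrix $W$ of Definition \ref{definition:subspace_angles}. With $A = \widehat{T}_PS = \text{span}\{\bar{u}_1,\dots,\bar{u}_m\}$ and $B = \tanps = \text{span}\{\bar{e}_1,\dots,\bar{e}_m\}$, the entries are $[W^T]_{i,j} = \langle \bar{u}_i, \bar{e}_j \rangle$ for $1 \leq i,j \leq m$. Since $\langle \bar{u}_i, \bar{e}_j \rangle$ is just the $j$-th coordinate of $\bar{u}_i$, and since for $j \leq m$ that coordinate is exactly the $(j,i)$ entry of the top block $U_1$ of $U^{(m)} = [U_1; U_2]$, I obtain $W^T = U_1^T$, hence $W = U_1$. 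Therefore $\cos^2(\angle \widehat{T}_PS, \tanps) = \det(U_1^T U_1)$. Next I would exploit that the columns of $U^{(m)}$ are orthonormal, i.e.\ $U^{(m)^T} U^{(m)} = I_m$, which blockwise reads $U_1^T U_1 + U_2^T U_2 = I_m$. Thus $\det(U_1^T U_1) = \det(I_m - U_2^T U_2) = \prod_{i=1}^m (1 - \sigma_i^2)$, where $\sigma_1,\dots,\sigma_m$ are the singular values of $U_2$. Because $U_1^T U_1 = I_m - U_2^T U_2$ is a Gram matrix, hence positive semidefinite, each $\sigma_i^2 \in [0,1]$; in fact $\sum_i \sigma_i^2 = \norm{U_2}_F^2 < \tau^2 < 1$, so every $\sigma_i^2 < 1$.

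The final step is a chain of inequalities. Applying the elementary Weierstrass product inequality $\prod_{i=1}^m(1-t_i) \geq 1 - \sum_{i=1}^m t_i$, valid for $t_i \in [0,1]$ (provable by induction on $m$), with $t_i = \sigma_i^2$ gives
\[
\det(U_1^T U_1) = \prod_{i=1}^m (1 - \sigma_i^2) \;\geq\; 1 - \norm{U_2}_F^2 \;>\; 1 - \tau^2 \;\geq\; (1 - \tau^2)^m,
\]
where the last inequality holds because $1 - \tau^2 \in (0,1)$ and $m \geq 1$. Hence $\cos^2(\angle \widehat{T}_PS, \tanps) > (1-\tau^2)^m$, and the monotonicity of $\cos^{-1}$ on $[0,\pi/2]$ yields the claimed bound $\abs{\angle \widehat{T}_PS, \tanps} < \cos^{-1}\sqrt{(1-\tau^2)^m}$.

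The argument is essentially pure linear algebra, and the only place that demands a little care is the Weierstrass bound: I must keep track of the constraint $\sigma_i^2 \leq 1$ (which the positive semidefiniteness of $U_1^T U_1$ supplies) and of the strict inequality $\norm{U_2}_F^2 < \tau^2$ that propagates to the final strict bound. The hypothesis $K \geq m$ is used only implicitly, to guarantee that $M^{(K)}$ has rank at least $m$ so that the $m$-dimensional estimate $\widehat{T}_PS = \text{span}\{\bar{u}_1,\dots,\bar{u}_m\}$ is well defined; it plays no further role in the estimate.
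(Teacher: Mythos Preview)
Your proof is correct and follows essentially the same route as the paper: identify $W^T W = U_1^T U_1 = I_m - U_2^T U_2$ via the orthonormality of the columns of $U^{(m)}$, then bound $\det(I_m - U_2^T U_2)$ in terms of $\norm{U_2}_F$. The only difference is the final elementary inequality: the paper bounds each factor by $(1-\mu_{\max})$ and uses $\mu_{\max} \le \operatorname{Tr}(U_2^T U_2) = \norm{U_2}_F^2$ to get $\prod_i(1-\mu_i) \ge (1-\mu_{\max})^m > (1-\tau^2)^m$, whereas you use the Weierstrass product inequality to reach the sharper intermediate bound $\prod_i(1-\sigma_i^2) \ge 1 - \norm{U_2}_F^2 > 1-\tau^2$ before relaxing to $(1-\tau^2)^m$; your route thus incidentally shows that the lemma's $m$-th power is slack.
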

\begin{proof}
Clearly $U_1^TU_1$ + $U_2^TU_2$ = $I_{m \times m}$. Let $E \ = \ [\bar{e}_1 \ \dots \ \bar{e}_m]$. We have
\begin{align*}
W^T W = (U^{{(m)}^T}E)(E^TU^{(m)}) = U^T_{1}U_{1} = I_{m \times m} - U^T_2 U_2.
\end{align*}
Denoting the eigenvalues of $U^T_2 U_2$ as $\mu_1,\dots,\mu_m$, we observe that
\begin{align*}
Tr(U^T_2 U_2) \ = \ \norm{U_2}^2_F \ \geq \ \mu_{max},
\end{align*}
where $\mu_{max} = \max_{i=1 \dots m} \mu_i$. Using this result in conjunction with Definition \ref{definition:subspace_angles}, we arrive at the following inequality.
\begin{align*}
\cos^2(\angle \widehat{T}_PS,\tanps): = \det{(I_{m \times m} - U_2^TU_2)} = \prod_{i=1}^{m}(1-\mu_i) 
\geq (1 - \mu_{max})^m.
\end{align*}
Hence, the following bound on $\abs{\angle \widehat{T}_PS,\tanps}$ clearly holds if $\norm{U_2}_F < \tau < 1$.
\begin{align*}
\cos^2(\angle \widehat{T}_PS,\tanps) > (1-\tau^2)^m \Leftrightarrow \abs{\angle \widehat{T}_PS,\tanps} < \cos^{-1}\sqrt{(1-\tau^2)^m}.
\end{align*}
\end{proof}

\begin{remark} \label{rem:equiv_proj_ang_defs}
We remark here that one can compare the column spaces of $E$ and $U^{(m)}$ by also computing the difference between their projection matrices, i.e., $\norm{EE^T - U^{(m)}U^{(m)^T}}_F^2$. It is easily verifiable that
\begin{equation}
\norm{EE^T - U^{(m)}U^{(m)^T}}_F^2 = 2 \norm{U_2}_{F}^2. \label{eq:proj_mat_def}
\end{equation}
Hence when $\norm{U_2}_F < \tau < 1$, we have $\norm{EE^T - U^{(m)}U^{(m)^T}}_F < \sqrt{2}\tau$. The core of our analysis involves deriving sampling conditions which guarantee that $\norm{U_2}_F$ is suitably upper bounded. Hence one can interchangeably use Eq. \eqref{eq:proj_mat_def} or the notion of angle in Definition \ref{definition:subspace_angles} to compare $\tanps$ and $\widehat{T}_PS$ with no change in the analysis and the sampling conditions derived later on. The only change would be in the expression for the error bound where instead of $\cos^{-1}\sqrt{(1 - \tau^2)^m}$ one would have the error term $\sqrt{2}\tau$. Our choice of using Definition \ref{definition:subspace_angles} is purely motivated by our objective of measuring the deviation of $\widehat{T}_PS$ from $\tanps$ in a geometric way.
\end{remark}
Finally, we note that, if the manifold $S$ is a linear subspace of $\mathbb{R}^n$, then we have $B^{(K)} = 0$ and $D^{(K)} = 0$ implying $U_2$ to be trivially equal to zero. In other words, we have $\angle \widehat{T}_PS, \tanps = 0$ for any $K \geq m$. However, when $S$ is a more general manifold, then its nonlinearity manifests itself in the form of error arising due to the local mappings $\set{f_l}_{l=1}^{n-m}$. Hence, in order to obtain a good locally linear approximation of $\tanps$, one intuitively expects that the points in $\neighp$ are sampled sufficiently close to $P$. In particular, one might wonder how far from $P$  points can be sampled and also how many points need to be sampled in order to achieve a good approximation guarantee on $\abs{\angle \widehat{T}_PS, \tanps}$. We now proceed to rigorously analyze these two questions in the following sections. 

%
\subsection{Accuracy of tangent space estimation for quadratic embeddings} \label{subsection:md_angle_bound_quad}
We assume first that $\neighp$ is representable in terms of quadratic forms at the reference point $P$. In other words, for any point $[x_1 \dots x_m \ f_1(\bar{x}) \dots f_{n-m}(\bar{x})]$ in $\neighp$, we have 
\begin{equation*}
f_l(\bar{x}) \ = \ f_{q,l}(\bar{x}), \quad l=1,\dots,n-m,
\end{equation*}
where $f_{q,l}(\cdot)$ denotes the second order approximation of $f_{l}(\cdot)$. 

We consider the points $\set{P_i}_{i=1}^{K}$ to be formed by sampling independently and uniformly at random in \tanps \ such that 
\begin{equation*}
x_j^{(i)} \ \sim \ \mathcal{U} [-\sampwidth,\sampwidth] \ \text{i.i.d., } \qquad \forall i = 1,\dots, K \quad \text{and} \quad j \ = \ 1,\dots,m.
\end{equation*}
To begin with, Lemma ~\ref{lemma:md_width_cond} states precisely the condition on $\sampwidth$ which guarantees that $\widehat{T}_PS = \tanps$ in the limit where $K \rightarrow \infty$.
\begin{lemma} \label{lemma:md_width_cond}
\noindent As $K \rightarrow \infty$, $[M^{(K)}]_{i,j} \ \rightarrow [M]_{i,j}$ a.s. for every $1 \ \leq \ i,j \ \leq \ n$, where
\begin{equation*}
M = \begin{bmatrix}
\frac{\displaystyle \sampwidth^2}{\displaystyle 3} I_{m \times m} & 0_{m \times (n-m)} \\
0_{(n-m) \times m} & D_{(n-m) \times (n-m)} \\
\end{bmatrix} \ , \ [D]_{l,k} = \mathbb{E}[f_{l}(\bar{x}) f_{k}(\bar{x})] = \mathbb{E}[f_{q,l}(\bar{x}) f_{q,k}(\bar{x})],
\end{equation*}
and $l, k = 1,\dots,n-m$. Furthermore, the following holds.

\begin{enumerate}[leftmargin=*]
\item Let $D$ be dense, i.e., let $\set{f_{q,l}}_{l=1}^{n-m}$ be correlated. Then, if the sampling width satisfies
\begin{equation*}
\sampwidth < \sqrt{\frac{\displaystyle 60}{\displaystyle m(n-m)(5m+4)\abs{\kfmax}^2}},
\end{equation*}
it holds that $\mathbb{P}(\abs{\angle \widehat{T}_PS,\tanps} > 0) \rightarrow 0$ as $K \ \rightarrow \ \infty$.

\item Let $D$ be diagonal, i.e., let $\set{f_{q,l}}_{l=1}^{n-m}$ be uncorrelated. Then, if the sampling width satisfies
\begin{equation*}
\sampwidth < \sqrt{\frac{\displaystyle 60}{\displaystyle m(5m+4)\abs{\kfmax}^2}},
\end{equation*}
it holds that $\mathbb{P}(\abs{\angle \widehat{T}_PS,\tanps} > 0) \rightarrow 0$ as $K \ \rightarrow \ \infty$.
\end{enumerate}
\end{lemma}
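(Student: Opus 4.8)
The plan is to reduce the statement to a deterministic spectral-gap condition on the limiting covariance matrix and then to make that condition explicit by bounding $\norm{D}$. First I would establish the almost-sure limit. Since the projections $\bar{x}_i$ are i.i.d.\ and each $f_{q,l}$ is a polynomial that is bounded on the compact box $[-\sampwidth,\sampwidth]^m$, every entry of $M^{(K)} = \frac{1}{K}\sum_{i=1}^{K} P_i P_i^T$ is an empirical mean of i.i.d.\ bounded random variables, so the Strong Law of Large Numbers gives $[M^{(K)}]_{i,j} \to \mathbb{E}[\,\cdot\,]$ almost surely. Computing these expectations block by block: the tangential block has entries $\mathbb{E}[x_p x_q] = \frac{\sampwidth^2}{3}\delta_{pq}$ by independence and $\mathbb{E}[x_p^2] = \sampwidth^2/3$; the cross block $B$ vanishes because each entry $\mathbb{E}[x_p\, f_{q,l}(\bar{x})]$ integrates monomials of odd total degree (one tangential factor times a quadratic form), each carrying at least one variable to an odd power, and hence integrates to zero over the symmetric box; and the normal block is $\mathbb{E}[f_{q,l} f_{q,k}] = [D]_{l,k}$ by definition. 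This produces exactly the block-diagonal $M$ claimed.

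Next I would exploit the block structure. The spectrum of $M$ is the disjoint union of $\set{\sampwidth^2/3}$, attained with multiplicity $m$ on the eigenspace $\tanps = \text{span}\set{\bar{e}_1,\dots,\bar{e}_m}$, and the spectrum of the positive semidefinite matrix $D$. Thus the top-$m$ eigenspace of $M$ is exactly $\tanps$ as soon as $\frac{\sampwidth^2}{3} > \lambda_{\max}(D) = \norm{D}$, which opens a strictly positive gap between the $m$-th and $(m+1)$-th eigenvalues of $M$. Under this gap, the almost-sure entrywise (hence operator-norm) convergence $M^{(K)}\to M$, together with the stability of the dominant spectral projector (standard perturbation theory, e.g.\ Davis--Kahan), forces $\widehat{T}_PS$, the top-$m$ eigenspace of $M^{(K)}$, to converge to $\tanps$; equivalently $U_2 \to 0$ and $\abs{\angle \widehat{T}_PS,\tanps}\to 0$ almost surely, which is the asserted limit.

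It then remains to turn $\frac{\sampwidth^2}{3} > \norm{D}$ into the two explicit thresholds by bounding $\norm{D}$. Since $\set{\bar{v}_{l,j}}_{j=1}^m$ is an orthonormal basis of $\tanps$, I have $\abs{f_{q,l}(\bar{x})} \le \frac{1}{2}\abs{\kfmax}\sum_{j=1}^m \langle \bar{x},\bar{v}_{l,j}\rangle^2 = \frac{1}{2}\abs{\kfmax}\norm{\bar{x}}_2^2$, so each diagonal entry obeys $[D]_{l,l} = \mathbb{E}[f_{q,l}^2] \le \frac{1}{4}\abs{\kfmax}^2\,\mathbb{E}[\norm{\bar{x}}_2^4]$. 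A direct fourth-moment computation for the uniform box, using $\mathbb{E}[x^2]=\sampwidth^2/3$ and $\mathbb{E}[x^4]=\sampwidth^4/5$, yields $\mathbb{E}[\norm{\bar{x}}_2^4] = \sampwidth^4 m(5m+4)/45$, hence $[D]_{l,l} \le \abs{\kfmax}^2 \sampwidth^4 m(5m+4)/180$. For dense $D$ I use the positive-semidefinite bound $\norm{D}\le \text{Tr}(D) \le (n-m)\,\abs{\kfmax}^2\sampwidth^4 m(5m+4)/180$, and $\frac{\sampwidth^2}{3} > \norm{D}$ simplifies (after dividing by $\sampwidth^2$) to $\sampwidth^2 < 60/[m(n-m)(5m+4)\abs{\kfmax}^2]$, i.e.\ case (1). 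For diagonal $D$ there is no summation over normal directions, so $\norm{D} = \max_l [D]_{l,l} \le \abs{\kfmax}^2\sampwidth^4 m(5m+4)/180$ and the same manipulation gives $\sampwidth^2 < 60/[m(5m+4)\abs{\kfmax}^2]$, i.e.\ case (2).

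I expect the main obstacle to be the passage from convergence of the matrix entries to convergence of the estimated tangent subspace: almost-sure entrywise convergence alone does not control eigenvectors, so one must invoke the strict spectral gap $\frac{\sampwidth^2}{3} - \norm{D} > 0$ to guarantee stability of the dominant invariant subspace and thereby conclude $\abs{\angle \widehat{T}_PS,\tanps}\to 0$. The remaining delicate point is purely quantitative, namely securing the right constants and the correct dependence on the codimension $n-m$ in the bound on $\norm{D}$, which is exactly what separates the dense case from the diagonal case.
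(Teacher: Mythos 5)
Your proposal is correct and follows essentially the same route as the paper: SLLN for the entrywise limit, the same block computation showing $B=0$ and $[D]_{l,k}\le \frac{m(5m+4)\sampwidth^4}{180}\abs{\kfmax}^2$, and the same spectral-separation condition $\frac{\sampwidth^2}{3}>\rho(D)$ turned into the two explicit thresholds. The only (harmless) differences are that you bound $\norm{D}$ by $\mathrm{Tr}(D)$ where the paper uses $\norm{D}_F$ (both give the same $(n-m)$ factor), and that you make explicit the subspace-perturbation step (Davis--Kahan) needed to pass from convergence of $M^{(K)}$ to convergence of the top-$m$ eigenspace, which the paper leaves implicit.
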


\noindent The proof of Lemma ~\ref{lemma:md_width_cond} is presented in Appendix \ref{appendix:proof_md_width_cond}. The main idea here is to observe that the eigenspace corresponding to the eigenvalue $\frac{\sampwidth^2}{3}$ is equal to the span of ${\set{\bar{e}_1,\dots,\bar{e}_m}}$, which is the same as $\tanps$. Hence, the condition on the sampling width follows from the requirement that the noiseless spectra associated with $\frac{\sampwidth^2}{3}I_m$ is separated from the noisy spectra associated with $D$ arising on account of the manifold's curvature at $P$. In other words, 
\begin{equation*}
\frac{\sampwidth^2}{3} > \rho(D)
\end{equation*}
where $\rho(D)$ denotes the spectral radius of $D$.
\noindent For the sake of brevity, let us denote the bound on $\sampwidth$ by
\begin{equation*}
\sampwidth_{\text{bound,quad}} = 1/\sqrt{3RL} 
\end{equation*}
where
\begin{equation*}
L = \frac{m(5m+4)\abs{\kfmax}^2}{180} \, \, \text{and } \, \, R = \left\{
\begin{array}{rl}
(n-m), & \text{ if D is dense}\\
1, & \text{ if D is diagonal}.
\end{array} \right .
\end{equation*}
Therefore, $\sampwidth_{\text{bound,quad}}$ depends on the structure of $D$.


\begin{remark} \label{rem:corr_mat_D}
The case where $D$ is dense is general. Therefore, the derived condition on $\sampwidth$ can be used even if $D$ is actually diagonal. Moreover, if $D$ is diagonal, then we see that the condition on $\sampwidth$ is considerably less restrictive. We note here that $\set{f_{q,l}}_{l=1}^{n-m}$ will typically be correlated if $m$ is fixed and $n$ is allowed to increase to a large value. This arises due to the requirement
\begin{equation*}
\mathbb{E}[f_{q,l}(\bar{x}) f_{q,k}(\bar{x})] = 0, \text{with } l,k = 1,\dots,n-m, \quad l \neq k,
\end{equation*}
where a large value of $n$ and a small value of $m$ result in more equations than degrees of freedom. Hence, in order to have $D$ diagonal, the manifold dimension $m$ needs to be sufficiently large. In the case that the correlation matrix $D$ is sparse, the sufficient condition on $\sampwidth$ lies in between the two bounds stated in Lemma \ref{lemma:md_width_cond}.
\end{remark}

We want now to find a lower bound on the number of samples $K$ which guarantees that the deviation $\abs{\angle \widehat{T}_PS,\tanps}$ is suitably upper bounded with high probability. Hence, we first derive a bound on $K$ that guarantees some tail bounds on the eigenvalues of the submatrices of $M^{(K)}$. This bound is precisely stated in the following Lemma.
%
\begin{lemma} \label{lemma:md_k_bound_eps}
Let the sampling width be chosen such that $\sampwidth < \sampwidth_{\text{bound,quad}} = 1/\sqrt{3RL}$. Let $s_1 \in (0,1)$, $ s_2 > e$, $s_3 > 0$ and $0 < p_1,p_2,p_3 < 1$ denote fixed constants. We define
\begin{align*}
K^{(1)}_{bound} &= \frac{6 R_M}{(1-s_1)^2} \log \left((n-m+1)/p_1\right) ,\\
K^{(2)}_{bound} &= \frac{R_D}{s_2 R L} \frac{\log((n-m)/p_2)}{\log(s_2/e)}, \\
K^{(3)}_{bound} &= \frac{\sampwidth^6 R_{\sigma} + \frac{R_B \sampwidth^3 s_3}{3}}{s_3^2/2} \log(n/p_3),
\end{align*}
where
\begin{align*}
R_M = m + \frac{1}{4}(n-m)m^2 \sampwidth^2 \abs{\kfmax}^2, \quad R_D = \frac{1}{4}(n-m) m^2 \abs{\kfmax}^2, \\
R_{\sigma} = \frac{m^2 \abs{\kfmax}^2}{12} \max \set{(n-m), \frac{R(5m+4)}{15}}, \text{and } R_B = \frac{1}{2} m^{3/2}\sqrt{n-m} \abs{\kfmax}.
\end{align*}
Then, let $K_{bound} \ = \max \{K^{(1)}_{bound}, \, K^{(2)}_{bound}, \, K^{(3)}_{bound} \}$. If the number of samples $K$ satisfies $K > K_{bound}$, then the following bounds hold true with probability at least $1-p_1-p_2-p_3$:
\begin{itemize}
\item (i) $\lambda_m(M^{(K)}) > s_1 \frac{\sampwidth^2}{3}$, 
\item (ii) $\rho(D^{(K)}) < s_2 \rho(D)$,  
\item (iii) $\norm{B^{(K)}} < s_3$.
\end{itemize}
\end{lemma}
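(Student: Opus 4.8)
The plan is to treat each of the three claims (i)--(iii) as a concentration statement for a sum of $K$ independent random matrices built from the i.i.d.\ projections $\bar{x}_i \sim \mathcal{U}[-\sampwidth,\sampwidth]^m$, and to invoke the appropriate matrix tail inequality from \cite{Gittens2011, Tropp2011}. Writing $\bar{f}(\bar{x}) = [f_1(\bar{x}) \ \dots \ f_{n-m}(\bar{x})]^T$, we have $M^{(K)} = \frac{1}{K}\sum_i P_i P_i^T$, $D^{(K)} = \frac{1}{K}\sum_i \bar{f}(\bar{x}_i)\bar{f}(\bar{x}_i)^T$, and $B^{(K)} = \frac{1}{K}\sum_i \bar{x}_i \bar{f}(\bar{x}_i)^T$, so each matrix is an empirical average of i.i.d.\ (rectangular) rank-one terms whose expectations are, respectively, $M$, $D$, and $0$; the off-diagonal block vanishes by the evenness $f_{q,l}(\bar{x}) = f_{q,l}(-\bar{x})$ together with the oddness of $\bar{x}$. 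The three bounds will follow by applying to these sums an eigenvalue Chernoff bound, a matrix Chernoff bound, and a matrix Bernstein bound respectively, and then combining the failure events through a union bound to obtain the probability $1 - p_1 - p_2 - p_3$.

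Before applying the inequalities I would record the deterministic per-term estimates that produce $R_M$, $R_D$, $R_B$. Using $\abs{x_j} \leq \sampwidth$ gives $\norm{\bar{x}}_2^2 \leq m\sampwidth^2$, and from $f_{q,l}(\bar{x}) = \frac{1}{2}\sum_j \langle \bar{x}, \bar{v}_{l,j}\rangle^2 \kflj$ one gets $\abs{f_l(\bar{x})} \leq \frac{1}{2}m\sampwidth^2\abs{\kfmax}$, hence $\norm{\bar{f}(\bar{x})}_2^2 \leq \frac{1}{4}(n-m)m^2\sampwidth^4\abs{\kfmax}^2$. Combining these yields $\norm{P_i}_2^2 \leq \sampwidth^2 R_M$, $\norm{\bar{f}(\bar{x}_i)}_2^2 \leq \sampwidth^4 R_D$, and $\norm{\bar{x}_i}_2\norm{\bar{f}(\bar{x}_i)}_2 \leq \sampwidth^3 R_B$, which are exactly the uniform operator-norm bounds (up to the $1/K$ scaling) on the summands of $M^{(K)}$, $D^{(K)}$, and $B^{(K)}$.

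For (i) I would apply the eigenvalue Chernoff bound of \cite{Gittens2011} to lower-bound $\lambda_m(M^{(K)})$: since $\mathbb{E}[M^{(K)}] = M$ and the width condition $\sampwidth < \sampwidth_{\text{bound,quad}}$ guarantees $\lambda_m(M) = \sampwidth^2/3$ (the curvature spectrum in $D$ lies strictly below $\sampwidth^2/3$), the lower tail for the $m$-th eigenvalue carries a dimension factor $(n-m+1)$, per-term norm $\sampwidth^2 R_M/K$, and deviation $\delta = 1-s_1$; the standard simplification $e^{-\delta}/(1-\delta)^{1-\delta} \leq e^{-\delta^2/2}$ then produces $K^{(1)}_{bound}$. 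For (ii) I would apply the matrix Chernoff upper-tail bound to the positive semidefinite sum $D^{(K)}$, using $\mathbb{E}[D^{(K)}]=D$, $\lambda_{\max}(D)=\rho(D)$, dimension factor $(n-m)$, per-term norm $\sampwidth^4 R_D/K$, and deviation factor $s_2 > e$ (so that $e^{s_2-1}/s_2^{s_2}\leq(e/s_2)^{s_2}$); substituting the bound $\rho(D) \leq \sampwidth^4 RL$ established in the proof of Lemma~\ref{lemma:md_width_cond} yields $K^{(2)}_{bound}$. For (iii) I would apply the matrix Bernstein inequality to the zero-mean rectangular sum $B^{(K)}$, with ambient dimension factor $m + (n-m) = n$, uniform norm bound $\sampwidth^3 R_B/K$, and matrix variance $\sampwidth^6 R_\sigma/K$; the Bernstein denominator $\sigma^2 + R s_3/3$ evaluated at $t = s_3$ gives precisely $K^{(3)}_{bound}$.

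The main obstacle is the computation of the variance proxy $R_\sigma$ feeding the Bernstein bound in (iii). This requires evaluating the two one-sided second-moment matrices $\frac{1}{K}\mathbb{E}[\norm{\bar{f}(\bar{x})}_2^2\, \bar{x}\bar{x}^T]$ and $\frac{1}{K}\mathbb{E}[\norm{\bar{x}}_2^2\, \bar{f}(\bar{x})\bar{f}(\bar{x})^T]$ and bounding their operator norms. Since each $f_{q,l}$ is a quadratic form in $\bar{x}$, these are mixed sixth-order moments of the uniform distribution, and the operator norm of the second matrix depends on the correlation structure of $\set{f_{q,l}}_{l=1}^{n-m}$; this is exactly where the dense/diagonal dichotomy enters through the factor $R$ and produces the $\max\set{(n-m), R(5m+4)/15}$ appearing in $R_\sigma$. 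The remaining steps---the per-term norm estimates and the three instantiations of the Chernoff/Bernstein functions---are routine once the moment bookkeeping is in place; the only other point requiring care is tracking the correct dimension factor in each inequality, namely $(n-m+1)$ for the $m$-th eigenvalue of $M^{(K)}$, $(n-m)$ for $\lambda_{\max}(D^{(K)})$, and $n$ for the rectangular matrix $B^{(K)}$.
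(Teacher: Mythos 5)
Your proposal follows essentially the same route as the paper's proof: the same decomposition of (i)--(iii) into concentration statements for the i.i.d.\ rank-one sums defining $M^{(K)}$, $D^{(K)}$ and $B^{(K)}$, the same per-term norm bounds yielding $R_M$, $R_D$, $R_B$, the eigenvalue Chernoff bound of \cite{Gittens2011} for (i) and (ii) (with $\rho(D)\le RL\nu^4$ substituted in the exponent for (ii)), the rectangular matrix Bernstein inequality of \cite{Tropp2011} for (iii) with the variance proxy obtained by factoring $\norm{\mathbb{E}[\norm{\bar{q}_i}_2^2\,\bar{x}_i\bar{x}_i^T]}$ and $\norm{\mathbb{E}[\norm{\bar{x}_i}_2^2\,\bar{q}_i\bar{q}_i^T]}$ through $\norm{\mathbb{E}[\bar{x}\bar{x}^T]}=\nu^2/3$ and $\norm{\mathbb{E}[\bar{q}\bar{q}^T]}=\rho(D)<RL\nu^4$, and a final union bound. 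You have also correctly identified the computation of $R_\sigma$ as the only genuinely nontrivial bookkeeping step, so the plan is sound and matches the paper.
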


The proof of Lemma ~\ref{lemma:md_k_bound_eps} is presented in Appendix ~\ref{appendix:proof_md_k_bound_eps}. 
The lemma defines a sufficient bound on the sampling density, which in turn guarantees probabilistic bounds on the spectral norms of the perturbation matrices $B^{(K)}$ and $D^{(K)}$. Our proof builds on the recent results \cite{Gittens2011}, \cite{Tropp2011}, which give tail bounds on the eigenvalues of sums of independent random matrices. 

We have seen earlier that, if $\sampwidth$ is chosen to satisfy the appropriate bound on the sampling width, then $\mathbb{P}(\abs{\angle \widehat{T}_PS,\tanps} > 0) \rightarrow 0$ as $K \ \rightarrow \ \infty$. We now employ Lemma \ref{lemma:md_k_bound_eps} to show that, if $\sampwidth < c \ \sampwidth_{\text{bound,quad}}$ for some $c < e^{-1/2}$, and if $s_3 > 0$ is suitably upper bounded, then for $K > K_{bound}$, we have that $\abs{\angle \widehat{T}_PS,\tanps}$ is bounded from above with high probability. This is stated precisely in Theorem ~\ref{theorem:md_angle_bound_prob}.
%
\begin{theorem} \label{theorem:md_angle_bound_prob}
Consider $K$ points randomly sampled in $\neighp$ such that their projections to $\tanps$ are independent and uniform in the region $[-\sampwidth,\sampwidth]^m$, i.e.,
\begin{equation*}
x^{(i)}_j \ \sim \ U[-\sampwidth,\sampwidth] \quad \text{i.i.d.}, \quad i=1,\dots,K, \quad j=1,\dots,m.
\end{equation*}
Under the notation defined earlier, assume that, for some fixed $s_1 \in (0,1)$ and $s_2 > e$, 
\begin{equation*}
\sampwidth < \sqrt{\frac{s_1}{s_2}}\sampwidth_{\text{bound,quad}} = \sqrt{\frac{s_1}{3 s_2 R L}}.
\end{equation*}
Then, consider that, for some $\tau \in (0,1)$, 
\begin{equation}
0 < s_3 < \frac{(s_1\frac{\sampwidth^2}{3} - s_2 RL \sampwidth^4)\tau}{\sqrt{m}} = {s_3}_{\text{bound,quad}}.
\label{eq:s3_boundquad}
\end{equation}
Finally, let $0 < p_1,p_2,p_3 < 1$. Then, if $K \ > \ K_{\text{bound}}$, we have that 
\begin{equation*}
\mathbb{P}(\abs{\angle \widehat{T}_PS,\tanps} < \cos^{-1}\sqrt{(1-\tau^2)^m}) > 1-p_1-p_2-p_3.
\end{equation*}
\end{theorem}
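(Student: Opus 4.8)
The plan is to reduce the statement to the hypothesis $\norm{U_2}_F < \tau$ of Lemma~\ref{lemma:md_angle_bound_cond}, and to obtain that bound by combining the three tail estimates of Lemma~\ref{lemma:md_k_bound_eps} with an eigenvector-perturbation argument of Davis--Kahan/Sylvester type. First I would invoke Lemma~\ref{lemma:md_k_bound_eps}: since $K > K_{\text{bound}}$, the events (i) $\lambda_m(M^{(K)}) > s_1 \frac{\sampwidth^2}{3}$, (ii) $\rho(D^{(K)}) < s_2 \rho(D)$, and (iii) $\norm{B^{(K)}} < s_3$ hold simultaneously with probability at least $1 - p_1 - p_2 - p_3$. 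All of the remaining work is then carried out deterministically on this good event.

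The core computation exploits the block structure of $M^{(K)}$. Writing the eigenvector relation $M^{(K)} U^{(m)} = U^{(m)} \Lambda_m$ with $\Lambda_m = \text{diag}(\lambda_1, \dots, \lambda_m)$ and $U^{(m)} = [U_1^T \ U_2^T]^T$ block by block, the lower block gives the Sylvester equation
\[
U_2 \Lambda_m - D^{(K)} U_2 = B^{(K)^T} U_1.
\]
Since $D^{(K)}$ is a Gram matrix it is positive semidefinite, so $\rho(D^{(K)}) = \lambda_{\max}(D^{(K)})$; together with the ordering of $\Lambda_m$ this shows that the Sylvester operator $X \mapsto X\Lambda_m - D^{(K)} X$ has every eigenvalue at least $\delta := \lambda_m(M^{(K)}) - \rho(D^{(K)})$ in absolute value, hence is invertible with inverse of operator norm at most $1/\delta$. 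From Lemma~\ref{lemma:md_width_cond} the threshold $\sampwidth^2/3 > \rho(D)$ is equivalent to $\sampwidth < \sampwidth_{\text{bound,quad}} = 1/\sqrt{3RL}$, whence $\rho(D) = RL\sampwidth^4$; combined with (i), (ii) and the sampling-width hypothesis $\sampwidth^2 < s_1/(3 s_2 RL)$ this yields $\delta > s_1 \frac{\sampwidth^2}{3} - s_2 RL \sampwidth^4 =: \delta_0 > 0$, which in particular confirms that the top $m$ eigenvalues are cleanly separated from the spectrum of $D^{(K)}$. Taking Frobenius norms and using $\norm{U_1}_F \leq \norm{U^{(m)}}_F = \sqrt{m}$ then gives
\[
\norm{U_2}_F \leq \frac{\norm{B^{(K)^T}U_1}_F}{\delta} \leq \frac{\sqrt{m}\,\norm{B^{(K)}}}{\delta} < \frac{\sqrt{m}\,s_3}{\delta_0}.
\]
Finally, the hypothesis $s_3 < \delta_0 \tau/\sqrt{m} = {s_3}_{\text{bound,quad}}$ of \eqref{eq:s3_boundquad} forces $\norm{U_2}_F < \tau$, and Lemma~\ref{lemma:md_angle_bound_cond} converts this into $\abs{\angle \widehat{T}_PS,\tanps} < \cos^{-1}\sqrt{(1-\tau^2)^m}$, completing the argument on the good event.

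I expect the main obstacle to be the perturbation step rather than the probabilistic one, which is already packaged in Lemma~\ref{lemma:md_k_bound_eps}. Two points require care: verifying that $D^{(K)}$ is positive semidefinite so that $\rho(D^{(K)})$ really is its largest eigenvalue (and hence that the gap $\delta$ is correctly oriented), and justifying the $1/\delta$ operator-norm bound for the inverse Sylvester operator, which rests on the spectra of $\Lambda_m$ and $D^{(K)}$ being disjoint --- a fact that itself becomes available only after the sampling-width condition has been used to make $\delta_0$ positive. It is precisely the interplay between the width condition (which controls the eigenvalue gap) and the bound on $s_3$ (which controls the coupling block $B^{(K)}$) that must be tracked carefully, so that the constants line up exactly with the stated expression for ${s_3}_{\text{bound,quad}}$.
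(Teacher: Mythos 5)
Your argument is correct and follows essentially the same route as the paper's proof: both condition on the three events of Lemma~\ref{lemma:md_k_bound_eps}, extract the lower block of the eigenvector equation for $M^{(K)}$, bound $\norm{U_2}_F$ by $\sqrt{m}\,s_3 / (s_1\frac{\sampwidth^2}{3} - s_2 RL \sampwidth^4)$, and close with Lemma~\ref{lemma:md_angle_bound_cond}; the paper simply treats the $m$ eigenvectors one column at a time instead of through a single Sylvester equation. The only slip is the assertion $\rho(D) = RL\sampwidth^4$, which should be the upper bound $\rho(D) < RL\sampwidth^4$ established in the proof of Lemma~\ref{lemma:md_width_cond}, but since only the upper bound is used downstream nothing breaks.
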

The proof is presented in Appendix \ref{appendix:proof_md_angle_samp_quad}. In the proof, we use the conditions derived in Lemma \ref{lemma:md_k_bound_eps} in order to obtain eigenvalue separation conditions for the correlation matrix constructed with  a finite sampling, which are then used to derive a bound on $\| U_2 \|_{F}$. Note that the error term $\tau$ is the variance error arising due to finite sampling; it goes to zero as $K \rightarrow \infty$.
%
\subsection{Analysis of the bounds for quadratic embedding} \label{subsec:quad_sampl_compl}
We now proceed to analyze the dependence of the sampling parameters $\sampwidth$ and $K$ on the manifold dimension $m$, the maximum curvature $\abs{\kfmax}$ and the ambient space dimension $n$, where we assume that $n$ is high (i.e., $n \rightarrow \infty$). We analyze this by considering two separate cases based on the structure of the matrix $D$.
\begin{enumerate}[leftmargin=*]
\item \textbf{$D$ is dense.}
When no assumption is made on the structure of $D$, we have $\sampwidth_{\text{bound,quad}} = O(n^{-1/2} m^{-1} \abs{\kfmax}^{-1})$ as $n \rightarrow \infty$. Using this, one obtains from the corresponding expression of $s_3$ that
\begin{equation*}
s_3 = O\left(n^{-1} m^{-5/2} \abs{\kfmax}^{-2} \tau \right).
\end{equation*}
For a given probability of success, we derive the sampling bound complexity as follows.
\begin{align*}
K_{bound}^{(1)} &= O(R_M \log n) = O(m \log n), \\
K_{bound}^{(2)} &= O\left(\frac{R_D}{RL} \log n\right) = O(\log n), \\
K_{bound}^{(3)} &= O\left(\displaystyle \frac{\sampwidth_{\text{bound,quad}}^6 R_{\sigma} + \frac{\displaystyle R_B \sampwidth_{\text{bound,quad}}^3 s_3}{3}}{s_3^2} \log n \right) = O(\tau^{-2} m^2  \log n).
\end{align*}
Thus $K_{bound} = O(\tau^{-2} m^2 \log n)$ as $n \rightarrow \infty$. Here, the number of samples is seen to depend quadratically on the manifold dimension and logarithmically on the ambient space dimension. Note that the dependency on $n$ is milder in this case in comparison to the case where $D$ is diagonal, which is due to the fact that the condition on the sampling width $\nu$ is stricter when $D$ is dense.
\item \textbf{$D$ is diagonal.}
We first observe that $\sampwidth_{\text{bound,quad}}$ is independent of the dimension $n$. In particular, we have $\sampwidth_{\text{bound,quad}} = O\left( m^{-1} \abs{\kfmax}^{-1}\right)$. Using this, one obtains from the corresponding expression of $s_3$ that
\begin{equation*}
s_3 = O\left(m^{-5/2} \abs{\kfmax}^{-2} \tau\right).
\end{equation*}
For a given probability of success, we derive the sampling bound complexity as follows.
\begin{align*}
K_{bound}^{(1)} &= O(R_M \log n) = O(n \log n), \\
K_{bound}^{(2)} &= O\left(\frac{R_D}{RL} \log n \right) = O(n \log n), \\
K_{bound}^{(3)} &= O\left(\frac{\sampwidth_{\text{bound,quad}}^6 R_{\sigma} + \frac{R_B \sampwidth_{\text{bound,quad}}^3 s_3}{3}}{s_3^2/2} \log n \right) = O(m n \tau^{-2}\log n).
\end{align*}
Thus $K_{bound} = O(m n \tau^{-2} \log n)$ as $n \rightarrow \infty$. Hence, the number of samples has a linear dependence on the intrinsic dimension of the manifold and a loglinear dependence on the ambient space dimension. 
\end{enumerate}
%
\subsection{Accuracy of tangent space estimation for smooth embeddings} \label{subsection:md_angle_bound_smooth}
In the previous section, we have assumed that $\neighp$ can be represented with quadratic forms. We now consider the more general scenario where the manifold is smoothly embedded in $\mathbb{R}^n$. In particular, we assume the smoothness class $\mathcal{C}^r$, where $r > 2$, in order to be able to study the influence of the local curvature of the manifold. Under this assumption we have that $f_l(\cdot) = f_{q,l}(\cdot) + R_{l}(\cdot)$ for $ \ l=1,\dots n-m$, where $f_{q,l}(\cdot)$ denotes the second order approximation of $f_{l}(\cdot)$ and $R_{l}(\cdot)$ denotes the higher-order terms. As each $f_{l}(\cdot)$ is defined over a compact domain, $R_{l}(\cdot)$ is bounded, i.e., $\abs{R_{l}(\cdot)} = O\norm{\cdot}_2^3$ for all $l=1,\dots,n-m$. Hence,

\begin{equation*}
\abs{R_{l}(\cdot)} < C_{s,l}\norm{\cdot}_2^3 \quad l=1,\dots,n-m,
\end{equation*}
where the constant $C_{s,l} > 0$ depends on the magnitude of the third order derivatives of $f_l$ in $\neighp$. We denote

\begin{equation*}
\smoothconst = \max_{l} C_{s,l}, \quad l=1,\dots,n-m.
\end{equation*}
Let us again consider the points $\set{P_i}_{i=1}^{K}$ to be formed by sampling independently and uniformly at random in \tanps \ such that 
\begin{equation*}
x_j^{(i)} \ \sim \ \mathcal{U} [-\sampwidth,\sampwidth] \ \text{i.i.d.} \qquad \forall i = 1,\dots, K \quad \text{and} \quad j \ = \ 1,\dots,m.
\end{equation*}
Using the same notation as before, when $f_l(\cdot) = f_{q,l}(\cdot) + R_{l}(\cdot), \ l=1,\dots n-m$, we arrive at the following form for the local covariance matrix $M^{(K)}$.

\begin{equation*}
M^{(K)} = M_q^{(K)} + \Delta^{(K)}
\end{equation*}
where
\begin{equation*}
M_q^{(K)} = \begin{bmatrix}
A^{(K)} & B^{(K)} \\
B^{(K)^T} & D^{(K)} \\
\end{bmatrix}
\end{equation*}
is the covariance matrix considered in the previous section, with the submatrices $B^{(K)}$ and $D^{(K)}$ representing the error on account of the manifold's curvature at $P$. Furthermore,

\begin{equation*}
\Delta^{(K)} = \begin{bmatrix}
0 & B_1^{(K)} \\
B_1^{(K)^T} & D_1^{(K)} \\
\end{bmatrix}
\end{equation*}
is an additional error term arising on account of the higher-order Taylor series terms of the mappings $\set{f_l}_{l=1}^{n-m}$ with

\begin{equation*}
B_1^{(K)} = \begin{bmatrix}
\frac{1}{K}\sum_i x_1^{(i)}R_{1}(\bar{\xi}_{1,i}) & \dots & \frac{1}{K}\sum_i x_1^{(i)}R_{{n-m}}(\bar{\xi}_{n-m,i}) \\
\vdots & & \vdots \\
\frac{1}{K}\sum_i x_m^{(i)}R_{1}(\bar{\xi}_{1,i}) & \dots & \frac{1}{K}\sum_i x_m^{(i)}R_{{n-m}}(\bar{\xi}_{n-m,i}) 
\end{bmatrix}
\end{equation*}
and

\begin{equation*}
[D_1]^{(K)}_{l,k} = \left\{
\begin{array}{rl}
\frac{1}{K}\sum_i (R_{{l}}(\bar{\xi}_{l,i})R_{{k}}(\bar{\xi}_{k,i}) + R_{{l}}(\bar{\xi}_{l,i})f_{q,k}(\bar{x}_i) 
 + R_{{k}}(\bar{\xi}_{k,i})f_{q,l}(\bar{x}_i)) & \text{if} \quad l \neq k \\
\frac{1}{K}\sum_i (R_{{l}}(\bar{\xi}_{l,i})^2 + 2f_{q,l}(\bar{x}_i)R_{{l}}(\bar{\xi}_{l,i})) & \text{if} \quad l = k.
\end{array} \right.
\end{equation*}
To begin with, let us define 
\begin{equation*}
\delta(\sampwidth) = \smoothconst m^{3/2} \sampwidth^3
\end{equation*}
where the factor $m^{3/2}$ appears since $\norm{\bar{x}_i}_2^3 < m^{3/2} \sampwidth^3$ for $i=1,\dots,K$. We then observe that each entry of $\Delta^{(K)}$ can be bounded as 

\begin{enumerate}[leftmargin=*]
\item $\abs{[B_1^{(K)}]_{j,l}} < \sampwidth\delta(\sampwidth)$ for $j=1,\dots m$; $l=1,\dots,n-m$

\item $\abs{[D_1^{(K)}]_{l,k}} < \delta(\sampwidth)^2 + \delta(\sampwidth) m \sampwidth^2\abs{\kfmax}$ for $l, k = 1,\dots,n-m$
\end{enumerate}
where we used the fact that $\abs{x_j^{(i)}} < \nu$ and $\abs{R_l(\cdot)} < \delta(\nu) $ for obtaining (1); and $\abs{f_{q,l}(\cdot)} < \frac{1}{2}m \sampwidth^2 \abs{\kfmax}$ for obtaining (2). Using the bounds on the entries, we obtain the following bounds on $\norm{B_1^{(K)}}_F$ and $\norm{D_1^{(K)}}_F$ respectively:

\begin{align*}
\norm{B_1^{(K)}}_{F} \ &< \ \sqrt{m(n-m)}\sampwidth\delta(\sampwidth) = \sqrt{m(n-m)} \smoothconst m^{3/2} \sampwidth^4 = \norm{B_1}_{F,bound}, \\
\norm{D_1^{(K)}}_{F} \ &< \ (n-m)(\delta(\sampwidth)^2 + \delta(\sampwidth) m \sampwidth^2\abs{\kfmax}) \\ 
&= (n-m)\smoothconst m^{5/2}\sampwidth^5(\smoothconst m^{1/2}\sampwidth + \abs{\kfmax}) = \norm{D_1}_{F,bound}.
\end{align*}

\noindent Now let us denote 
\begin{equation*}
B_1=\mathbb{E} [ B_1^{(K)} ], \quad D_1=\mathbb{E} [ D_1^{(K)} ], \text{ and } \Delta =\mathbb{E} [ \Delta^{(K)} ] .
\end{equation*}
Due to the ergodicity of the sampling process, we have $B_1= \lim_{K \rightarrow \infty} B_1^{(K)} $, $D_1= \lim_{K \rightarrow \infty} D_1^{(K)} $, and $\Delta = \lim_{K \rightarrow \infty} \Delta^{(K)} $. Since the bounds on the entries of the perturbation submatrices $B_1^{(K)}$ and $D_1^{(K)}$ hold for all $K$, they are also valid for the entries of $B_1$ and $D_1$. Therefore, we get $\norm{B_1}_F  <  \norm{B_1}_{F,bound}$ and $\norm{D_1}_F Ê<  \norm{D_1}_{F,bound}$.

We first consider the case $K = \infty$, where we obtain $M = \lim_{K \rightarrow \infty} M^{(K)} = M_q + \Delta$. It was shown in Lemma \ref{lemma:md_width_cond} that

\begin{equation*}
M_q = \begin{bmatrix}
\frac{\displaystyle \sampwidth^2}{\displaystyle 3} I_{m \times m} & 0_{m \times (n-m)} \\
0_{(n-m) \times m} & D_{(n-m) \times (n-m)} \\
\end{bmatrix},
\end{equation*}
where $[D]_{l,k} = \mathbb{E}[f_{q,l}(\bar{x}) f_{q,k}(\bar{x})]$, for $l,k = 1,\dots,n-m$. Given that $M_q$ is now `perturbed' by $\Delta$, Lemma \ref{lemma:md_angle_asymp_smooth} states the conditions on the sampling width $\sampwidth$ that guarantee an upper bound on the angle between $\widehat{T}_PS$ and $\tanps$.
%
\begin{lemma} \label{lemma:md_angle_asymp_smooth}
Let the sampling width satisfy
\begin{equation*}
\sampwidth < \frac{1}{[3((\beta_2 + RL) + \beta_3 \alpha + \beta_4\alpha^2)]^{1/2}}
\end{equation*}
where $\beta_2 = 4 \smoothconst m^{2}(n-m)^{1/2}$, $\beta_3 = 2(n-m)\smoothconst m^{5/2}\abs{\kfmax}$, $\beta_4 = 2(n-m)m^3 \smoothconst^2$ and
\begin{equation*}
\alpha = \min\set{(3(\beta_2 + RL))^{-1/2},\, (3\beta_3)^{-1/3}, \, (3\beta_4)^{-1/4}}.
\end{equation*}
Then, as $K \rightarrow \infty$,
\begin{equation*}
\mathbb{P}\left(\abs{\angle \widehat{T}_PS,\tanps} > \cos^{-1}\sqrt{(1 - m\sigma_{\infty}^2)^m}\right) \rightarrow 0
\end{equation*}
where
\begin{align*}
\sigma_{\infty} = \frac{\norm{B_1}_{F,bound}}{\frac{\sampwidth^2}{3} - RL \sampwidth^4 - 2(\norm{B_1}_{F,bound} + \norm{D_1}_{F,bound})}.
\end{align*}
\end{lemma}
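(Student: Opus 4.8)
The plan is to treat the limiting covariance $M = M_q + \Delta$ as a deterministic perturbation of the block‑diagonal matrix $M_q$, whose top‑$m$ eigenspace is exactly $\tanps = \mathrm{span}\{\bar e_1,\dots,\bar e_m\}$, and to control how far the perturbation $\Delta$ rotates this eigenspace. By Lemma \ref{lemma:md_angle_bound_cond} it suffices to bound $\norm{U_2}_F$, where $U_2$ is the lower $(n-m)\times m$ block of the top‑$m$ eigenvector matrix $U^{(m)}$ of $M$: exhibiting a strict bound $\norm{U_2}_F < \sqrt{m}\,\sigma_{\infty}$ and applying that lemma with $\tau = \sqrt{m}\,\sigma_{\infty}$ yields precisely $\cos^{-1}\sqrt{(1-m\sigma_{\infty}^2)^m}$. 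Since $M^{(K)} \to M$ almost surely as $K\to\infty$ (as in Lemma \ref{lemma:md_width_cond}), and since I will exhibit a strictly positive eigengap for $M$, the top‑$m$ eigenspace of $M^{(K)}$ converges to that of $M$; a strict deterministic bound on the limiting matrix then transfers to $\mathbb{P}(\norm{U_2^{(K)}}_F < \sqrt m\,\sigma_{\infty}) \to 1$, giving the probabilistic statement.

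First I would record the block structure $M = \bigl[\begin{smallmatrix} \frac{\sampwidth^2}{3} I_m & B_1 \\ B_1^T & D + D_1\end{smallmatrix}\bigr]$, where the off‑diagonal quadratic block $B$ vanishes in the limit by the odd symmetry $f_{q,l}(\bar x) = f_{q,l}(-\bar x)$ against the odd factor $x_j$. Writing the eigenvector relation $M U^{(m)} = U^{(m)}\Lambda^{(m)}$ in blocks, the second block row gives the Sylvester equation $(D+D_1)U_2 - U_2\Lambda^{(m)} = -B_1^T U_1$. The smallest singular value of the associated Sylvester operator equals the spectral separation $\mathrm{sep} := \min_{i,j}\abs{\lambda_i - \lambda_j(D+D_1)}$ between the top‑$m$ eigenvalues $\{\lambda_i\}$ of $M$ and the spectrum of $D+D_1$, so $\mathrm{sep}\cdot\norm{U_2}_F \le \norm{B_1^T U_1}_F \le \norm{B_1}_F\,\norm{U_1}_F \le \sqrt m\,\norm{B_1}_F$, using $\norm{U_1}_F \le \norm{U^{(m)}}_F = \sqrt m$. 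It then remains only to lower‑bound $\mathrm{sep}$.

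For the separation I would combine the quadratic eigengap with Weyl's inequality. From Lemma \ref{lemma:md_width_cond} the spectral radius obeys $\rho(D) \le RL\sampwidth^4$, so for $\sampwidth < \sampwidth_{\text{bound,quad}}$ the $m$‑th eigenvalue of $M_q$ is $\frac{\sampwidth^2}{3}$; Weyl gives $\lambda_m(M) \ge \frac{\sampwidth^2}{3} - \norm{\Delta}$, while $\lambda_{\max}(D+D_1) \le \rho(D) + \norm{D_1}$. Using the dilation bound $\norm{\Delta} \le \norm{B_1} + \norm{D_1}$ together with $\norm{B_1}_F < \norm{B_1}_{F,bound}$ and $\norm{D_1}_F < \norm{D_1}_{F,bound}$ (established before the lemma), this yields $\mathrm{sep} \ge \frac{\sampwidth^2}{3} - RL\sampwidth^4 - 2(\norm{B_1}_{F,bound} + \norm{D_1}_{F,bound})$, which is exactly the denominator of $\sigma_{\infty}$; combined with the Sylvester bound it gives $\norm{U_2}_F < \sqrt m\,\sigma_{\infty}$, provided that denominator is positive.

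The crux, and the step I expect to be the main obstacle, is verifying that the stated width condition makes this denominator (equivalently $\mathrm{sep}$) strictly positive, since after substituting $\norm{B_1}_{F,bound} = \frac{\beta_2}{4}\sampwidth^4$ and $\norm{D_1}_{F,bound} = \frac{\beta_3}{2}\sampwidth^5 + \frac{\beta_4}{2}\sampwidth^6$ the gap becomes the sixth‑degree expression $\frac{\sampwidth^2}{3} - \sampwidth^4[(RL+\tfrac{\beta_2}{2}) + \beta_3\sampwidth + \beta_4\sampwidth^2]$. The role of $\alpha$ is to linearize the higher powers: its defining minimum forces $3(\beta_2+RL)\alpha^2 + 3\beta_3\alpha^3 + 3\beta_4\alpha^4 \ge 1$ (one of the three terms equals $1$ at the minimizer, the rest are nonnegative), which makes the threshold $\sampwidth^{\ast} := [3((\beta_2+RL)+\beta_3\alpha+\beta_4\alpha^2)]^{-1/2}$ satisfy $\sampwidth^{\ast} \le \alpha$. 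Hence $\sampwidth < \sampwidth^{\ast}$ gives $\sampwidth < \alpha$, permitting $\sampwidth^3 \le \alpha\sampwidth^2$ and $\sampwidth^4 \le \alpha^2\sampwidth^2$; substituting these and using $\sampwidth < \sampwidth^{\ast}$ yields $\sampwidth^2[(\beta_2+RL)+\beta_3\sampwidth+\beta_4\sampwidth^2] < \tfrac13$, which after multiplying by $\sampwidth^2$ certifies the gap positive (here $RL+\tfrac{\beta_2}{2} \le RL+\beta_2$ absorbs the slack). I would then invoke Lemma \ref{lemma:md_angle_bound_cond} with $\tau = \sqrt m\,\sigma_{\infty}$, noting the width condition keeps $\sqrt m\,\sigma_{\infty} < 1$ so the bound is well defined, and transfer the deterministic conclusion to the probabilistic one via the almost‑sure convergence $M^{(K)}\to M$.
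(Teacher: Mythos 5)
Your argument is correct and matches the paper's proof in all essentials: the same decomposition $M = M_q + \Delta$ with the same bounds $\norm{B_1}_{F,bound}$, $\norm{D_1}_{F,bound}$, the same Weyl-inequality eigenvalue separation against $\rho(D) < RL\sampwidth^4$, the same use of the lower block row of the eigenvector equation to obtain $\norm{U_2}_F < \sqrt{m}\,\sigma_{\infty}$, and the identical $\alpha$-linearization showing that the stated width condition forces the spectral gap (the denominator of $\sigma_{\infty}$) to be positive. The only cosmetic difference is that you aggregate the $m$ eigenvector equations into a single Sylvester equation and invoke the $\mathrm{sep}$ lower bound on the associated operator, whereas the paper bounds $\norm{\bar{u}_{i,2}}_2 < \sigma_{\infty}$ one eigenvector at a time and sums the squares; both routes produce the same constants and the same final bound.
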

The proof is presented in Appendix \ref{appendix:proof_md_angle_asym_smooth}. The main idea here is to ensure that the spectrum associated with $\frac{\sampwidth^2}{3}I_m$ is separated from the spectrum of the error arising due to the following factors:
\begin{enumerate}[leftmargin=*]
\item The curvature components $\set{f_{q,l}}_{l=1}^{n-m}$ which give rise to the correlation matrix $D$. 
\item The higher-order Taylor series terms of the smooth mappings $\set{f_{l}}_{l=1}^{n-m}$ giving rise to the perturbation matrix $\Delta$.
\end{enumerate}
Observe that, unlike in the case where $f_l$'s are quadratic forms, the deviation $\abs{\angle \widehat{T}_PS,\tanps}$ now does not converge to zero but to a residual bound $\cos^{-1}\sqrt{(1-m \sigma_{\infty}^2)^m}$. This is on account of the additional error associated with the matrix $\Delta$, which now perturbs the covariance matrix $M_q$. The error term $m \sigma_{\infty}^2$ can be interpreted as the bias error arising due to the nonzero sampling width. In particular, it is easily verifiable that 
\begin{equation*}
\sigma_{\infty} \rightarrow 0 \quad \text{as} \quad \sampwidth \rightarrow 0.
\end{equation*}

Also note that, had the $f_l$'s been quadratic forms, we would have $\smoothconst = 0$ resulting in $\sigma_{\infty} = 0$. This gives us the result obtained in Lemma \ref{lemma:md_width_cond}.
\begin{remark} \label{rmk:lemma_md_angle_asymp_smooth}
We remark here that the choice of the sampling width $\sampwidth$ satisfying the condition in Lemma \ref{lemma:md_angle_asymp_smooth} actually ensures the following bound:
\begin{equation*}
\frac{\sampwidth^2}{3} - RL \sampwidth^4 > 4\norm{B_1}_{F,bound} + 2\norm{D_1}_{F,bound}.
\end{equation*} 
Using this implication in the expression for $\sigma_{\infty}$, we obtain the trivial bound $\sigma_{\infty} < 1/2$. Furthermore, the residual angle bound term $\sigma_{\infty}$ can be made arbitrarily small by choosing a sufficiently small $\sampwidth$. 
\end{remark}

We now proceed to the case $K < \infty$. Theorem \ref{thm:md_angle_bound_smooth}, which is the main sampling theorem of this section, states the sufficient conditions on the sampling width $\sampwidth$ and the number of samples $K$, such that the deviation $\abs{\angle \widehat{T}_PS, \tanps}$ is suitably upper bounded with high probability.
%
\begin{theorem} \label{thm:md_angle_bound_smooth}
Consider $K$ points randomly sampled in $\neighp$ such that their projections to $\tanps$ are independent and uniform in the region $[-\sampwidth,\sampwidth]^m$, i.e.,
\begin{equation*}
x_j^{(i)} \sim \mathcal{U}[-\sampwidth,\sampwidth] \quad \text{i.i.d.,} \quad i=1,\dots,K, \text{ and } j=1,\dots,m.
\end{equation*}
Under the notation defined earlier, assume that for some fixed $s_1 \in (0,1)$ and $s_2 > e$, the following holds:
\begin{equation*}
\sampwidth < \left(\frac{s_1}{3[(\beta_2 + s_2 RL) + \beta_3\alpha + \beta_4\alpha^2]}\right)^{1/2} = \sampwidth_{\text{bound,smooth}}.
\end{equation*} 
Then, for some $\tau \in (0,1)$, let $s_3 > 0$ be chosen such that
\begin{equation}
s_3 < [(s_1\frac{\sampwidth^2}{3} - s_2RL \sampwidth^4) - 2(\norm{B_1}_{F,bound} + \norm{D_1}_{F,bound})]\left(\frac{\tau^2}{m} + \sigma_{f}^2 \right)^{1/2} - \norm{B_1}_{F,bound}
\, = \, {s_3}_{\text{bound,smooth}}
\label{eq:s3_boundsmooth}
\end{equation}
where
\begin{equation*}
\sigma_{f} = \frac{\norm{B_1}_{F,bound}}{(s_1\frac{\sampwidth^2}{3} - s_2RL \sampwidth^4) - 2(\norm{B_1}_{F,bound} + \norm{D_1}_{F,bound})}.
\end{equation*}
Finally, let $0 < p_1,p_2,p_3 < 1$. Then, if the number of samples satisfies $K > K_{bound}$, where $K_{bound}$ is as derived in Lemma \ref{lemma:md_k_bound_eps}, the following holds true
\begin{equation*}
\mathbb{P}(\abs{\angle \widehat{T}_PS, \tanps} < \cos^{-1}\sqrt{(1 - \tau^2 - m\sigma_{f}^2)^m}) > 1 - p_1 - p_2 - p_3.
\end{equation*}
\end{theorem}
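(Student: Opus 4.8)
The plan is to reduce the angle bound to a Frobenius-norm bound on the normal component $U_2$ of the estimated tangent frame, exactly as in Lemma~\ref{lemma:md_angle_bound_cond}, and then to control $\norm{U_2}_F$ by combining the finite-sample spectral estimates of Lemma~\ref{lemma:md_k_bound_eps} with a block/Sylvester perturbation argument that absorbs the higher-order term $\Delta^{(K)}$. Concretely, by Lemma~\ref{lemma:md_angle_bound_cond} it suffices to prove that, with probability at least $1-p_1-p_2-p_3$, one has $\norm{U_2}_F < \sqrt{\tau^2 + m\sigma_{f}^2}$; applying that lemma with the effective tolerance $\tau' = \sqrt{\tau^2 + m\sigma_{f}^2}$ then yields the stated bound $\cos^{-1}\sqrt{(1-\tau^2-m\sigma_{f}^2)^m}$, where the sampling-width condition $\sampwidth < \sampwidth_{\text{bound,smooth}}$ guarantees $\tau' < 1$ so that the inverse cosine is well defined.

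First I would invoke Lemma~\ref{lemma:md_k_bound_eps}: since $K > K_{bound}$, on an event $\mathcal{E}$ of probability at least $1-p_1-p_2-p_3$ the quadratic part $M_q^{(K)}$ satisfies $\lambda_m(M_q^{(K)}) > s_1 \sampwidth^2/3$, $\rho(D^{(K)}) < s_2 \rho(D) \le s_2 RL\sampwidth^4$, and $\norm{B^{(K)}} < s_3$. I would condition on $\mathcal{E}$ for the remainder. The deterministic Frobenius bounds $\norm{B_1^{(K)}}_F < \norm{B_1}_{F,bound}$ and $\norm{D_1^{(K)}}_F < \norm{D_1}_{F,bound}$ established before the theorem hold for every realization, hence on $\mathcal{E}$ as well.

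Next I would extract a Sylvester equation from the eigen-relation $M^{(K)} U^{(m)} = U^{(m)}\Lambda_m$ written in the block form $M^{(K)} = M_q^{(K)} + \Delta^{(K)}$. With $\tilde{B} = B^{(K)} + B_1^{(K)}$ and $\tilde{D} = D^{(K)} + D_1^{(K)}$, the lower block reads $U_2\Lambda_m - \tilde{D} U_2 = \tilde{B}^T U_1$. Because $\tilde{D}$ is symmetric and, on $\mathcal{E}$, the top-$m$ eigenvalues of $M^{(K)}$ all exceed $\rho(\tilde{D})$, the Sylvester operator $X \mapsto X\Lambda_m - \tilde{D} X$ is invertible with norm at most $1/(\lambda_m(M^{(K)}) - \rho(\tilde{D}))$, giving
$$\norm{U_2}_F \le \frac{\norm{\tilde{B}^T U_1}_F}{\lambda_m(M^{(K)}) - \rho(\tilde{D})}.$$
For the numerator I would use $\norm{\tilde{B}^T U_1}_F \le (\norm{B^{(K)}} + \norm{B_1^{(K)}}_F)\norm{U_1}_F \le \sqrt{m}\,(s_3 + \norm{B_1}_{F,bound})$, where $\norm{U_1}_F \le \sqrt{m}$ since $U^{(m)}$ has orthonormal columns. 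For the denominator I would apply Weyl's inequality together with $\norm{\Delta^{(K)}} \le \norm{\Delta^{(K)}}_F \le 2\norm{B_1}_{F,bound} + \norm{D_1}_{F,bound}$ to obtain $\lambda_m(M^{(K)}) > s_1\sampwidth^2/3 - (2\norm{B_1}_{F,bound} + \norm{D_1}_{F,bound})$, and $\rho(\tilde{D}) < s_2 RL\sampwidth^4 + \norm{D_1}_{F,bound}$, so that the gap exceeds $G := (s_1\sampwidth^2/3 - s_2 RL\sampwidth^4) - 2(\norm{B_1}_{F,bound} + \norm{D_1}_{F,bound})$, which is strictly positive precisely by the condition $\sampwidth < \sampwidth_{\text{bound,smooth}}$.

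Combining these estimates yields $\norm{U_2}_F < \sqrt{m}\,(s_3 + \norm{B_1}_{F,bound})/G$, and the definition of $\sigma_{f}$ together with the choice of $s_3$ in \eqref{eq:s3_boundsmooth} give $(s_3 + \norm{B_1}_{F,bound})/G < (\tau^2/m + \sigma_{f}^2)^{1/2}$, hence $\norm{U_2}_F < \sqrt{\tau^2 + m\sigma_{f}^2}$ on $\mathcal{E}$, completing the reduction. I expect the main obstacle to be the invariant-subspace step: one must rigorously justify that the $m$ largest eigenvalues of the full perturbed matrix $M^{(K)}$ remain separated from the normal block $\tilde{D}$ after adding $\Delta^{(K)}$ (that is, that $G$ stays positive and the Sylvester inversion is legitimate), and one must track the constants — the factor $2$ on $\norm{B_1}_{F,bound}$ coming from $\norm{\Delta^{(K)}}$ and the $\sqrt{m}$ from $\norm{U_1}_F$ — so that they assemble exactly into $\sigma_{f}$ and into the bound \eqref{eq:s3_boundsmooth}. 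Keeping the bias contribution $\norm{B_1}_{F,bound}$, which survives as $K \rightarrow \infty$, cleanly separated from the variance contribution $s_3$, which is driven to zero by increasing $K$, is the delicate bookkeeping that makes the final $\tau^2 + m\sigma_{f}^2$ split work.
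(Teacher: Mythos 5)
Your proposal is correct and follows essentially the same route as the paper's proof: both reduce the claim to the bound $\norm{U_2}_F < \sqrt{\tau^2 + m\sigma_f^2}$ via Lemma \ref{lemma:md_angle_bound_cond}, invoke the three high-probability events of Lemma \ref{lemma:md_k_bound_eps}, apply Weyl's inequality to absorb $\Delta^{(K)}$, and arrive at the identical estimate $\norm{U_2}_F < \sqrt{m}\,(s_3 + \norm{B_1}_{F,bound})/[(s_1\sampwidth^2/3 - s_2RL\sampwidth^4) - 2(\norm{B_1}_{F,bound}+\norm{D_1}_{F,bound})]$ with the same accounting of bias and variance terms. The only difference is cosmetic: you invert a Sylvester operator acting on the whole block $U_2$ at once, whereas the paper bounds each column $\bar{u}_{i,2}$ separately from the lower-block eigen-relation and sums the squares, both yielding the same constants.
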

\noindent The proof is presented in Appendix \ref{appendix:proof_md_angle_samp_smooth} and is built on the results of Lemma \ref{lemma:md_angle_asymp_smooth}. It uses similar ideas to those in the proof of Theorem \ref{theorem:md_angle_bound_prob}; however, the additional perturbation matrix $\Delta$ also plays a role in the derived bounds. Note that the approximation error consists of two terms - the variance term $\tau$ due to finite sampling and the bias term $\sigma_{f}$ arising due to the nonzero sampling width $\sampwidth$. 
\begin{remark}
We again remark here that the choice of the sampling width $\sampwidth$ in Theorem \ref{thm:md_angle_bound_smooth} ensures the following bound
\begin{equation*}
(s_1\frac{\sampwidth^2}{3} - s_2RL \sampwidth^4) > 4\norm{B_1}_{F,bound} + 2\norm{D_1}_{F,bound}.
\end{equation*} 
Hence, it follows trivially that $\sigma_{f} < 1/2$. Furthermore, $\sigma_{f}$ can be reduced appropriately by choosing a suitably downscaled sampling width. In particular, as shown in Section \ref{subsec:smooth_sampl_compl}, in the worst case $\sigma_{f}$ is $O(n^{-1/6})$ for large $n$. This implies that the effect of the bias error $m \sigma_{f}^2$ on the overall performance is typically mild.
\end{remark}
%
\subsection{Analysis of the bounds for smooth embedding} \label{subsec:smooth_sampl_compl}
We now analyze the complexity of the parameters involved in the sampling analysis for large $n$ (i.e., $n \rightarrow \infty$). We first observe the following for the perturbation terms $\beta_2, \beta_3$ and $\beta_4$ in Lemma \ref{lemma:md_angle_asymp_smooth}:

\begin{equation}
\beta_2 = O(n^{1/2} m^2), \ \beta_3 = O(n m^{5/2} \abs{\kfmax}), \ \beta_4 = O(n m^3). \label{eq:pert_term_orders}
\end{equation}
We now proceed by analyzing two different scenarios depending on the structure of the matrix $D$.
\begin{enumerate}[leftmargin=*]
\item \textbf{D is dense.} When the positive semidefinite matrix $D$ is dense, the sampling width has complexity
\begin{equation*}
\sampwidth_{\text{bound,smooth}} = O(n^{-1/2} m^{-1} \abs{\kfmax}^{-1}).
\end{equation*}
This is similar to the bound in the quadratic embedding case. Next, we have the following complexities for the perturbation bounds $\norm{B_1}_{F,bound}$ and $\norm{D_1}_{F,bound}$.
\begin{align*}
\norm{B_1}_{F,bound} &= O(n^{1/2} m^2 \sampwidth_{\text{bound,smooth}}^4) = O(n^{-3/2} m^{-2} \abs{\kfmax}^{-4}) \\
\norm{D_1}_{F,bound} &= O(n m^{5/2} \sampwidth_{\text{bound,smooth}}^5 \abs{\kfmax}) = O(n^{-3/2} m^{-5/2} \abs{\kfmax}^{-4})
\end{align*}
Finally, we obtain the following complexity for the `residual' angle bound term $\sigma_{f}$.
\begin{equation*}
\sigma_{f} = O\left(\frac{\norm{B_1}_{F,bound}}{\sampwidth_{\text{bound,smooth}}^2}\right) = O(n^{1/2} m^2 \sampwidth_{\text{bound,smooth}}^2) =  O(n^{-1/2} \abs{\kfmax}^{-2}).
\end{equation*} 
We observe that $\sigma_{f}$ decays at a faster rate compared to the case where $D$ is diagonal. The order of the dependency of the sampling width bound on $n$ is higher in this case, which in turn implies a stricter bound on the high-order terms in the Taylor expansion. Finally, since the dependency of $\sampwidth_{\text{bound,smooth}}$ on $n$, $m$, and $\abs{\kfmax}$ is the same as that of $\sampwidth_{\text{bound,quad}}$, we obtain the same sampling complexity as in the quadratic embedding case: 
\begin{equation*}
K_{bound} = O(\tau^{-2} m^2 \log n) \quad \text{as} \quad n \rightarrow \infty.	
\end{equation*}
Therefore, the number of samples has a quadratic dependence on the manifold dimension and a logarithmic dependence on the ambient space dimension.
\item \textbf{D is diagonal.} It can be verified that the bound on the sampling width has the complexity
\begin{equation*}
\sampwidth_{\text{bound,smooth}} = O((\beta_3\alpha)^{-1/2}) = O(n^{-1/3} m^{-5/6} \abs{\kfmax}^{-1/3}). 
\end{equation*} 
This is in contrast to the quadratic embedding case, where we have seen that $\sampwidth_{\text{bound,quad}}$ is independent of $n$. Moving on, we have the following complexities for the perturbation bounds $\norm{B_1}_{F,bound}$ and $\norm{D_1}_{F,bound}$:
\begin{align*}
\norm{B_1}_{F,bound} &= O(n^{1/2} m^2 \sampwidth_{\text{bound,smooth}}^4) = O(n^{-5/6} m^{-4/3} \abs{\kfmax}^{-4/3}), \\
\norm{D_1}_{F,bound} &= O(n m^{5/2} \sampwidth_{\text{bound,smooth}}^5 \abs{\kfmax}) = O(n^{-2/3} m^{-5/3} \abs{\kfmax}^{-2/3}).
\end{align*}
From these orders of dependency, we arrive at the following complexity for the `residual' angle bound term $\sigma_{f}$.
\begin{equation*}
\sigma_{f} = O\left(\frac{\norm{B_1}_{F,bound}}{\sampwidth_{\text{bound,smooth}}^2}\right) = O(n^{1/2} m^2 \sampwidth_{\text{bound,smooth}}^2) = O(n^{-1/6} m^{1/3} \abs{\kfmax}^{-2/3}).
\end{equation*}
Observe that $\sigma_{f}$ decays with the increase in $n$, which is due to the decrease in $\sampwidth_{\text{bound,smooth}}$. Notice also that $\sigma_{f}$ gets smaller when the maximum curvature $\abs{\kfmax}$ increases. This can be intuitively explained as follows. It has been discussed in Section \ref{sec:main_results} that, as $f_{q,l}(\bar{x}) = f_{q,l}(-\bar{x})$, the normal components $f_{q,l}$ of the second-order terms constitute a symmetry around the origin. Meanwhile, the higher-order terms do not have such a symmetry in general, causing a bias on the tangent space estimation with PCA. The residual angle bound $\sigma_{f}$ is associated with this bias resulting from the asymmetry of the normal components. As $\abs{\kfmax}$ increases, the second-order terms get more significant compared to the higher-order terms, which strengthens the symmetry of the manifold and reduces the bias term. We now study the sampling complexity by analyzing $K_{bound}^{(1)},K_{bound}^{(2)}$ and $K_{bound}^{(3)}$ separately. It can be easily verified that
\begin{align*}
K_{bound}^{(1)} &= O(R_M \log n) = O(n^{1/3} m^{1/3} \abs{\kfmax}^{4/3} \log n), \\
\text{and} \quad K_{bound}^{(2)} &= O\left(\frac{R_D}{RL} \log n\right) = O(n \log n).
\end{align*}
Furthermore, by observing that 
\begin{equation} \label{eq:s3_order_diag}
s_3 = O(\sampwidth_{\text{bound,smooth}}^2 m^{-1/2} \tau) = O(n^{-2/3} m^{-13/6} \abs{\kfmax}^{-2/3} \tau),
\end{equation}
we have
\begin{align*}
K_{bound}^{(3)} &= O\left(\frac{\displaystyle \sampwidth_{\text{bound,smooth}}^6 R_{\sigma} + \frac{R_B \sampwidth_{\text{bound,smooth}}^3 s_3}{3}}{s_3^2} \log n \right) \\
&= O(n^{1/3} m^{4/3} \abs{\kfmax}^{4/3} \log n).
\end{align*}
Since $K_{bound} = \max \set{K_{bound}^{(1)}, K_{bound}^{(2)}, K_{bound}^{(3)}}$, we have
\begin{equation*}
K_{bound} = O(K_{bound}^{(2)}) = O(n \log n) \quad \text{as} \quad n \rightarrow \infty. 
\end{equation*} 
Hence, the number of samples has a loglinear dependence on the ambient space dimension. In fact, although $K_{bound}$ is chosen according to $K_{bound}^{(2)}$, this choice of $K_{bound}$ implies that $s_3$ can be chosen up to the order $s_3 = O(n^{-1} m^{-3/2})$ by retaining the value of $K_{bound}$. Comparing this with the expression in \eqref{eq:s3_order_diag}, we see that the variance term is $\tau^2 = O(n^{-2/3} m^{4/3})$. Meanwhile, the bias term is $m \sigma_f^2 =O( n^{-1/3} m^{5/3} |\mathcal{K}_{max}|^{-4/3}) $, which shows that the decay of the variance term with the increase in $n$ is faster than the decay of the bias term. As we consider that $n$ is large, we can neglect the variance term in comparison with the bias term.   This gives 
\begin{equation*}
\tau^2 + m \sigma_{f}^2 = O(m \sigma_{f}^2) = O(n^{-1/3} m^{5/3} \abs{\kfmax}^{-4/3}).
\end{equation*}
The fact that the error resulting from finite sampling is negligible compared to the error due to the high-order Taylor terms when $D$ is diagonal can be interpreted as follows. When $D$ is diagonal, the samples can be chosen from a relatively wide region. Then, since the sampling width is large, the error in the estimation of the tangent space caused by the asymmetry in the geometric structure of the manifold dominates the error caused by finite sampling.
\end{enumerate} 

%
\section{Experimental Results} \label{sec:simulation_results}
\noindent In this section we present experimental results for the empirical validation of the sampling conditions derived in the preceding sections. For the sake of brevity, we use the notation $\theta = \angle \widehat{T}_PS, \tanps$ to describe the angle between $\tanps$ and $\widehat{T}_PS$.
Recall from Section \ref{sec:mDimSurfaces} that, for any point $P$ lying on a smooth $m$-dimensional manifold $S$ in $\mathbb{R}^n$, the points lying in the neighborhood $\neighp$ of $P$ have the following representation:
\begin{equation*}
[\bar{x}^T \ f_1(\bar{x}) \ \dots \ f_{n-m}(\bar{x})]; \quad f_l : \tanps \rightarrow \mathbb{R}.
\end{equation*}

In the experiments, we study different manifold embeddings, where the functions $f_l$ have the following form:
\begin{enumerate}
\item \textit{Quadratic form:} $f_l = \frac{1}{2}\sum_{j=1}^{m} \mathcal{K}_{l,j} x_{j}^2$ 
\item \textit{Smooth mapping 1:} $f_l = 1-\exp\left(\frac{1}{2}\sum_{j=1}^{m} \mathcal{K}_{l,j} x_{j}^2\right)$
\item \textit{Smooth mapping 2:} $f_l = \sin\left(\frac{1}{2} \sum_{j=1}^{m} \mathcal{K}_{l,j} x_{j}^2 \right)$
\item \textit{Smooth mapping 3:} $f_l = \sum_{j=1}^{m} \left(\frac{1}{2}\mathcal{K}_{l,j} x_{j}^2 + a_{l,j} x_j^3 + b_{l,j} x_j^4 + c_{l,j} x_j^5 \right)$
\end{enumerate} 

In particular, we consider the mappings $\set{f_l}_{l=1}^{n-m}$ to be all of the same form. Furthermore, we focus on the general case where $\set{f_{q,l}}_{l=1}^{n-m}$ are correlated, or equivalently $D$ is dense, which is the most generic scenario. Then, for a given value of $\kfmax$, we select the principal curvatures $(\mathcal{K}_{l,1},\dots,\mathcal{K}_{l,m})$ randomly from the interval $[0,\abs{\kfmax}]^m$ and then randomly assign the same sign ($+$ or $-$) to the elements of $\set{\mathcal{K}_{l,j}}_{j=1}^{m}$. Furthermore for Smooth mapping 3, we select the coefficients $a_{l,j},b_{l,j}$ and $c_{l,j}$ randomly from the interval $[0,10]$ for $l=1,\dots,n-m$ and $j=1,\dots,m$.

We sample the points as explained in Section \ref{sec:mDimSurfaces}. We compute the tangent space with these samples points and compare the resulting estimation with the true tangent space by measuring the angle between both subspaces. Then we analyze the results from the perspective of the theoretical bounds on the width of the sampling regions and on the number of samples, which have been derived earlier in the paper.  In particular, we consider the sampling width to have the value $\nu = \gamma \, \nu_{\text{bound,quad}}$, where
\begin{equation*}
\nu_{\text{bound,quad}} := \sqrt{\frac{60}{m(n-m)(5m+4)\abs{\kfmax}^2}}.
\end{equation*}
The choice $\nu_{\text{bound,quad}}$ for the reference sampling width is due to the fact that it can be easily computed and it also provides a basis for comparing  smooth embeddings with quadratic embeddings, made possible by varying the scale parameter $\gamma$. 

In the first set of experiments, we examine the relation between the estimation error, i.e., the deviation $\abs{\theta}$, and the sampling density $K$. We fix $m = 5$, $\kfmax = 10$ and consider different values for the dimension of the ambient space, namely $n = 100,500,1000$. For each value of $(m,n,\kfmax)$, we choose the sampling width as a scaled version of theoretical bound, i.e., $\nu = \gamma \nu_{\text{bound,quad}}$. We estimate the tangent space with $K$ samples and compute the approximation error with respect to the true tangent subspace. The results, shown in Fig. \ref{fig:md_exp1} have been averaged over 25 random trials for each value of $K$, where $K$ is varied from 100 to 2000 in steps of 100.

We first show in Figures  \ref{fig:md_exp1_quad1}-\ref{fig:md_exp1_quad3} the results obtained for a quadratic embedding. We observe that for the choice $\nu \approx 1.2 \, \nu_{\text{bound,quad}}$, $\abs{\theta}$ decreases sharply towards $0^{\circ}$ with the increasing values of $K$. On the other hand, the choice $\nu \approx 4 \, \nu_{\text{bound,quad}}$ causes $\abs{\theta}$ to increase towards $90^{\circ}$. The results are similar across different values of $n$. Furthermore, since $D$ is dense and $m = 5$ in this experiment, Theorem \ref{thm:main_thm_md_quad} states that $K = O(\tau^{-2})$ and $\abs{\theta} = O(\cos^{-1} (1-\tau^{2})^{5/2})$. Therefore, the order of the dependence of $\abs{\theta}$ on $K$ is expected as $\abs{\theta} = O(\cos^{-1} (1-K^{-1})^{5/2})$. We can see that the plotted curves are in accordance with this theoretical result. We remark that, in these experiments, the true upper bound on $\nu$ appears to be within a factor $\gamma$ of $\nu_{\text{bound,quad}}$, where $\gamma$ takes a value between 1.2 and 4.

Figures \ref{fig:md_exp1_gauss1}-\ref{fig:md_exp1_gauss3}, \ref{fig:md_exp1_sin1}-\ref{fig:md_exp1_sin3} and \ref{fig:md_exp1_poly1}-\ref{fig:md_exp1_poly3} then show the experimental results for non-quadratic embeddings, in particular, for Smooth mappings $1,2$ and $3$ respectively. Interestingly, the variation of $\abs{\theta}$ with respect to $K$ for non-quadratic mappings is almost identical to those for quadratic forms.

The theoretical bounds stated in Theorems \ref{theorem:md_angle_bound_prob} and \ref{thm:md_angle_bound_smooth} are directly implementable for this experiment where the variation of $\abs{\theta}$ with $K$ is examined. Therefore, we now provide a comparison of the theoretical bounds and the empirical results for this setup. We obtain the theoretical variation of $\abs{\theta}$ with $K$ as follows. We first choose $s_1 = 0.5$ (as $0 < s_1 < 1$) and $s_2 = 2e$ (as $s_2 > e$). Then, in the quadratic embedding, we compute $\nu = \gamma \, \nu_{bound,quad}$, where $\gamma = c \, \sqrt{s_1/s_2}  $. Here, $c<1$ is a scale parameter guaranteeing that $\nu$ is strictly less than $\sqrt{s_1/s_2} \, \nu_{bound,quad}$. In the smooth embedding, we compute $\nu = \gamma \, \nu_{bound,quad}$ where $\gamma = c \ \nu_{bound,smooth} / \nu_{bound,quad}$ with $c<1$ so that $\nu$ is strictly less than $\nu_{bound,smooth}$. In both quadratic and smooth embeddings, we fix the values of the parameters $p_1$, $p_2$, $p_3$ to $0.01$ and vary $\tau$ from $0.01$ to $0.2$ in steps of $0.01$. For each value of $\tau$, we compute $s_3$ such that it is slightly smaller than ${s_3}_{\text{bound,quad}}$ in the quadratic embedding and ${s_3}_{\text{bound,smooth}}$ in the smooth embeddings, which are respectively given in (\ref{eq:s3_boundquad}) and (\ref{eq:s3_boundsmooth}). This gives a value of $K^{(3)}_{bound}$ and hence $K_{bound}$. The angle bound $\abs{\theta} = \cos^{-1}\sqrt{(1-\tau^2)^m}$ for the quadratic embedding is computed using Theorem \ref{theorem:md_angle_bound_prob} , and the angle bound $\cos^{-1}\sqrt{(1 - \tau^2 - m\sigma_{f}^2)^m}$ for the smooth embeddings is given by Theorem \ref{thm:md_angle_bound_smooth}. Evaluating the bounds at four different values of the $c$ parameter (0.2, 0.4, 0.6, 0.8 for the quadratic embedding, and 0.1, 0.2, 0.3, 0.4 for the smooth embeddings), we obtain four subplots showing the variation of $K_{bound}$ with $\abs{\theta}$, each for a different value of $\nu$. The results are given in Fig.~\ref{fig:quad_theo_emp_exp1} for the quadratic embedding and Figures \ref{fig:smooth1_theo_emp_exp1}, \ref{fig:smooth2_theo_emp_exp1} and \ref{fig:smooth3_theo_emp_exp1} for the smooth embeddings. The theoretical plots obtained for $n= 100, \, 500, \, 1000$ are displayed respectively in (a)-(c) in these figures. The experimental curves corresponding to these theoretical plots are then obtained by sampling the manifolds in the region $\nu = \gamma \, \nu_{bound,quad}$ for the same value of $\gamma$ as in the theoretical plots, which are shown in the plots (d)-(f) of Figures \ref{fig:quad_theo_emp_exp1}-\ref{fig:smooth3_theo_emp_exp1}.

The comparison of the theoretical bounds and the experimental plots given in Figures \ref{fig:quad_theo_emp_exp1}-\ref{fig:smooth3_theo_emp_exp1} shows the following. While the numerical values of the theoretical bounds obtained for the angle error are pessimistic in comparison with the experimental values, we see that the theoretical variation of $\abs{\theta}$ with $K$ matches well the experimental one in both quadratic and smooth embeddings. Therefore, the theoretical results provide a good prediction of the dependence of the angle error on the number of samples.
\begin{figure}[!htbp]
\centering
\subfloat[\small Quadratic form]{
\begin{minipage}[c]{0.28\linewidth}
\centering
\label{fig:md_exp1_quad1} 
\noindent \includegraphics[width=1.0\linewidth]{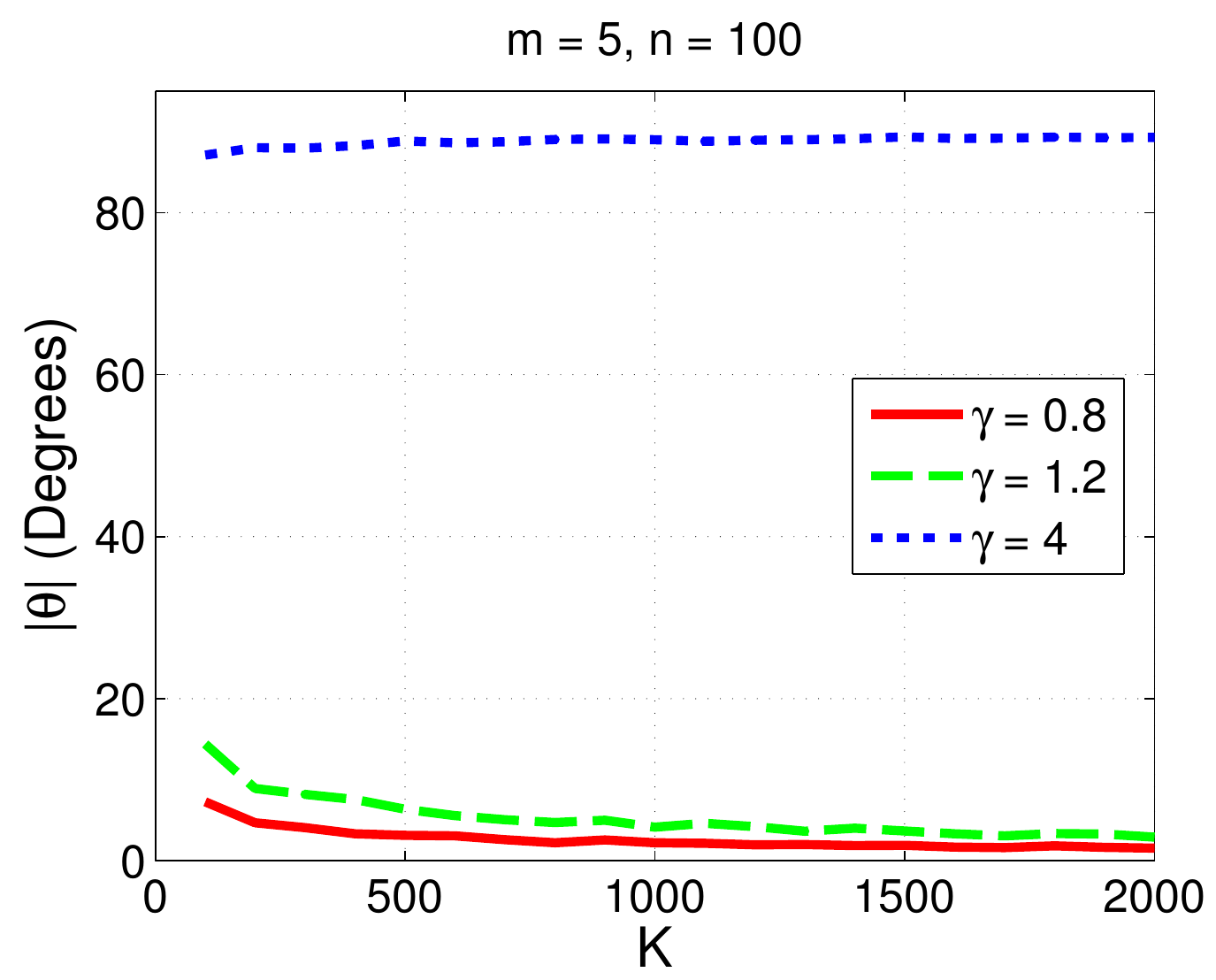}\end{minipage}}
\subfloat[\small Quadratic form]{
\begin{minipage}[c]{0.28\linewidth}
\centering
\label{fig:md_exp1_quad2} 
\noindent \includegraphics[width=1.0\linewidth]{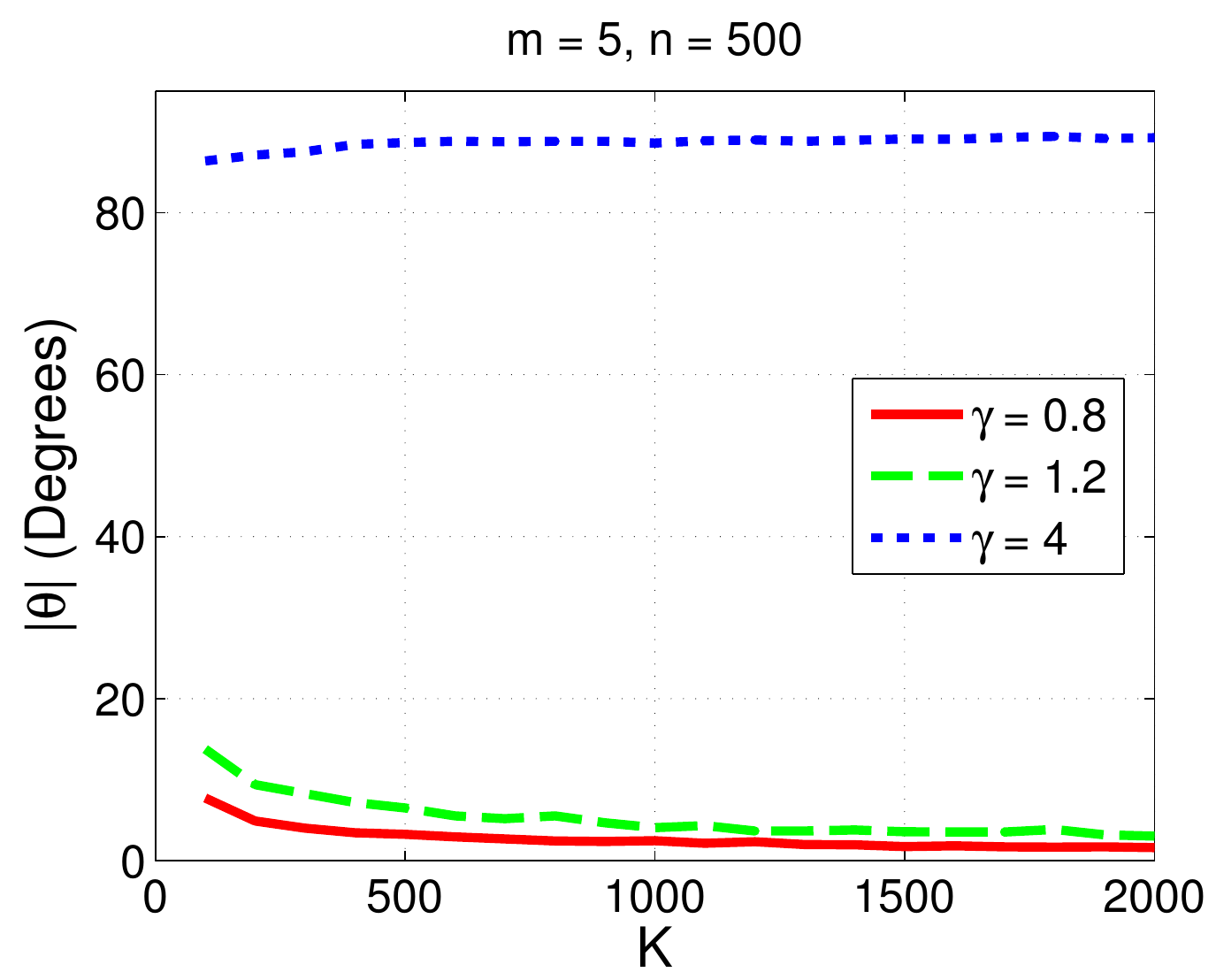}\end{minipage}}
\subfloat[\small Quadratic form]{
\begin{minipage}[c]{0.28\linewidth}
\centering
\label{fig:md_exp1_quad3} 
\noindent \includegraphics[width=1.0\linewidth]{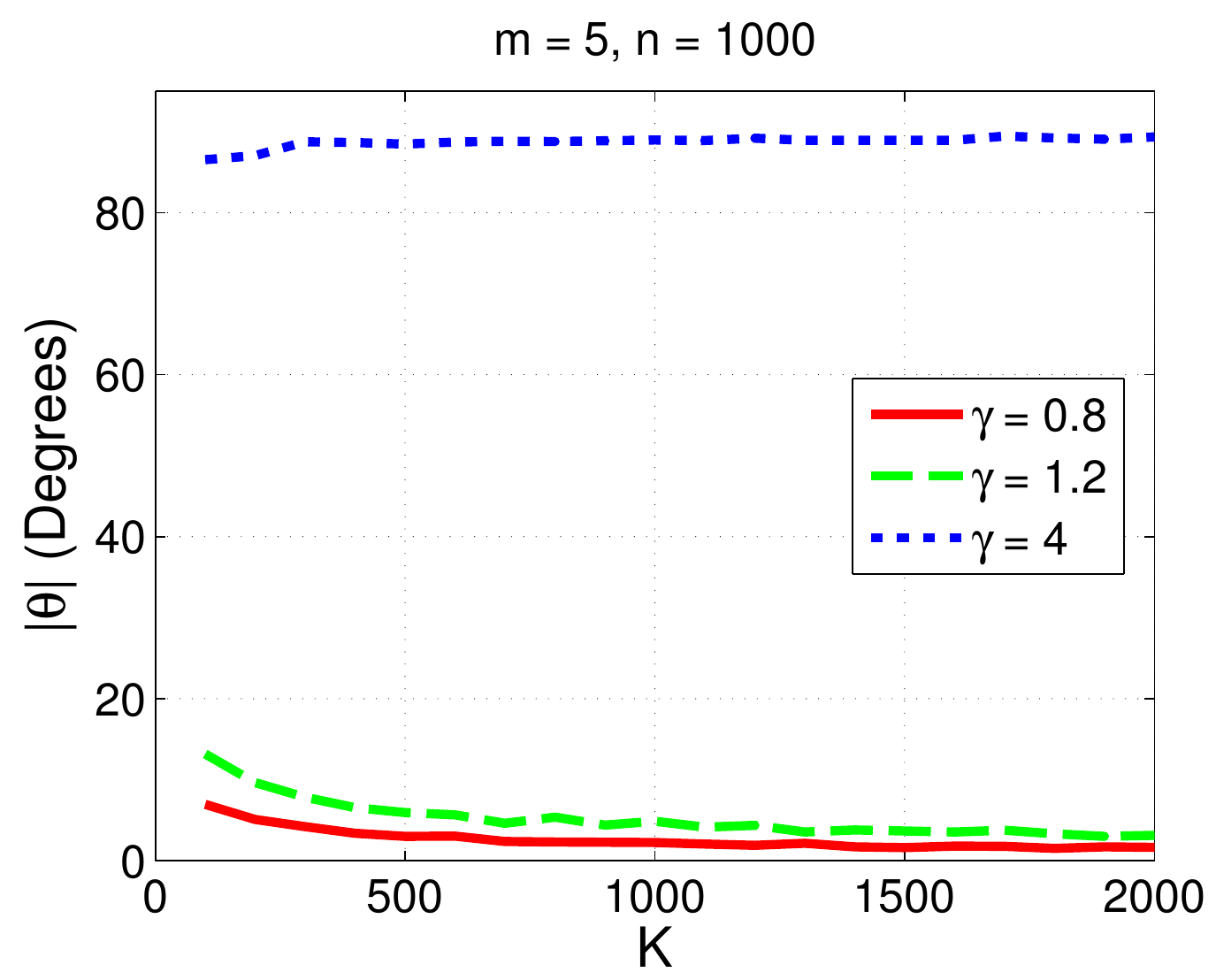}\end{minipage}}\\
\subfloat[Smooth mapping 1]{
\begin{minipage}[c]{0.28\linewidth}
\centering
\label{fig:md_exp1_gauss1} 
\includegraphics[width=1.0\linewidth]{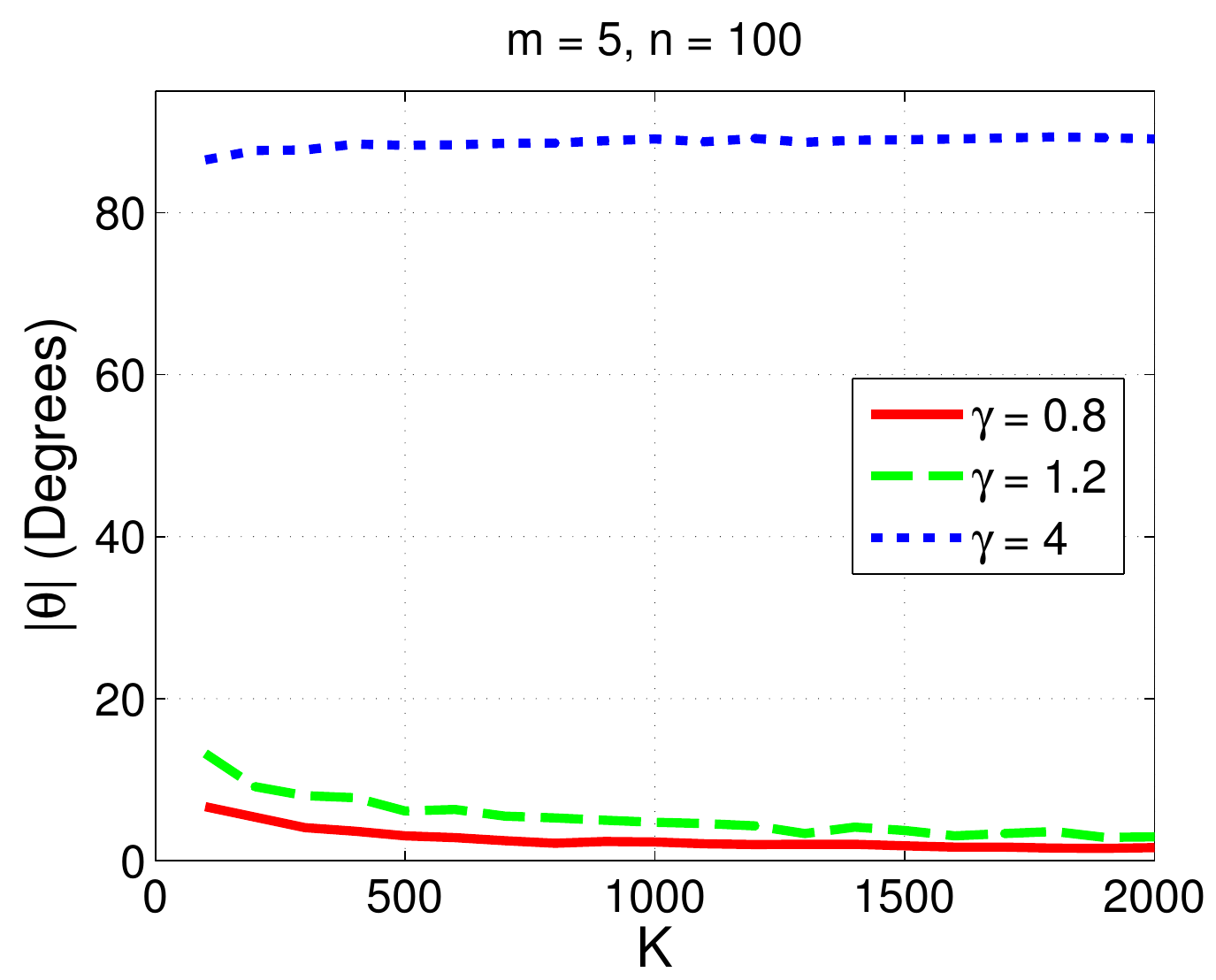} \end{minipage}}
\subfloat[Smooth mapping 1]{
\begin{minipage}[c]{0.28\linewidth}
\centering
\label{fig:md_exp1_gauss2} 
\includegraphics[width=1.0\linewidth]{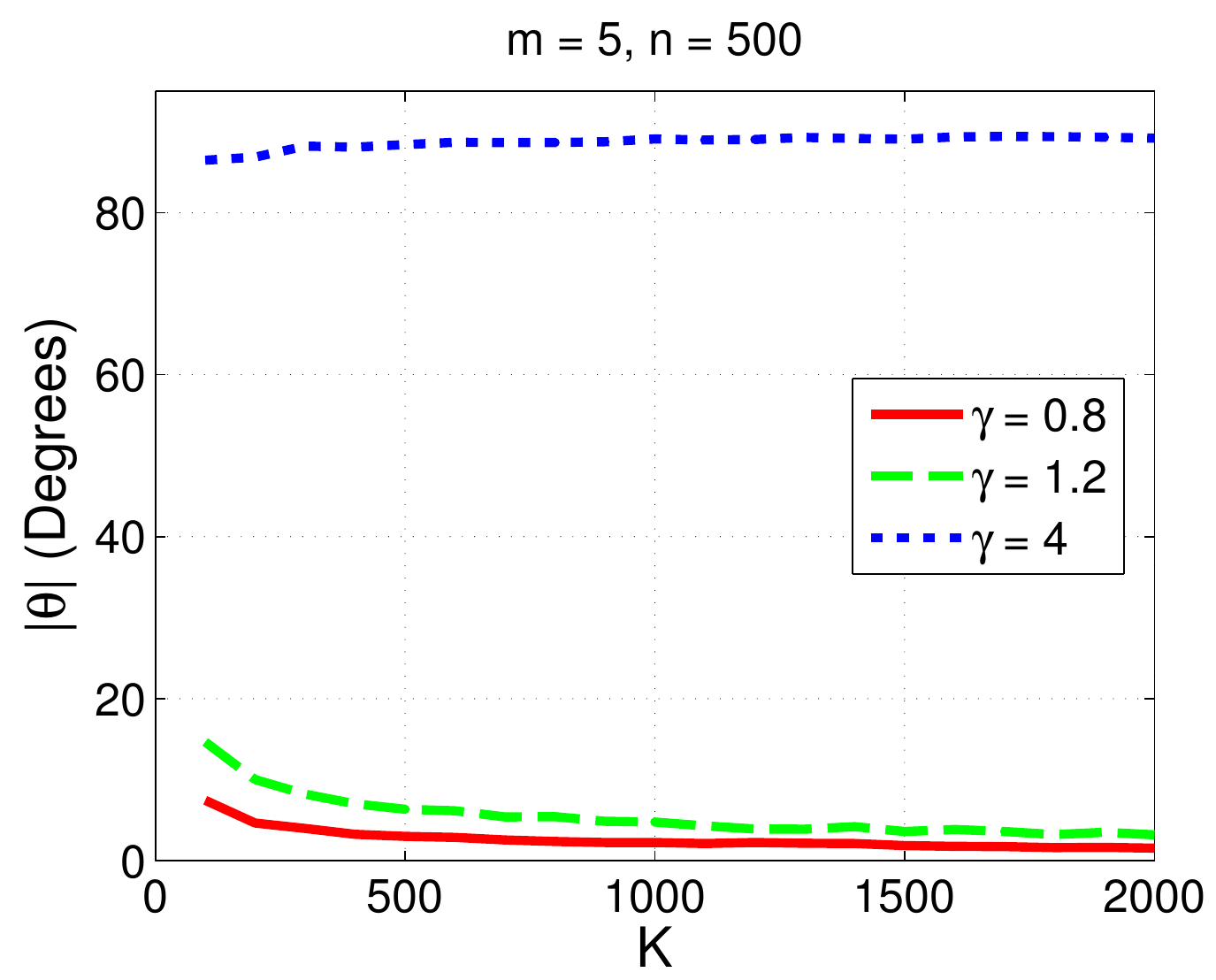} \end{minipage}}
\subfloat[Smooth mapping 1]{
\begin{minipage}[c]{0.28\linewidth}
\centering
\label{fig:md_exp1_gauss3} 
\includegraphics[width=1.0\linewidth]{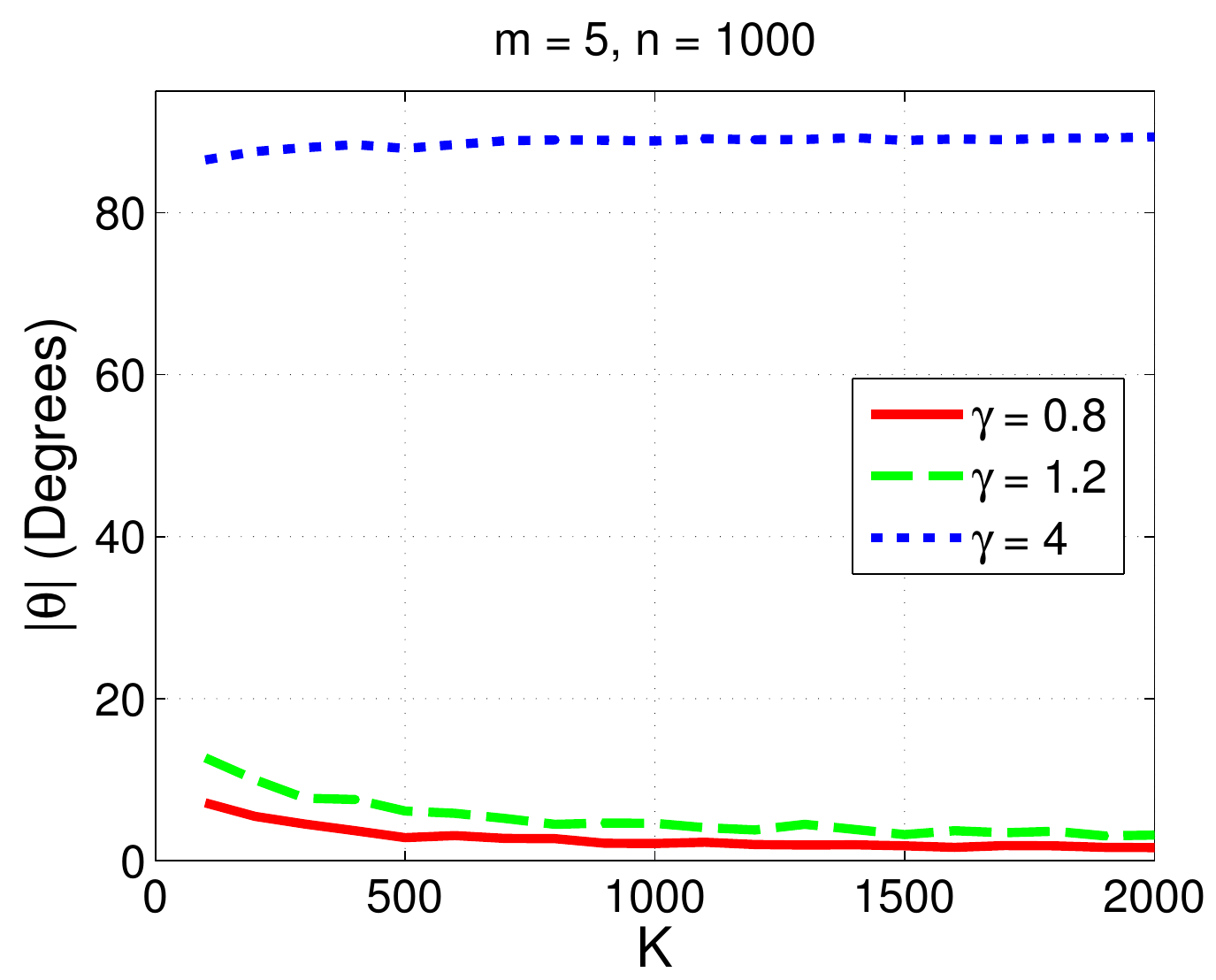} \end{minipage}}\\
\subfloat[Smooth mapping 2]{
\begin{minipage}[c]{0.28\linewidth}
\centering
\label{fig:md_exp1_sin1} 
\includegraphics[width=1.0\linewidth]{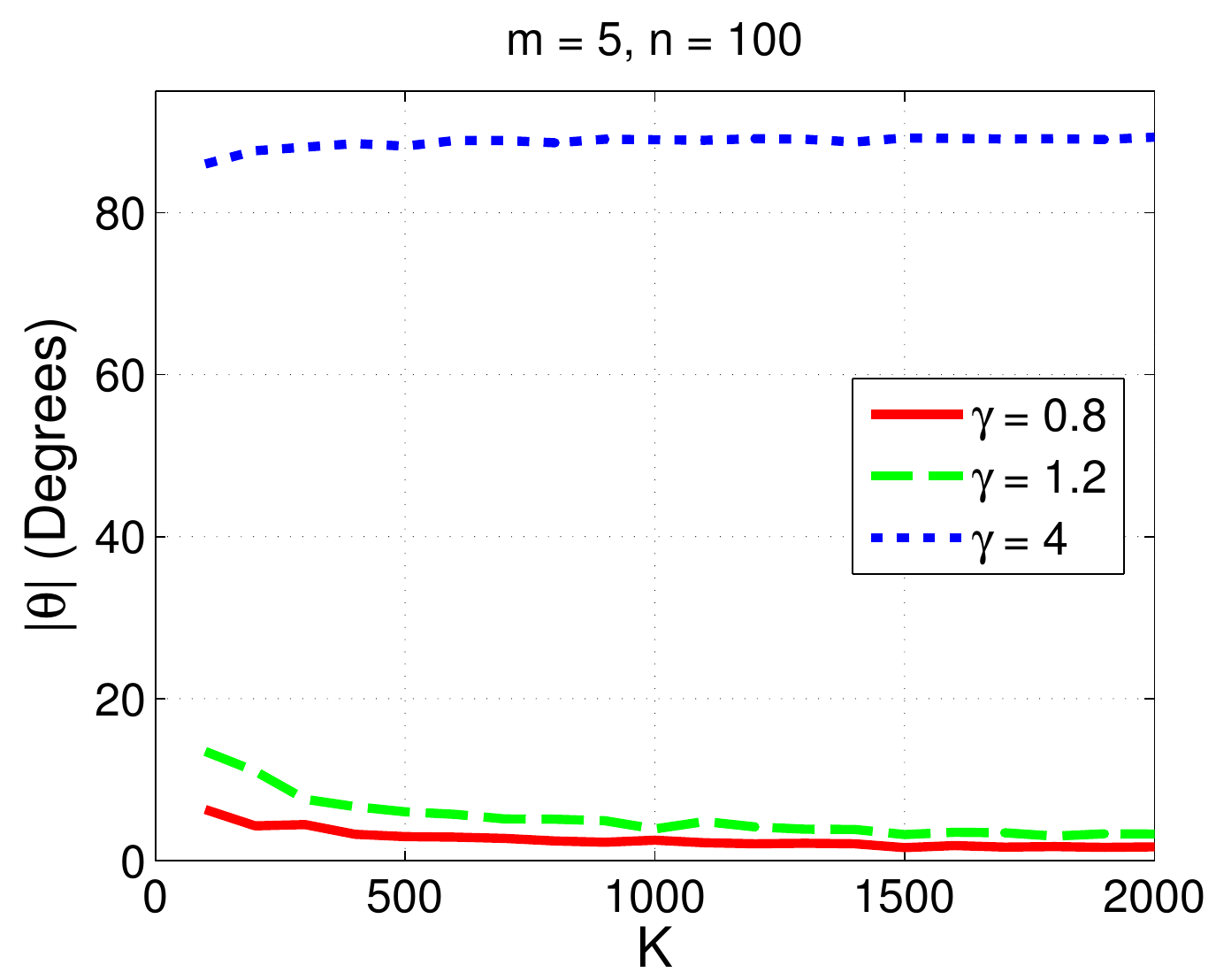} \end{minipage}}
\subfloat[Smooth mapping 2]{
\begin{minipage}[c]{0.28\linewidth}
\centering
\label{fig:md_exp1_sin2} 
\includegraphics[width=1.0\linewidth]{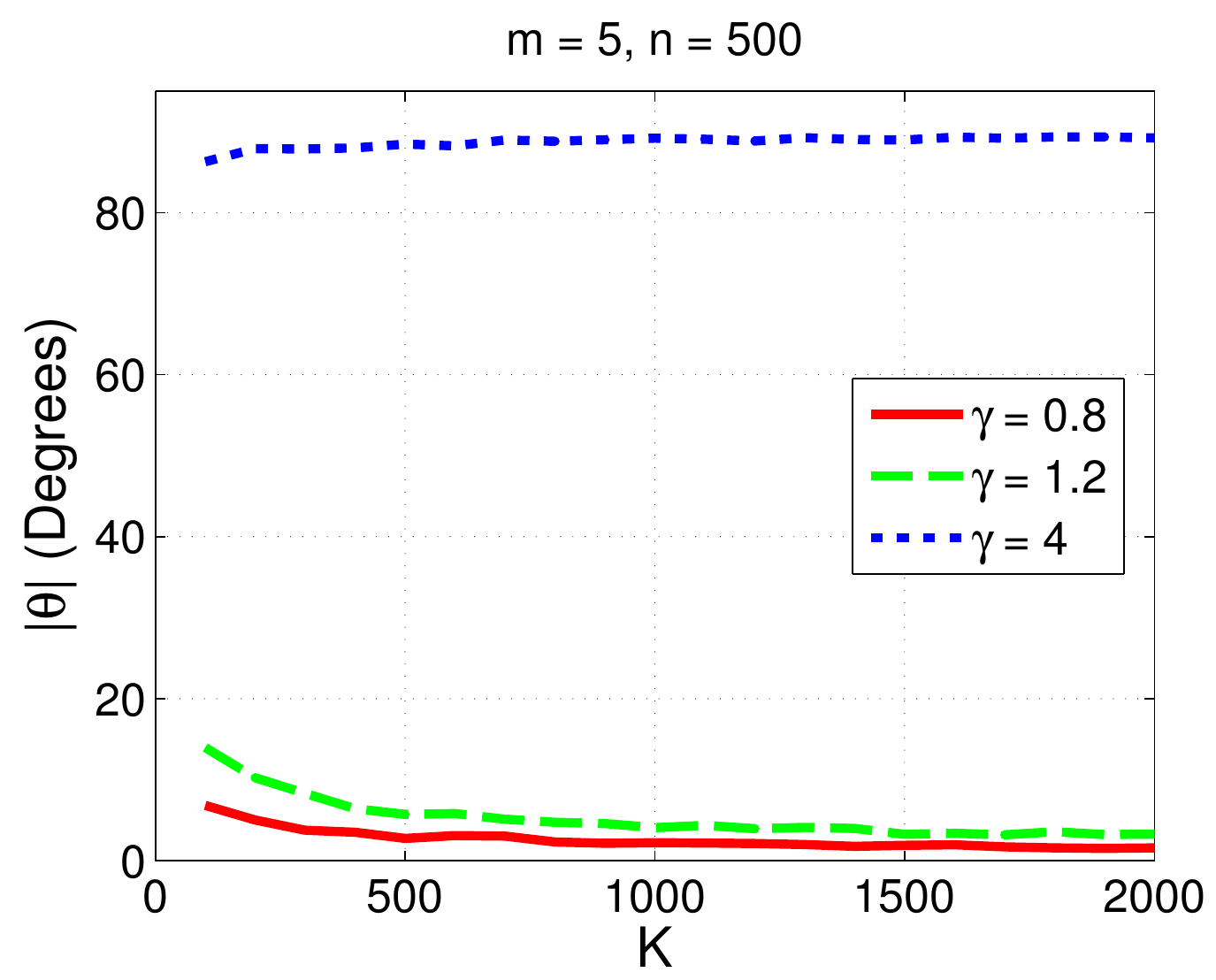} \end{minipage}}
\subfloat[Smooth mapping 2]{
\begin{minipage}[c]{0.28\linewidth}
\centering
\label{fig:md_exp1_sin3} 
\includegraphics[width=1.0\linewidth]{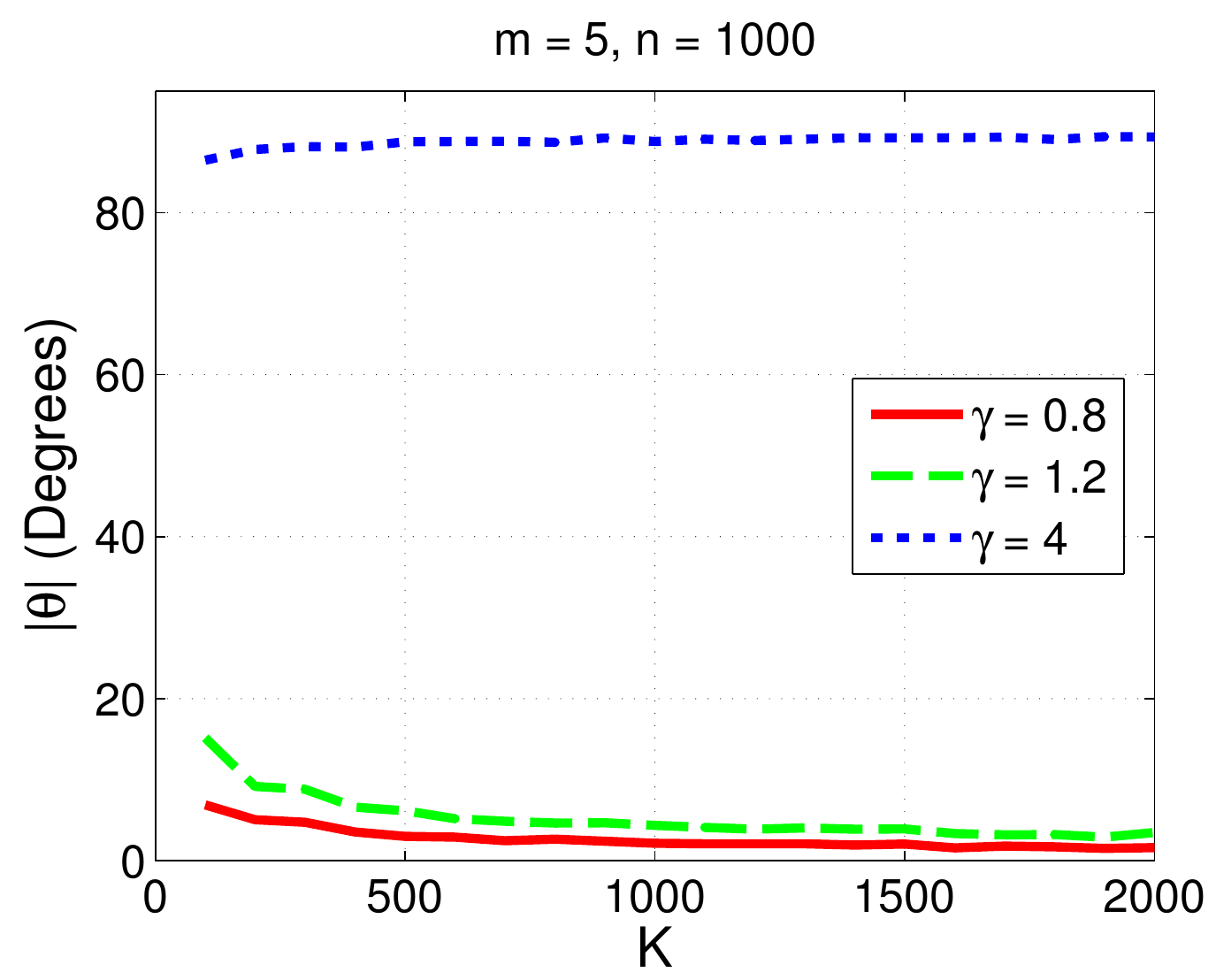} \end{minipage}}\\
\subfloat[Smooth mapping 3]{
\begin{minipage}[c]{0.28\linewidth}
\centering
\label{fig:md_exp1_poly1} 
\includegraphics[width=1.0\linewidth]{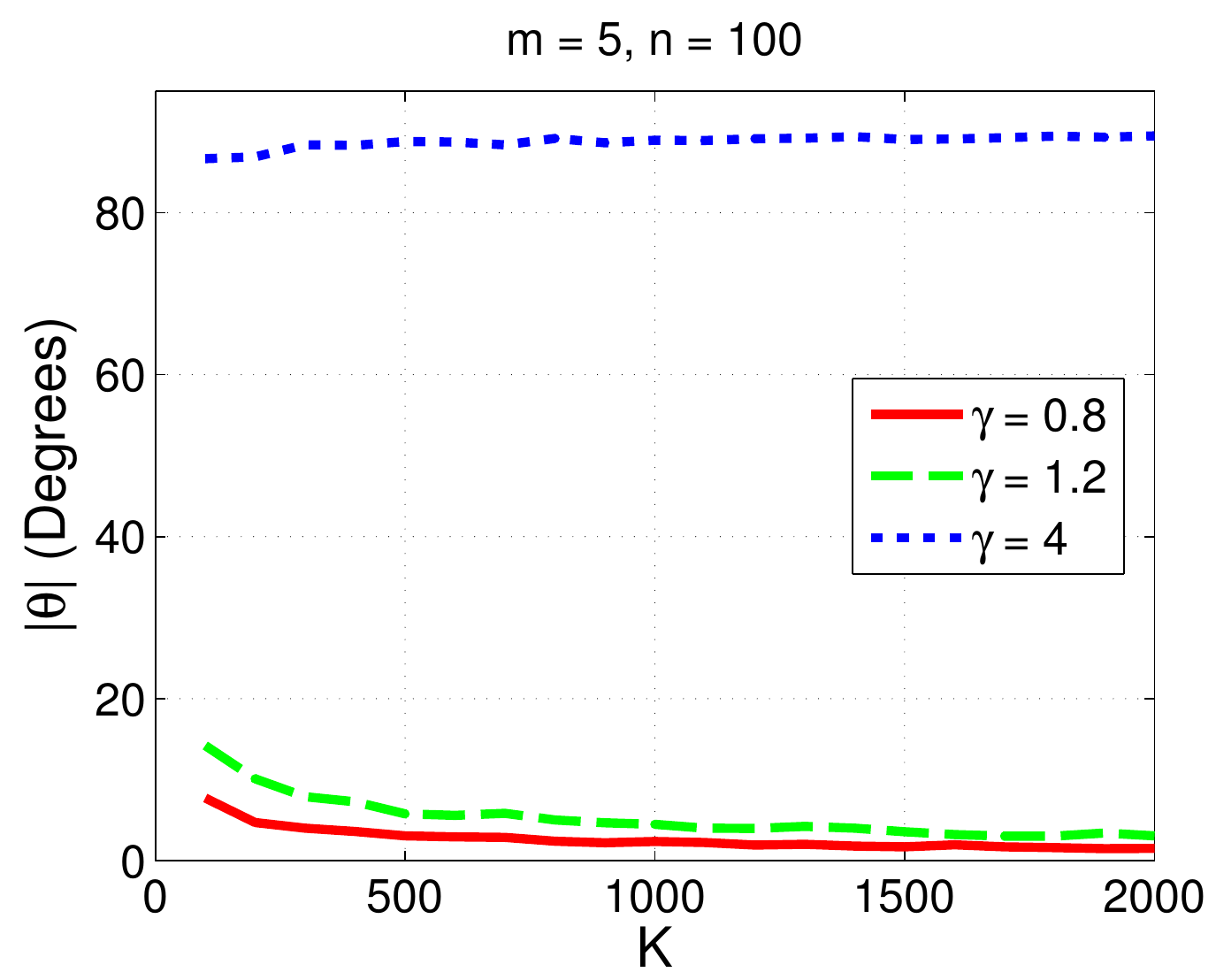} \end{minipage}}
\subfloat[Smooth mapping 3]{
\begin{minipage}[c]{0.28\linewidth}
\centering
\label{fig:md_exp1_poly2} 
\includegraphics[width=1.0\linewidth]{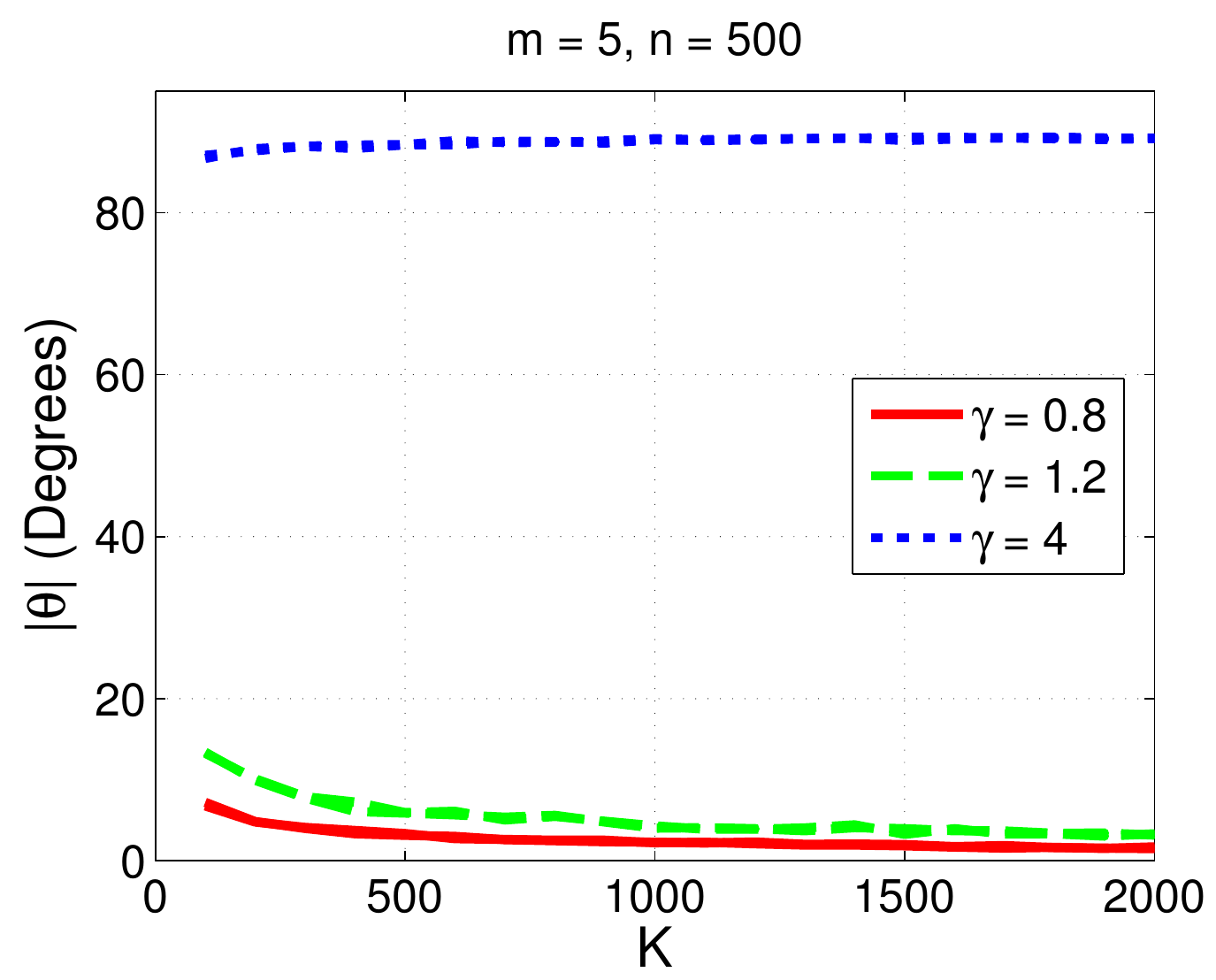} \end{minipage}}
\subfloat[Smooth mapping 3]{
\begin{minipage}[c]{0.28\linewidth}
\centering
\label{fig:md_exp1_poly3} 
\includegraphics[width=1.0\linewidth]{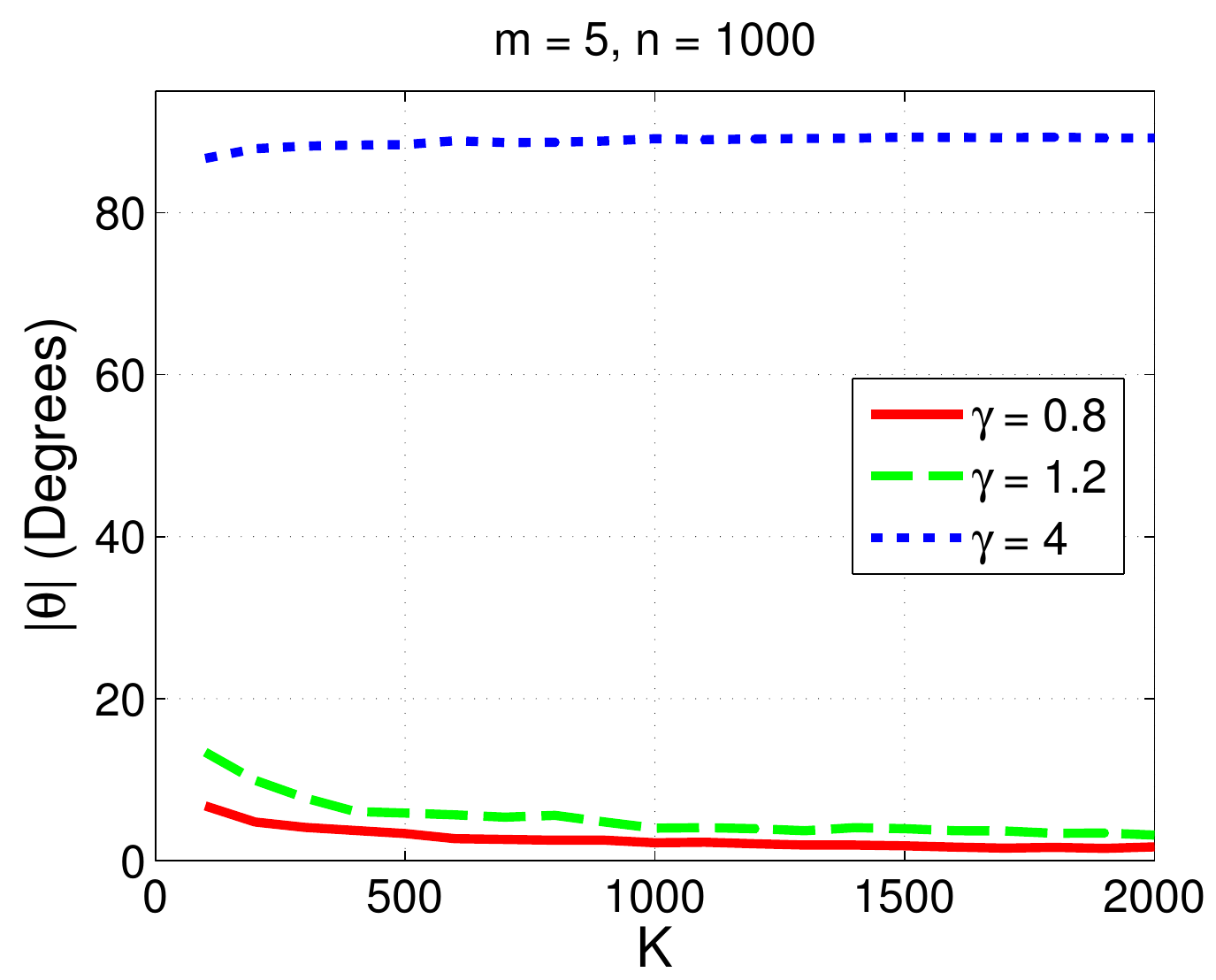} \end{minipage}}
\caption{\small Variation of the deviation $\abs{\theta}$ with respect to $K$ for different sampling widths $\nu$. For each type of mapping, $\nu = \gamma \nu_{\text{bound,quad}}$.}
\label{fig:md_exp1} 
\end{figure}

\begin{figure}[!htbp]
\centering
\subfloat[\small m=5,n=100]{
\begin{minipage}[c]{0.28\linewidth}
\centering
\label{fig:quad_theo_m5n100} 
\noindent \includegraphics[width=1.0\linewidth]{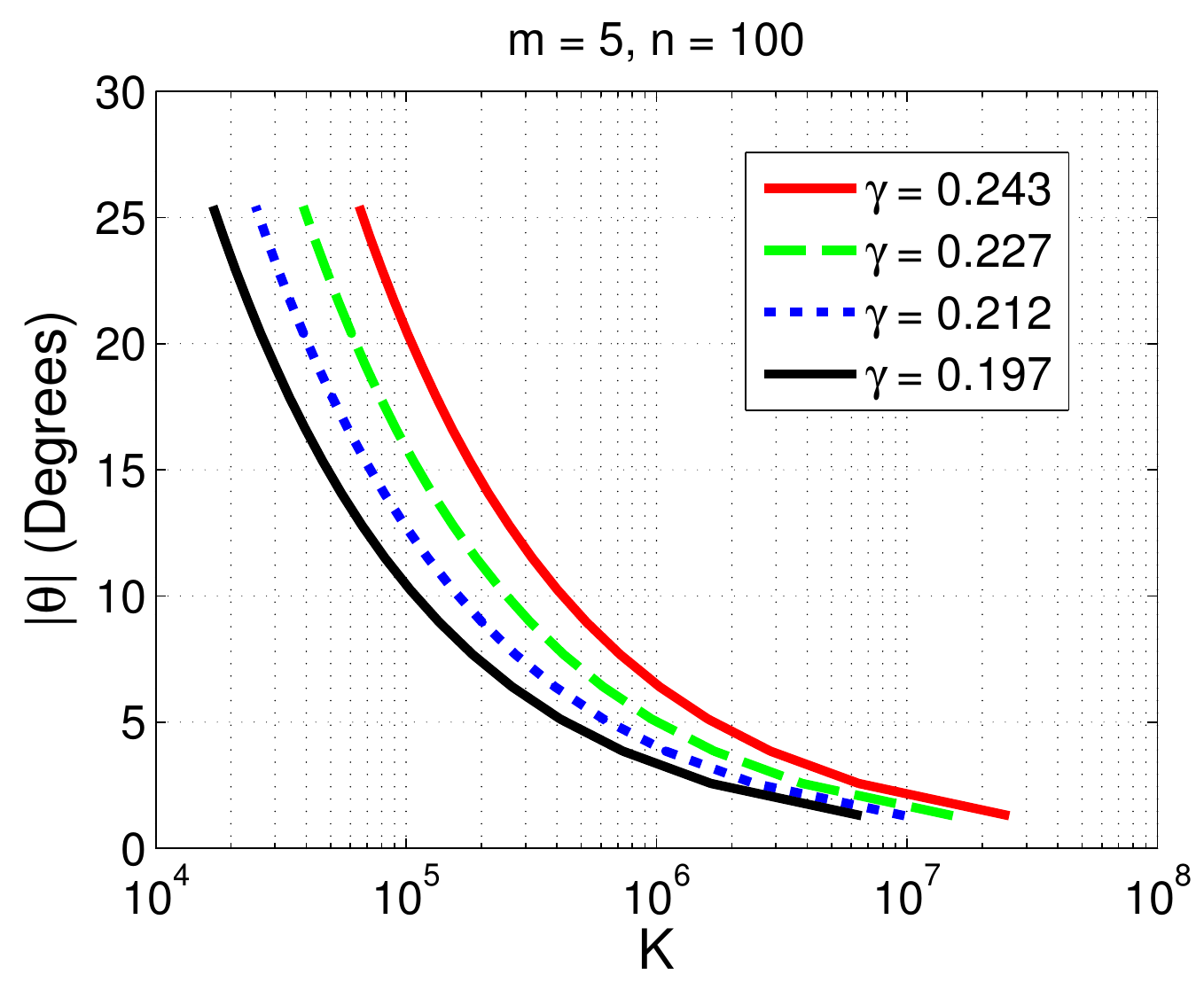}\end{minipage}}
\subfloat[\small m=5,n=500]{
\begin{minipage}[c]{0.28\linewidth}
\centering
\label{fig:quad_theo_m5n500} 
\noindent \includegraphics[width=1.0\linewidth]{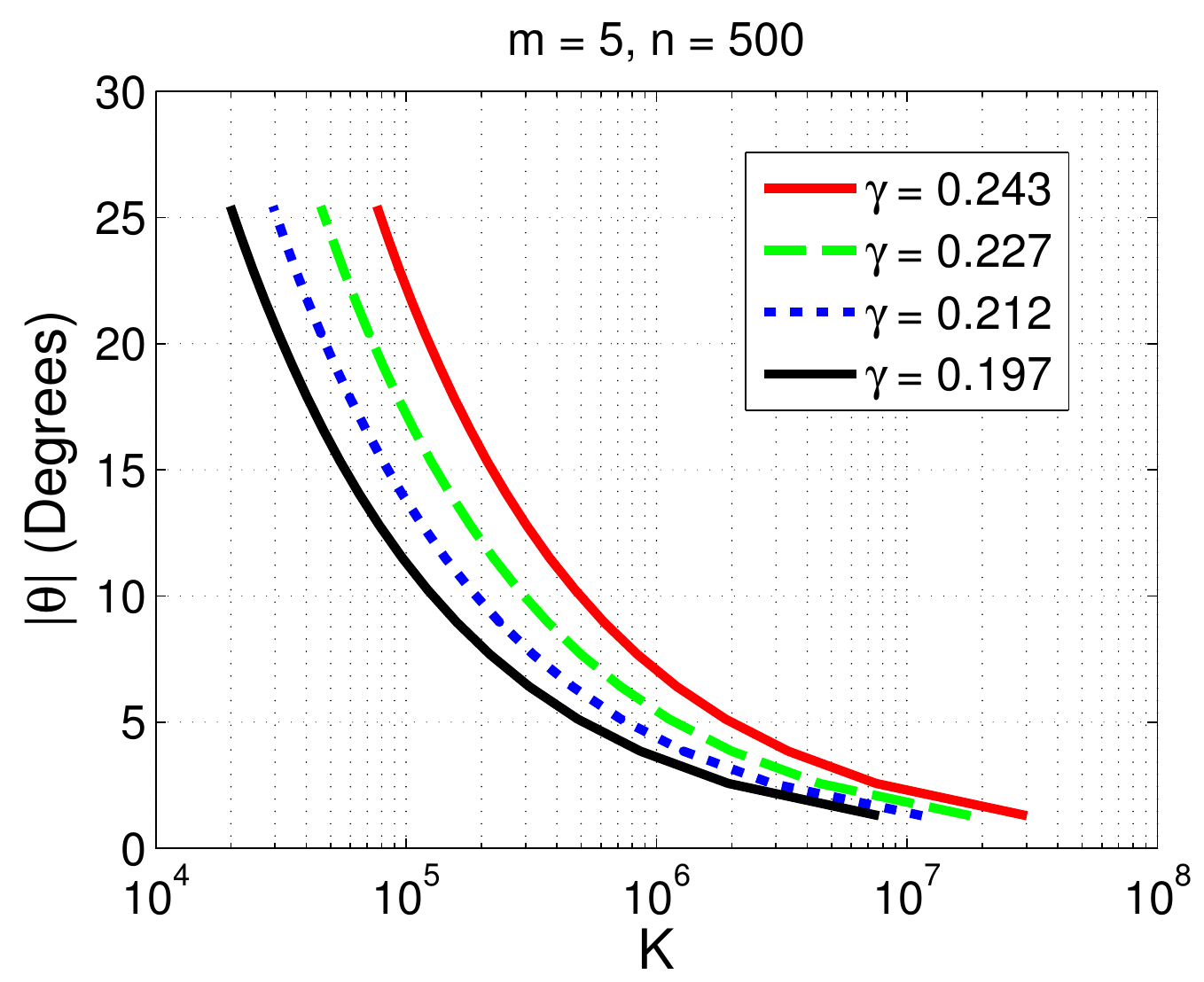}\end{minipage}}
\subfloat[\small m=5,n=1000]{
\begin{minipage}[c]{0.28\linewidth}
\centering
\label{fig:quad_theo_m5n1000} 
\noindent \includegraphics[width=1.0\linewidth]{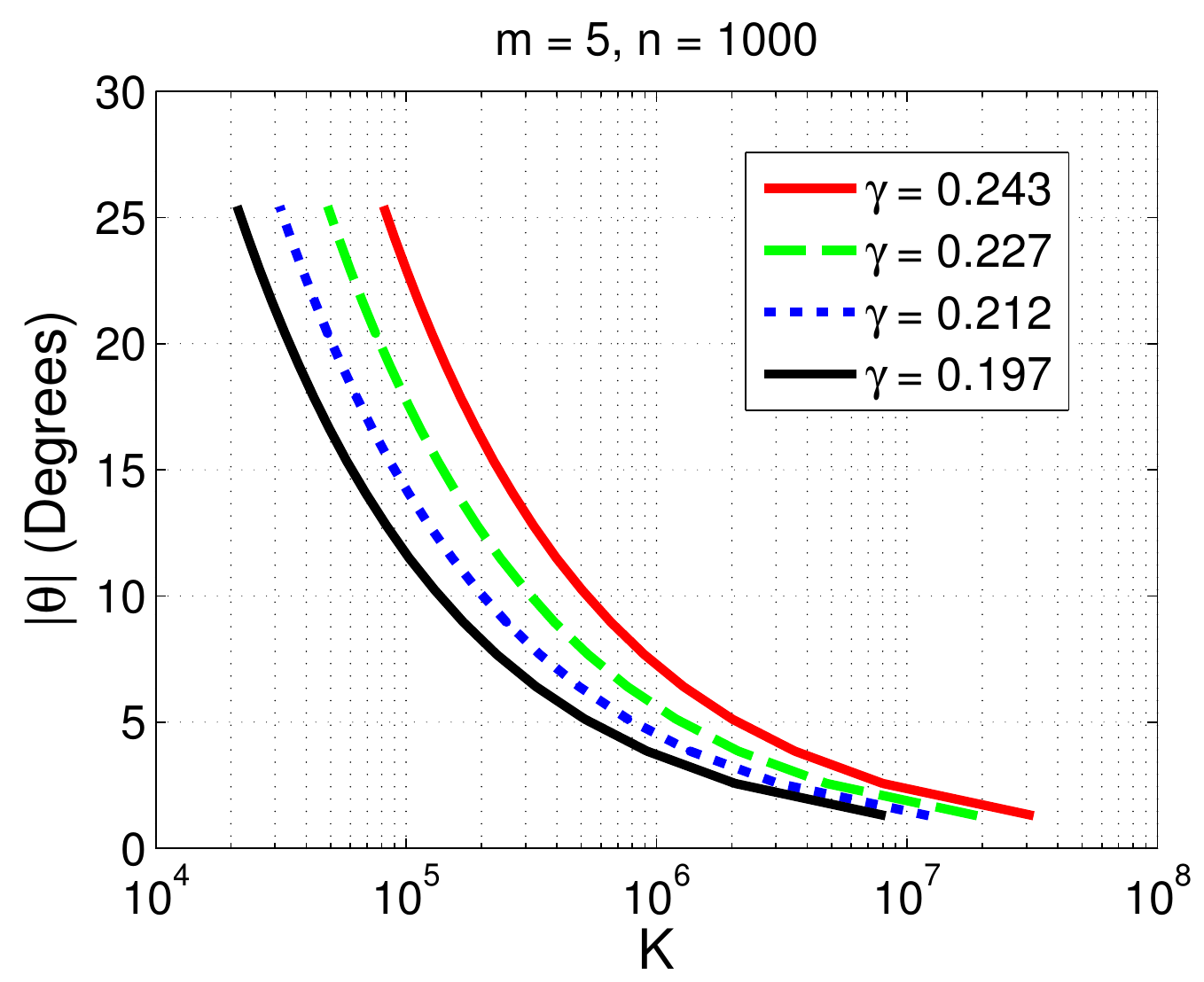}\end{minipage}}\\
\subfloat[\small m=5,n=100]{
\begin{minipage}[c]{0.28\linewidth}
\centering
\label{fig:quad_emp_m5n100} 
\noindent \includegraphics[width=1.0\linewidth]{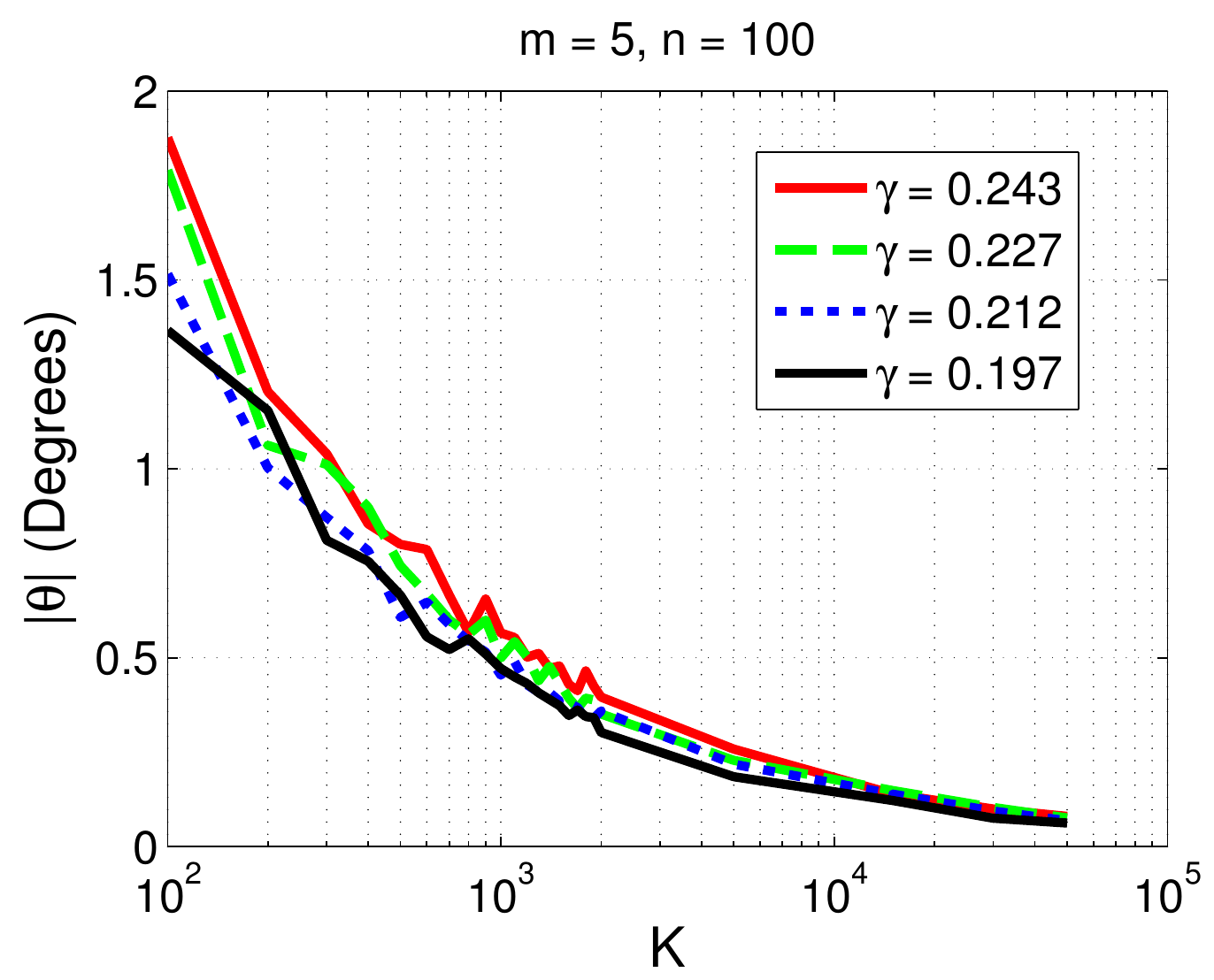}\end{minipage}}
\subfloat[\small m=5,n=500]{
\begin{minipage}[c]{0.28\linewidth}
\centering
\label{fig:quad_emp_m5n500} 
\noindent \includegraphics[width=1.0\linewidth]{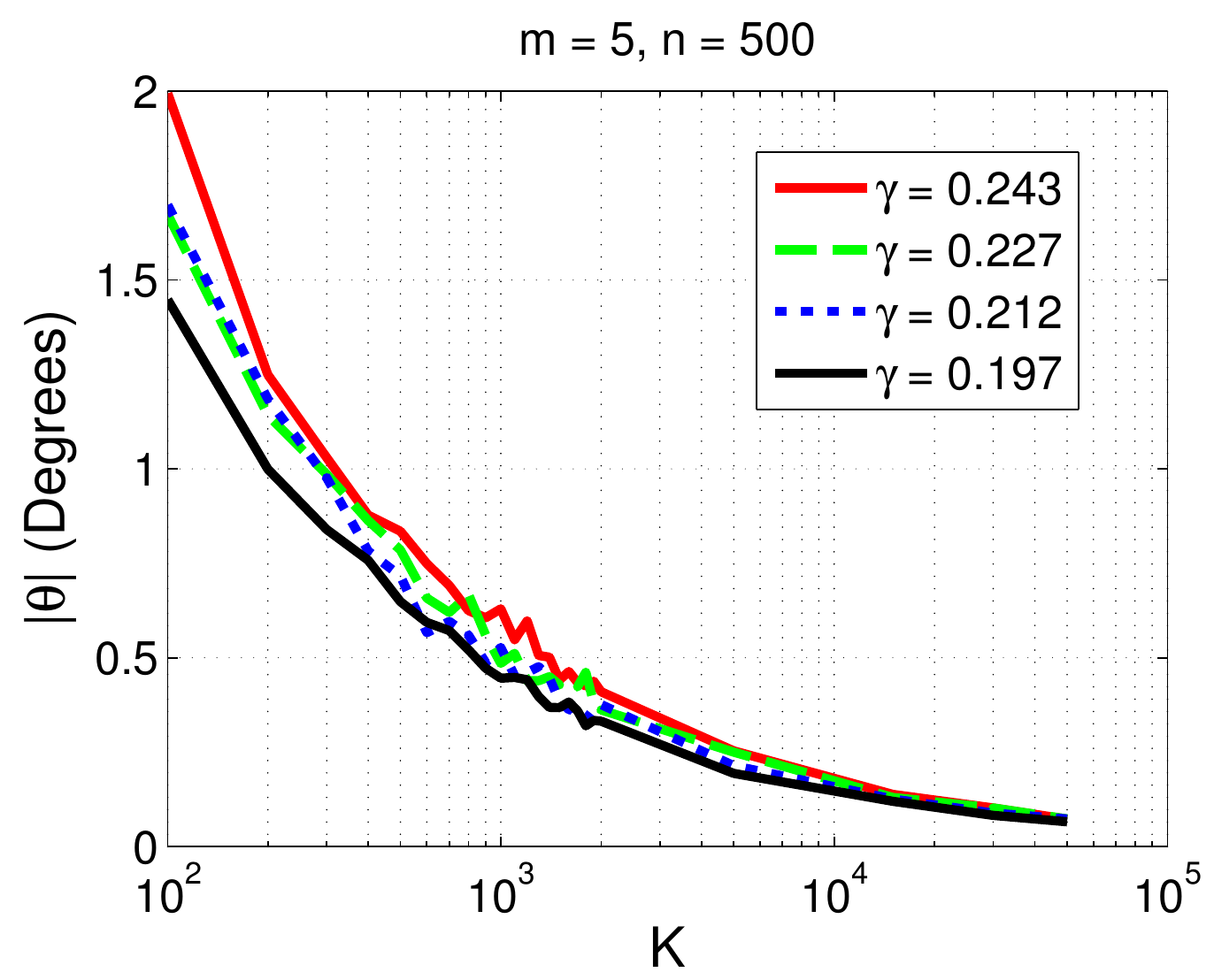}\end{minipage}}
\subfloat[\small m=5,n=1000]{
\begin{minipage}[c]{0.28\linewidth}
\centering
\label{fig:quad_emp_m5n1000} 
\noindent \includegraphics[width=1.0\linewidth]{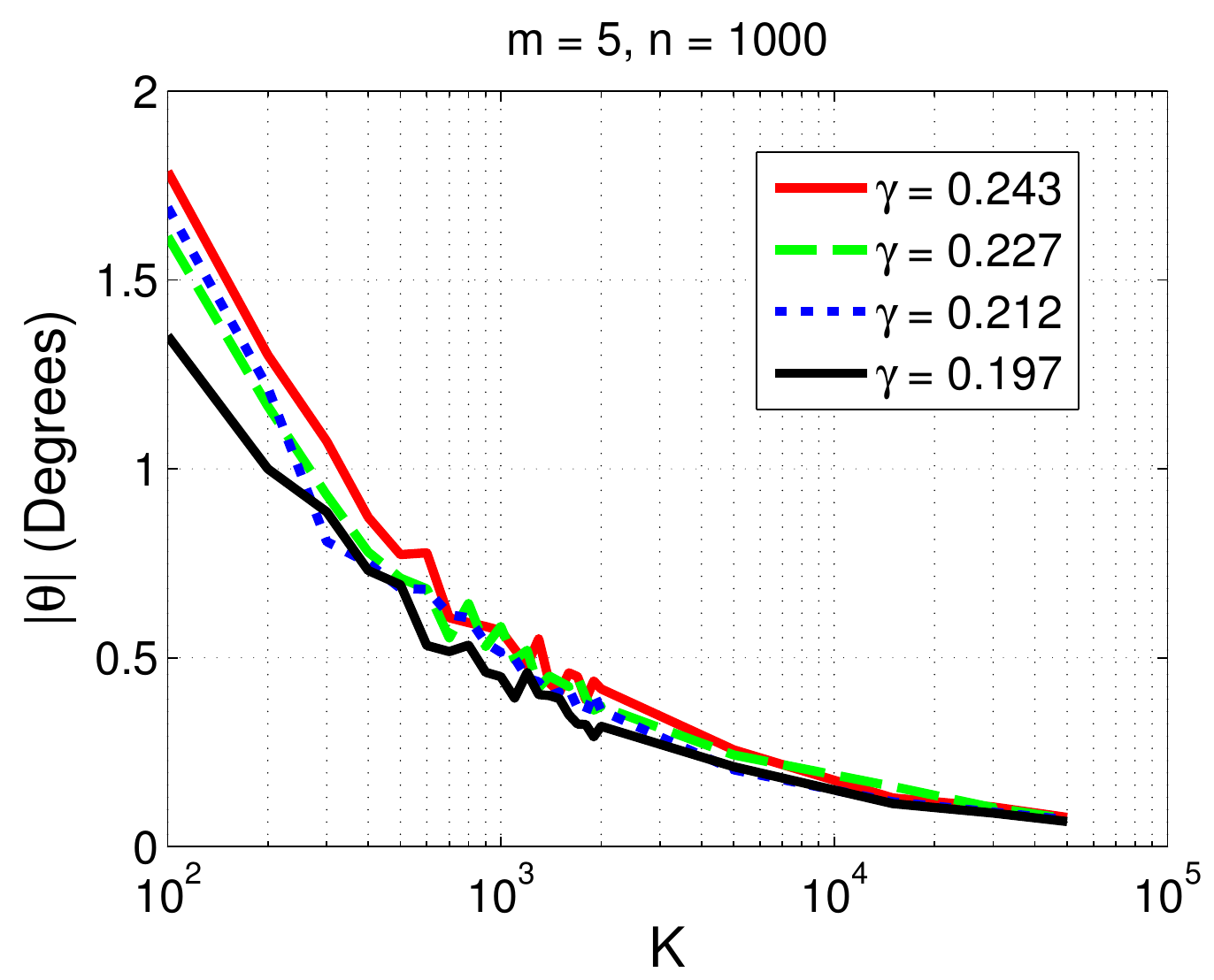}\end{minipage}}\\
\caption{\small Variation of the deviation $\abs{\theta}$ with respect to $K$ for different sampling widths $\nu= \gamma \nu_{\text{bound,quad}}$ for quadratic embedding. Figures \ref{fig:quad_theo_m5n100}-\ref{fig:quad_theo_m5n1000} show theoretical plots while Figures \ref{fig:quad_emp_m5n100}-\ref{fig:quad_emp_m5n1000} show empirical plots.}
\label{fig:quad_theo_emp_exp1} 
\end{figure}
\begin{figure}[!htbp]
\centering
\subfloat[\small m=5,n=100]{
\begin{minipage}[c]{0.28\linewidth}
\centering
\label{fig:smooth1_theo_m5n100} 
\noindent \includegraphics[width=1.0\linewidth]{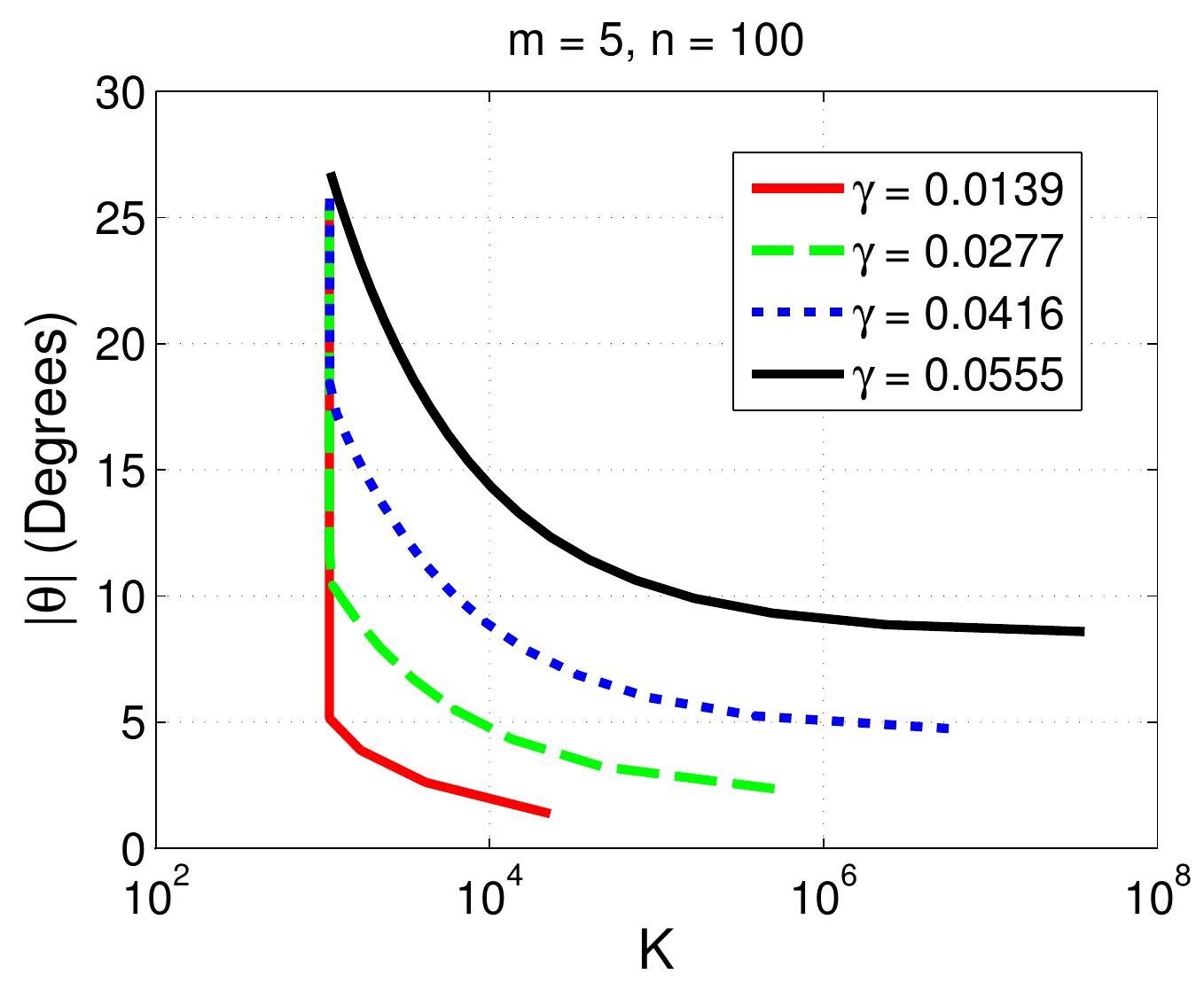}\end{minipage}}
\subfloat[\small m=5,n=500]{
\begin{minipage}[c]{0.28\linewidth}
\centering
\label{fig:smooth1_theo_m5n500} 
\noindent \includegraphics[width=1.0\linewidth]{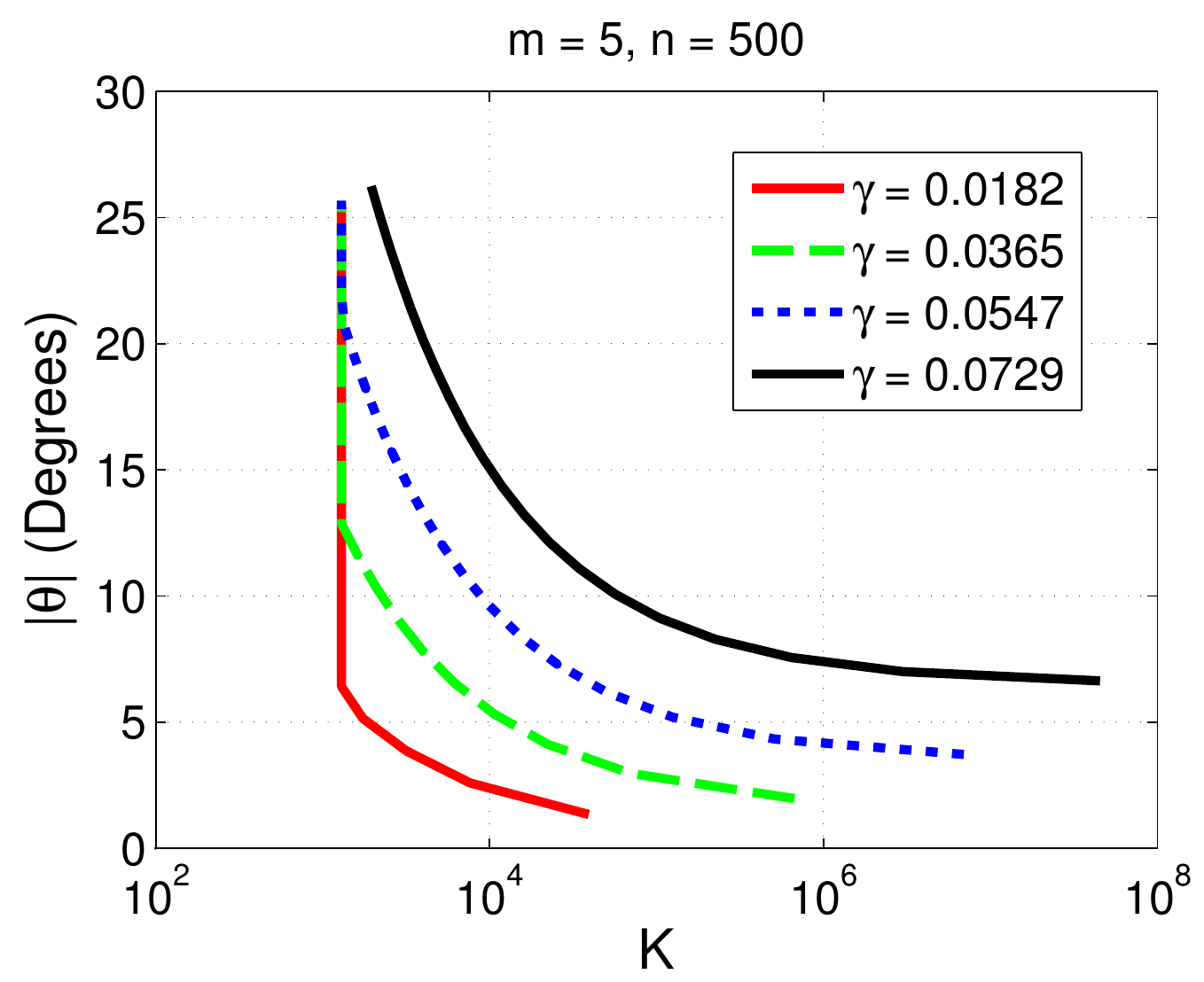}\end{minipage}}
\subfloat[\small m=5,n=1000]{
\begin{minipage}[c]{0.28\linewidth}
\centering
\label{fig:smooth1_theo_m5n1000} 
\noindent \includegraphics[width=1.0\linewidth]{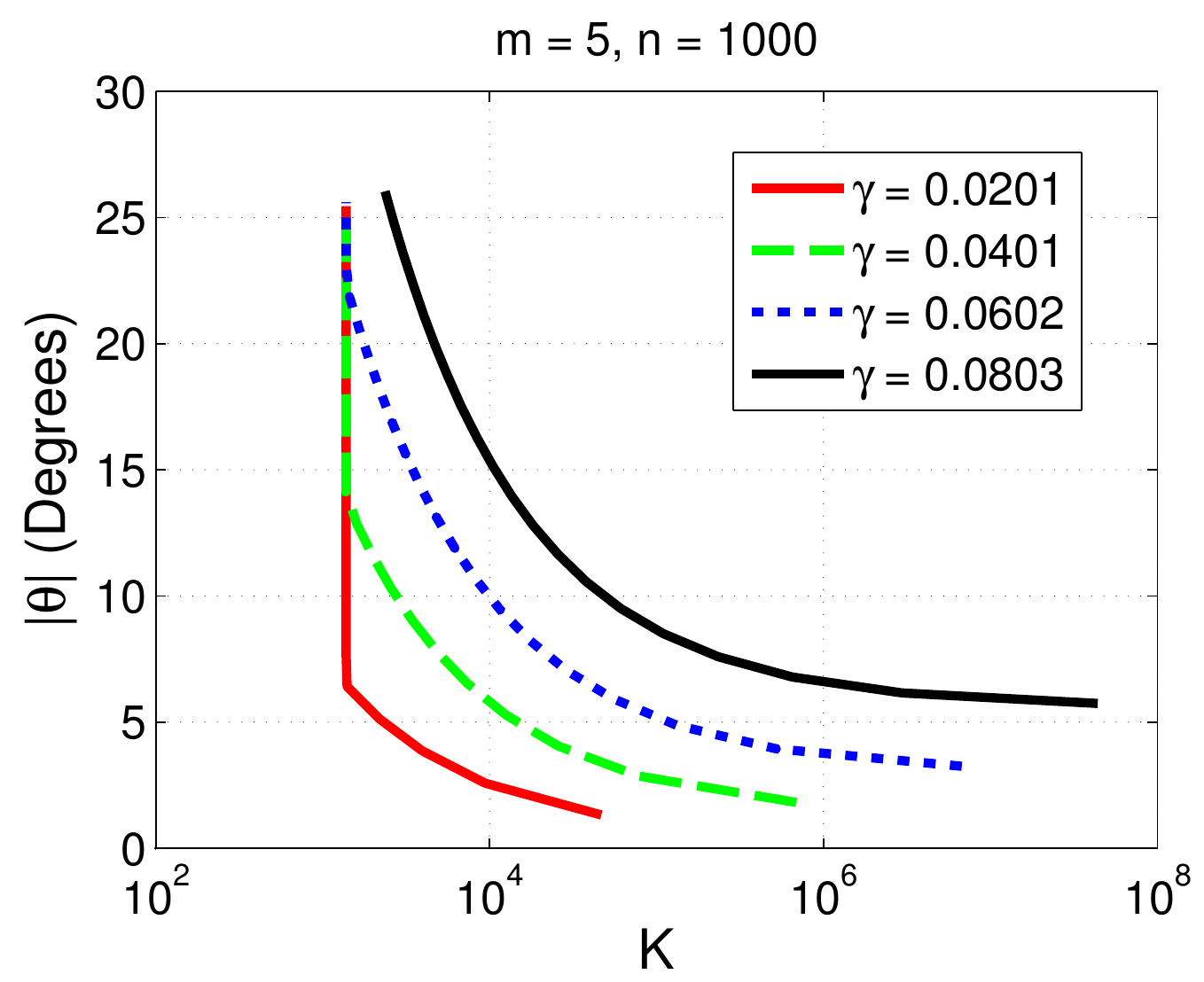}\end{minipage}}\\
\subfloat[\small m=5,n=100]{
\begin{minipage}[c]{0.28\linewidth}
\centering
\label{fig:smooth1_emp_m5n100} 
\noindent \includegraphics[width=1.0\linewidth]{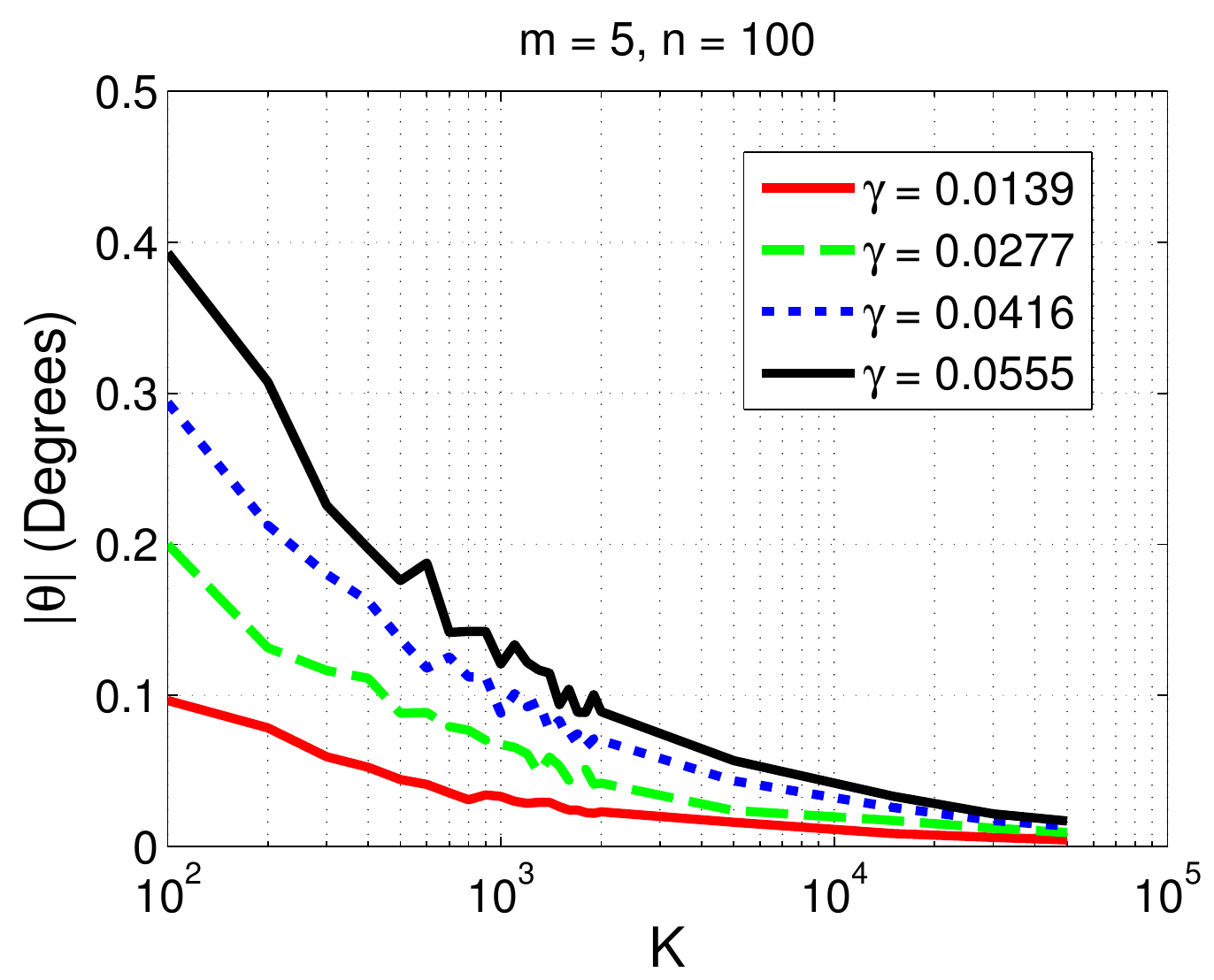}\end{minipage}}
\subfloat[\small m=5,n=500]{
\begin{minipage}[c]{0.28\linewidth}
\centering
\label{fig:smooth1_emp_m5n500} 
\noindent \includegraphics[width=1.0\linewidth]{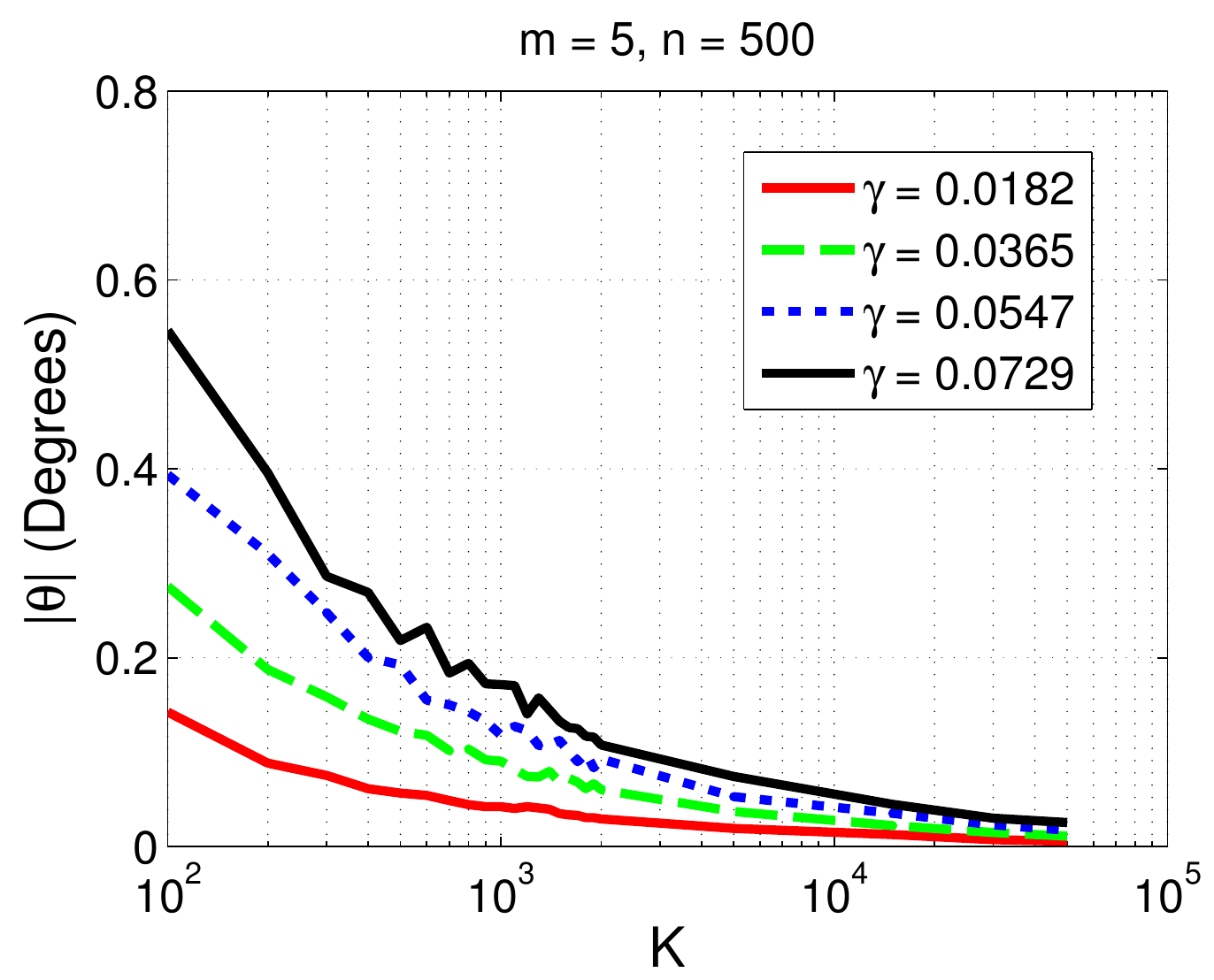}\end{minipage}}
\subfloat[\small m=5,n=1000]{
\begin{minipage}[c]{0.28\linewidth}
\centering
\label{fig:smooth1_emp_m5n1000} 
\noindent \includegraphics[width=1.0\linewidth]{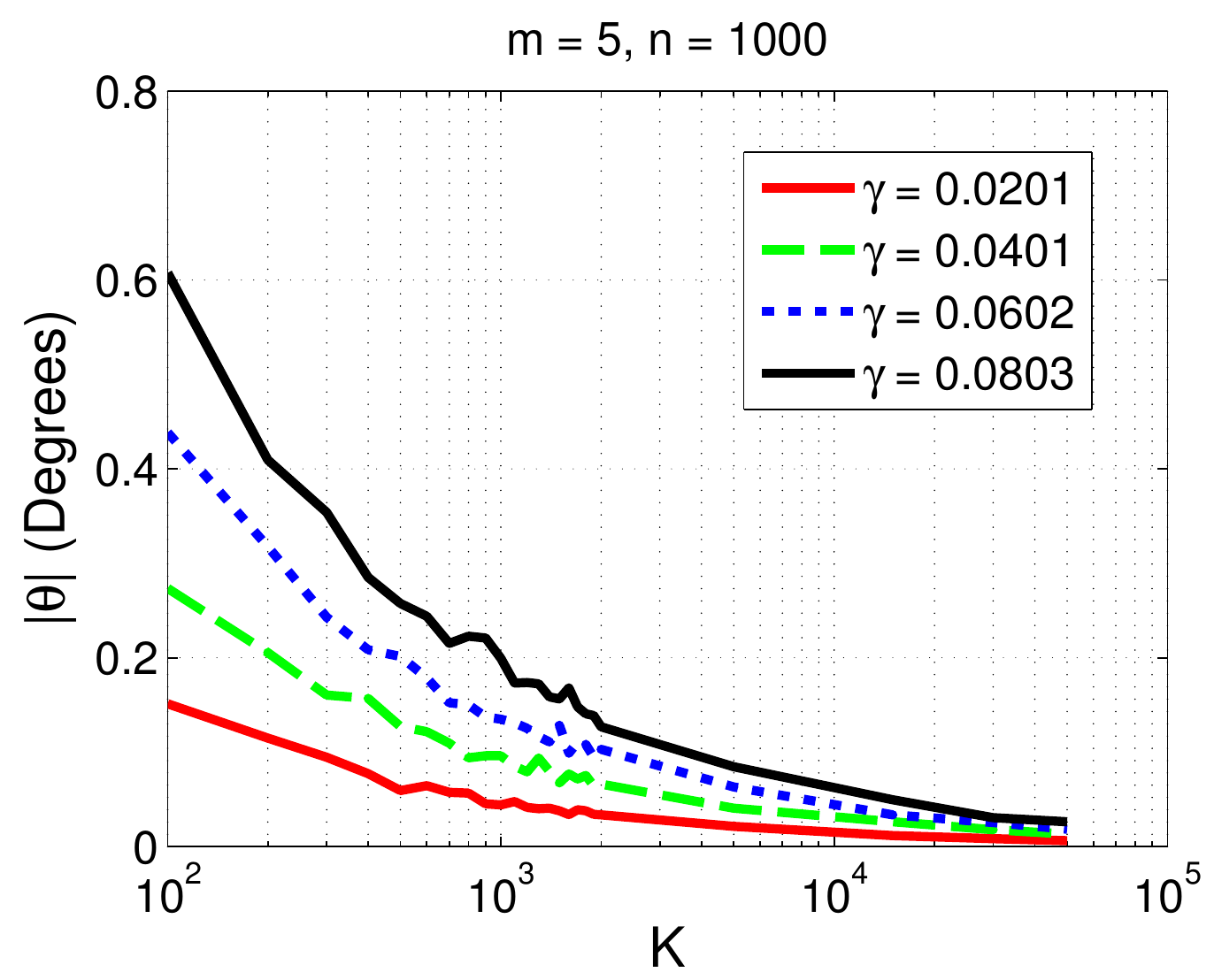}\end{minipage}}\\

\caption{\small Variation of the deviation $\abs{\theta}$ with respect to $K$ for different sampling widths $\nu = \gamma \nu_{\text{bound,quad}}$ for smooth mapping 1. Figures \ref{fig:smooth1_theo_m5n100}-\ref{fig:smooth1_theo_m5n1000} show theoretical plots while Figures \ref{fig:smooth1_emp_m5n100}-\ref{fig:smooth1_emp_m5n1000} show empirical plots.}
\label{fig:smooth1_theo_emp_exp1} 
\end{figure}
\begin{figure}[!htbp]
\centering
\subfloat[\small m=5,n=100]{
\begin{minipage}[c]{0.28\linewidth}
\centering
\label{fig:smooth2_theo_m5n100} 
\noindent \includegraphics[width=1.0\linewidth]{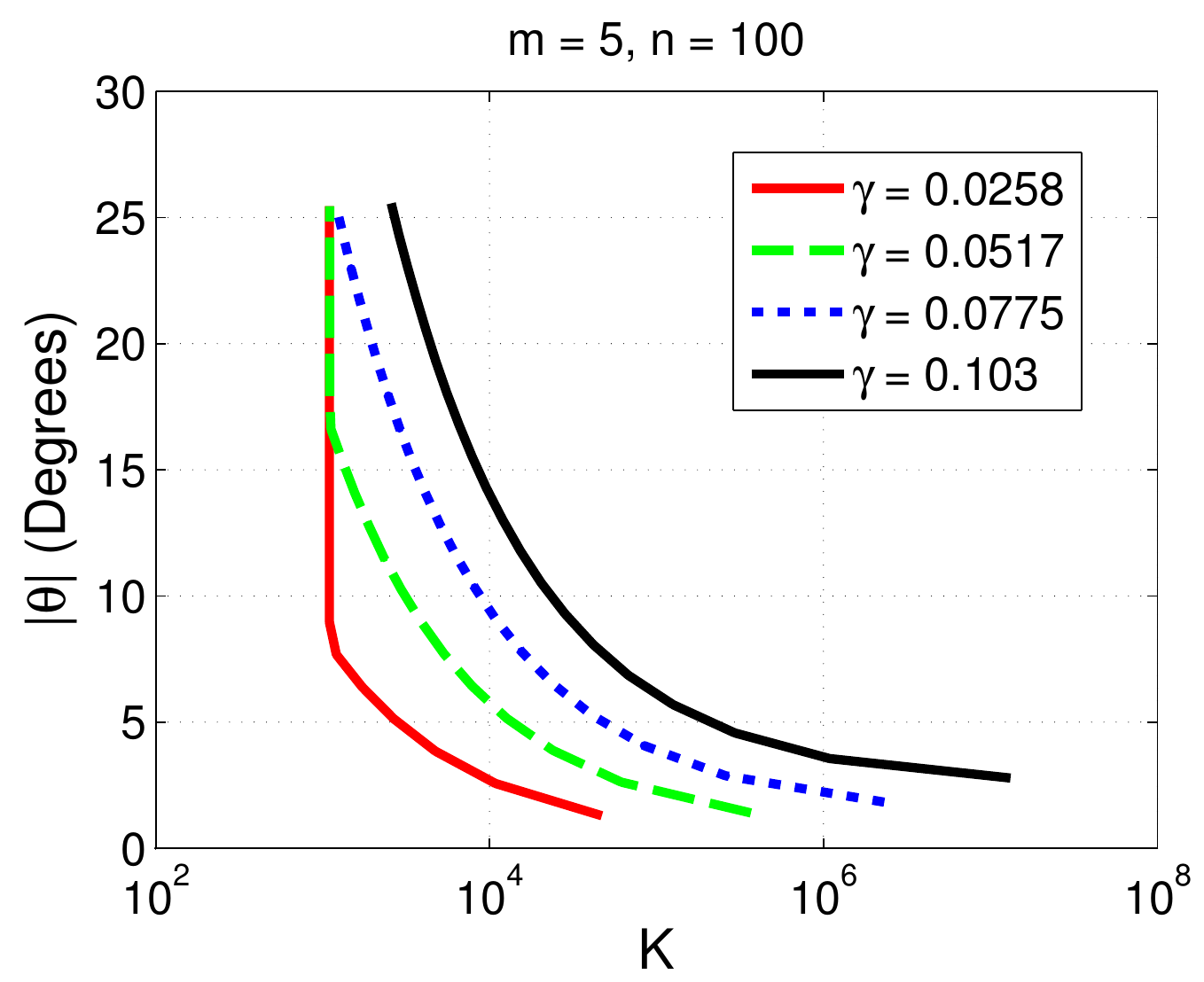}\end{minipage}}
\subfloat[\small m=5,n=500]{
\begin{minipage}[c]{0.28\linewidth}
\centering
\label{fig:smooth2_theo_m5n500} 
\noindent \includegraphics[width=1.0\linewidth]{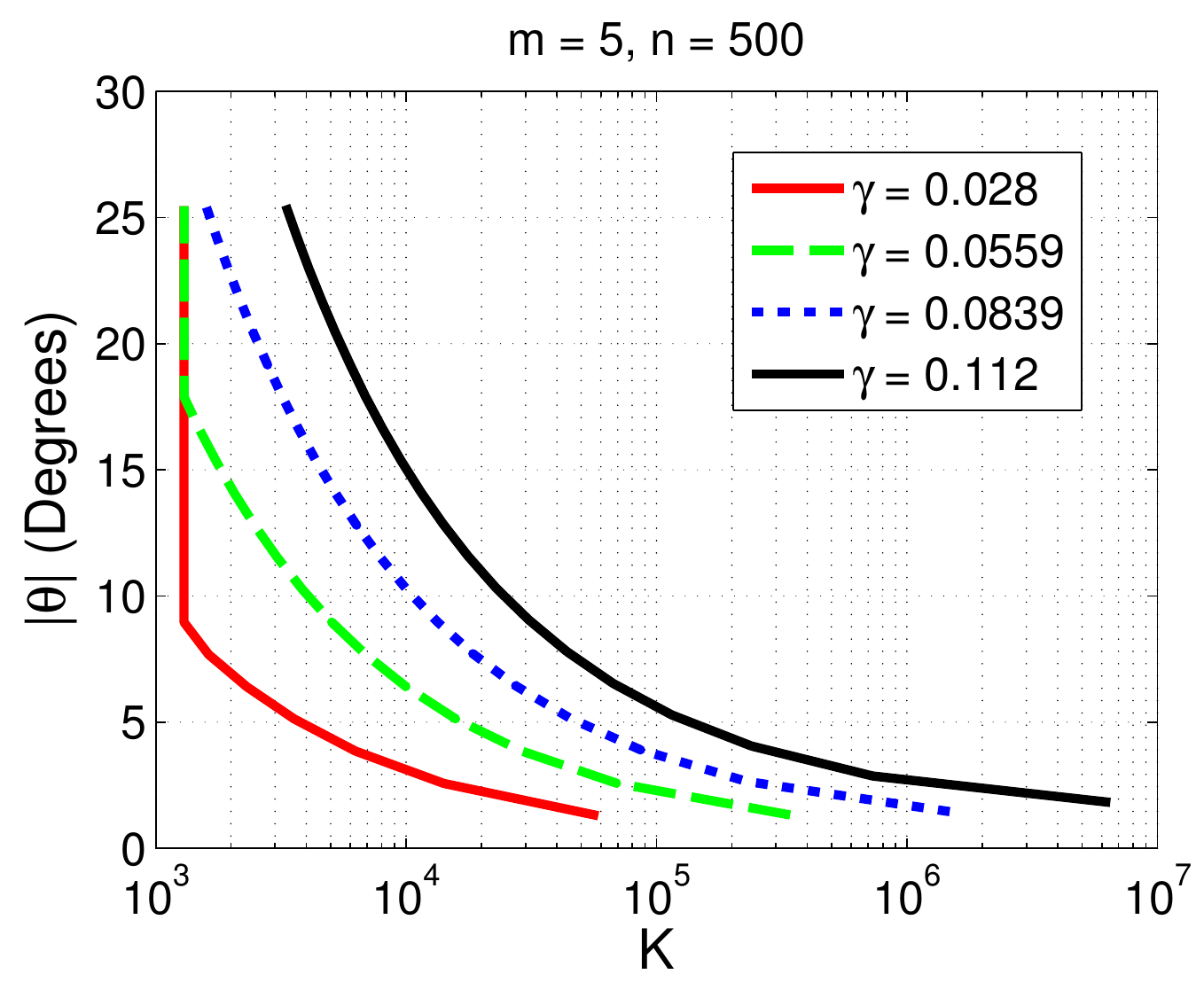}\end{minipage}}
\subfloat[\small m=5,n=1000]{
\begin{minipage}[c]{0.28\linewidth}
\centering
\label{fig:smooth2_theo_m5n1000} 
\noindent \includegraphics[width=1.0\linewidth]{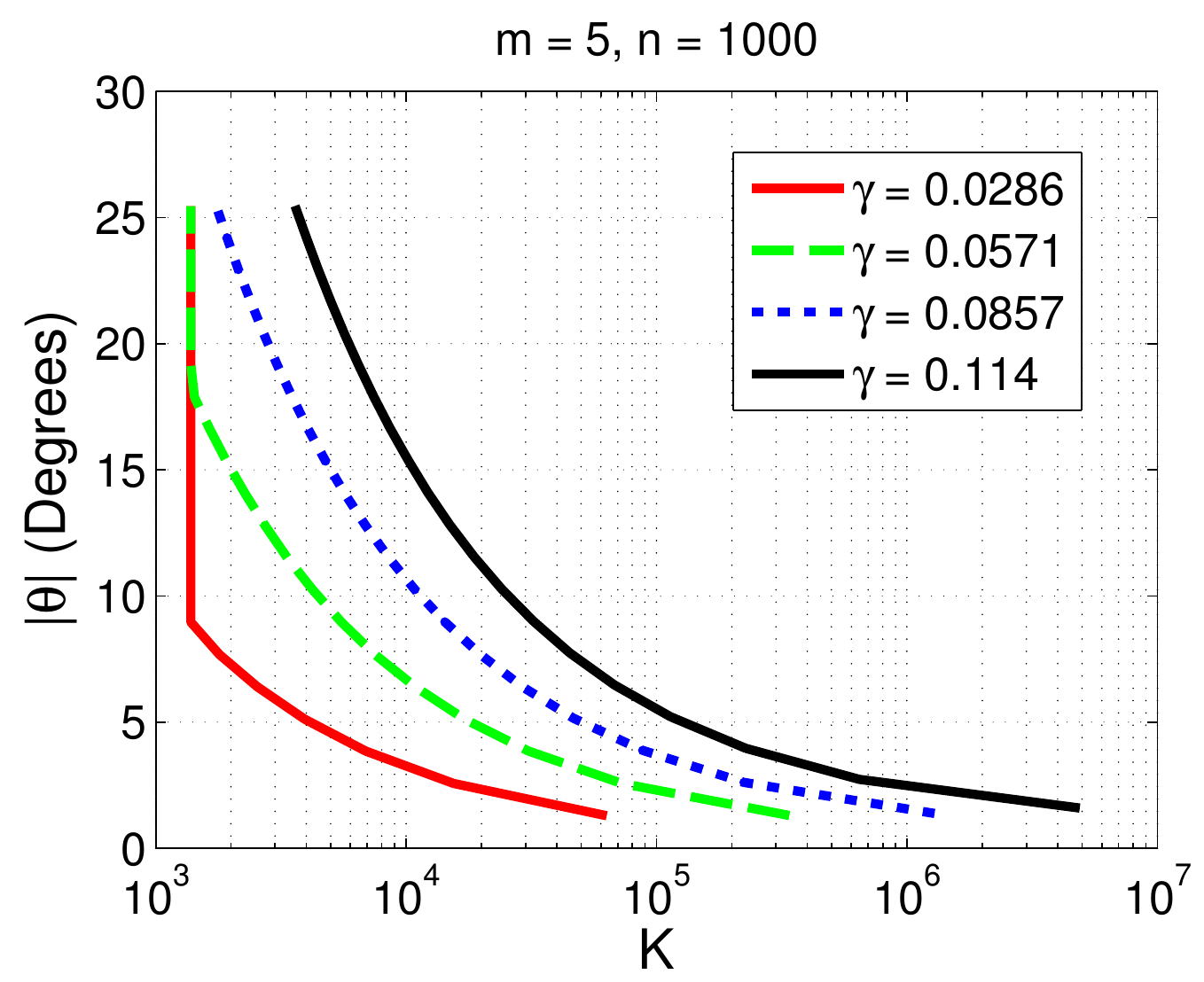}\end{minipage}}\\
\subfloat[\small m=5,n=100]{
\begin{minipage}[c]{0.28\linewidth}
\centering
\label{fig:smooth2_emp_m5n100} 
\noindent \includegraphics[width=1.0\linewidth]{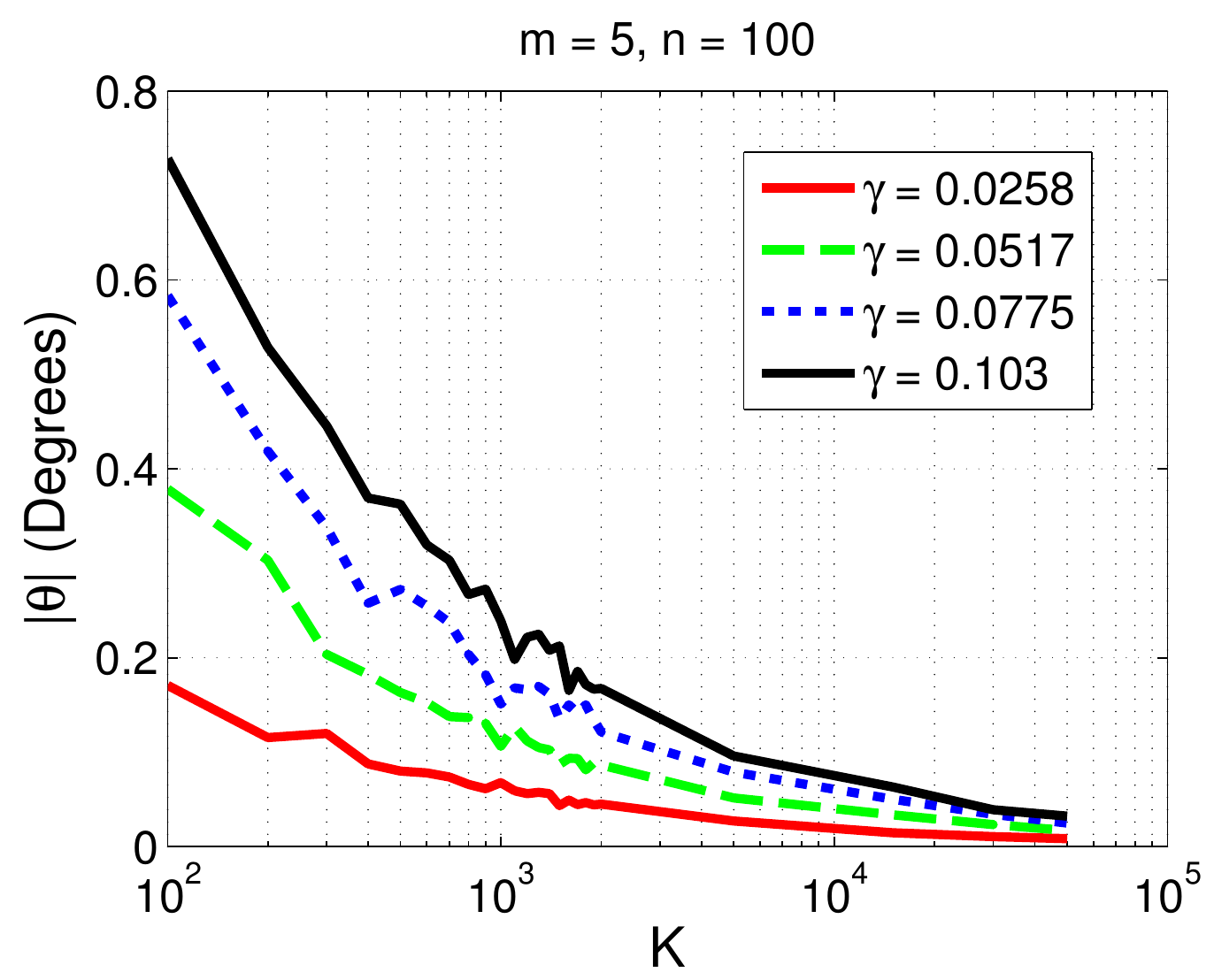}\end{minipage}}
\subfloat[\small m=5,n=500]{
\begin{minipage}[c]{0.28\linewidth}
\centering
\label{fig:smooth2_emp_m5n500} 
\noindent \includegraphics[width=1.0\linewidth]{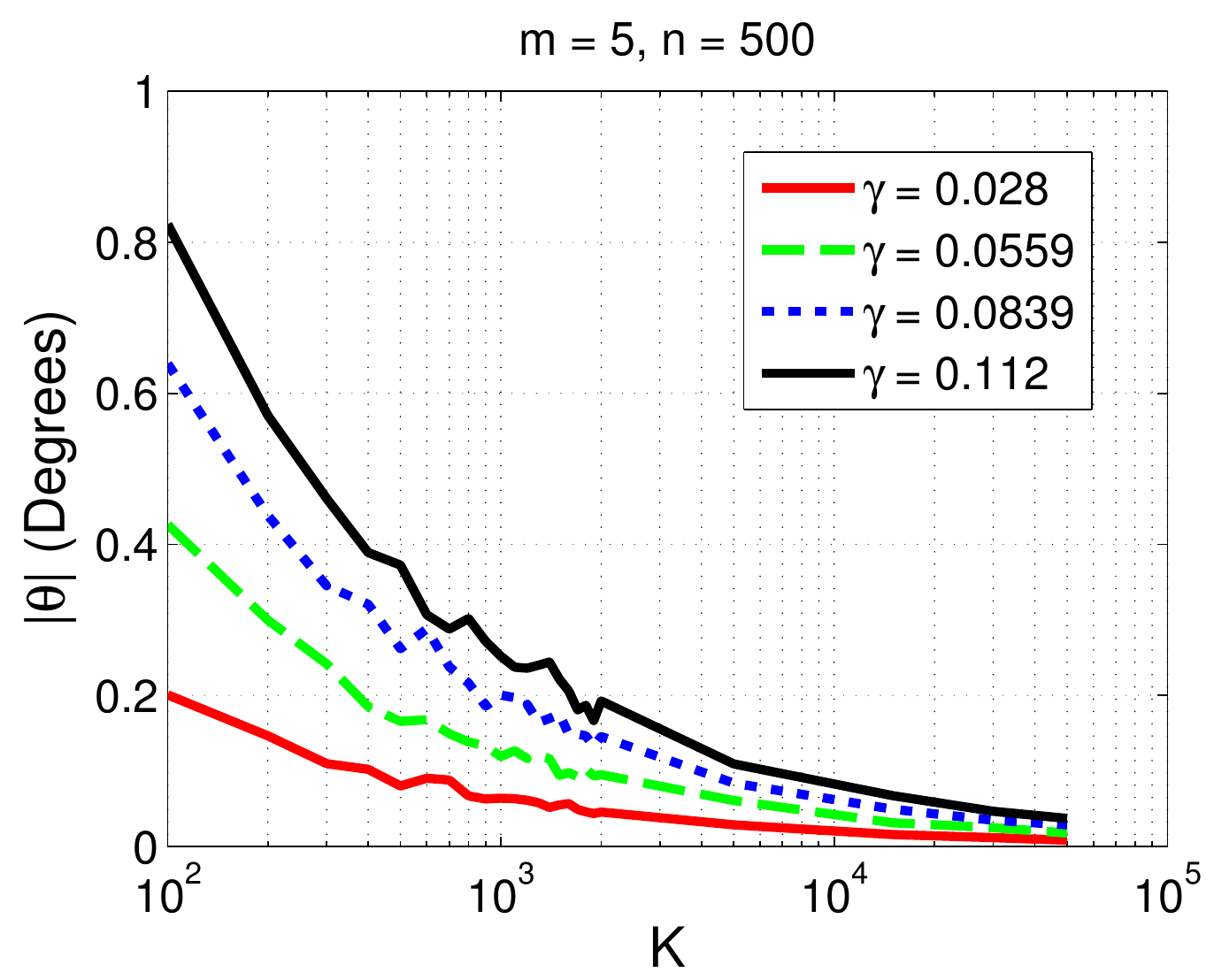}\end{minipage}}
\subfloat[\small m=5,n=1000]{
\begin{minipage}[c]{0.28\linewidth}
\centering
\label{fig:smooth2_emp_m5n1000} 
\noindent \includegraphics[width=1.0\linewidth]{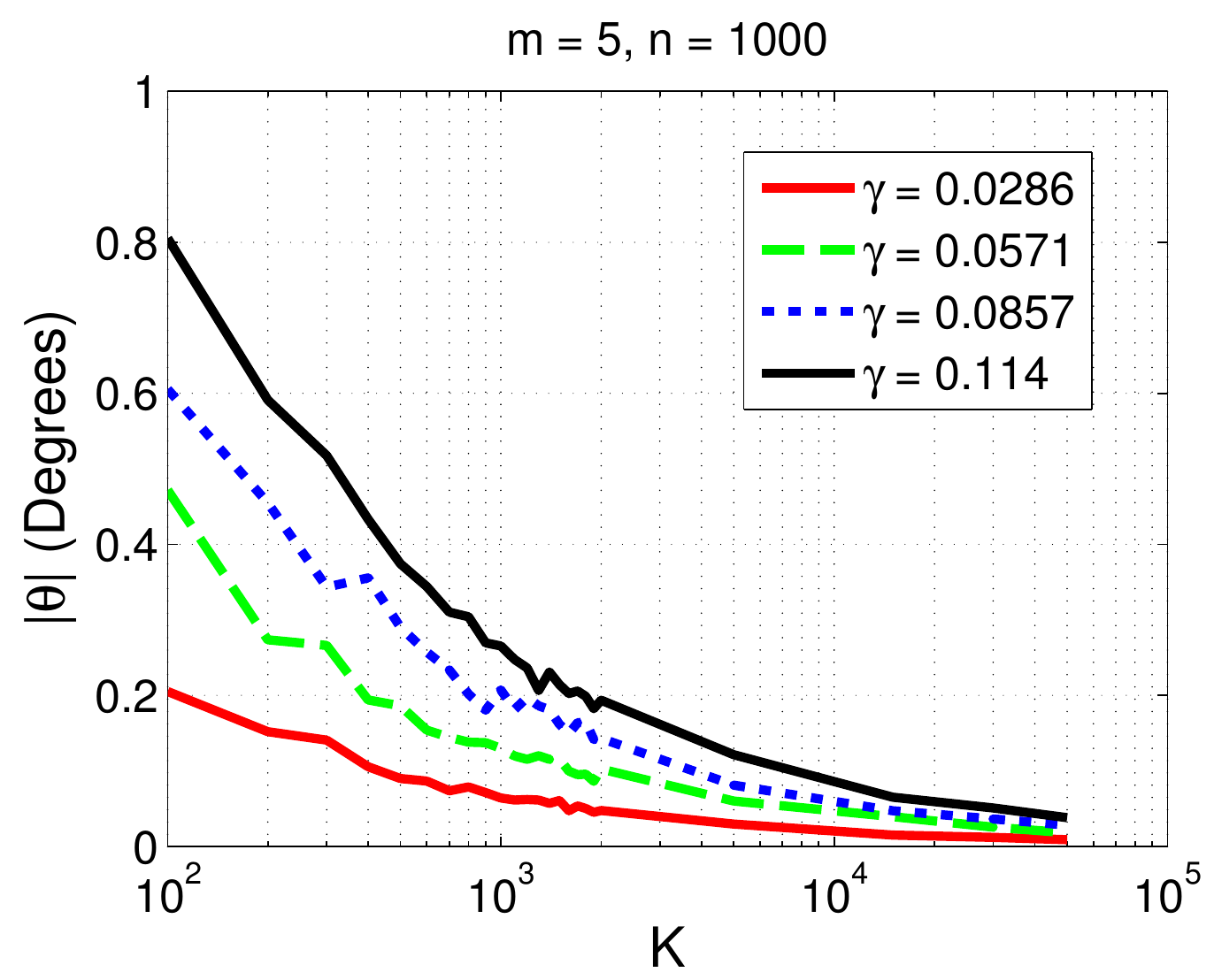}\end{minipage}}\\

\caption{\small Variation of the deviation $\abs{\theta}$ with respect to $K$ for different sampling widths $\nu = \gamma \nu_{\text{bound,quad}}$ for smooth mapping 2. Figures \ref{fig:smooth2_theo_m5n100}-\ref{fig:smooth2_theo_m5n1000} show theoretical plots while Figures \ref{fig:smooth2_emp_m5n100}-\ref{fig:smooth2_emp_m5n1000} show empirical plots.}
\label{fig:smooth2_theo_emp_exp1} 
\end{figure}
\begin{figure}[!htbp]
\centering
\subfloat[\small m=5,n=100]{
\begin{minipage}[c]{0.28\linewidth}
\centering
\label{fig:smooth3_theo_m5n100} 
\noindent \includegraphics[width=1.0\linewidth]{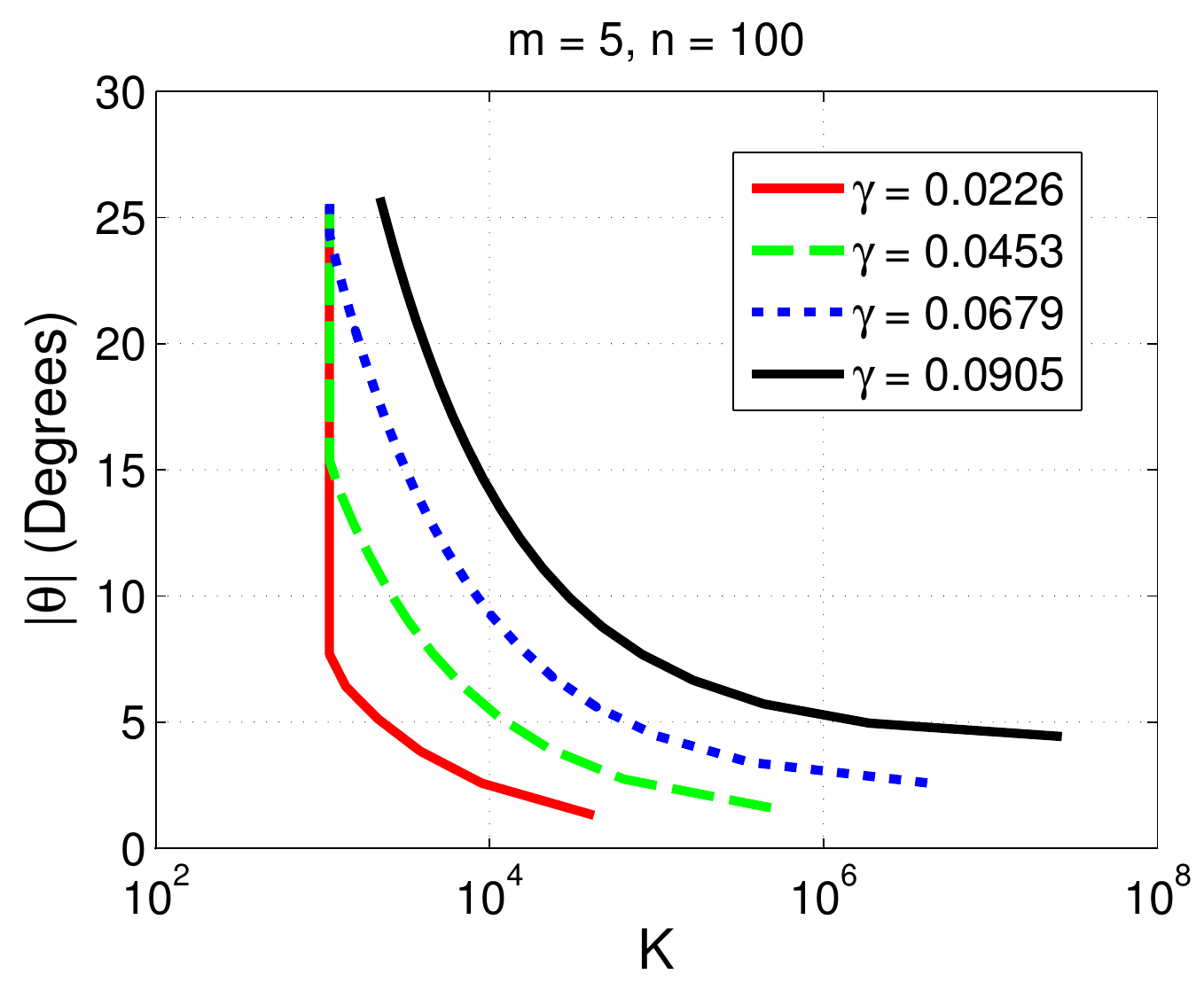}\end{minipage}}
\subfloat[\small m=5,n=500]{
\begin{minipage}[c]{0.28\linewidth}
\centering
\label{fig:smooth3_theo_m5n500} 
\noindent \includegraphics[width=1.0\linewidth]{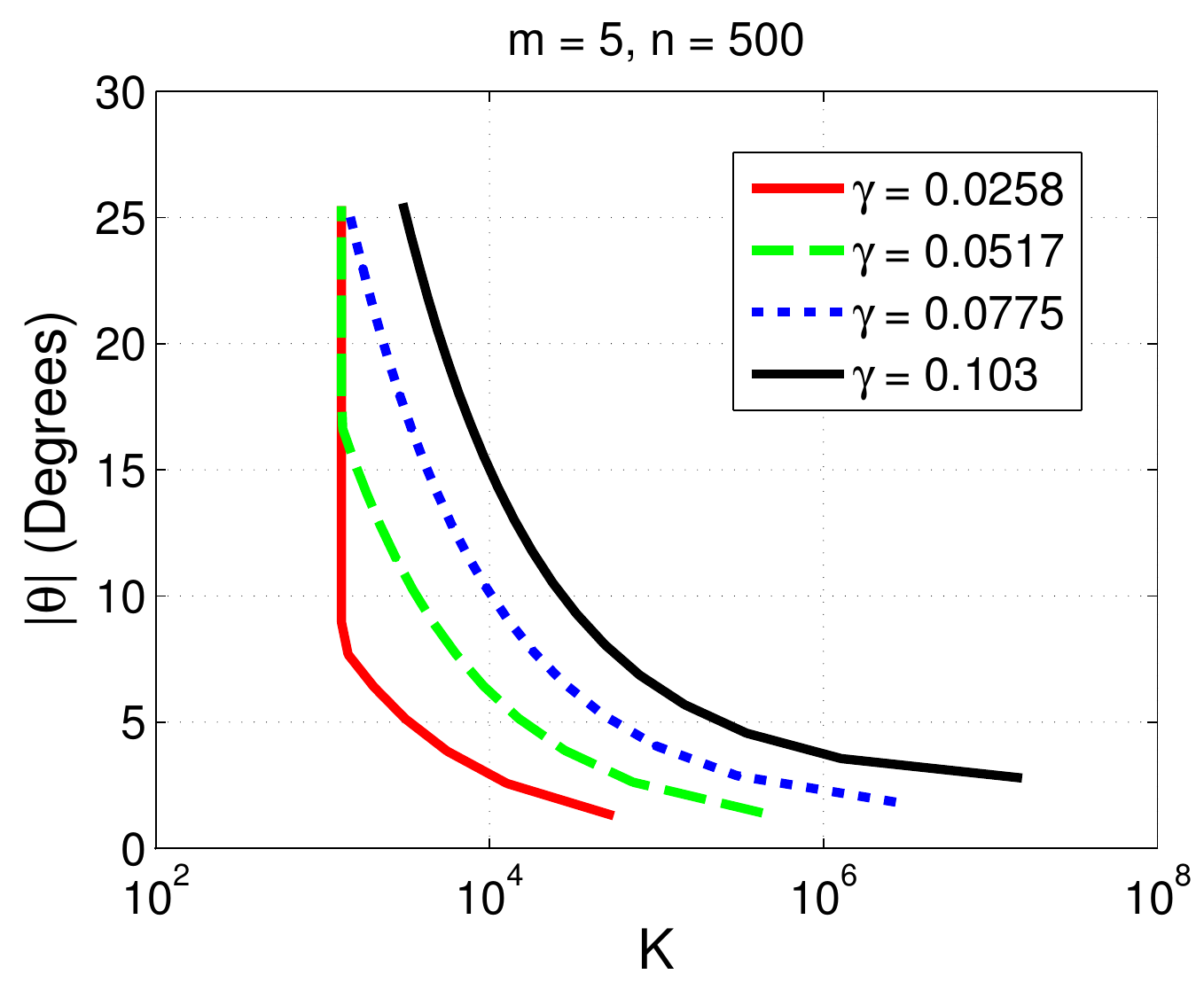}\end{minipage}}
\subfloat[\small m=5,n=1000]{
\begin{minipage}[c]{0.28\linewidth}
\centering
\label{fig:smooth3_theo_m5n1000} 
\noindent \includegraphics[width=1.0\linewidth]{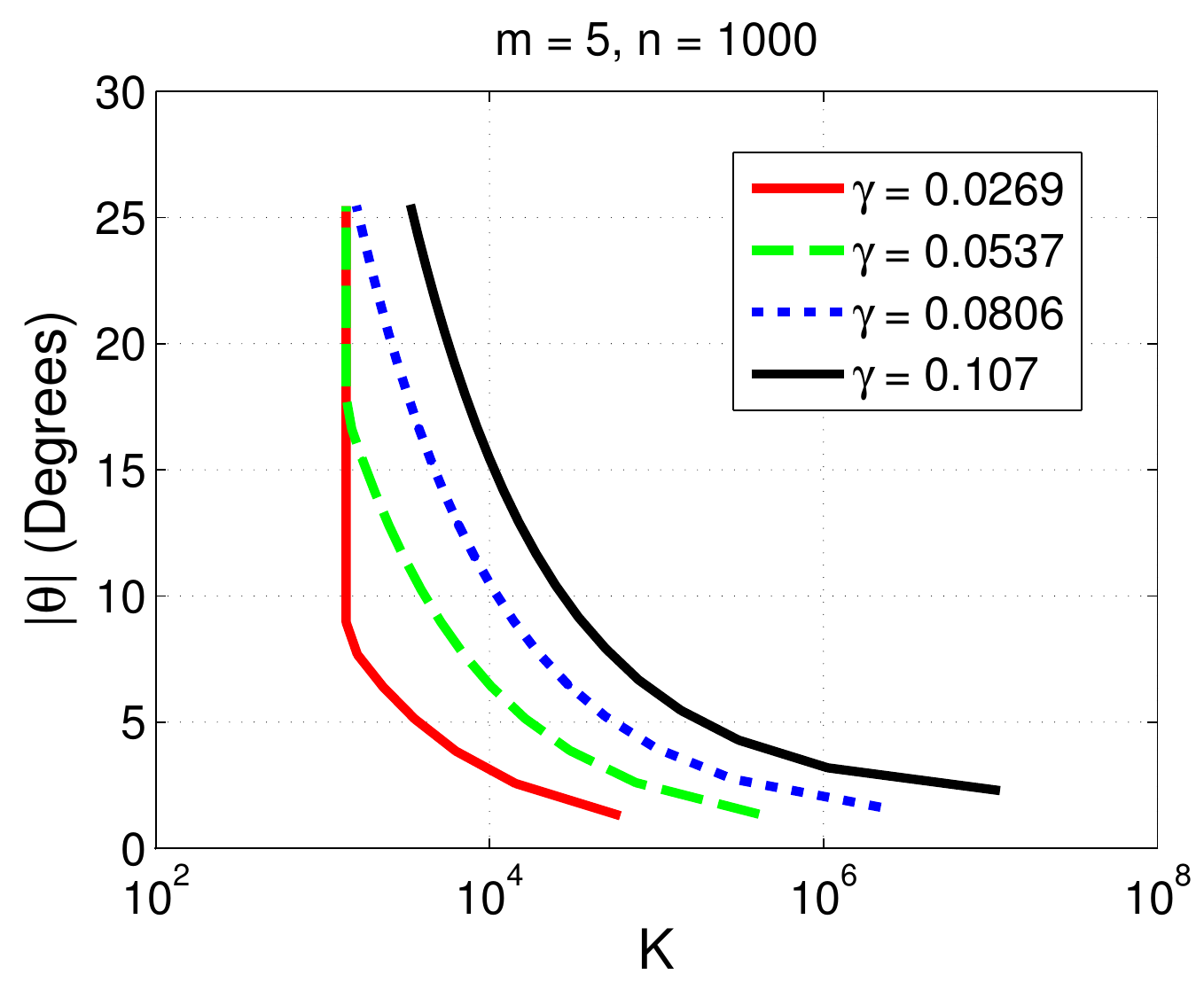}\end{minipage}}\\
\subfloat[\small m=5,n=100]{
\begin{minipage}[c]{0.28\linewidth}
\centering
\label{fig:smooth3_emp_m5n100} 
\noindent \includegraphics[width=1.0\linewidth]{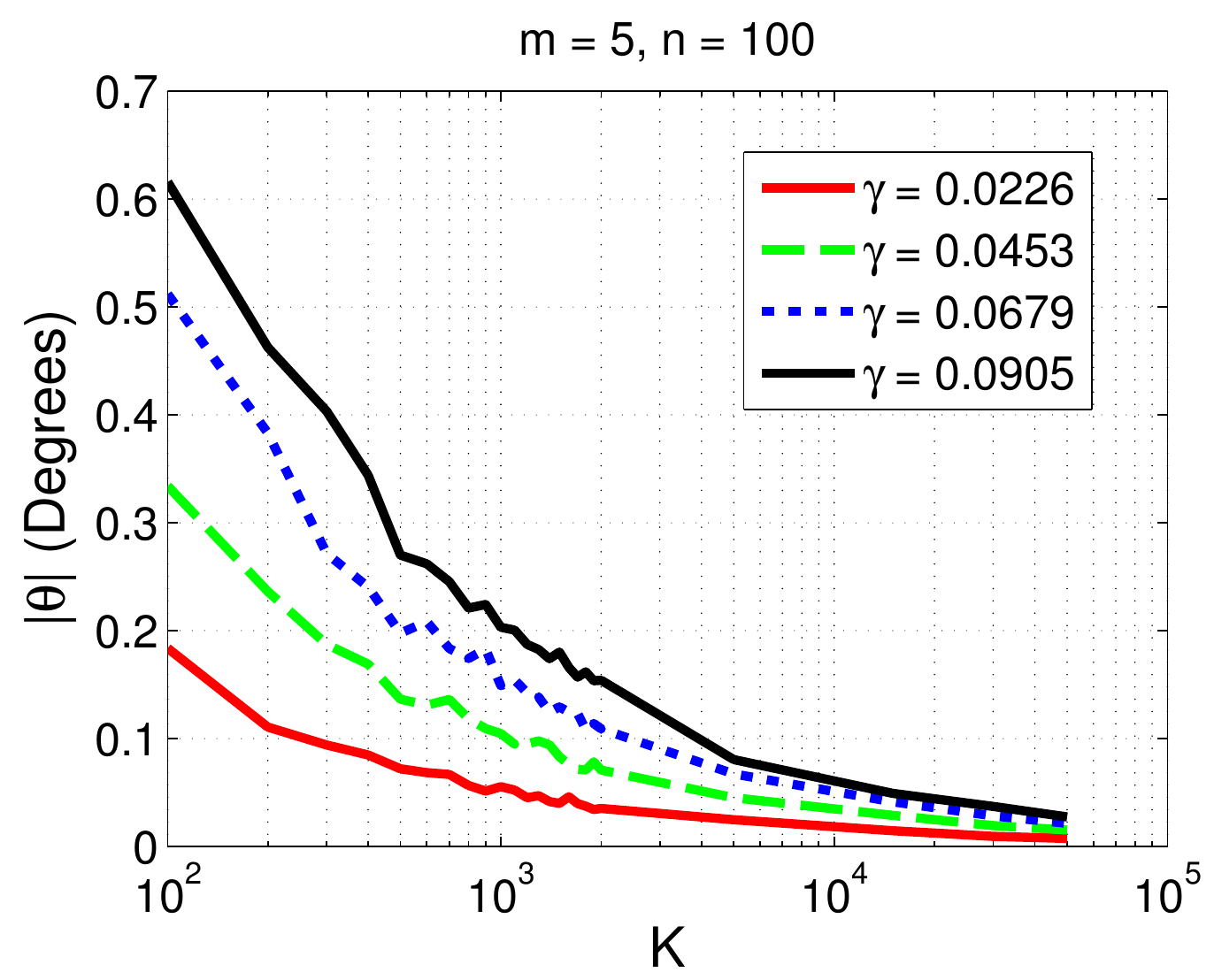}\end{minipage}}
\subfloat[\small m=5,n=500]{
\begin{minipage}[c]{0.28\linewidth}
\centering
\label{fig:smooth3_emp_m5n500} 
\noindent \includegraphics[width=1.0\linewidth]{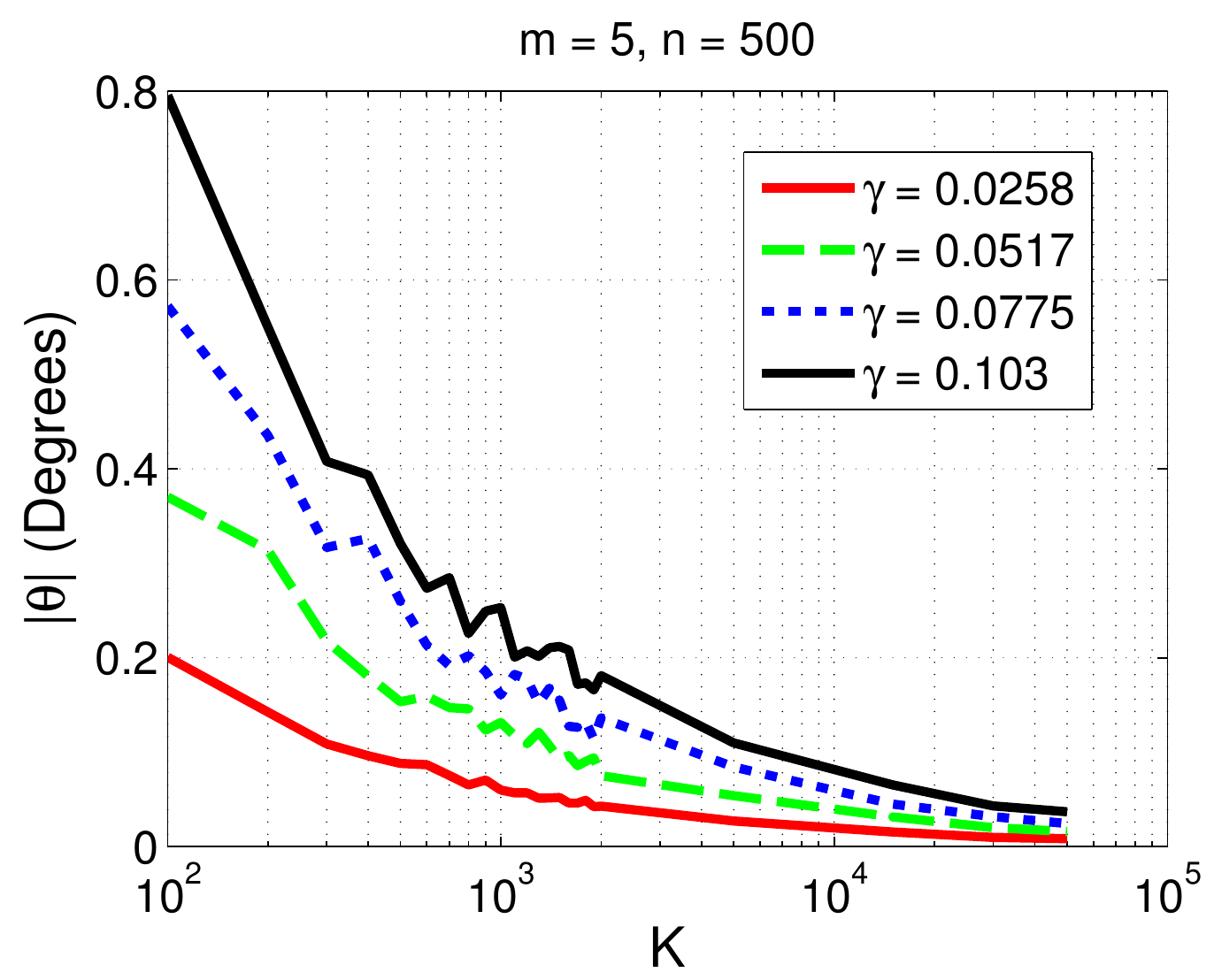}\end{minipage}}
\subfloat[\small m=5,n=1000]{
\begin{minipage}[c]{0.28\linewidth}
\centering
\label{fig:smooth3_emp_m5n1000} 
\noindent \includegraphics[width=1.0\linewidth]{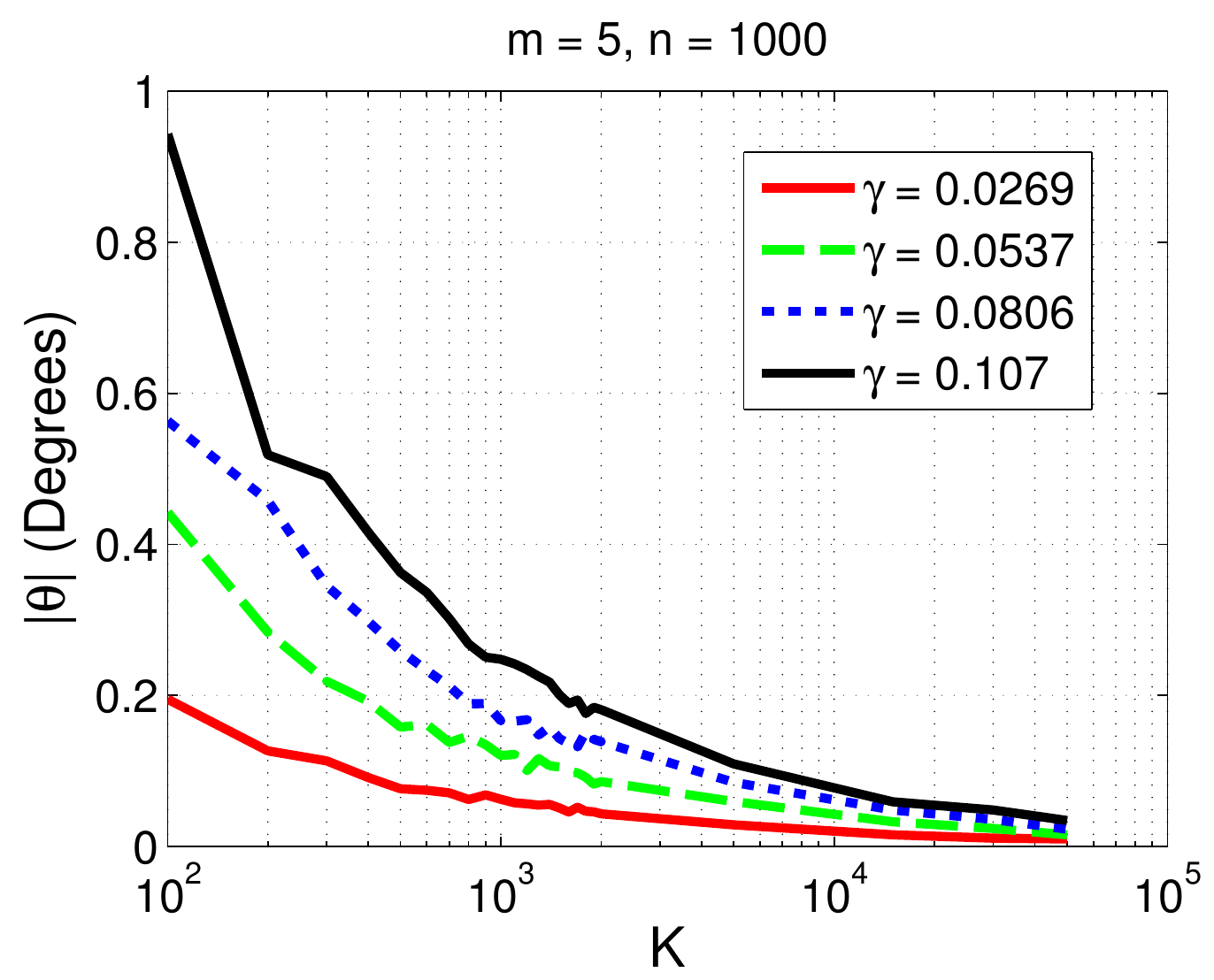}\end{minipage}}\\

\caption{\small Variation of the deviation $\abs{\theta}$ with respect to $K$ for different sampling widths $\nu = \gamma \nu_{\text{bound,quad}}$ for smooth mapping 3. Figures \ref{fig:smooth3_theo_m5n100}-\ref{fig:smooth3_theo_m5n1000} show theoretical plots while Figures \ref{fig:smooth3_emp_m5n100}-\ref{fig:smooth3_emp_m5n1000} show empirical plots.}
\label{fig:smooth3_theo_emp_exp1} 
\end{figure}

In the second set of experiments, we study the scaling of the true bound on the sampling width $\nu$ with the ambient space dimension $n$. To this end, we fix $m = 5, \, \kfmax = 10$ and the number of samples is fixed at a sufficiently large value, i.e., $K= 2000$. Then, we vary $n$ from 100 to 1000 in steps of 50. For each value of $n$, we first initialize $\nu = 3\, \nu_{\text{bound,quad}}$ and then compute the bound on the sampling width by gradually reducing $\nu$ until $\abs{\theta} < \theta_{\text{bound}}$.  The value of $\abs{\theta}$ is averaged over 25 random trials. We obtain four plots corresponding to the angle bounds $\theta_{\text{bound}} = 5^{\circ},10^{\circ},15^{\circ}$ and $20^{\circ}$, for the quadratic form and smooth mappings 1 and 2. 

Fig.~\ref{fig:md_exp2} shows the variation of $\nu$ with $n$. Importantly, we have observed that, for quadratic forms (Fig.~\ref{fig:md_exp2_quad}), the true bound $\nu$ on the sampling width is in line with its theoretical estimation $\nu_{\text{bound,quad}}$ as $\nu = \gamma \, \nu_{\text{bound,quad}}$, where $\gamma$ lies approximately between $1.38$ and $1.46$. This is also true for smooth mappings (Figures \ref{fig:md_exp2_gauss}, \ref{fig:md_exp2_sin}) indicating that the true bound on $\nu$ is approximately $O(n^{-1/2})$. It can also be observed that, at a fixed value of $n$, the bound on $\nu$ is larger when the angle bound $\theta_{\text{bound}}$ is greater, as expected.

\begin{figure}[!htbp]
\centering
\subfloat[\small Quadratic form]{
\begin{minipage}[c]{0.28\linewidth}
\centering
\label{fig:md_exp2_quad} 
\noindent \includegraphics[width=1.0\linewidth]{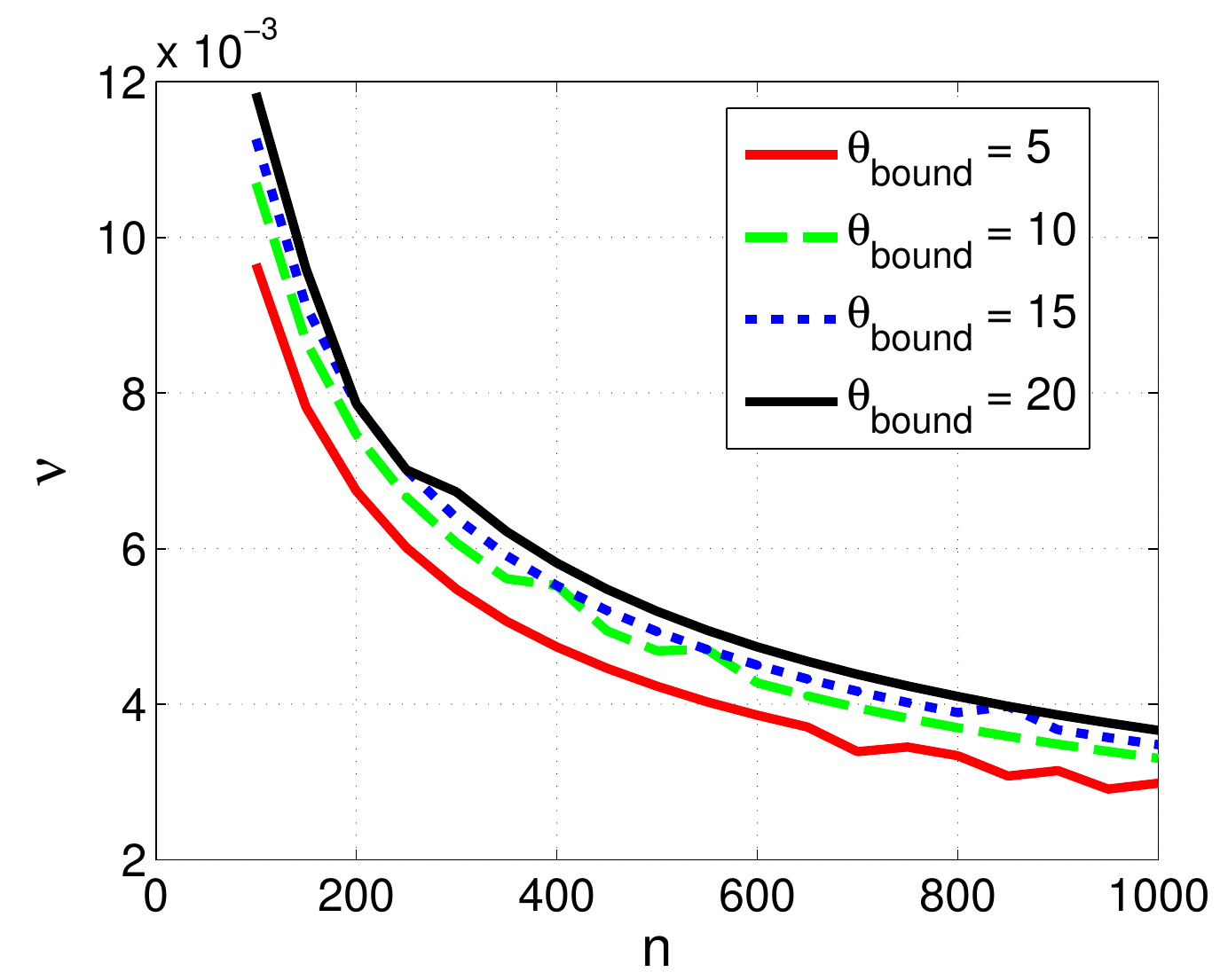} \end{minipage}}
\subfloat[Smooth mapping 1]{
\begin{minipage}[c]{0.28\linewidth}
\centering
\label{fig:md_exp2_gauss} 
\includegraphics[width=1.0\linewidth]{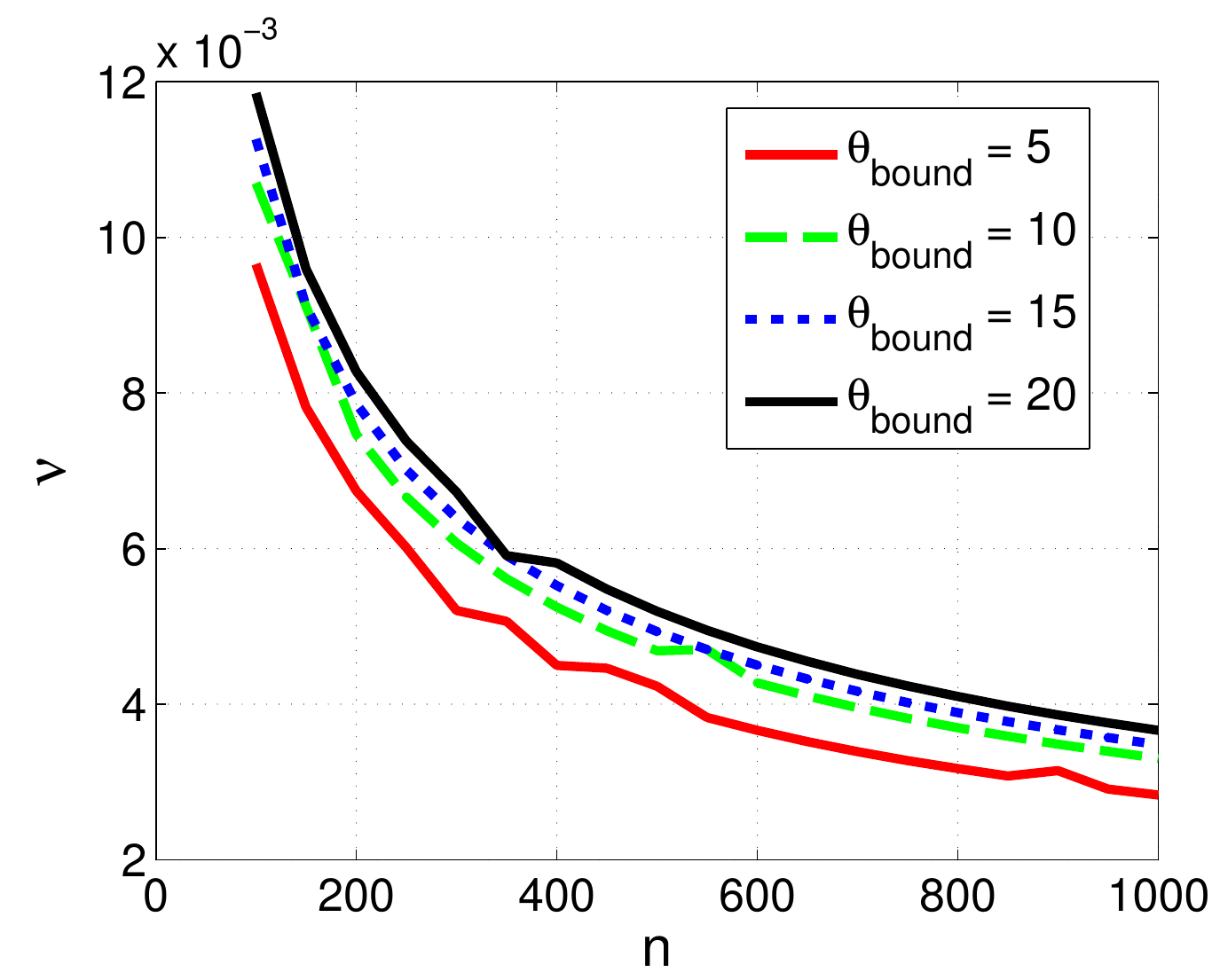} \end{minipage}}
\subfloat[Smooth mapping 2]{
\begin{minipage}[c]{0.28\linewidth}
\centering
\label{fig:md_exp2_sin} 
\includegraphics[width=1.0\linewidth]{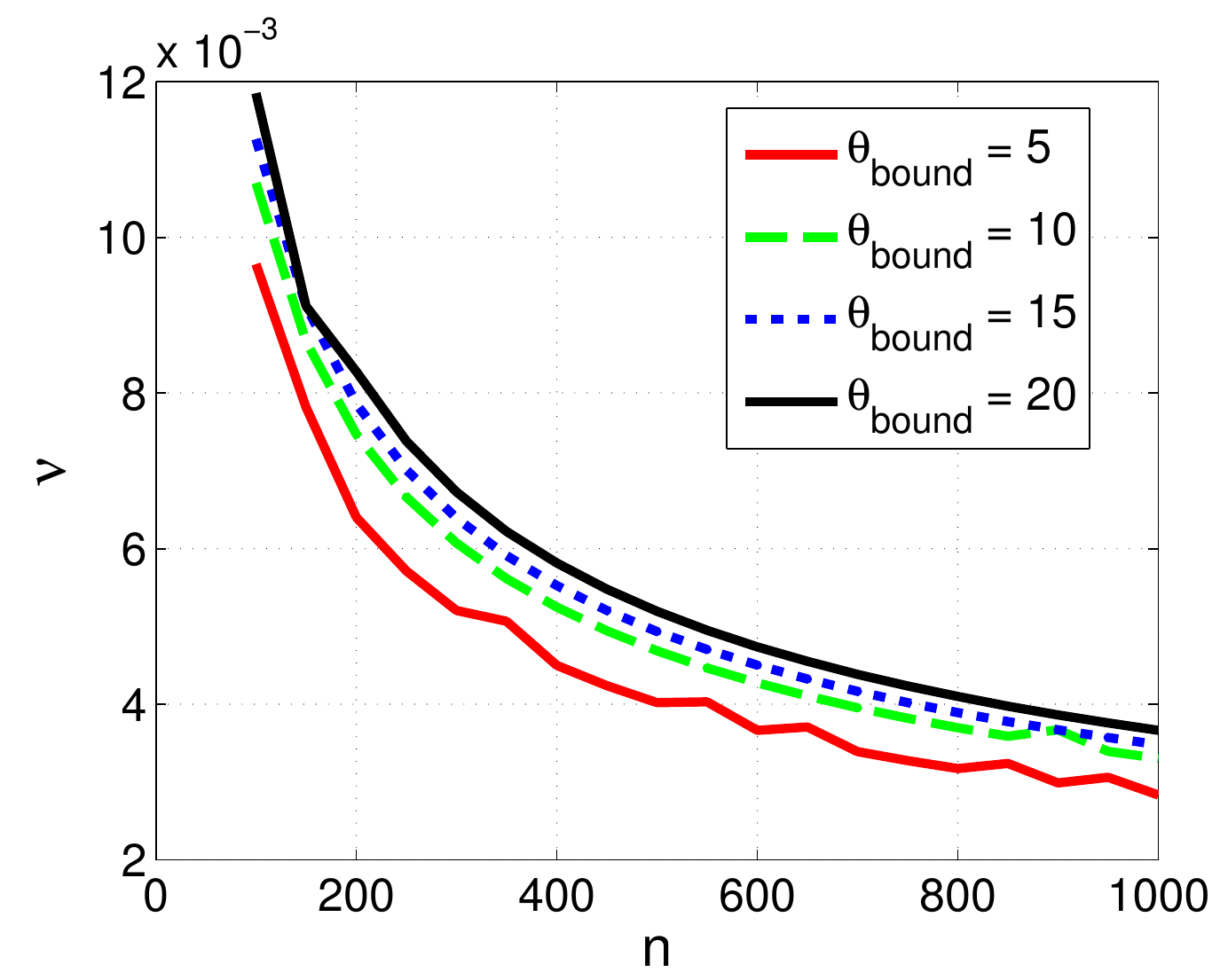} \end{minipage}}
\caption{\small Maximum sampling width $\nu$ for which the deviation $\abs{\theta} < \theta_{\text{bound}}$ is achieved for different values of $n$. Plots are shown for $\theta_{\text{bound}} = 5^{\circ},10^{\circ},15^{\circ},20^{\circ}$.}
\label{fig:md_exp2} 
\end{figure}

In the next experiment, we are interested in observing the dependency of the true sampling width bound $\nu$ on the maximum local curvature $\kfmax$. We fix $m = 5$, set $K = 2000$ and choose a fixed $n \in \set{100,500,1000}$. We vary $\kfmax$ from 0.5 to 10 in steps of 0.5. For each value of $\kfmax$, we first initialize $\nu = 3\, \nu_{\text{bound,quad}}$ and then compute the bound on $\nu$ by gradually reducing $\nu$ until $\abs{\theta} < \abs{\theta_{\text{bound}}} = 5^{\circ}$. The value of $\abs{\theta}$ is averaged over 25 random trials.

Fig.~\ref{fig:md_exp3} shows the dependency of $\nu$ on $\kfmax$ for different values of $n$. Similarly to the previous experiments, we observe that for quadratic forms, $\nu = \gamma \, \nu_{\text{bound,quad}}$, where $\gamma$ is approximately between 1.32 and 1.46 (see Fig. \ref{fig:md_exp3_quad}). We note the same behavior for the case of smooth mappings shown in Figures \ref{fig:md_exp3_gauss} and \ref{fig:md_exp3_sin}. The results indicate that for fixed values of $m$ and $n$, the true bound on $\nu$ matches the theoretical result $\nu \approx O(\abs{\kfmax}^{-1})$ derived in Section \ref{subsec:smooth_sampl_compl}.
\begin{figure}[!htbp]
\centering
\subfloat[\small Quadratic form]{
\begin{minipage}[c]{0.28\linewidth}
\centering
\label{fig:md_exp3_quad} 
\noindent \includegraphics[width=1.0\linewidth]{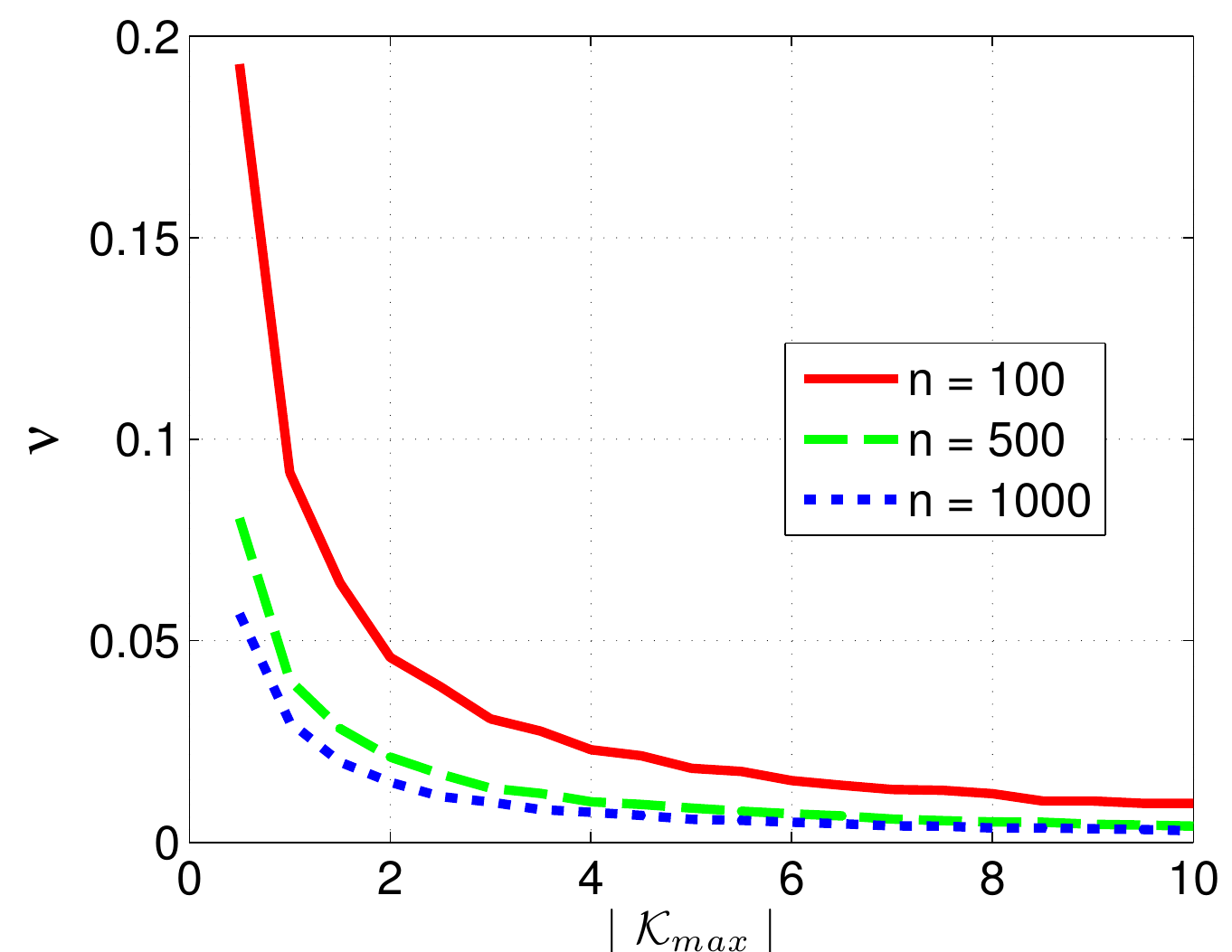} \end{minipage}}
\subfloat[Smooth mapping 1]{
\begin{minipage}[c]{0.28\linewidth}
\centering
\label{fig:md_exp3_gauss} 
\includegraphics[width=1.0\linewidth]{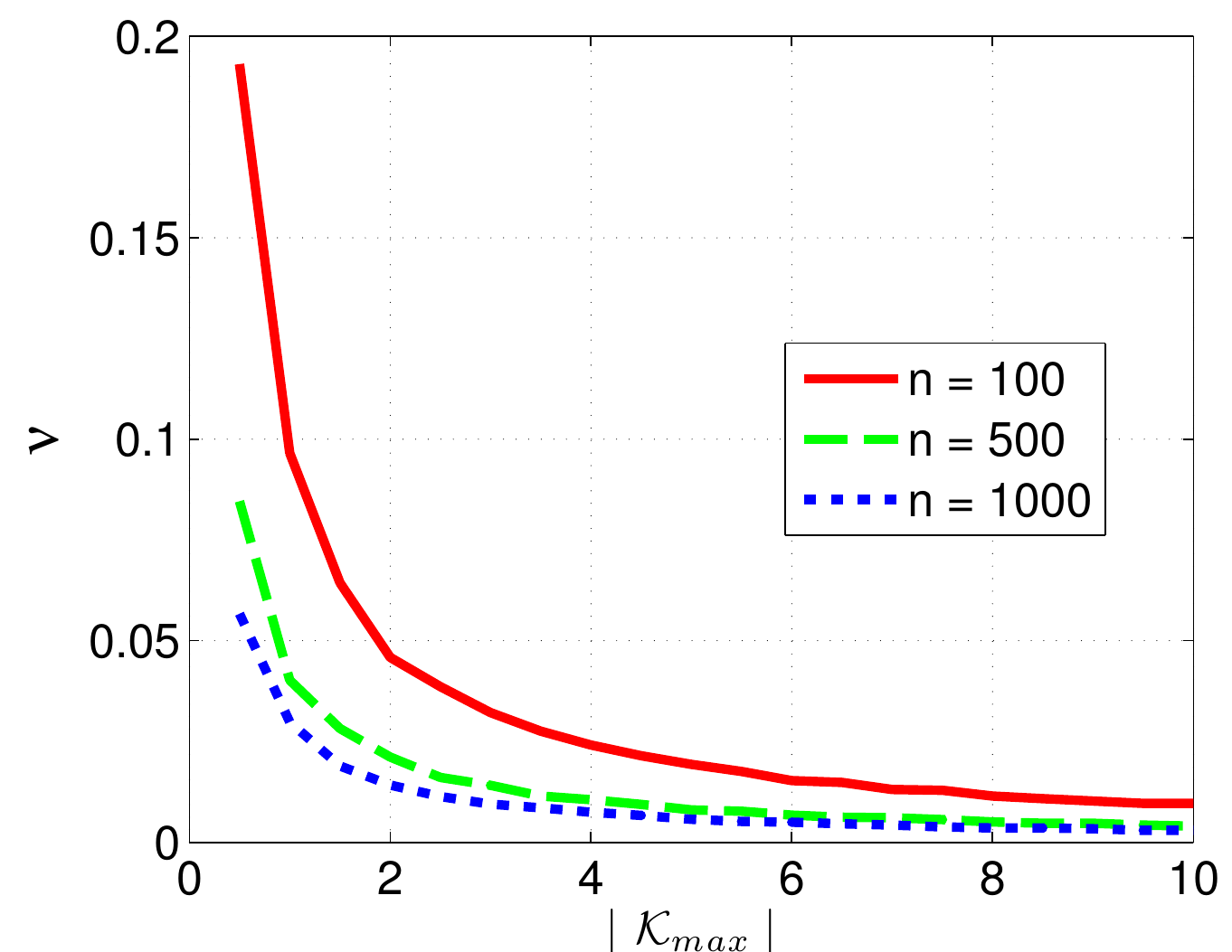} \end{minipage}}
\subfloat[Smooth mapping 2]{
\begin{minipage}[c]{0.28\linewidth}
\centering
\label{fig:md_exp3_sin} 
\includegraphics[width=1.0\linewidth]{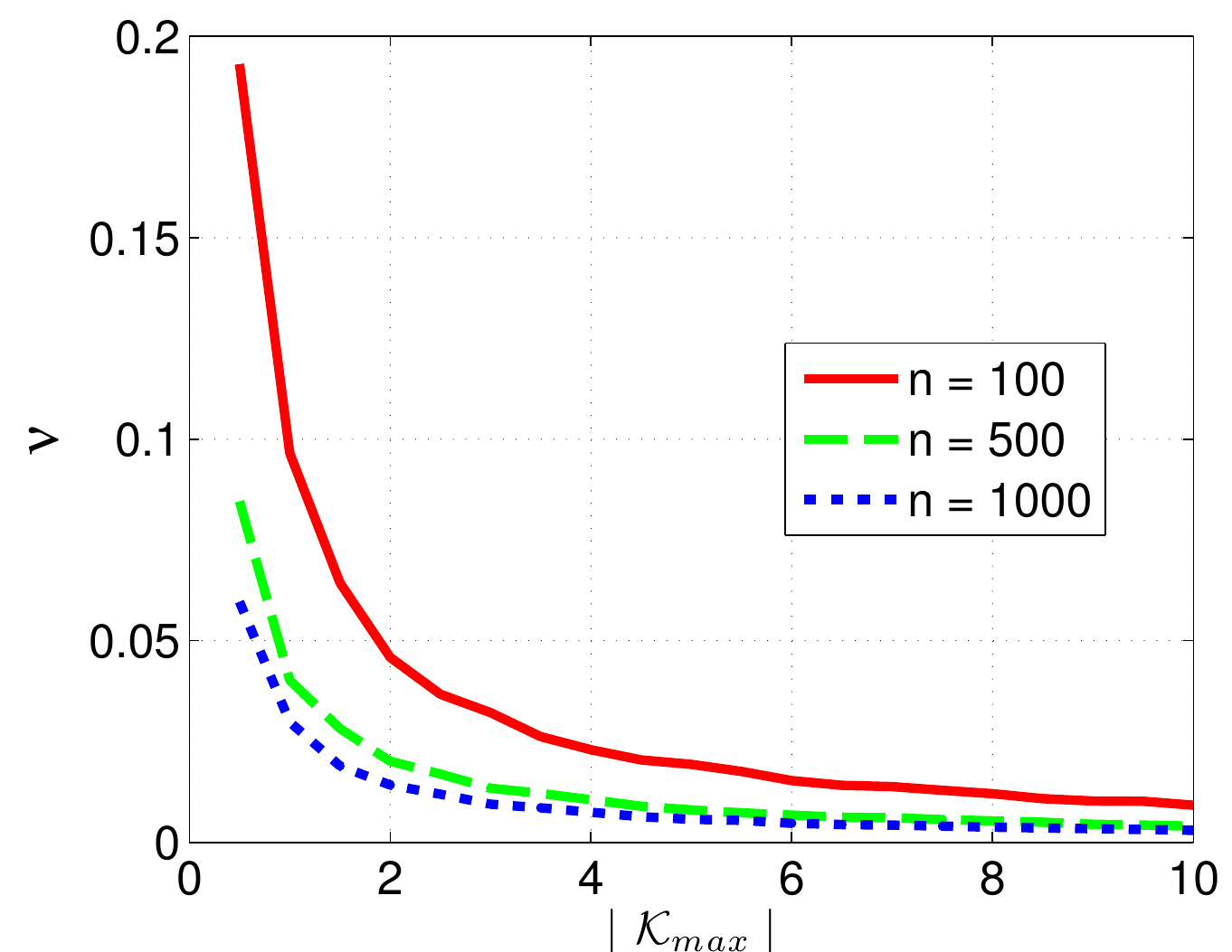} \end{minipage}}
\caption{\small Maximum sampling width $\nu$ for which $\abs{\theta} < \abs{\theta_{\text{bound}}} = 5^{\circ}$ is achieved for different values of $\kfmax$. The results are given for different dimensions of the ambient space $n$.}
\label{fig:md_exp3} 
\end{figure}

Then, we investigate the relation between the sampling density $K$ and the embedding dimension $n$ for a fixed  sampling width $\nu$. We choose $\kfmax = 10$ and select several values for the dimension of the manifold $m \in \set{5,10,15}$. We vary $n$ from 100 to 2000 in steps of 100. We set $\nu = \nu_{\text{bound,quad}}$, where $\nu_{\text{bound,quad}}$ is evaluated at the largest value of $n$ such that the fixed sampling width $\nu$ is sufficiently small for the range of $n$ under consideration.  We denote the largest value of $n$ by $n_{\text{large}}$ in this experiment. For each value of $(m,n,\kfmax)$, we compute the minimum number of samples needed in order to have $\abs{\theta} < \abs{\theta_{\text{bound}}} = 5^{\circ}$. The value of $\abs{\theta}$ is the average of 25 random trials. 
\begin{figure}[!htbp]
\centering
\subfloat[\small Quadratic form]{
\begin{minipage}[c]{0.28\linewidth}
\centering
\label{fig:md_exp4_quad} 
\noindent \includegraphics[width=1.0\linewidth]{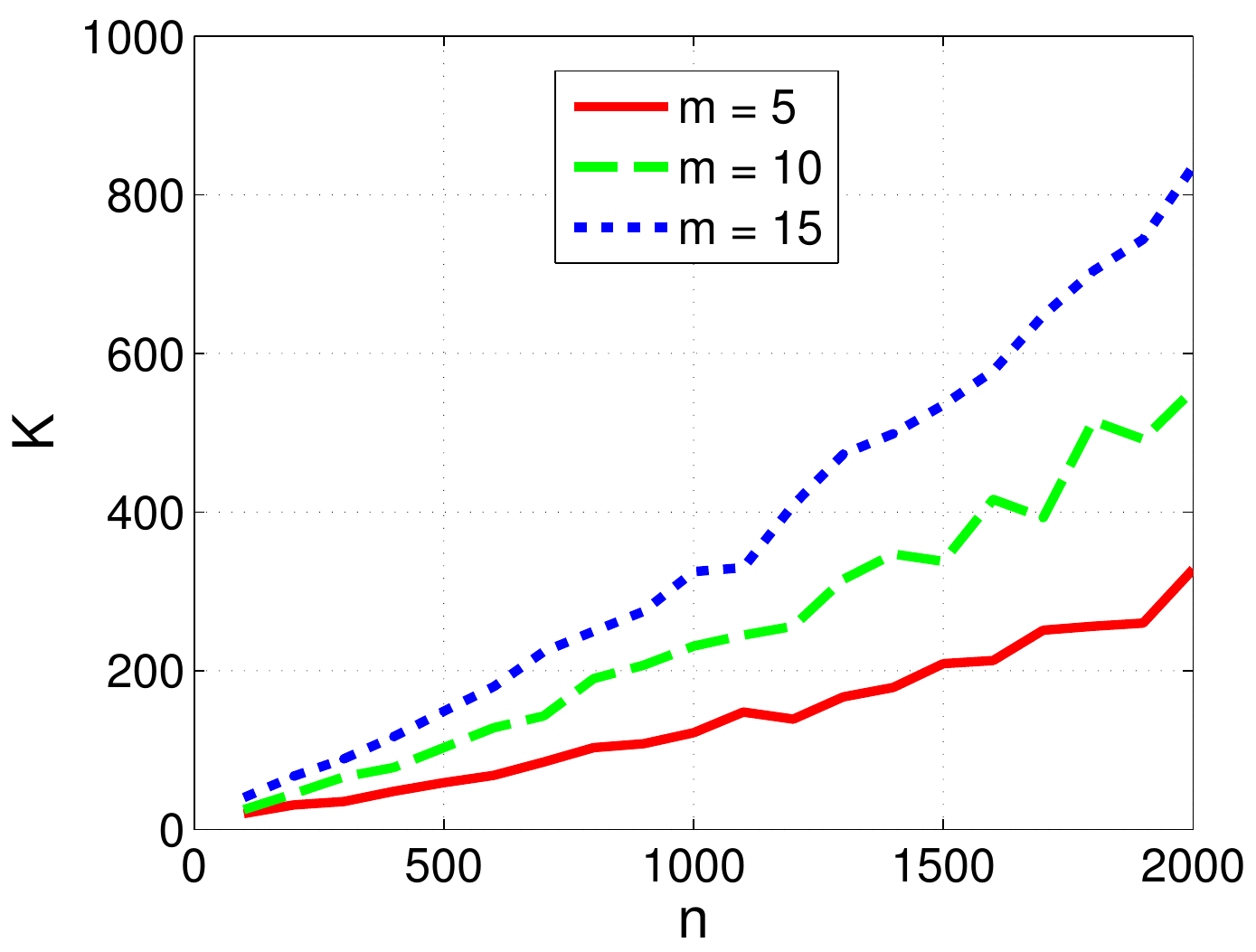} \end{minipage}}
\subfloat[Smooth mapping 1]{
\begin{minipage}[c]{0.28\linewidth}
\centering
\label{fig:md_exp4_gauss} 
\includegraphics[width=1.0\linewidth]{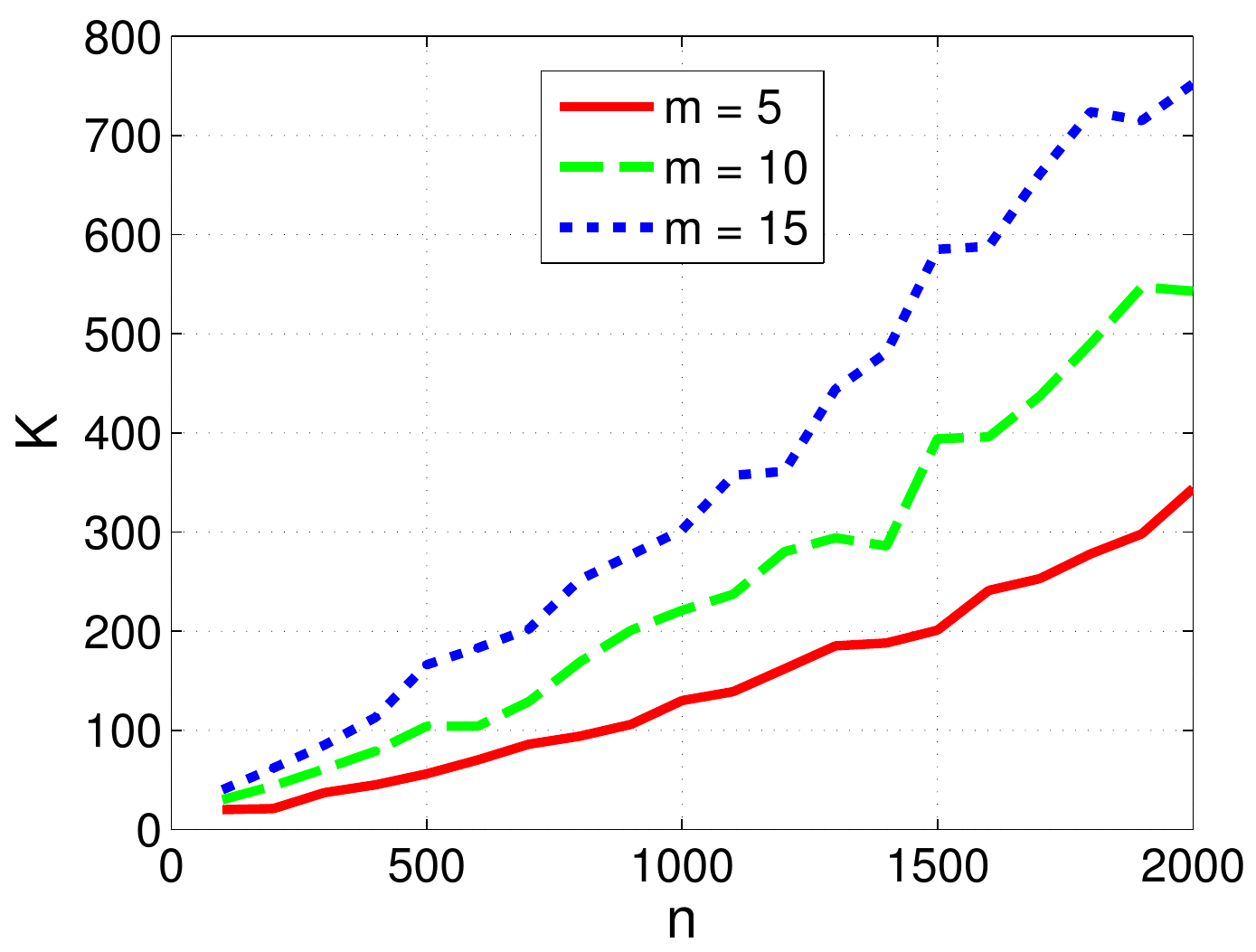} \end{minipage}}
\subfloat[Smooth mapping 2]{
\begin{minipage}[c]{0.28\linewidth}
\centering
\label{fig:md_exp4_sin} 
\includegraphics[width=1.0\linewidth]{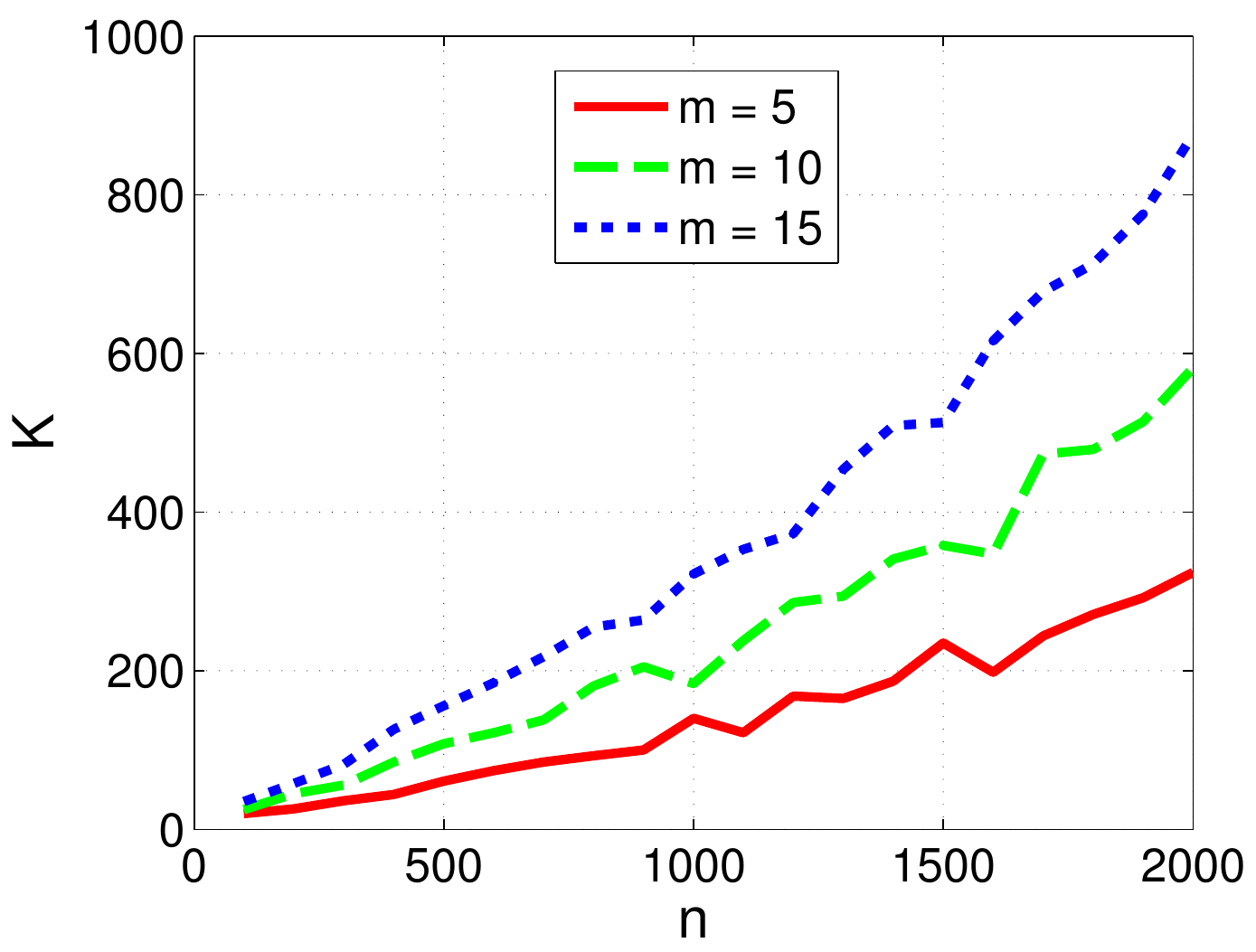} \end{minipage}}
\caption{\small Minimum number of samples $K$ for which $\abs{\theta} < \abs{\theta_{\text{bound}}} = 5^{\circ}$ is achieved as $n$ is varied. The results obtained for different dimensions of the manifold $m$.}
\label{fig:md_exp4} 
\end{figure}

Fig.~\ref{fig:md_exp4} shows the variation of $K$ with respect to $n$ for the different mappings. We see that $K$ increases with the ambient dimension $n$ as expected. Furthermore, for a given $n$, we observe that increasing the dimension of the manifold increases $K$. We now show that this behaviour is well explained by our theoretical results in Section \ref{sec:mDimSurfaces}. In order to see this, we first note that $\nu = O(n_{\text{large}}^{-1/2} m^{-1} \kfmax^{-1})$, which is due to the relation $\nu = O(n^{-1/2} m^{-1} \kfmax^{-1})$   derived in Section \ref{subsec:quad_sampl_compl} and the fact that we evaluate $\nu$ at $n=n_{\text{large}}$. Using this value of $\nu$ in the bounds on the sampling density stated in Lemma \ref{lemma:md_k_bound_eps}, one can easily verify that 
\begin{align*}
K_{\text{bound}}^{(1)} &= O\left(\left(m + \frac{n}{n_{\text{large}}}\right)\log n\right), \quad K_{\text{bound}}^{(2)} = O(\log n) \quad \text{and} \\
K_{\text{bound}}^{(3)} &= O\left(\left(\frac{nm}{n_{\text{large}}} + \sqrt{\frac{nm}{n_{\text{large}}}}\right)\log n\right) \approx O\left(m\sqrt{n/n_{\text{large}}} \log n\right).
\end{align*}
Since $n < n_{\text{large}}$, we obtain $K_{\text{bound}} = O\left(\left(m + \frac{n}{n_{\text{large}}}\right)\log n\right)$. This closely matches the behavior shown in Figures \ref{fig:md_exp4_quad}-\ref{fig:md_exp4_sin}.

Finally, in the last experiment, we would like to look into the dependency of the sampling density $K$ on the curvature term $\kfmax$ for a fixed sampling width $\nu$. We set $m = 5$ and pick several values of the ambient space dimension, i.e., $n \in \set{100,500,1000}$. We vary $\kfmax$ from 0.5 to 10 in steps of 0.5. We fix the value of the sampling width as $\nu = \nu_{\text{bound,quad}}$, where $\nu_{\text{bound,quad}}$ is evaluated at the largest value of $\kfmax$ (denoted as $\mathcal{K}_{\text{max,large}}$ in this experiment). For each value of $(m,n,\kfmax)$, we compute the minimum number of samples required in order to have $\abs{\theta} < \abs{\theta_{\text{bound}}} = 5^{\circ}$. The value of $\abs{\theta}$ is averaged over 25 random trials.
\begin{figure}[!htbp]
\centering
\subfloat[\small Quadratic form]{
\begin{minipage}[c]{0.28\linewidth}
\centering
\label{fig:md_exp5_quad} 
\noindent \includegraphics[width=1.0\linewidth]{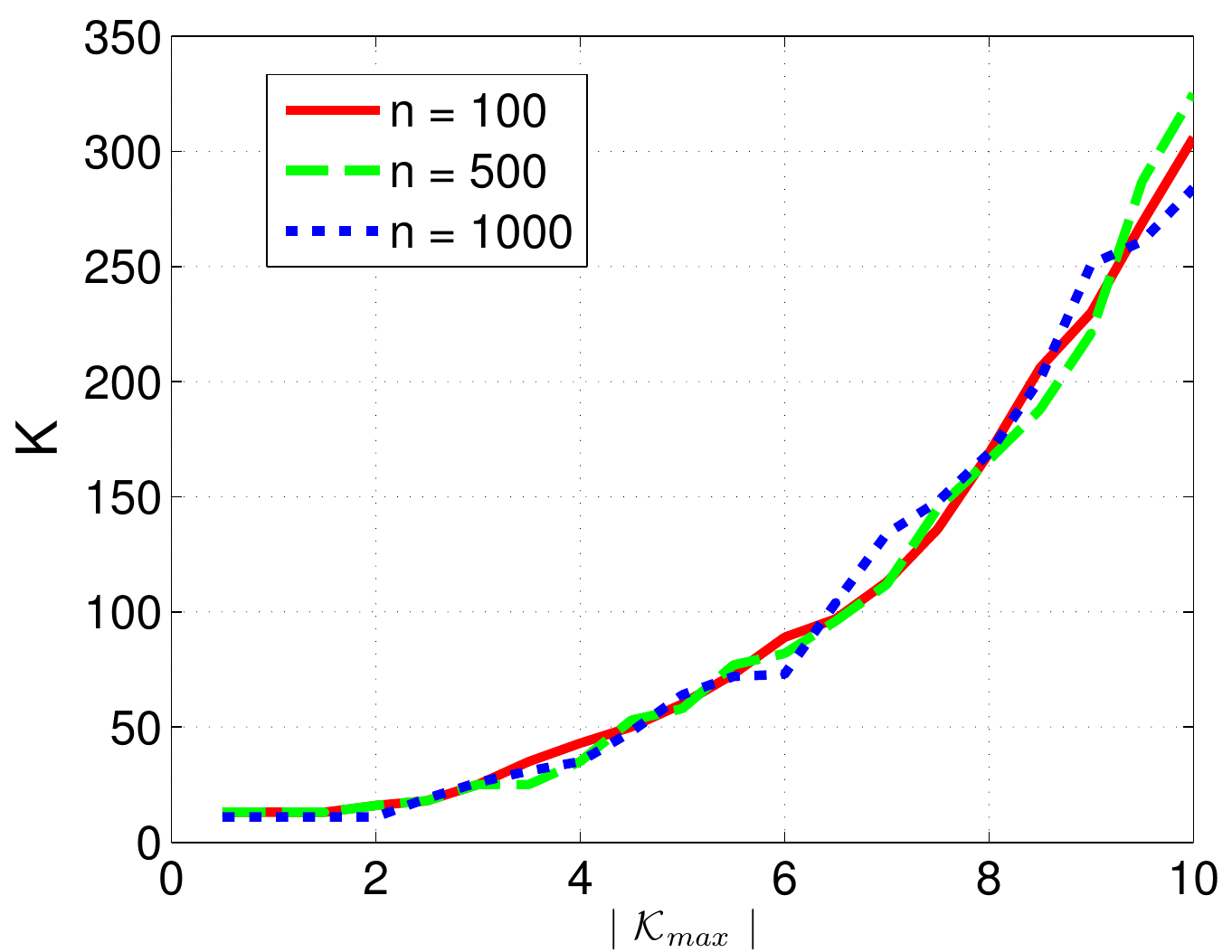} \end{minipage}}
\subfloat[Smooth mapping 1]{
\begin{minipage}[c]{0.28\linewidth}
\centering
\label{fig:md_exp5_gauss} 
\includegraphics[width=1.0\linewidth]{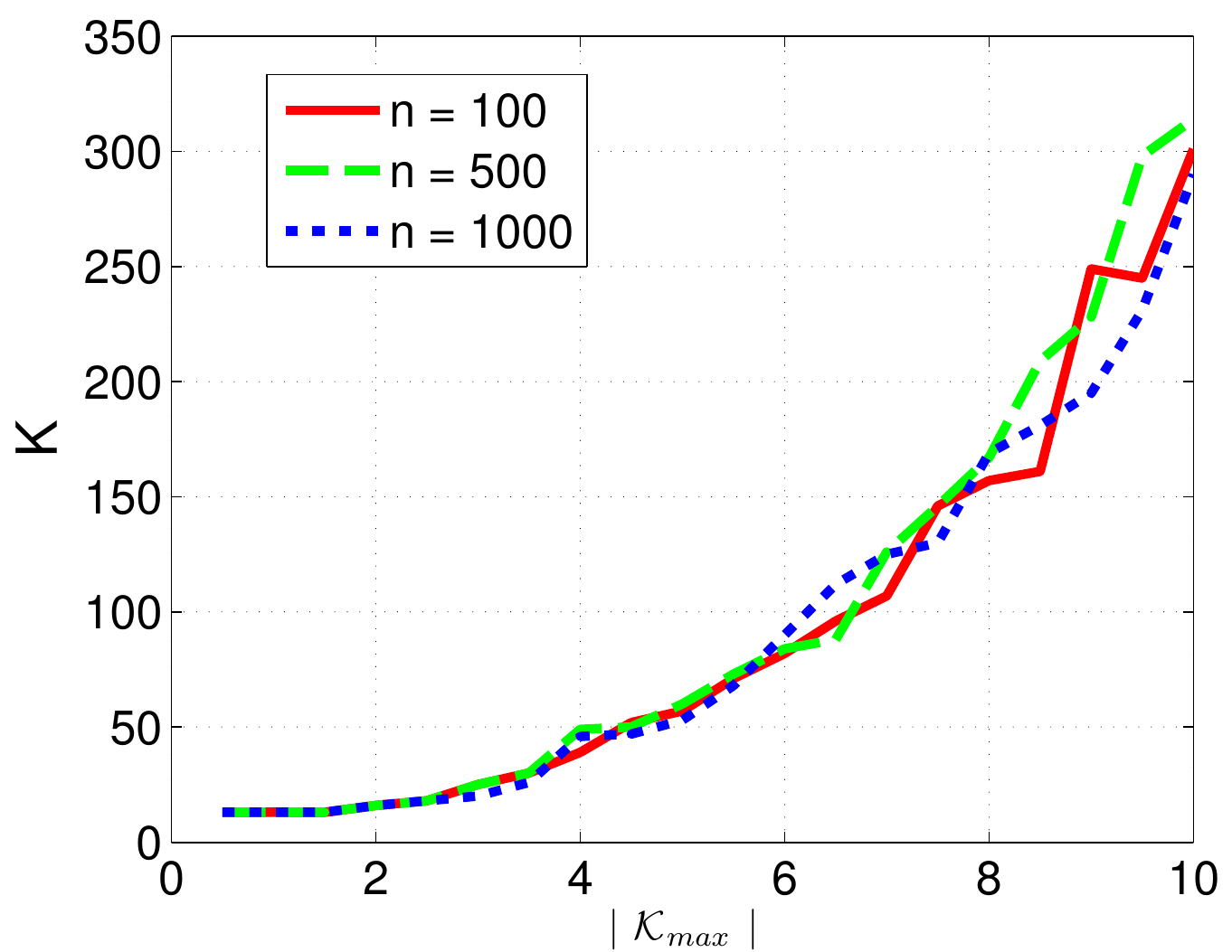} \end{minipage}}
\subfloat[Smooth mapping 2]{
\begin{minipage}[c]{0.28\linewidth}
\centering
\label{fig:md_exp5_sin} 
\includegraphics[width=1.0\linewidth]{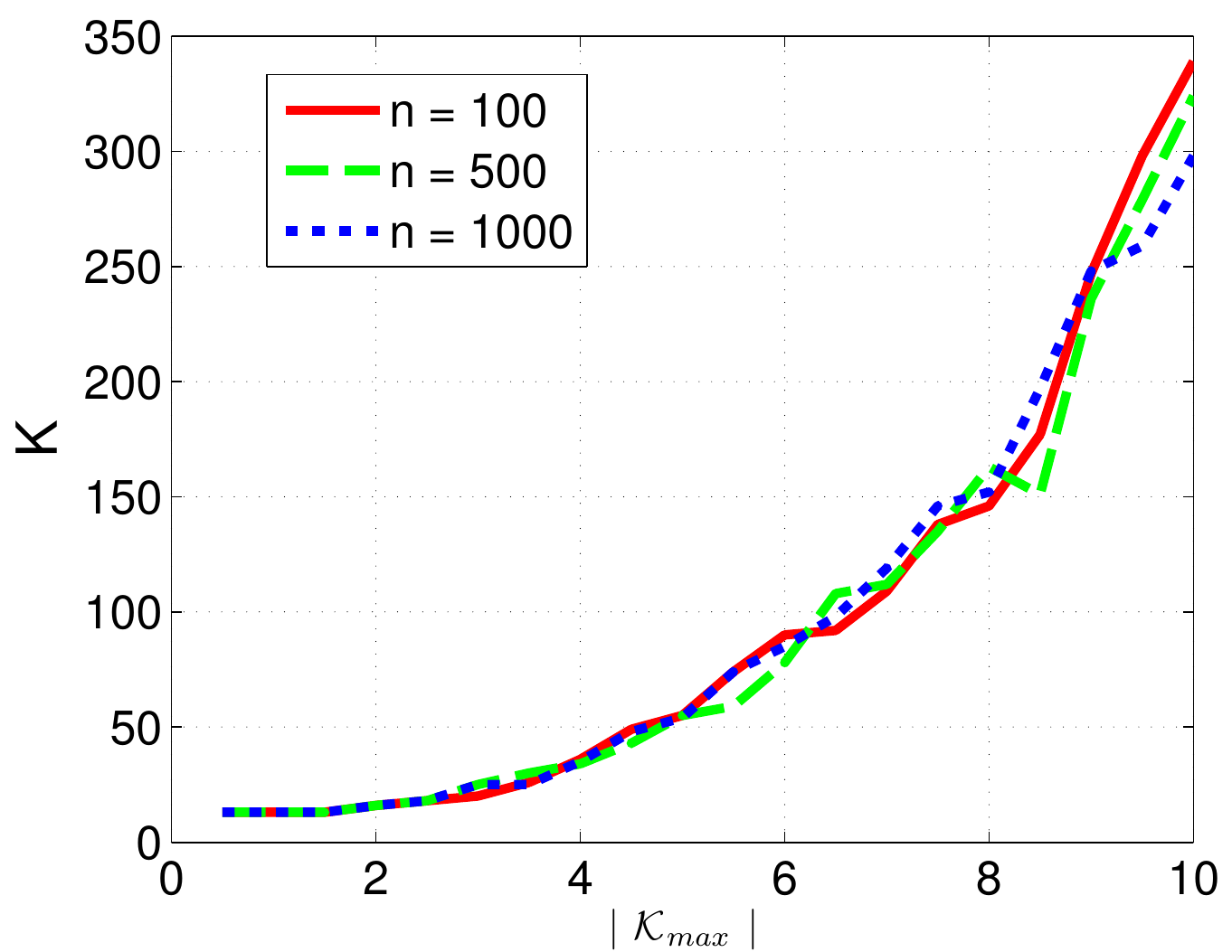} \end{minipage}}
\caption{\small Minimum number of samples $K$ for which $\abs{\theta} < \abs{\theta_{\text{bound}}} = 5^{\circ}$ is achieved as $\kfmax$ is varied. The results are given for different values of $n$.}
\label{fig:md_exp5} 
\end{figure}

Fig.~\ref{fig:md_exp5} shows the relation between $K$ and $\abs{\kfmax}$ for the different mappings. We see that $K$ increases with $\abs{\kfmax}$ as expected. Interestingly, we see that, for a fixed value of $\kfmax$, a change in the embedding dimension $n$ does not significantly affect $K$ . We now show that such a variation of $K$ with $\abs{\kfmax}$ for a fixed sampling width is explained by the theoretical results in Section \ref{sec:mDimSurfaces}. We first note that the sampling width is $\nu = O(n^{-1/2} m^{-1} \abs{\mathcal{K}_{\text{max,large}}}^{-1})$, which can be obtained from the results of Section \ref{subsec:quad_sampl_compl} by evaluating the bounds on the sampling density at $\kfmax = \mathcal{K}_{\text{max,large}} $. From Lemma \ref{lemma:md_k_bound_eps}, one can then easily verify that 
\begin{align*}
K_{\text{bound}}^{(1)} &= O\left(\left(m + \frac{\kfmax^2}{\mathcal{K}_{\text{max,large}}^2}\right)\log n\right), \quad K_{\text{bound}}^{(2)} = O(\log n) \quad \text{and} \\
K_{\text{bound}}^{(3)} &= O\left((m\kfmax^2 \mathcal{K}_{\text{max,large}}^{-2} + m^{1/2}\abs{\kfmax} \abs{\mathcal{K}_{\text{max,large}}}^{-1})\log n\right) \approx O\left(m \abs{\kfmax}/\abs{\mathcal{K}_{\text{max,large}}} \log n\right).
\end{align*}
As $\abs{\kfmax} < \abs{\mathcal{K}_{\text{max,large}}}$, we have $K_{\text{bound}} = O\left(\left(m + \frac{\kfmax^2}{\mathcal{K}_{\text{max,large}}^2}\right)\log n\right)$. Thus, for a fixed $n$, the bound on $K$ increases quadratically with $\abs{\kfmax}$, which is consistent with the curves presented in Figures \ref{fig:md_exp5_quad}-\ref{fig:md_exp5_sin}. Furthermore, $K_{\text{bound}}$ depends only logarithmically on $n$, suggesting that a change in $n$ would affect the sampling density only \textit{mildly}: this also matches the experimental results.

%
\section{Discussion} \label{sec:discuss_results} 
\noindent In this section, we first discuss our results in view of the recent works from the literature. Then we show how our results could be used in practical applications.

We first position our study relatively to the works presented in \cite{Singer2011} and \cite{Kaslovsky2011}, which are, to the best of our knowledge, the closest to our paper. In \cite{Singer2011} the authors consider a global sampling from a compact manifold and relate the size of the neighborhood $\varepsilon$ to the number of samples $K$ through the condition $\varepsilon = O(K^{-\frac{1}{m+2}})$. From this aspect, our approach is significantly different. Our bound on $\varepsilon$ is derived in the asymptotic limit where $K \rightarrow \infty$, so that it depends completely on the local manifold geometry. Furthermore, the analysis in \cite{Singer2011} gives soft bounds that do not reflect the effect of the curvature, nor of the ambient space and manifold dimensions on the sampling conditions. Meanwhile, we derive worst-case bounds on both $\varepsilon$ and $K$ by explicitly taking into account the effect of curvature and dimensions.

The work in \cite{Kaslovsky2011} is parallel to ours and addresses a similar problem. The analysis is however clearly different in two main aspects. Firstly, the analysis in \cite{Kaslovsky2011} assumes that the manifold is embedded with exactly quadratic forms and that the data consists of samples from the quadratic manifold corrupted with Gaussian noise. On the contrary, the type of the manifolds that we consider is more generic as we assume an embedding of the manifold with arbitrary smooth functions. In particular, we explicitly examine the effect of the deviation of the manifold from its second-order approximation on the accuracy of the tangent space estimation. Secondly, an important difference between both studies is that the data is already sampled in \cite{Kaslovsky2011}, where the problem consists of choosing the size of the subset of samples used in the tangent space estimation, while we assume that we have a rather direct control on the parameters of the local random sampling (sampling width and number of samples). Therefore, in \cite{Kaslovsky2011}, the number of samples $N$ (which is $K$ in our notation) and the sampling radius $r$ (which is comparable to the sampling width $\nu$ in our notation) are directly dependent on each other. As the sampling is formulated as a subset selection problem, increasing the number of samples necessarily leads to choosing samples from a larger radius. The analysis is based on the assumption $r=c \, N^{1/d}$, where $c$ is a constant and $d$ is the dimension of the manifold ($m$ with our notation); therefore, $r$ and $N$ can be represented in terms of a single parameter. Meanwhile, in our analysis, we consider a setting where we treat the sampling width $\nu$ and number of samples $K$ as two different parameters. 

Even if the frameworks in \cite{Kaslovsky2011} and in this paper are quite different, we can try to compare results. It is assumed in \cite{Kaslovsky2011} that the subset of samples selected for tangent space estimation corresponds to a sampling radius smaller than a threshold $r_{max}$, where $r_{max}$ is the largest radius within which the manifold can be accurately represented with quadratic forms. We give a characterization of such a bound on the sampling width in Lemma \ref{lemma:md_angle_asymp_smooth} for arbitrary smooth manifolds, which is very relevant to the parameter $r_{max}$ in their work. In  \cite{Kaslovsky2011}, the parameter $r_{max}$ is used as a predetermined constant and the study does not go into the analysis of $r_{max}$ for non-quadratic manifolds. A direct comparison of the main results in both papers is difficult. However, we can compare the noiseless version of the Interpretable Main Result 1 in \cite{Kaslovsky2011} and our results on quadratic manifolds in the following way. The denominator of the angle bound in Interpretable Main Result 1 quantifies the separation between the tangential and normal components of the computed eigenspace. Furthermore, the sampling radius must be small enough to guarantee that the eigenvalues corresponding to the tangential components must be larger than those corresponding to the normal components. Then, an admissible sampling radius must be below the value of $r$ that equates the denominator of the expression in Interpretable Main Result 1 to zero. Taking the noise variance as zero and observing the relation $K=O(n^{1/2} m | \kfmax |)$, where $K$ is the curvature parameter in \cite{Kaslovsky2011}, their result translates into the fact that the admissible sampling radius must be smaller than $O( n^{-1/2} m^{-1/2} |\kfmax|^{-1} )$ with our notation, where $m$, $n$ and $| \kfmax |$ are the parameters corresponding respectively to the intrinsic manifold dimension, the ambient space dimension and the curvature. This is in agreement with our result for quadratic embeddings (see Table \ref{tab:comp_main_results}), where we have calculated the admissible sampling width as $O(n^{-1/2} m^{-1} |\kfmax|^{-1})$. 

Now that our work has been properly positioned with respect to the related work, we discuss the usage of our results in practical applications. We can interpret our results in two important application areas, namely (i) the discretization of a manifold with a known parametric model - \textit{manifold sampling} and (ii) the recovery of the tangent space of a manifold from a given set of data samples - \textit{manifold learning}. 

First, in order to use our results in a real application, the intrinsic dimension $m$ of the manifold, the curvature parameter $\kfmax$, and the higher-order deviation term $C_s$ have to be known or estimated. In a manifold sampling application, $m$ is already known and it is possible to estimate $\kfmax$ in the following ways. If the manifold conforms to a known analytic model, it is easy to compute the values of the principal curvatures and the higher-order terms from the Taylor expansion of the model. If an analytic model is not known for the manifold, the curvature of a manifold of known parameterization can be estimated using results from Riemannian geometry such as \cite{Kokiopoulou2011} (Section V) and \cite{Jacques2008} (Proposition 2). The results in Section V of \cite{Kokiopoulou2011} are especially compatible with our definition of curvature, where we define $\kfmax$ as the largest of the maximum principal curvatures of the hypersurfaces $\mathcal{S}_l$, $l=1, \dots, n-m$, each of which have a single normal direction. Although the work in \cite{Kokiopoulou2011} addresses an image registration problem, the analysis in Section V of \cite{Kokiopoulou2011} is generic and it describes a procedure to compute the maximum principal curvature of a manifold corresponding to a single normal direction, which is equal to the norm of the second fundamental form corresponding to the normal direction. Applying this procedure for all $n-m$ normal directions and taking the largest one of the maximum principal curvatures, one can compute the exact value of $\kfmax$. Then, the deviation term $C_s$ is the maximum of the constants $C_{s,l}$. Once the maximum principal curvature of $ \mathcal{S}_l $ is computed as above, one can find a suitable bound for $C_{s,l}$ by looking at the deviation of $ \mathcal{S}_l$ from its second order approximation.

Second, in a manifold learning application where only data samples are available, $m$, $\kfmax$ and $C_s$ are unknown and need to be estimated. The estimation of the intrinsic dimension of a data set has been studied in several works such as \cite{Hein2005}, \cite{Levina2005} and \cite{Chen2011}. It is also possible to obtain an estimate of the curvature from data samples using results such as in \cite{Little2009}. In \cite{Little2009}, a method is proposed to estimate the intrinsic dimension of the manifold by examining the variation of the singular values of the data covariance matrix with respect to the radius of the neighborhood of samples used. It is observed that the singular values corresponding to the curvatures can be distinguished from the singular values corresponding to the tangential components by using the fact that the tangential and curvature singular values conform respectively to linear and quadratic fits as a function of the radius. In such a setting, the deviation of the curvature singular values from their quadratic fits for large values of the radius can possibly be related to the deviation term $C_s$.

Finally, in our results, we characterize the admissible sampling width for accurate tangent space estimation in terms of the tangent space distances, i.e., the distances between the projections of points on the tangent space and $P$. In a manifold sampling application, our analysis can be easily adapted to the parametric data model at hand since it assumes that the true tangent space of the manifold is aligned with the subspace generated by the first $m$ canonical basis vectors. This can be achieved by applying a Gram-Schmidt orthonormalization to the tangent vectors of the data manifold and then performing a change of coordinates in $\mathbb{R}^n$ such that the subspace spanned by the original tangent vectors is mapped to the subspace generated by the first $m$ canonical basis vectors. Meanwhile, in a manifold learning application where only data samples are available, one needs to adapt the bounds on the tangent space distance to bounds on the distance between actual data samples in the ambient space. This can be done in different ways. Based on our results, one can easily obtain some worst-case bounds on the ambient space distance by making use of the fact that the tangent space distance is upper bounded by the ambient space distance. This approach is expected to be effective if the ambient space dimension $n$ is comparable to the intrinsic dimension $m$, or if the manifold has small curvature. Alternatively, if $n\gg m$ and the manifold has significant nonlinearity, the current results involving the tangent space distance can be translated into approximate conditions on the ambient space distance with the help of the estimation $\| . \|_{ambient \, space} \approx O(\| . \|_{tangent \, space} \sqrt{n/m} )$. Note that, using this estimation, the decay of the sampling width $\nu$ in the tangent space at the rate $ O(n^{-1/2} m^{-1} |  \mathcal{K}_{max}|^{-1})$ implies that the same width measured in the ambient space must change at the rate $O(\nu \sqrt{n/m}  ) = O(m^{-3/2}   |  \mathcal{K}_{max}|^{-1})$. Therefore, the sampling width in the ambient space does not decrease with the ambient space dimension. It is of $O(1)$ with respect to $n$; meanwhile, it decreases with $m$ and $|  \mathcal{K}_{max}|$. This means that, when applying PCA, the size of the neighborhood around a reference point in the ambient space must get smaller as the intrinsic dimension or the curvature of the manifold increases.

In this work, we have focused on a noiseless data model that is perfectly representable with smooth functions. However, in real applications, one may need to work with noisy data samples that exhibit a deviation from the manifold. One can possibly extend the study presented here to include the effect of noise in the analysis. This can be achieved by first identifying the sampling region for an accurate estimation of the tangent space and then determining a sufficient sampling density in that region. The admissible sampling region highly depends on the type of noise. One would expect to have no bias in the estimation for a random noise model with spherical symmetry, while a structured noise model may bias the estimation and necessitate stricter constraints on the sampling width. Then, the sampling density is expected to be affected by the variance of the noise. These effects can be characterized by studying the additional perturbation on the correlation matrices due to the noise.

%
\section{Concluding Remarks} \label{sec:manifold_conclusion} 
\noindent We have presented a theoretical analysis of the tangent space estimation at a point on a submanifold from a set of manifold samples that are selected locally at random. We have considered a setting where the manifold is embedded smoothly in $\mathbb{R}^n$ and the tangent space is estimated with local PCA. We have derived relations between the accuracy of the tangent space estimation and the sampling conditions. In particular, we have examined the effect of the local curvature of the manifold in tangent space estimation and shown that the size of the sampling neighborhood shall be inversely proportional to the manifold curvature. We have also seen that sampling conditions are affected by the correlation between the components of the second-order approximation of the embedding. The sampling width can be chosen larger when the components of the manifold in different dimensions are less correlated. The presented study can be used for obtaining performance guarantees in the discretization of parametrizable data and in manifold learning applications. Finally, our analysis assumes that the data samples are noiseless, i.e., the data lies exactly on the manifold. A future research direction resides therefore in the extension of the current results to a scenario where data samples are corrupted with noise. 


%
%
\section{Acknowledgments} \label{sec:ack} 
The authors would like to thank Prof.~Daniel Kressner and Dr.~Bart Vandereycken for the helpful discussions and comments on the manuscripts.

\bibliographystyle{unsrt}
\bibliography{manifold_references}


%
%
\appendix
\section{$m$-dimensional smooth manifolds in $\mathbb{R}^n$} \label{sec:proofs_md_manifolds}
%
\subsection{Proof of Lemma ~\ref{lemma:md_width_cond}} \label{appendix:proof_md_width_cond}
\begin{proof}
Observe that each entry of $M^{(K)}$ is the sum of $K$ i.i.d. random variables. Therefore, by the Strong Law of Large Numbers as $K \rightarrow \infty$, $[M^{(K)}]_{i,j}$ converges a.s.~to $[M]_{i,j}$ for all $1 \ \leq \ i,j \ \leq \ n$, where each entry of $M$ is the expected value of the random variable involved in the summation of the corresponding entry of $M^{(K)}$. Let
\begin{equation*}
M= \begin{bmatrix}
A & B \\
B^T & D \\
\end{bmatrix}.
\end{equation*}
Consider the entries of $A$. We have for $j,k=1,\dots,m$,
\begin{equation*}
[A]_{j,k} \ = \ \mathbb{E}[x_j x_k] = \left\{
\begin{array}{rl}
0 & \text{if } j \neq k \\
\frac{\displaystyle \sampwidth^2}{3} & \text{if } j = k
\end{array} \right .
\end{equation*}
Consider the entries of $B$. We have for $j=1,\dots,m$ and $l=1,\dots,n-m$,
\begin{align*}
[B]_{j,l} \ = \ \mathbb{E}[x_j f_{l}(\bar{x})] &= \mathbb{E}[x_j \frac{1}{2}\sum_{k=1}^{m}<\bar{x},\bar{v}_{l,k}>^2 \kflk] \\
&= \frac{1}{2} \mathbb{E}[x_j \sum_{k=1}^{m} (x_1 v_{l,k,1} + \dots + x_1 v_{l,k,m})^2 \kflk] = 0.
\end{align*}
The above result follows as each term in the expansion of $x_j f_l(\bar{x})$ has at least one odd power of $x_j$, and the expected value of each term is thus 0. Now, consider the diagonal entries of $D$. We have 
\begin{align*}
[D]_{l,l} \ = \ \mathbb{E}[f^2_l(\bar{x})] &= \frac{1}{4}\mathbb{E}\left[\left(\sum_{j=1}^m <\bar{x}, \bar{v}_{l,j}>^2\kflj \right)^2\right] 
\leq \frac{1}{4} \abs{\kfmax}^2(\mathbb{E}[\norm{\bar{x}}_2^4]).
\end{align*}
Furthermore,
\begin{align*}
\mathbb{E}[\norm{\bar{x}}_2^4] = \mathbb{E}[\sum_{j=1}^m x_j^4 + 2 \sum_{k<j} x_k^2 x_j^2] = \frac{m\sampwidth^4}{5} + 2\frac{m(m-1)}{2}\left(\frac{\sampwidth^2}{3}\right)^2
= \frac{m(5m+4)\sampwidth^4}{45}.
\end{align*}
Hence
\begin{equation*}
0 \ \leq \ [D]_{l,l} \ \leq \ \frac{m(5m+4)\sampwidth^4}{180}\abs{\kfmax}^2 \quad (l = 1,\dots,n-m).
\end{equation*}
We have the following bounds for $l,k=1,\dots,n-m,\ l \neq k$ on the off-diagonal entries of $D$
\begin{align*}
[D]_{l,k} &= \mathbb{E}[f_{l}(\bar{x}) f_{k}(\bar{x})] \\
&= \frac{1}{4}\mathbb{E}[(\sum_{j=1}^m <\bar{x}, \bar{v}_{l,j}>^2\kflonej)(\sum_{j=1}^m <\bar{x}, \bar{v}_{k,j}>^2\kfltwoj)] \\
&\leq \frac{1}{4}\abs{\kfmax}^2 \mathbb{E}[\norm{\bar{x}}_2^4] = \frac{m(5m+4)\sampwidth^4}{180}\abs{\kfmax}^2.
\end{align*}
Similarly, it holds that
\begin{equation*}
[D]_{l,k} \ \geq \ -\frac{m(5m+4)\sampwidth^4}{180}\abs{\kfmax}^2.
\end{equation*}
Hence, $M$ has the form
\begin{equation*} 
M = \begin{bmatrix}
\frac{\displaystyle \sampwidth^2}{\displaystyle 3} I_{m \times m} & 0_{m \times (n-m)} \\
0_{(n-m) \times m} & D_{(n-m) \times (n-m)} \\
\end{bmatrix},
\end{equation*}
where
\begin{align*}
0 \ &\leq \ [D]_{l,l} \ \leq \ \frac{m(5m+4)\sampwidth^4}{180}\abs{\kfmax}^2, \\
-\frac{m(5m+4)\sampwidth^4}{180}\abs{\kfmax}^2 \ &\leq [D]_{l,k} \ \leq \frac{m(5m+4)\sampwidth^4}{180}\abs{\kfmax}^2 \quad (l \neq k).
\end{align*}
Therefore, for $l,k \ = \ 1,\dots,n-m$,
\begin{equation*}
\abs{[D]_{l,k}} < \frac{m(5m+4)\sampwidth^4}{180}\abs{\kfmax}^2 = [D]_{bound}.
\end{equation*}
Observe that the eigenspace of $M$ corresponding to the eigenvalue $\frac{\displaystyle \sampwidth^2}{\displaystyle 3}$ is equal to the span of $\set{\bar{e}_1,\dots,\bar{e}_m}$, which is the same as $\tanps$. Hence, as $K \rightarrow \infty$, we obtain the implication 
\begin{equation*}
\frac{\sampwidth^2}{3} \ > \ \rho(D) \ \Rightarrow \ \abs{\angle \widehat{T}_PS, \tanps} \rightarrow 0,
\end{equation*} 
where $\rho(D)$ denotes the spectral radius of $D$, which is positive definite. In the case where $D$ is diagonal, we have

\begin{equation*}
\rho(D) < [D]_{bound} = \frac{m(5m+4)\sampwidth^4}{180}\abs{\kfmax}^2.
\end{equation*}

\noindent Therefore, for this case, any value of $\sampwidth$ satisfying
\begin{align*}
\frac{\displaystyle \sampwidth^2}{\displaystyle 3} > \frac{m(5m+4)\sampwidth^4}{180}\abs{\kfmax}^2 \text{ or equivalently } \sampwidth < \sqrt{\frac{60}{m(5m+4)\abs{\kfmax}^2}}
\end{align*}
ensures that $\abs{\angle \widehat{T}_PS, \tanps} \rightarrow 0$ as $K \rightarrow \infty$. In the scenario where $D$ is dense, we have the stricter condition
\begin{align*}
\rho(D) < \norm{D}_F \leq (n-m)\frac{\sampwidth^4(5m+4)m}{180}\abs{\kfmax}^2. 
\end{align*}
Thus, for this case, any value of $\sampwidth$ satisfying
\begin{align*}
\frac{\displaystyle \sampwidth^2}{\displaystyle 3} > (n-m)\frac{m(5m+4)\sampwidth^4}{180}\abs{\kfmax}^2 \text{ or equivalently } \sampwidth < \sqrt{\frac{60}{m(n-m)(5m+4)\abs{\kfmax}^2}}
\end{align*}
ensures that $\abs{\angle \widehat{T}_PS, \tanps} \rightarrow 0$ as $K \rightarrow \infty$.
\end{proof}
%
\subsection{Proof of Lemma ~\ref{lemma:md_k_bound_eps}} \label{appendix:proof_md_k_bound_eps}
We first recall two recent results on the tail bounds for the eigenvalues of sums of independent random matrices. The first result concerns upper and lower tail bounds on all eigenvalues of a sum of independent positive semidefinite matrices as stated in Theorem 4.1 in \cite{Gittens2011}.
\begin{theorem}[Eigenvalue Chernoff Bounds] \label{thm:all_eig_tail_bounds}
Consider a finite sequence $\set{X_i}$ of independent random positive semidefinite matrices where $X_i \in \mathbb{R}^{n \times n}$ with $\norm{X_i} \leq R$ a.s. Given an integer $k \leq n$ define
\begin{equation*}
\mu_k = \lambda_k(\sum_j \mathbb{E}[X_j])
\end{equation*}
Then
\begin{align*} 
\mathbb{P}(\lambda_k(\sum_j X_j) &\geq t\mu_k) \leq (n-k+1)\left[\frac{e}{t}\right]^{t\mu_k / R} \text{where } t > e \quad \text{and} \\
\mathbb{P}(\lambda_k(\sum_j X_j) &\leq s\mu_k) \leq k e^{\frac{-(1-s)^2 \mu_k}{2R}}, \quad s \in (0,1).
\end{align*}
\end{theorem}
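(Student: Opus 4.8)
The plan is to reduce the two-sided estimate on the $k$-th eigenvalue $\lambda_k(\sum_j X_j)$ (in the decreasing ordering $\lambda_1 \ge \cdots \ge \lambda_n$ used throughout the paper) to one-sided estimates for the \emph{extreme} eigenvalue of a \emph{compressed} sum of independent positive semidefinite matrices living in a lower-dimensional subspace, and then to run the matrix Laplace transform method on the compressed problem. The ambient dimension of the compressed space is exactly $n-k+1$ for the upper tail and $k$ for the lower tail, and this is precisely the source of the two prefactors $(n-k+1)$ and $k$ in the statement.

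\textbf{Step 1: Minimax compression (upper tail).} Write $Y=\sum_j X_j$ and $M=\sum_j\mathbb{E}[X_j]$, so $\mu_k=\lambda_k(M)$. Let $V_\star$ be the \emph{deterministic} $(n-k+1)$-dimensional subspace spanned by the eigenvectors of $M$ associated with its $n-k+1$ smallest eigenvalues $\lambda_k(M)\ge\cdots\ge\lambda_n(M)$, and let $\Pi$ be the orthogonal projector onto $V_\star$. By the Courant--Fischer minimax characterization,
\[
\lambda_k(Y)=\min_{\dim V=n-k+1}\lambda_{\max}\!\big(\Pi_V Y \Pi_V\big|_{V}\big)\ \le\ \lambda_{\max}(\widetilde Y),\qquad \widetilde Y:=\Pi Y\Pi\big|_{V_\star}=\sum_j\widetilde X_j,
\]
where $\widetilde X_j:=\Pi X_j\Pi|_{V_\star}$. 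Each $\widetilde X_j$ is positive semidefinite (a compression of a PSD matrix), the family $\{\widetilde X_j\}$ is independent (since $\Pi$ is deterministic), and $\|\widetilde X_j\|\le\|X_j\|\le R$ almost surely. Moreover $V_\star$ is $M$-invariant, so $\sum_j\mathbb{E}[\widetilde X_j]=M|_{V_\star}$ has largest eigenvalue exactly $\lambda_k(M)=\mu_k$. Hence $\{\lambda_k(Y)\ge t\mu_k\}\subseteq\{\lambda_{\max}(\widetilde Y)\ge t\mu_k\}$, and it suffices to bound the top-eigenvalue tail of the sum $\widetilde Y$ in dimension $n-k+1$.

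\textbf{Step 2: Matrix Chernoff on the compressed sum.} For $\theta>0$, Markov's inequality applied to $e^{\theta\lambda_{\max}(\widetilde Y)}=\lambda_{\max}(e^{\theta\widetilde Y})\le\mathrm{tr}\,e^{\theta\widetilde Y}$ gives $\mathbb{P}(\lambda_{\max}(\widetilde Y)\ge t\mu_k)\le e^{-\theta t\mu_k}\,\mathbb{E}\,\mathrm{tr}\,e^{\theta\widetilde Y}$. By Lieb's concavity theorem together with independence, $\mathbb{E}\,\mathrm{tr}\,e^{\theta\sum_j\widetilde X_j}\le\mathrm{tr}\,\exp\!\big(\sum_j\log\mathbb{E}\,e^{\theta\widetilde X_j}\big)$. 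For a PSD matrix with norm at most $R$, the scalar bound $e^{\theta x}\le 1+\tfrac{e^{\theta R}-1}{R}x$ on $[0,R]$ lifts (via operator monotonicity of $\log$) to $\log\mathbb{E}\,e^{\theta\widetilde X_j}\preceq g(\theta)\,\mathbb{E}[\widetilde X_j]$ with $g(\theta)=(e^{\theta R}-1)/R$. Using monotonicity of the trace exponential under $\preceq$ and that $M|_{V_\star}$ has $n-k+1$ eigenvalues, all at most $\mu_k$,
\[
\mathbb{E}\,\mathrm{tr}\,e^{\theta\widetilde Y}\ \le\ \mathrm{tr}\,\exp\!\big(g(\theta)\,M|_{V_\star}\big)\ \le\ (n-k+1)\,e^{\,g(\theta)\mu_k}.
\]
Optimizing by $\theta=(\log t)/R$ (so $e^{\theta R}=t$, legitimate for $t>e$) gives exponent $\mu_k(g(\theta)-\theta t)=\tfrac{\mu_k}{R}(t-1-t\log t)$, and since $e^{(t-1)\mu_k/R}t^{-t\mu_k/R}\le e^{t\mu_k/R}t^{-t\mu_k/R}=(e/t)^{t\mu_k/R}$, we obtain the claimed $\mathbb{P}(\lambda_k(Y)\ge t\mu_k)\le(n-k+1)(e/t)^{t\mu_k/R}$. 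For the lower tail one symmetrically uses $\lambda_k(Y)=\max_{\dim W=k}\lambda_{\min}(\Pi_W Y\Pi_W|_W)\ge\lambda_{\min}(\widehat Y)$, where $\widehat Y$ is the compression of $Y$ onto the top-$k$ eigenspace $W_\star$ of $M$ (dimension $k$, with $\lambda_{\min}(\sum_j\mathbb{E}[\widehat X_j])=\mu_k$); applying the Laplace method to $-\widehat Y$ with the scalar bound $e^{-\theta x}\le 1-\tfrac{1-e^{-\theta R}}{R}x$ and optimizing yields the Gaussian-type $\mathbb{P}(\lambda_{\min}(\widehat Y)\le s\mu_k)\le k\,e^{-(1-s)^2\mu_k/(2R)}$, and the inclusion $\{\lambda_k(Y)\le s\mu_k\}\subseteq\{\lambda_{\min}(\widehat Y)\le s\mu_k\}$ closes the argument.

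\textbf{Main obstacle.} The delicate point is the minimax compression of Step~1: the compression subspace must be chosen \emph{deterministically}, from the eigenvectors of the mean $M$ rather than of the random $Y$, so that the compressed summands $\widetilde X_j$ stay independent and so that expectation commutes past the projector, while simultaneously forcing the compressed mean to have the correct extreme eigenvalue $\mu_k$. This is exactly what pins the prefactor to the compressed dimension $n-k+1$ (resp.\ $k$) rather than the full $n$. Everything downstream — Lieb's theorem, the two PSD scalar moment-generating-function bounds, and the scalar optimization over $\theta$ — is the by-now-standard matrix Chernoff machinery and is routine.
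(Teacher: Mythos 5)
Your proposal is correct, but note that the paper itself gives no proof of this statement: it is quoted verbatim as Theorem 4.1 of the cited reference \cite{Gittens2011} and used as an imported tool. Your argument --- Courant--Fischer compression onto the deterministic invariant subspaces of the mean $\sum_j \mathbb{E}[X_j]$ (of dimensions $n-k+1$ and $k$, which produce the two prefactors), followed by the Lieb/Laplace-transform matrix Chernoff machinery with the choice $\theta = (\log t)/R$ and the standard scalar relaxation $e^{s-1}s^{-s} \leq e^{-(1-s)^2/2}$ for the lower tail --- is essentially a faithful reconstruction of the original Gittens--Tropp proof, so there is nothing in the paper to contrast it with.
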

The second result concerns an upper tail bound on the operator norm of a sum of zero-mean independent random matrices which can moreover be rectangular. This result is stated in the form of Theorem 1.3 in \cite{Tropp2011}.
\begin{theorem}[Matrix Bernstein: Rectangular Case] \label{thm:matrix_bernstein_rect}
Consider a finite sequence $\set{Z_j}$ of independent random matrices, $Z_j \in \mathbb{R}^{d_1 \times d_2}$. Assume that each random matrix satisfies
\begin{equation*}
\mathbb{E}[Z_j] = 0 \quad \text{and} \quad \norm{Z_j} \leq R \quad \text{a.s}.
\end{equation*}
Define
\begin{equation*}
\sigma^2 := \max \set{\norm{\sum_k \mathbb{E}[Z_k Z_k^{*}]}, \norm{\sum_k \mathbb{E}[Z_k^{*} Z_k]}}.
\end{equation*}
Then for all $t \geq 0$,
\begin{equation*}
\mathbb{P}(\norm{\sum_k Z_k} \geq t) \leq (d_1 + d_2) \exp\left(\frac{-t^2 /2}{\sigma^2 + Rt/3}\right).
\end{equation*}
\end{theorem}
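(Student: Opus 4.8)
The plan is to prove the bound by the matrix Laplace transform method, first reducing the rectangular problem to a self-adjoint one through Hermitian dilation. For a matrix $Z \in \matR^{d_1 \times d_2}$, define
\[
\calH(Z) = \begin{bmatrix} 0 & Z \\ Z^{*} & 0 \end{bmatrix} \in \matR^{(d_1+d_2)\times(d_1+d_2)}.
\]
The map $\calH$ is linear, so $\calH(\sum_k Z_k) = \sum_k \calH(Z_k)$, and the eigenvalues of $\calH(Z)$ are $\pm\sigma_i(Z)$ together with zeros; hence $\lambda_{\max}(\calH(Z)) = \norm{Z}$. Writing $Y_k := \calH(Z_k)$, each $Y_k$ is self-adjoint with $\mathbb{E}[Y_k] = 0$ and $\norm{Y_k} = \norm{Z_k} \leq R$ a.s. Moreover
\[
Y_k^2 = \begin{bmatrix} Z_k Z_k^{*} & 0 \\ 0 & Z_k^{*} Z_k \end{bmatrix},
\]
so by block-diagonality $\norm{\sum_k \mathbb{E}[Y_k^2]} = \max\set{\norm{\sum_k \mathbb{E}[Z_k Z_k^{*}]},\, \norm{\sum_k \mathbb{E}[Z_k^{*} Z_k]}} = \sigma^2$. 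Since $\mathbb{P}(\norm{\sum_k Z_k} \geq t) = \mathbb{P}(\lambda_{\max}(\sum_k Y_k) \geq t)$, it suffices to prove a one-sided upper tail bound on $\lambda_{\max}(\sum_k Y_k)$ in terms of $R$ and $\sigma^2$.

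The plan then is to apply Markov's inequality in the spectral domain: for any $\theta > 0$, using that $x \mapsto e^{\theta x}$ is increasing and $\lambda_{\max}(e^{\theta A}) \leq \operatorname{tr} e^{\theta A}$ for self-adjoint $A$,
\[
\mathbb{P}\Big(\lambda_{\max}\big(\textstyle\sum_k Y_k\big) \geq t\Big) \leq e^{-\theta t}\, \mathbb{E}\,\operatorname{tr}\exp\Big(\theta \textstyle\sum_k Y_k\Big).
\]
The central step is to control this trace exponential of a sum of independent self-adjoint matrices. Here I would invoke the subadditivity of the matrix cumulant generating function, which follows from Lieb's concavity theorem (the concavity of $A \mapsto \operatorname{tr}\exp(H + \log A)$ on positive definite $A$): for independent self-adjoint $\set{Y_k}$,
\[
\mathbb{E}\,\operatorname{tr}\exp\Big(\theta\textstyle\sum_k Y_k\Big) \leq \operatorname{tr}\exp\Big(\textstyle\sum_k \log \mathbb{E}\, e^{\theta Y_k}\Big).
\]

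It then remains to bound each matrix cumulant $\log \mathbb{E}\, e^{\theta Y_k}$ in the semidefinite order. Consider the scalar function $f(x) = (e^{\theta x} - \theta x - 1)/x^2$, with $f(0) = \theta^2/2$, which is nondecreasing on $\matR$. Since $\lambda_{\max}(Y_k) \leq R$, the transfer rule for matrix functions gives $f(Y_k) \preceq f(R)\,I$, and because $Y_k^2$ commutes with $f(Y_k)$ and is positive semidefinite, $e^{\theta Y_k} = I + \theta Y_k + Y_k^2 f(Y_k) \preceq I + \theta Y_k + f(R)\,Y_k^2$. Taking expectations, using $\mathbb{E}[Y_k] = 0$ and $I + A \preceq e^{A}$, yields $\mathbb{E}\, e^{\theta Y_k} \preceq \exp(f(R)\,\mathbb{E}[Y_k^2])$, whence $\log \mathbb{E}\, e^{\theta Y_k} \preceq f(R)\,\mathbb{E}[Y_k^2]$ by operator monotonicity of the logarithm. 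For $0 < \theta < 3/R$ the elementary estimate $f(R) = (e^{\theta R} - \theta R - 1)/R^2 \leq (\theta^2/2)/(1 - \theta R/3)$ holds, so writing $g(\theta) = (\theta^2/2)/(1 - \theta R/3)$ we obtain $\sum_k \log \mathbb{E}\, e^{\theta Y_k} \preceq g(\theta)\sum_k \mathbb{E}[Y_k^2]$.

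Combining these ingredients and using that $\operatorname{tr}\exp$ is monotone with respect to the semidefinite order, together with $\operatorname{tr}\exp(W) \leq (d_1+d_2)\,e^{\lambda_{\max}(W)}$, gives for all $0 < \theta < 3/R$
\[
\mathbb{P}\Big(\lambda_{\max}\big(\textstyle\sum_k Y_k\big) \geq t\Big) \leq (d_1+d_2)\,\exp\big(g(\theta)\sigma^2 - \theta t\big).
\]
The final step is to minimize the exponent over $\theta$; the choice $\theta = t/(\sigma^2 + Rt/3)$, which lies in $(0, 3/R)$, produces the exponent $-(t^2/2)/(\sigma^2 + Rt/3)$, yielding exactly the claimed inequality once we translate back through the dilation. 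The main obstacle is the subadditivity of the matrix cumulant generating function: this is the one genuinely nonelementary ingredient, resting on Lieb's concavity theorem, and either proving or carefully citing it is the crux of the argument. Everything else amounts to the transfer rule, manipulations in the semidefinite order, and the scalar Bernstein estimate for $f(R)$, all of which are routine.
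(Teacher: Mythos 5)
Your proof is correct, but there is nothing in the paper to compare it against: the paper does not prove this statement at all. It is imported verbatim as Theorem 1.3 of the cited reference \cite{Tropp2011} and used as a black-box ingredient in the proof of Lemma \ref{lemma:md_k_bound_eps} (to control $\norm{B^{(K)}}$). Your argument --- Hermitian dilation $\calH(Z)$ to reduce to the self-adjoint case with $\lambda_{\max}(\calH(Z)) = \norm{Z}$ and $\norm{\sum_k \mathbb{E}[Y_k^2]} = \sigma^2$, the matrix Laplace transform bound, subadditivity of the matrix cumulant generating function via Lieb's concavity theorem, the semidefinite estimate $\log \mathbb{E}\, e^{\theta Y_k} \preceq f(R)\,\mathbb{E}[Y_k^2]$ with $f(x) = (e^{\theta x} - \theta x - 1)/x^2$, and the choice $\theta = t/(\sigma^2 + Rt/3) \in (0, 3/R)$ --- is precisely Tropp's original proof, reconstructed faithfully and with the details verified (the scalar bound $e^x - x - 1 \leq (x^2/2)/(1 - x/3)$ for $0 \leq x < 3$, the commuting semidefinite-order step $Y_k^2 f(Y_k) \preceq f(R) Y_k^2$, and the exponent computation all check out). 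You are also right to single out Lieb's theorem as the one genuinely nonelementary ingredient that must be proved or cited; everything else is routine functional calculus and optimization.
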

We now proceed to prove the Lemma \ref{appendix:proof_md_k_bound_eps}.\\
\begin{proof}
We have $M^{(K)} = \frac{1}{K}\sum_{i=1}^{K} \bar{p}_i \bar{p}_i^T$, where $\bar{p}_i = [\bar{x}_i^T \ f_1(\bar{x}_i) \dots f_{n-m}(\bar{x}_i)]^T \in \mathbb{R}^{n}$. Now,
\begin{align*}
\norm{\frac{1}{K}\bar{p}_i \bar{p}_i^T} \leq \frac{1}{K}\norm{\bar{p}_i}_2^2 &\leq \frac{1}{K}(m \sampwidth^2 + \frac{1}{4}(n-m)m^2 \sampwidth^4 \abs{\kfmax}^2) \\
&= \frac{1}{K}\sampwidth^2 R_M \quad \text{a.s.,}
\end{align*}
where $R_M = m + \frac{1}{4}(n-m)m^2 \sampwidth^2 \abs{\kfmax}^2$. Here we used the fact that
\begin{equation*}
\abs{f_l(\bar{x})} \leq \frac{1}{2} m \sampwidth^2 \abs{\kfmax} \ \text{ for } \ \bar{x} \in [-\sampwidth,\sampwidth]^m \text{and } \ l=1,\dots,n-m .
\end{equation*}
Furthermore, since $\sampwidth < \sampwidth_{\text{bound,quad}}$,
\begin{equation*}
\mu_j = \lambda_j \left(\mathbb{E}\left[\frac{1}{K}\sum_{i=1}^{K} \bar{p}_i \bar{p}_i^T \right]\right) = \frac{\sampwidth^2}{3},  \quad j=1,\dots m.
\end{equation*}
Hence, by applying Theorem \ref{thm:all_eig_tail_bounds}, we have the following for $s_1 \in (0,1)$:
\begin{align}
\mathbb{P}(\lambda_m(M^{(K)}) \leq s_1 \sampwidth^2/3) &\leq (n-m+1)\exp\left(\frac{-(1-s_1)^2 \frac{\sampwidth^2}{3}}{2 \sampwidth^2 R_M/K}\right) \nonumber \\
&= (n-m+1)\exp\left(\frac{-(1-s_1)^2 K}{6 R_M}\right). \label{eq:Mk_eig_val_bound}
\end{align}
Then, we have,  $D^{(K)} = \frac{1}{K}\sum_{i=1}^{K} \bar{q}_i \bar{q}_i^T$ where $\bar{q}_i = [f_1(\bar{x}_i) \dots f_{n-m}(\bar{x}_i)]^T \in \mathbb{R}^{n-m}$. Furthermore,
\begin{align*} 
\norm{\frac{1}{K}\bar{q}_i \bar{q}_i^T} \leq \frac{1}{K} \norm{\bar{q}_i}_2^2 \leq \frac{R_D \sampwidth^4}{K}, 
\end{align*}
where $R_D = \frac{1}{4} (n-m) m^2 \abs{\kfmax}^2$. Applying Theorem \ref{thm:all_eig_tail_bounds} for $\rho(D^{(K)}) = \lambda_1(D^{(K)})$, we can write
\begin{equation*}
\mathbb{P}(\rho(D^{(K)}) \geq s_2 \rho(D)) \leq (n-m)\left[\frac{e}{s_2}\right]^{\frac{s_2 \rho(D) K}{R_D \sampwidth^4}}, \quad s_2 > e.
\end{equation*}
We have seen in Section \ref{appendix:proof_md_width_cond} that $\rho(D) < RL \sampwidth^4$. Using this, we obtain the following tail bound:
\begin{equation} \label{eq:spec_rad_D_bound}
\mathbb{P}(\rho(D^{(K)}) \geq s_2 RL \sampwidth^4) \leq (n-m)\left[\frac{e}{s_2}\right]^{\frac{s_2 RL K}{R_D}}, \quad s_2 > e.
\end{equation}
We proceed now to derive an upper bound on $\norm{B^{(K)}}$ by applying Theorem \ref{thm:matrix_bernstein_rect}. First, observe that
\begin{equation*}
B^{(K)} = \frac{1}{K} \sum_{i=1}^{K} \bar{x}_i \bar{q}_i^T. 
\end{equation*}
By using the bounds
\begin{equation*}
\norm{\bar{x}_i}_2 \leq \sampwidth\sqrt{m} \quad \text{and} \quad \norm{\bar{q}_i}_2 \leq \frac{1}{2}m \sampwidth^2\sqrt{n-m} \abs{\kfmax},
\end{equation*}
we obtain
\begin{equation*}
\norm{\frac{1}{K} \bar{x}_i \bar{q}_i^T} \leq \frac{1}{K} \norm{\bar{x}_i}_2 \norm{\bar{q}_i}_2 \leq \frac{R_B \sampwidth^3}{K},
\end{equation*}
where $R_B = \frac{1}{2} m^{3/2} (n-m)^{1/2} \abs{\kfmax}$.
The parameter $\sigma^2$ defined in Theorem A.2 has the following form
\begin{eqnarray*}
\sigma^2 &=&  \max \left\{
\frac{1}{K^2} \| \sum_{i=1}^{K} 
 \mathbb{E}[ \bar{x}_i \bar{q}_i^T \bar{q}_i \bar{x}_i^T  ] \|,
\frac{1}{K^2} \| \sum_{i=1}^{K} 
\mathbb{E}[ \bar{q}_i \bar{x}_i^T \bar{x}_i \bar{q}_i^T  ] \| 
\right\}\\
&\leq&  \max\{
\frac{1}{K^2} \sum_{i=1}^{K} 
 \left\| \mathbb{E}\left[ \, \| \bar{q}_i \|_2^2  \bar{x}_i \bar{x}_i^T \,  \right] \right\|,
\frac{1}{K^2} \sum_{i=1}^{K} 
 \left\| \mathbb{E}\left[ \,  \| \bar{x}_i \|_2^2 \|  \bar{q}_i \bar{q}_i^T \, \right]  \right\|
\}.
\end{eqnarray*}

Now the terms $\left\| \mathbb{E}\left[ \, \| \bar{q}_i \|_2^2  \bar{x}_i \bar{x}_i^T \,  \right] \right\|$ and $\left\| \mathbb{E}\left[ \,  \| \bar{x}_i \|_2^2 \|  \bar{q}_i \bar{q}_i^T \, \right]  \right\|$ can be bounded from above as follows.
\begin{align*}
\left\|\mathbb{E}\left[ \, \| \bar{q}_i \|_2^2  \bar{x}_i \bar{x}_i^T \,\right] \right\| = 
\text{sup}_{\|\bar{y}\|_2 = 1} \bar{y}^T \mathbb{E}\left[ \, \| \bar{q}_i \|_2^2  \bar{x}_i \bar{x}_i^T \,\right] \bar{y} 
\leq (\| \bar{q}_i \|_2^2)_{\max} \left\|\mathbb{E}\left[ \, \bar{x}_i \bar{x}_i^T \,\right] \right\|, \\
\left\| \mathbb{E}\left[ \,  \| \bar{x}_i \|_2^2 \|  \bar{q}_i \bar{q}_i^T \, \right]  \right\| = 
\text{sup}_{\|\bar{y}\|_2 = 1} \bar{y}^T \mathbb{E}\left[ \, \| \bar{x}_i \|_2^2  \bar{q}_i \bar{q}_i^T \,\right] \bar{y} 
\leq (\| \bar{x}_i \|_2^2)_{\max} \left\|\mathbb{E}\left[ \, \bar{q}_i \bar{q}_i^T \,\right] \right\|.
\end{align*}

Observe that, for $i=1,\dots,K$, we have 
\begin{equation*}
\norm{\mathbb{E}[\bar{x}_i \bar{x}_i^T]} = \norm{\frac{\sampwidth^2}{3} I_m} = \frac{\sampwidth^2}{3} \quad \text{and} \quad  \norm{\mathbb{E}[\bar{q}_i\bar{q}_i^T]} = \rho(D) < RL \sampwidth^4. 
\end{equation*}
Furthermore, using the aforementioned upper bounds on $\norm{\bar{q}_i}_2$ and $\norm{\bar{x}_i}_2$, we arrive at the following:
\begin{align*}
\sigma^2 &\leq \max\set{\frac{(n-m)m^2 \sampwidth^6 \abs{\kfmax}^2}{12K}, \frac{m \sampwidth^2}{K} \rho(D)} \\
&\leq \max\set{\frac{(n-m)m^2 \sampwidth^6 \abs{\kfmax}^2}{12K}, \frac{m RL \sampwidth^6}{K}} = \frac{\sampwidth^6 R_{\sigma}}{K},
\end{align*}
where
\begin{equation*}
R_{\sigma} := \frac{m^2\abs{\kfmax}^2}{12} \max \set{n-m,\frac{R(5m+4)}{15}}. 
\end{equation*}
Employing the bounds on $\norm{\frac{1}{K} \bar{x}_i \bar{q}_i^T}$ and $\sigma^2$ in Theorem \ref{thm:matrix_bernstein_rect}, we obtain the following tail bound.
\begin{equation} \label{eq:B_op_norm_bound}
\mathbb{P}(\norm{B^{(K)}} > s_3) \leq n \exp\left(\frac{-(s_3^2/2) K}{\sampwidth^6 R_{\sigma} + \frac{R_B \sampwidth^3 s_3}{3}}\right), \quad s_3 > 0  
\end{equation}
Lastly, let $0 < p_1, p_2, p_3 < 1$ denote the upper bounds on the probabilities of the events 
\begin{equation*}
\set{\lambda_m(M^{(K)}) \leq s_1 \sampwidth^2/3}, \set{\rho(D^{(K)}) \geq s_2 RL \sampwidth^4}, \set{\norm{B^{(K)}} > s_3},
\end{equation*}
respectively. This is clearly achieved by choosing
\begin{equation*}
K > \max\set{K_{bound}^{(1)}, K_{bound}^{(2)}, K_{bound}^{(3)}} = K_{bound},
\end{equation*}
where $K_{bound}^{(1)}, K_{bound}^{(2)}, K_{bound}^{(3)}$ are as defined in the statement of Lemma \ref{lemma:md_k_bound_eps}. Applying the union bound, we arrive at the stated result.
\end{proof}
%
\subsection{Proof of Theorem ~\ref{theorem:md_angle_bound_prob}} \label{appendix:proof_md_angle_samp_quad}
\begin{proof}
We start with the following identity for $i=1,\dots,m$ 
\begin{equation} \label{eq:eig_val_vec_ident1}
M^{(K)}\bar{u}_i \ = \ \lambda_i(M^{(K)})\bar{u}_i,
\end{equation}
where
\begin{equation*}
M^{(K)} = \begin{bmatrix}
A^{(K)} & B^{(K)} \\
B^{(K)^T} & D^{(K)} \\
\end{bmatrix}.
\end{equation*}
Here $\lambda_1(M^{(K)}) \geq \lambda_2(M^{(K)}) \geq \dots \geq \lambda_n(M^{(K)})$ denote the eigenvalues of $M^{(K)}$ and $\bar{u}_i = [\bar{u}_{i,1}^T \   \bar{u}_{i,2}^T]^T$ denote its corresponding eigenvectors. Using Eq. \eqref{eq:eig_val_vec_ident1}, we obtain the following inequality.
\begin{align}
B^{(K)^T} \bar{u}_{i,1} + D^{(K)} \bar{u}_{i,2} = \lambda_i(M^{(K)}) \bar{u}_{i,2} \nonumber \\
\Rightarrow (\lambda_m(M^{(K)}) - \rho(D^{(K)})) \norm{\bar{u}_{i,2}}_2 < \norm{B^{(K)}}. \label{eq:eigevec_ineq}
\end{align}
Now, provided that $K$ is chosen such that $K > K_{bound}$, the following events hold with high probability.
\begin{equation} \label{eq:tail_bounds_eigvals}
\set{\lambda_m(M^{(K)}) > s_1\frac{\sampwidth^2}{3}}, \quad \set{\rho(D^{(K)}) < s_2 RL \sampwidth^4}, \quad \set{\norm{B^{(K)}} < s_3},
\end{equation} 
where $s_1 \in (0,1), s_2 > e$ and $s_3 > 0$. From \eqref{eq:tail_bounds_eigvals} and \eqref{eq:eigevec_ineq}, we conclude that the following inequality holds with high probability.
\begin{align*}
(s_1\frac{\sampwidth^2}{3} - s_2 RL \sampwidth^4) \norm{\bar{u}_{i,2}}_2 < s_3 .
\end{align*}
The L.H.S. of the above inequality is positive if $\sampwidth < \sqrt{s_1/(3 s_2 RL)}$. Assuming that this is satisfied, we obtain
\begin{align*}
\norm{\bar{u}_{i,2}} \ &< \ \frac{s_3}{s_1\frac{\sampwidth^2}{3} - s_2 RL \sampwidth^4} \ = \ \sigma_s \\
\Rightarrow \quad \norm{U_2}_F \ &< \ \sqrt{m} \sigma_s.
\end{align*}
Furthermore, we have from Lemma \ref{lemma:md_angle_bound_cond} that
\begin{equation} \label{eq:ang_boun_eqn}
\norm{U_2}_F < \tau < 1 \Rightarrow \abs{\angle \widehat{T}_PS, \tanps} < \cos^{-1}(\sqrt{(1-\tau^2)^m}).
\end{equation}
Lastly, we see that Eq. \eqref{eq:ang_boun_eqn} is ensured if the following holds
\begin{align*}
\sqrt{m} \sigma_s < \tau \Leftrightarrow \frac{\sqrt{m} s_3}{s_1 \frac{\sampwidth^2}{3} - s_2 RL \sampwidth^4} < \tau \Leftrightarrow s_3 < \frac{(s_1 \frac{\sampwidth^2}{3} - s_2 RL \sampwidth^4)\tau}{\sqrt{m}}.
\end{align*}
Therefore, for these choices of $\nu$ and the constants $s_1$, $s_2$, $s_3$, we get the bound on $\abs{\angle \widehat{T}_PS, \tanps}$ stated in the theorem.

\end{proof}
%
\subsection{Proof of Lemma ~\ref{lemma:md_angle_asymp_smooth}} \label{appendix:proof_md_angle_asym_smooth}
\begin{proof}
For $K = \infty$, we have $M = M_{q} + \Delta$, where
\begin{equation*}
M_q = \begin{bmatrix}
\frac{\sampwidth^2}{3}I_{m} & 0 \\
0 & D \\
\end{bmatrix}
\end{equation*}
and
\begin{align}
\norm{\Delta}_{F} &< 2\norm{B_1}_{F} + \norm{D_1}_{F}, \nonumber \\
&< 2\sqrt{m(n-m)}\sampwidth \delta(\sampwidth) + (n-m)(\delta(\sampwidth)^2 + \delta(\sampwidth) m \sampwidth^2\abs{\kfmax}), \nonumber \\
&= 2\smoothconst m^{3/2}\sqrt{m(n-m)} \sampwidth^4 + (n-m)(\smoothconst^2 m^3 \sampwidth^6 + \smoothconst m^{5/2} \sampwidth^5 \abs{\kfmax}), \label{eq:norm_pert_bound} \\
&= 2\norm{B_1}_{F,bound} + \norm{D_1}_{F,bound} = \norm{\Delta}_{F,bound}. \nonumber
\end{align}
Now, if there is no perturbation on $M_q$, the eigenvectors $\set{\bar{e}_1,\dots,\bar{e}_m}$ corresponding to $\frac{\sampwidth^2}{3}$ span $\tanps$. As $M_q$ is actually perturbed by $\Delta$, we analyze the perturbation of the space formed by the span of $\set{\bar{e}_1,\dots,\bar{e}_m}$. We first observe from Weyl's inequality \cite{Weyl1912} the following bounds on the eigenvalues $\set{\lambda_i(M)}_{i=1}^{n}$ of $M$:
\begin{equation*}
\lambda_i(M) \in \left[ \lambda_i(M_q) - \norm{\Delta}_{F,bound}, \ \lambda_i(M_q) + \norm{\Delta}_{F,bound} \right], \quad i=1,\dots,n.
\end{equation*}
Here, $\lambda_i(M_q) = \frac{\sampwidth^2}{3}$, for $ i=1,\dots,m$. Furthermore, $\set{\lambda_i(M_q)}_{i=m+1}^{n}$ are the eigenvalues of $D$. In order to analyze the perturbation on span$\set{\bar{e}_1,\dots,\bar{e}_m}$, we would like to guarantee the `separation' of $\set{\lambda_i(M)}_{i=1}^{m}$ from $\set{\lambda_i(M)}_{i=m+1}^{n}$. Denoting $\rho(D)$ to be the spectral radius of $D$, we have the following sufficient condition to guarantee this separation.
\begin{align}
\frac{\sampwidth^2}{3} - \norm{\Delta}_{F,bound} &> \rho(D) + \norm{\Delta}_{F,bound}, \nonumber \\
\Leftrightarrow \quad \frac{\sampwidth^2}{3} - \rho(D) &> 2\norm{\Delta}_{F,bound}. \label{eq:md_sep_cond_smooth}
\end{align}
Now, as shown in Section \ref{appendix:proof_md_width_cond}, $\rho(D) < R L \sampwidth^4$, where $L = \frac{m(5m+4)\abs{\kfmax}^2}{180}$, and
\begin{equation*}
R = \left\{
\begin{array}{rl}
1 & ; \ \text{if D is diagonal} \\
(n-m) & ; \ \text{if D is dense}.
\end{array} \right .
\end{equation*}
Using this fact along with Eq. \eqref{eq:norm_pert_bound} in Eq. \eqref{eq:md_sep_cond_smooth}, we arrive at the following sufficient condition that guarantees the separation of eigenvalues:
\begin{align}
\frac{\sampwidth^2}{3} - R L \sampwidth^4 > \beta_2 \sampwidth^4 + \beta_3 \sampwidth^5 + \beta_4 \sampwidth^6, \label{eq:md_quad_ineq_1} \\
\Leftrightarrow (\beta_2 + RL)\sampwidth^2 + \beta_3 \sampwidth^3 + \beta_4 \sampwidth^4 < \frac{1}{3} \label{eq:md_quad_ineq_2}
\end{align}
where $\beta_2 = 4\smoothconst m^{3/2}\sqrt{m(n-m)}$, $\beta_3 = 2(n-m)\smoothconst m^{5/2}\abs{\kfmax}$ and $\beta_4 = 2(n-m)\smoothconst^2 m^3$.
Now, clearly the solution to Eq. \eqref{eq:md_quad_ineq_1} needs to satisfy the following conditions.
\begin{align*}
(\beta_2 + RL)\sampwidth^2 < 1/3 &\Leftrightarrow \sampwidth < (3(\beta_2 + RL))^{-1/2}, \\
\beta_3 \sampwidth^3 < 1/3 &\Leftrightarrow \sampwidth < (3\beta_3)^{-1/3}, \\
\beta_4 \sampwidth^4 < \frac{1}{3} &\Leftrightarrow \sampwidth < (3\beta_4)^{-1/4}.
\end{align*}
Equivalently, the solution to Eq. \eqref{eq:md_quad_ineq_1} satisfies $\sampwidth < \alpha$, where
\begin{equation*}
\alpha = \min \set{(3(\beta_2 + RL))^{-1/2},(3\beta_3)^{-1/3},(3\beta_4)^{-1/4}}.
\end{equation*}
We thus arrive at the following sufficient condition on $\sampwidth$ in order to guarantee Eq. \eqref{eq:md_quad_ineq_2} and consequently Eq. \eqref{eq:md_sep_cond_smooth}:
\begin{align}
\sampwidth^2((\beta_2 + RL) + \beta_3 \alpha + \beta_4 \alpha^2) < \frac{1}{3}, \nonumber \\
\Leftrightarrow \sampwidth^2 < \frac{1}{3[(\beta_2 + RL) + \beta_3 \alpha + \beta_4 \alpha^2]}. \label{eq:md_a_bound_cond}
\end{align}
We now proceed to bound the angle between $\tanps$ and $\widehat{T}_PS$ by using the identity $M \bar{u}_i = \lambda_i(M) \bar{u}_i, \ \forall i=1,\dots,m$. We obtain
\begin{align*}
B_1^T \bar{u}_{i,1} + (D+D_1)\bar{u}_{i,2} = \lambda_i(M)\bar{u}_{i,2}.
\end{align*}
By taking the $l_2$-norm of both sides and using the fact that $\norm{D\bar{u}_{i,2}}_2 \leq \rho(D)\norm{\bar{u}_{i,2}}_2$, we obtain
\begin{align*}
(\lambda_i(M) - \rho(D) - \norm{D_1}_{F}) \norm{\bar{u}_{i,2}}_2 < \norm{B_1}_{F}.
\end{align*}
Now, $\lambda_i(M) > \frac{\sampwidth^2}{3} - \norm{\Delta}_{F}$ for $i=1,\dots,m$. Therefore, if $\sampwidth$ is chosen to satisfy Eq. \eqref{eq:md_a_bound_cond}, then the following holds true.
\begin{align*}
\lambda_i(M) - \rho(D) - \norm{D_1}_{F} &> \frac{\sampwidth^2}{3} - \norm{\Delta}_{F} - \rho(D) - \norm{D_1}_{F}, \\
&> \frac{\sampwidth^2}{3} - RL \sampwidth^4- 2(\norm{B_1}_{F,bound} + \norm{D_1}_{F,bound}), \\
&> 0.
\end{align*}
Using the above facts, we obtain the following upper bound on $\norm{\bar{u}_{i,2}}_2$.
\begin{align} \label{eq:eigenvec_norm_bound}
\norm{\bar{u}_{i,2}}_2 < \frac{\norm{B_1}_{F,bound}}{\frac{\sampwidth^2}{3} - RL \sampwidth^4 - 2(\norm{B_1}_{F,bound} + \norm{D_1}_{F,bound})} = \sigma_{\infty}.
\end{align}
Finally, to conclude the proof, we obtain the bound on $\abs{\angle \widehat{T}_PS,\tanps}$ by using Lemma \ref{lemma:md_angle_bound_cond}.
\begin{align*}
\norm{U_2}_{F}^2 \ = \ \sum_{i=1}^{m}\norm{\bar{u}_{i,2}}_2^2 \ < \ m \sigma_{\infty}^2, \\
\Rightarrow \cos^2(\angle \widehat{T}_PS,\tanps) \ > \ (1 - \norm{U_2}_{F}^2)^m \ > \ (1 - m\sigma_{\infty}^{2})^m.
\end{align*}
\end{proof}
%
\subsection{Proof of Theorem ~\ref{thm:md_angle_bound_smooth}} \label{appendix:proof_md_angle_samp_smooth}
\begin{proof} 
The proof follows along the lines of the proof of Theorem \ref{theorem:md_angle_bound_prob}. We start with the following identity for $i=1,\dots,m$:
\begin{equation} \label{eq:eig_val_vec_smooth_ident1}
M^{(K)}\bar{u}_i \ = \ \lambda_i(M^{(K)})\bar{u}_i.
\end{equation}
We have, $M^{(K)} = M_q^{(K)} + \Delta^{(K)}$ where
\begin{equation*}
M_q^{(K)} = \begin{bmatrix}
A^{(K)} & B^{(K)} \\
B^{(K)^T} & D^{(K)} \\
\end{bmatrix} \, \text{  and }
\Delta^{(K)} = \begin{bmatrix}
0 & B_1^{(K)} \\
B_1^{(K)^T} & D_1^{(K)} \\
\end{bmatrix}.
\end{equation*}
Let $\lambda_1(M^{(K)}) \geq \lambda_2(M^{(K)}) \dots \geq \lambda_n(M^{(K)})$ denote the eigenvalues of $M^{(K)}$ and $\bar{u}_i = [\bar{u}_{i,1}^T \  \bar{u}_{i,2}^T]^T$ denote its corresponding eigenvectors. Using Eq. \eqref{eq:eig_val_vec_smooth_ident1}, we obtain the following:
\begin{align}
(B^{(K)^T} + B_1^{(K)^T}) \bar{u}_{i,1} &+ (D^{(K)} + D_1^{(K)}) \bar{u}_{i,2} = \lambda_i(M^{(K)}) \bar{u}_{i,2} \nonumber \\
\Rightarrow (\lambda_m(M^{(K)}) - \rho(D^{(K)}) &- \norm{D_1}_{F,bound}) \norm{\bar{u}_{i,2}}_2 \label{eq:eigevec_ineq_smooth} \\ 
&< \ \norm{B^{(K)}} + \norm{B_1}_{F,bound}. \nonumber
\end{align}
We observe by Weyl's inequality \cite{Weyl1912} that the following holds true:
\begin{equation} \label{eq:m_eig_val_bound_smooth}
\lambda_m(M^{(K)}) \geq \lambda_m(M_q^{(K)}) - \norm{\Delta}_{F,bound} = \lambda_m(M_q^{(K)}) - 2\norm{B_1}_{F,bound} - \norm{D_1}_{F,bound}.
\end{equation}
If $K$ is chosen such that $K > K_{bound}$, the following events hold with high probability:
\begin{equation} \label{eq:tail_bounds_eigvals_smooth}
\set{\lambda_m(M_q^{(K)}) > s_1\frac{\sampwidth^2}{3}}, \quad \set{\rho(D^{(K)}) < s_2 RL \sampwidth^4}, \quad \set{\norm{B^{(K)}} < s_3},
\end{equation} 
where $s_1 \in (0,1)$, $s_2 > e$ and $s_3 > 0$. Thus, using Eq. \eqref{eq:m_eig_val_bound_smooth} and Eq. \eqref{eq:tail_bounds_eigvals_smooth} in Eq. \eqref{eq:eigevec_ineq_smooth}, we obtain the following:
\begin{align*}
(s_1\frac{\sampwidth^2}{3}- s_2 RL \sampwidth^4 - 2(\norm{B_1}_{F,bound} + \norm{D_1}_{F,bound})) \norm{\bar{u}_{i,2}}_2 < s_3 + \norm{B_1}_{F,bound}. 
\end{align*}
Similarly to the proof of Lemma \ref{lemma:md_angle_asymp_smooth}, one can show that the following condition is sufficient to ensure that the L.H.S. of the above inequality is strictly positive 
\begin{align} \label{eq:cond_a_smooth}
\sampwidth^2 < \frac{s_1}{3[(\beta_2+ s_2 RL) + \beta_3 \alpha + \beta_4\alpha^2]}.
\end{align}
In particular, the above condition ensures the following:
\begin{equation*}
s_1\frac{\sampwidth^2}{3}- s_2 RL \sampwidth^4 > 2\norm{\Delta}_{F,bound} = 4\norm{B_1}_{F,bound} + 2\norm{D_1}_{F,bound}.
\end{equation*}
Now, assuming that $\sampwidth$ satisfies Eq. \eqref{eq:cond_a_smooth}, we arrive at the following bound on $\norm{\bar{u}_{i,2}}_2$ for $i=1,\dots,m$:
\begin{align*}
\norm{\bar{u}_{i,2}}_2 &< \frac{s_3 + \norm{B_1}_{F,bound}}{(s_1\frac{\sampwidth^2}{3}- s_2 RL \sampwidth^4) - 2(\norm{B_1}_{F,bound} + \norm{D_1}_{F,bound})} \\
&= \sigma_s.
\end{align*}
The above bound on $\norm{\bar{u}_{i,2}}_2$ implies that $\norm{U_2}_F^2 < m \sigma_s^2$. Let 
\begin{equation*}
\sigma_{f} := \frac{\norm{B_1}_{F,bound}}{(s_1\frac{\sampwidth^2}{3}- s_2 RL \sampwidth^4) - 2(\norm{B_1}_{F,bound} + \norm{D_1}_{F,bound})}.
\end{equation*}
If for some $\tau \in (0,1)$
\begin{equation} \label{eq:ang_bound_cond_smooth}
m \sigma_s^2 < \tau^2 + m\sigma_{f}^2 \Leftrightarrow \sigma_s < (\tau^2/m + \sigma_{f}^2)^{1/2},
\end{equation}
then from Lemma \ref{lemma:md_angle_bound_cond} we obtain
\begin{equation*}
\cos^2(\angle \widehat{T}_PS, \tanps) > 1 - \tau^2 - m\sigma_{f}^2. 
\end{equation*}
Finally, we see that Eq. \eqref{eq:ang_bound_cond_smooth} is ensured if the following holds.
\begin{align*}
\frac{s_3 + \norm{B_1}_{F,bound}}{(s_1\frac{\sampwidth^2}{3}- s_2 RL \sampwidth^4) - 2(\norm{B_1}_{F,bound} + \norm{D_1}_{F,bound})} &< \left(\frac{\tau^2}{m} + \sigma_{f}^2\right)^{1/2} \\
\Leftrightarrow s_3 < [(s_1\frac{\sampwidth^2}{3}- s_2 RL \sampwidth^4) - 2(\norm{B_1}_{F,bound} &+ \norm{D_1}_{F,bound})]\left(\frac{\tau^2}{m} + \sigma_{f}^2\right)^{1/2} \\
&- \norm{B_1}_{F,bound}.
\end{align*}
This completes the proof.
\end{proof}

\end{document}